\newcommand{\numorbitals}{N_\mathrm{b}}
\newcommand{\ctTBprefactor}{h_0}
\newcommand{\ctTBexponent}{{\gamma_0}}
\newcommand{\ctHamregularity}{\nu}
\newcommand{\ctnoninterpen}{\mathfrak{m}}
\newcommand{\ctCT}{\gamma_\mathrm{CT}}
\newcommand{\ctGamma}{{\Upsilon}}
\newcommand{\ctDist}{\mathsf{d}}
\newcommand{\Ham}{\mathcal{H}}
\newcommand{\W}{{\dot{\mathscr W}}^{1,2}}
\renewcommand{\Re}[1]{\mathrm{Re}\left(#1\right)}
\renewcommand{\Im}[1]{\mathrm{Im}\left(#1\right)}
\renewcommand{\leq}{\leqslant}\renewcommand{\geq}{\geqslant}
\renewcommand{\above}[2]{\genfrac{}{}{0pt}{}{#1}{#2}}
\newcommand{\ubar}{\overline{u}}
\DeclareMathOperator*{\argmin}{arg\,min}
\newcommand{\ep}{\varepsilon}
\newcommand{\dxx}[1]{\mathrm{d}#1}
\newcommand{\TRstar}{T_{R^\star}}
\newcommand{\TRtil}{T_{\tilde{R}}}
\newcommand{\distR}[2]{{r_{#1}^{#2}}}
\newtheorem*{fact*}{Fact}
\newtheorem{theorem}{Theorem}[section]
\newtheorem*{theorem*}{Theorem}
\newtheorem{lemma}[theorem]{Lemma}
\newtheorem{prop}[theorem]{Proposition}
\theoremstyle{remark}
\newtheorem{remark}{Remark}
\newtheoremstyle{assumptionStyle}
  {0.5em}
  {0.5em}
  {}            
  {0.5cm}       
  {\bfseries}   
  {}            
  {0.2cm}       
  {}            
\newcommand{\thistheoremname}{}
\theoremstyle{assumptionStyle}\newtheorem*{genericthm*}{\textup{\thistheoremname}}
\newenvironment{assumption}[2][1.6cm]{
	\renewcommand{\thistheoremname}{#2}%
	\begin{genericthm*}\hangindent=#1
	\setlength{\parindent}{#1+1.5em}} 
	{\end{genericthm*}}
\numberwithin{equation}{section}
\newcommand{\refCE}[3]{$(\hyperref[#1]{\mathrm{CE}^{{#2},{#3}}})$}
\newcommand{\refGCE}[4]{$(\hyperref[#1]{\mathrm{GCE}_{{#4}}^{{#2},{#3}}})$}
\begin{document}

\title[Point Defects in Tight Binding Models for Insulators]{Point Defects in Tight Binding Models for Insulators}

\author{Christoph Ortner}
\address{Mathematics Institute,
Zeeman Building,
University of Warwick,
Coventry, UK}
\curraddr{}
\email{C.Ortner@warwick.ac.uk}

\author{Jack Thomas}
\address{Mathematics Institute,
Zeeman Building,
University of Warwick,
Coventry, UK}
\curraddr{}
\email{J.Thomas.1@warwick.ac.uk}

\begin{abstract}
We consider atomistic geometry relaxation in the context of linear tight binding models for point defects. A limiting model as Fermi-temperature is sent to zero is formulated, and an exponential rate of convergence for the nuclei configuration is established. We also formulate the thermodynamic limit model at zero Fermi-temperature, extending the results of [H.~Chen, J.~Lu, C.~Ortner. \textit{Arch.~Ration.~Mech.~Anal.}, 2018]. We discuss the non-trivial relationship between taking zero temperature and thermodynamic limits in the finite Fermi-temperature models.  
\end{abstract}

\maketitle


\section{Introduction}
\label{sec:introduction}
Electronic structure calculations are widely used to model and understand a variety of electronic, optical and magnetic phenomena observed in solids \cite{bk:finnis,bk:martin04}. Tight binding models are minimalistic electronic structure models that lie, both in terms of computational cost and accuracy, between \textit{ab initio} calculations and empirical interatomic potential models. Thus, tight binding models are useful in a number of situations where \textit{ab initio} calculations are impossible to undertake due to the large system size but are advantageous over empirical methods due to the increased accuracy and significant underlying quantum mechanical effects present. Although interesting in their own right, tight binding models also serve as case studies for the technically much more challenging density functional theory.{\let\thefootnote\relax\footnote{
\textit{Date.} \today \\
\textit{2000 Mathematics Subject Classification.} 65N12, 65N25, 74E15, 74G65, 81V45.\\
\textit{Key words and phrases.} point defects; tight binding model; zero temperature limit; thermodynamic limit; insulators.\\
CO is supported by EPSRC Grant EP/R043612/1 and Leverhulme Research Project Grant RPG-2017-191.\\
JT is supported by EPSRC as part of the MASDOC DTC, Grant No. EP/HO23364/1.
}}

In the present work, we consider a linear tight binding model and formulate the limiting model as Fermi-temperature tends to zero. Further, we consider the thermodynamic limit at zero Fermi-temperature and explore the extent to which these two limits commute. 

The simulation of local defects in solids remains a major issue in materials science and solid state physics \cite{Pisani1994,bk:Stoneham2001}. For a mathematical review of some works related to the modelling of point defects in materials science see \cite{CancesLeBris2013}. Progress on local defects in the context of Thomas-Fermi-von-Weizs\"acker (TFW) and reduced Hartree-Fock (rHF) models has been made in \cite{LiebSimon1977,BlancLeBrisLions2007,CancesEhrlacher2011} and \cite{CancesDeleurenceLewin2008,GontierLahbabi17}, respectively. 

Thermodynamic limit (or bulk limit) problems have been widely studied in the literature. The case of a perfect crystalline lattice has been studied in \cite{LiebSimon1977} for the Thomas-Fermi (TF) model, \cite{bk:CattoLeBrisLions1998}  for the TFW model and in \cite{CattoLeBrisLions2002} and \cite{CattoLeBrisLions2001} for Hartree and Hartree-Fock type models, respectively. In these papers, the limit of the ground state energy per unit volume and minimising electronic density as domain size tends to infinity are identified in the cases where the energy functionals are convex (that is, for the TF, TFW, restricted Hartree and rHF models). For more general Hartree and Hartree-Fock type models, periodic models have been proposed and shown to be well-posed. In the setting of the rHF model, an exponential rate of convergence for the supercell energy per unit cell is obtained in the case of insulators \cite{GontierLahbabi2016convergence}.

It is important to note that in all of the papers mentioned above the nuclei degrees of freedom are fixed on a periodic lattice or with a given defect. Preliminary results concerning the simultaneous relaxation of the electronic structure together with geometry equilibriation (of the nuclei positions) can be found in \cite{NazarOrtner17} for the TFW model and \cite{ChenOrtner16,ChenLuOrtner18} for tight binding models.

The present work is motivated by \cite{ChenLuOrtner18} which establishes the following two results: (i) under a mild condition on the prescribed number of electrons in the sequence of finite domain approximations, the Fermi level is shown to converge in the thermodynamic limit to that of a perfect crystal and is thus independent of the electron numbers and the defect. This result enables, (ii) the formulation of a unique limiting model in the grand-canonical ensemble for the electrons with chemical potential fixed at the perfect crystal level. The purpose of the present work is to explore the extent to which these results can be extended to the zero Fermi-temperature case, as well as consider the zero Fermi-temperature limit of the model described in \cite{ChenLuOrtner18}.

\subsection*{Summary of Results}

The main convergence results of \cite{ChenLuOrtner18} and the present work are summarised in Figure~\ref{fig:cd-1}. 

\usetikzlibrary{decorations.pathmorphing}
\begin{figure}[htbp]
 	\centering
 	\begin{tikzcd}[row sep=large, column sep=large]
 	{\boxed{\above{\text{CE}}{\beta<\infty, R<\infty}}}  
 	\arrow[rr, "\above
 	                {R\to\infty}
 	                {\left[\textcolor{blue}{10}\right]}
 	            " description] 
 	\arrow[dd, "\above{\beta\to\infty}{
 	    \above
 	        {\text{Thm.}~\ref{cor:strong_0T_limit_infinite_domain},}
 	        {\text{Prop.}~\ref{cor:weak_0T_limit_infinite_domain}}
 	    }" description] &  &
 	%
 	\boxed{\above{\text{GCE}}{{\beta<\infty,R=\infty}}}   
 	\arrow[dd, "\above{\beta\to\infty}{
 	    \above
 	        {\text{Thm.}~\ref{thm:0T_limit_infinite_domain},}
 	        {\text{Prop.}~\ref{prop:0T_limit_infinite_domain}}
        }" description] &  &
 	%
 	{\boxed{\above{\text{GCE}}{\beta<\infty,R<\infty}}}   
 	\arrow[ll, "\above
 	                {R\to\infty}
 	                {\left[\textcolor{blue}{10}\right]}
 	           " description] 
 	\arrow[dd, "\above
 	                {\beta\to\infty}
 	                {\above
 	                    {\text{Thm.}~\ref{thm:0T_limit_infinite_domain},}
 	                    {\text{Prop.}~\ref{prop:0T_limit_infinite_domain}}
 	                }
 	            " description] 
 	\\  &  &  &  &  \\
 	%
 	{\boxed{\above{\text{CE}}{\beta=\infty,R<\infty}}}  
 	\arrow[rr,rightsquigarrow, "\above
 	                                {R\to\infty}
 	                                {\above
 	                                   {\text{Thm.}~\ref{thm:0T_thermodynamic_limit_CE_GCE},}
 	                                   {\text{Prop.~}\ref{prop:0T_thermodynamic_limit_CE_GCE}}
 	                                }
 	                            " description] &  &
 	%
 	{\boxed{\above{\text{GCE}}{\beta=\infty,R=\infty}}}   &  & 
 	%
 	{\boxed{\above{\text{GCE}}{\beta=\infty,R<\infty}}}        
 	\arrow[ll, "\above
 	                {R\to\infty}
 	                {\above
 	                    {\text{Thm.}~\ref{thm:0T_thermodynamic_limit},}
 	                    {\text{Prop.}~\ref{prop:0T_thermodynamic_limit}}
 	                }
 	            " description]                                          
 	\end{tikzcd}
 	\caption{Diagram to illustrate the main results of \cite{ChenLuOrtner18} and the present work. Here, ``GCE'' denotes the grand canonical ensemble and ``CE'' the canonical ensemble. The top two thermodynamic limits represent the results of \cite{ChenLuOrtner18} and the results of this paper are indicated on the remaining arrows.}
 	\label{fig:cd-1}
\end{figure}
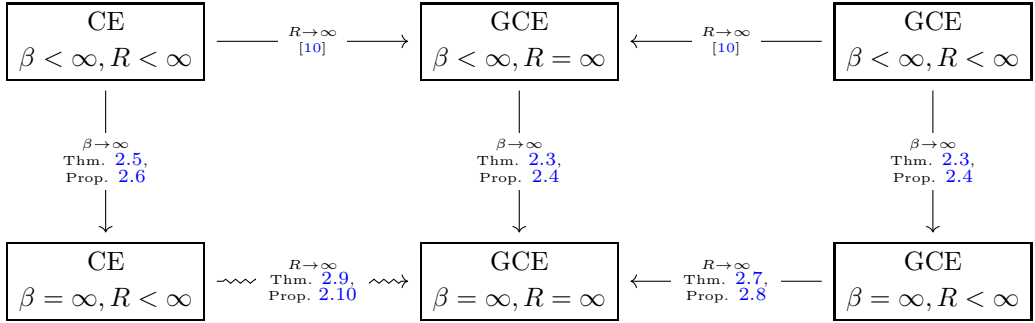
%

\subsubsection*{Thermodynamic Limit.} Since we are interested in the bulk properties of a material with local defects, it is convenient to consider an extended system of infinitely many nuclei. However, to simulate such a system, we must of course restrict ourselves to finite computational domains and impose an artificial boundary condition. Throughout this paper we shall consider periodic boundary conditions for the nuclei (that is, a supercell model) in the form of a torus tight binding model and show that the thermodynamic limit is well defined under some appropriate choice of electron numbers in the sequence of finite domain approximations.

More precisely, we consider linear tight binding models with electrons in the canonical ensemble and at zero Fermi-temperature. We show that, because the zero temperature Fermi levels depend globally on each eigenvalue (and not just on the limiting density of states as in the case of finite Fermi-temperature), the zero temperature Fermi levels only converge in the thermodynamic limit under strict conditions on the number of electrons imposed in the sequence of finite domain approximations. Moreover, the limiting Fermi level depends upon the polluted band structure and consequently also on the defect. Using this, the thermodynamic limit model is shown to be a grand canonical model with chemical potential fixed (but defect-dependent) and given by the limit of the sequence of finite domain Fermi levels. That is, the number of electrons imposed in the sequence of finite domain approximations is critical in identifying a limiting model. This analysis clarifies questions left open in \cite{ChenLuOrtner18} about the effect of Fermi-temperature on the convergence.

\subsubsection*{Zero Temperature Limit.} A key feature of zero temperature electronic structure models is the sharp cut-off between unoccupied and occupied electronic states. In practice (e.g.~\cite{ChenLu2016} for density functional theory), a low but positive Fermi-temperature may be chosen in order to approximate the sharp cut-off with a smooth Fermi-Dirac distribution (alternatively, artificial smearing methods may be used). One can then show that the error committed does not drastically affect the simulation; see \cite{CancesEhrlacherGontierLevittLombardi2020} for an in-depth error analysis for typical observables (including the Fermi level, total energy and the density). Choosing a finite Fermi-temperature has the additional benefit that there is a unique Fermi level (see \cref{eq:FL}) solving the electron number constraint which is advantageous in numerical simulations \cite{ChenLu2016,Yang1991}. 

In the present work, we give a comprehensive justification of this approach; assuming the electrons are in finite Fermi-temperature and the nuclei degrees of freedom are determined by minimising the grand potential associated with the electrons, we uniquely identify the limiting model as Fermi-temperature tends to zero by a grand canonical model for the electrons at zero Fermi-temperature. We quantify an exponential rate of convergence for the nuclei configuration.

\subsubsection*{Strong Energy Locality.} A key tool in \cite{ChenLuOrtner18} and the present work is a strong energy locality of the tight binding model, first proved in \cite{ChenOrtner16}, extended to other quantities of interest in \cite{ChenLuOrtner18} and strengthened for the case of point defects in insulating systems in \cite{ChenOrtnerThomas2019}. This locality allows for the definition of a renormalised energy functional on the infinite lattice with embedded point defect and thus allows for the formulation of a limiting model \cite{ChenNazarOrtner19}.

\subsection*{Notation}
For matrices or Hilbert–Schmidt operators, we denote the Frobenius or Hilbert–Schmidt norm by $\|\cdot\|_\mathrm{F}$. The $\ell^2$ norm on sequence spaces and Euclidean norm on $\mathbb R^n$ or $\mathbb C$ will be denoted by $\|\cdot\|_{\ell^2}$ and $|\cdot|$, respectively. For operators defined on $\ell^2$, we denote by $\|\cdot\|_{\ell^2 \to \ell^2}$ the operator norm. For a multi-index $\theta \in \mathbb N^d$ and $\alpha \in \mathbb Z^d$, we will write 
$|\theta|_1 \coloneqq \sum_{j=1}^d \theta_j$ and $|\alpha|_\infty \coloneqq \max_{j} |\alpha_j|$. 
For $A,B \subset \mathbb C$, $r>0$ and $x \in \mathbb C$, the distance between $x$ and $A$ is defined by $\mathrm{dist}(x,A) \coloneqq \inf_A |x - \cdot\,|$, the Hausdorff distance between $A$ and $B$ is denoted by $\mathrm{dist}(A,B) \coloneqq \max\{ \sup_{a \in A}\mathrm{dist}(a,B), \sup_{b \in B}\mathrm{dist}(b,B) \}$ and the (open) ball of radius $r$ about $A$ is defined by 
$B_r(A) \coloneqq \{y \in \mathbb C\colon \mathrm{dist}(y,A) < r\}$.
For a normed space $(X,\|\cdot\|)$ and $x \in X$, we denote by $B_r(x;\|\cdot\|) \coloneqq \{y \in X\colon \|x - y\| < r\}$ the (open) ball of radius $r$ about $x$. When considering the Euclidean norm on $\mathbb R^n$ or $\mathbb C$, we omit the norm in the notation, $B_r(x;\|\cdot\|) = B_r(x)$, and we write $B_r \coloneqq B_r(0)$ for balls centred at the origin.

For a sequence of sets $A_n \subset \mathbb R$, we define
$
    \liminf_{n \to \infty} A_n \coloneqq 
        \{ a \colon \exists\, a_n \in A_n \textrm{ s.t. } a_n \to a \} 
$
and
$
    \limsup_{n \to \infty} A_n \coloneqq 
        \{ a \colon \exists\, a_n \in A_n \textrm{ s.t. } a_{n} \to a \textrm{ along a subsequence}\}.
$
The (topological) limit of $A_n$, denoted $\lim_{n\to\infty} A_n$, is defined and equal to $\liminf_{n \to \infty} A_n$ and $\limsup_{n \to \infty} A_n$ in the case that these two limits agree.

The set of strictly positive real numbers will be denoted by $\mathbb R_+ \coloneqq \{r \in \mathbb R \colon r > 0\}$. We write $b + A = \{ b + a \colon a \in A \}$ and similarly for $A - b$. For a subset $A_0 \subset A$, we denote by $\chi_{A_0}\colon A \to \{0,1\}$ the characteristic function of $A_0$. For a function $f\colon A \to B$, we denote by $f|_{A_0}$ the restriction of $f$ to $A_0$. If $B \subset \mathbb C$, we denote by $\mathrm{supp}(f)$ the support of $f$. For a finite set $A$, we denote by $\#A$, the cardinality of $A$.

For $\mathcal G \in C^3(X)$ and $u,v,w,z \in X$, we let $\Braket{\delta \mathcal G(u),v}$, $\Braket{\delta^2 \mathcal G(u)v,w}$ and $\Braket{\delta^3 \mathcal G(u)v,w,z}$ denote the first, second and third variations of $\mathcal G$, respectively. We write $\|\delta \mathcal G(u)\|$ for the operator norm of $\delta\mathcal G(u)$.

For a self-adjoint operator $T$, we let $\sigma(T)$ denote the spectrum of $T$ and $\sigma_\mathrm{disc}(T)$ the discrete spectrum of $T$ (that is, the set of isolated eigenvalues of finite multiplicity) while $\sigma_\mathrm{ess}(T) \coloneqq \sigma(T)\setminus \sigma_\mathrm{disc}(T)$ is known as the essential spectrum.

The symbol $C$ will denote a generic positive constant that may change from one line to the next.  In calculations, $C$ will always be independent of Fermi-temperature and computational domain size. The dependencies of $C$ will be clear from context or stated explicitly. When convenient to do so we write $f \lesssim g$ to mean $f \leq C g$ for some generic positive constant as above.

\section{Results}
\label{sec:results}
\subsection{Point Defect Reference Configurations}
\label{pg:pointdefects}
\newcommand{\asPointDefect}{\textup{\textbf{(P)}}}

We fix a locally finite reference configuration $\Lambda^\mathrm{ref} \subset \mathbb R^d$ and consider the configuration, $\Lambda$, obtained from $\Lambda^\mathrm{ref}$ by embedding a point defect:
\begin{assumption}[1.3cm]{\asPointDefect}
    There exists an invertible matrix $\mathsf{A} \in \mathbb R^{d\times d}$ and a \textit{unit cell} $\Gamma \subset \mathbb R^d$ such that $\Gamma$ is finite, contains the origin and
    $$\Lambda^\mathrm{ref} = \bigcup_{\gamma \in \mathbb{Z}^d} (\Gamma + \mathsf{A}\gamma).$$ 
    There exists $R_\textrm{def}>0$ such that $B_{R_\textrm{def}}\cap\Lambda$ is finite and $\Lambda^\mathrm{ref} \setminus B_{R_\textrm{def}} = \Lambda \setminus B_{R_\textrm{def}}$.
\end{assumption}
We will always think of $\Lambda^\mathrm{ref}$ as a ground state insulating multilattice material.

For displacements $u \colon \Lambda \to \mathbb R^d$, we suppose that the following non-interpenetration condition is satisfied:
\newcommand{\asNonInter}{\textup{\textbf{(L)}}}
\begin{assumption}[1.3cm]{\asNonInter}
	There exists $\mathfrak{m}>0$ such that $r_{\ell k}(u) \geq \mathfrak{m}|\ell - k|$ for all $\ell, k \in \Lambda$ where we use the notation ${\bm r}_{\ell k}(u) \coloneqq \ell + u(\ell) - k - u(k)$ and $r_{\ell k}(u) \coloneqq |\bm{r}_{\ell k}(u)|$.
\end{assumption}
\noindent When clear from the context, we shall drop the argument $(u)$ in $\bm{r}_{\ell k}(u)$ and $r_{\ell k}(u)$.

\subsection{Linear Tight Binding Hamiltonian}
\label{sec:tight_binding_hamiltonian}

We suppose there are $\numorbitals$ atomic orbitals per atom (indexed by $1\leq a,b\leq\numorbitals$) and consider a two-centre linear tight binding model with Hamiltonian, $\Ham(u)$, with matrix entries
\begin{align*}
	\left[\Ham(u)\right]_{\ell k}^{ab} = h^{ab}_{\ell k}(\bm{r}_{\ell k}(u)) \quad 
	\text{for } \ell, k \in \Lambda \text{ and } 1 \leq a,b \leq \numorbitals,
\end{align*}
where the functions $h^{ab}_{\ell k}\colon\mathbb R^d \to \mathbb R$ satisfy the following tight binding assumptions:
\newcommand{\asTB}{\textup{\textbf{(TB)}}}
\begin{assumption}{\asTB}
    We suppose that $h^{ab}_{\ell k}\colon \mathbb R^d \to \mathbb R$ are $\ctHamregularity$ times continuously differentiable with $\nu \geq 3$, and that there exist $\ctTBprefactor,\ctTBexponent>0$ with
	\begin{equation}
    	\label{a:TB}
    	\big|h^{ab}_{\ell k}(\xi)\big| \leq \ctTBprefactor e^{-\ctTBexponent|\xi|} \quad \text{and} \quad 
    	\big|\partial^\theta h^{ab}_{\ell k}(\xi)\big| \leq \ctTBprefactor e^{-\ctTBexponent|\xi|} 
    	\quad \forall\,\xi \in \mathbb R^d,
	\end{equation}
	for all multi-indices $\theta \in \mathbb N^d$ with $|\theta|_1 \leq \ctHamregularity$. 

	Moreover, we suppose that the Hamiltonian is symmetric (i.e.~$h^{ab}_{\ell k}(\xi) = h^{ba}_{k\ell}(-\xi)$ for all $1\leq a,b \leq \numorbitals$, $\ell, k \in \Lambda$ and $\xi \in \mathbb R^d$) and the following translational invariance condition is satisfied: $h^{ab}_{\ell + \mathsf{A}\gamma, k + \mathsf{A}\eta} = h^{ab}_{\ell k}$ for all $1 \leq a,b \leq \numorbitals$, $\ell, k \in \Lambda^\mathrm{ref}$ and $\gamma,\eta \in \mathbb Z^d$ such that $\ell + \mathsf{A}\gamma, k + \mathsf{A}\eta \in \Lambda^\mathrm{ref}$.
\end{assumption}
It is important to stress here that the constants $\ctTBprefactor,\ctTBexponent>0$ in \cref{a:TB} are independent of the atomic sites $\ell, k \in \Lambda$.

Most linear tight binding models impose a finite cut-off radius and so the pointwise bound on $|h^{ab}_{\ell k}|$ in \cref{a:TB} is normally automatically satisfied. For $|\theta|_1=1$, this assumption states that there are no long range interactions in the model. In particular, we suppose that the Coulomb interactions have been screened, a typical assumption in practical tight binding codes \cite{cohen94,mehl96,papaconstantopoulos15}.

The symmetry property of \asTB~means that the spectrum of $\Ham(u)$ is real. In practice, the Hamiltonian satisfies further symmetry properties which we briefly discuss in Appendix~\ref{sec:symmetries}.

The translational invariance property states that atoms $\ell, k \in \Lambda$ for which there exists a $\gamma \in \mathbb Z^d$ such that $\ell - k = \mathsf{A}\gamma$ are of the same species.

In practice, the number of atomic orbitals per atom depends on the atomic species. This notational complication can easily be avoided by redefining the Hamiltonian by taking $\numorbitals$ to be the maximum number of atomic orbitals per atom and shifting the spectrum away from $\{0\}$ as outlined in Appendix~\ref{app:numorbitals}. 

\subsection{Torus Tight Binding Model}
\label{sec:torus_TB}

In order to simulate the model described in \S\ref{sec:tight_binding_hamiltonian}, we must restrict ourselves to finite computational domains and impose artificial boundary conditions. A popular choice for simulating the far-field behaviour of point defects is periodic boundary conditions which we now introduce.

For $R>0$, we consider a sequence, $\Lambda_R$, of computational cells given by the following:
\newcommand{\asRefR}{\textup{\textbf{($\textrm{REF}_R$)}}}
\begin{assumption}[2.08cm]{\asRefR}
    For $R> 0$, we suppose $\Omega_R \subset \mathbb R^d$ is a bounded connected domain with $B_R \subset \Omega_R$. Further, we take an invertible matrix 
    $\mathsf{M}_R = (\mathsf{m}_1,\dots,\mathsf{m}_d) \in \mathbb R^{d\times d}$
    such that $\mathsf{m}_j \in \Lambda^\mathrm{ref}$ and $\mathbb R^d$ is the disjoint union of the shifted domains 
    $\Omega_R + \mathsf{M}_R\alpha$ for $\alpha \in \mathbb Z^d$. 
    The computational cell is defined to be 
    $\Lambda_R \coloneqq \Omega_R\cap\Lambda$.
\end{assumption}

When employing periodic boundary conditions, we shall consider displacements $u \colon \Lambda_R \to \mathbb R^d$ satisfying the following uniform non-interpenetration condition:
\newcommand{\asNonInterR}{\textup{\textbf{($\textrm{L}_R$)}}}
\begin{assumption}[1.5cm]{\asNonInterR}
    There exists $\ctnoninterpen > 0$ such that 
    $|\bm{r}_{\ell k}(u) + \mathsf{M}_R\alpha| \geq \ctnoninterpen|\ell - k + \mathsf{M}_R\alpha|$
    for all $\alpha \in \mathbb Z^d$ and atomic positions $\ell, k \in \Lambda_R$.
\end{assumption}
Throughout this paper, we shall say that displacements satisfying \asNonInterR~are admissible periodic displacements. We shall also identify $u$ with its periodic extension to $\bigcup_\alpha \left(\Lambda_R + \mathsf{M}_R\alpha\right)$. In this sense, the above assumption agrees with \asNonInter~from \cref{sec:tight_binding_hamiltonian} with $\Lambda$ replaced with $\bigcup_\alpha \left(\Lambda_R + \mathsf{M}_R\alpha\right)$.  

In order to simplify notation, we shall define
$\distR{\ell k}{\#}(u)\coloneqq \min_{\alpha\in\mathbb Z^d} |\bm{r}_{\ell k}(u) + \mathsf{M}_R\alpha|$ 
to be the torus distance between atomic positions $\ell + u(\ell)$ and $k + u(k)$. If it is clear from context, we shall drop the argument $(u)$.

The torus tight binding Hamiltonian is given by
\begin{equation}
    \label{a:two-centre}
	\big[\Ham^R(u)\big]_{\ell k}^{ab} 
		= \sum_{\alpha \in \mathbb Z^d} h^{ab}_{\ell k}(\bm{r}_{\ell k}(u) + \mathsf{M}_R\alpha) 
		\quad \textup{for } \ell, k \in \Lambda_R \text{ and } 1\leq a,b\leq \numorbitals.
\end{equation}

We note that, under \asTB~and \asNonInterR, the spectrum of the Hamiltonian, $\sigma(\Ham^R(u))$, is uniformly bounded in some interval that depends only on $\ctnoninterpen,d,\ctTBprefactor,\ctTBexponent$ and is independent of $R$ and the displacement $u$ satisfying \asNonInterR, with the constant $\ctnoninterpen$ \cite[Lemma~2.1]{ChenOrtner16}. This can also be seen as a direct corollary of Lemma~\ref{lem:perturbation_spec}, below. 

\begin{remark}
In this paper, we consider periodic boundary conditions in order to avoid spectral pollution that is known to occur when using clamped boundary conditions, see for example \cite{CancesEhrlacherMaday2012}. Indeed, in Lemma~\ref{lem:spectral_pollution} we prove that spectral pollution does not occur in our setting.
\end{remark}

\subsubsection{Potential Energy at Finite Fermi-Temperature}
\label{sec:finiteTenergy}

We consider a particle system containing $N_{e,R}$ electrons and nuclei described by some admissible displacement $u \colon \Lambda_R \to \mathbb{R}^d$. For simplicity of notation we define 
$N_R \coloneqq \#\Lambda_R\cdot N_\textrm{b}$
(and so $\Ham^R(u)\in \mathbb R^{N_R \times N_R}$). 

We first suppose that the electrons are in a canonical ensemble. That is, we fix the number of particles in the system, the volume and Fermi-temperature $T>0$. The particle number functional is given by summing the electronic occupation numbers according to the Fermi-Dirac distribution:
\begin{align}
    \label{eq:particleno}
    N^{\beta,R}(u;\tau) &\coloneqq 2\sum_{s=1}^{N_R} f_\beta(\lambda_s - \tau) 
    \quad \text{ and } \quad 
    f_\beta(\ep) \coloneqq \frac{1}{1 + e^{\beta \ep}},
\end{align}
where $\{\lambda_s\}_{s=1}^{N_R}$ is some enumeration of $\sigma(\Ham^R(u))$. Here, the factor of two accounts for the spin and 
$\beta$
is known as the inverse Fermi-temperature. Since $\tau \mapsto N(u;\tau)$ is strictly increasing, for any electron number $N_{e,R} \in (0,2N_R)$, the Fermi level, $\ep_\mathrm{F}^{\beta,R}(u)= \ep_\mathrm{F}^{\beta,R}$, at finite Fermi-temperature solving 
\begin{align}
    \label{eq:FL}
    N^{\beta,R}(u;\ep_\mathrm{F}^{\beta,R}) = N_{e,R}
\end{align}
is well-defined (see Appendix~\ref{sec:pot_energy_surface} for more details).

The Helmholtz free energy is then given by 
\begin{align}
    \label{def:Helmholtz_free_energy}
    \begin{split}
        E^{\beta,R}(u) 
        &= \sum_{s=1}^{N_R} \mathfrak{e}^\beta(\lambda_s,\ep^{\beta,R}_\mathrm{F}) \quad \text{where}\\
        \mathfrak{e}^\beta(\lambda;\tau) 
        &= 2 \lambda f_\beta(\lambda - \tau) + \frac{2}{\beta}S(f_\beta(\lambda - \tau))\\
        &= 2 \tau f_\beta(\lambda-\tau) + \frac{2}{\beta} \log\left(1-f_\beta(\lambda-\tau)\right)
    \end{split}
\end{align}
where $S(f) = f \log(f) + (1 - f) \log (1-f)$ is the entropy contribution. More details regarding the derivation of this energy can be found in Appendix~\ref{sec:pot_energy_surface}.

In the following, it will be useful to consider the Helmholtz free energy as a function of both the configuration and Fermi level. That is, we abuse notation and define $E^{\beta,R}(u;\tau) \coloneqq \sum_s \mathfrak{e}^\beta(\lambda_s;\tau)$. In this notation, we have 
$E^{\beta,R}(u) \equiv E^{\beta,R}(u,\ep^{\beta,R}_\mathrm{F}(u))$. 

For a many-particle system that is free to exchange particles with a reservoir, it is useful to consider the {grand canonical ensemble}. This framework will also allow us to formulate the limiting models as $R\to\infty$. In this case, the Fermi-temperature, volume and chemical potential, $\mu$, are fixed model parameters $(TV\mu)$ and, instead of the Helmholtz free energy, we subtract the contribution resulting from varying the particle number and consider the {grand potential}:
\begin{align}
    \label{eq:grand_canonical_ensemble}
    \begin{split}
        G^{\beta,R}(u;\mu) &\coloneqq E^{\beta,R}(u;\mu) 
            - \mu N^{\beta,R}(u; \mu) 
            = \sum_{s = 1}^{N_R} \mathfrak{g}^\beta(\lambda_s, \mu) 
                \quad \text{where} \\
        \mathfrak{g}^\beta(\lambda;\tau) &\coloneqq 
            \mathfrak{e}^\beta(\lambda;\tau) - 2\tau f_\beta(\lambda-\tau) 
            = \frac{2}{\beta} \log\left(1-f_\beta(\lambda-\tau)\right).
    \end{split}
\end{align}

When it is clear from the context, we will drop $\mu$ in the argument for the particle number functional and grand potential: that is, $N^{\beta,R}(u;\mu) = N^{\beta,R}(u)$ and $G^{\beta,R}(u;\mu) = G^{\beta,R}(u)$.

\subsubsection{Potential Energy Surface at Zero Fermi-Temperature}
\label{sec:0Tenergy}
 
We now consider the Helmholtz free energy and grand potential at zero Fermi-temperature. We simply take the pointwise limit of \cref{eq:grand_canonical_ensemble} as $\beta \to \infty$ to obtain
\begin{align}
    \label{eq:zero_temp_finite_domian}
    \begin{split}
        G^{\infty,R}(u;\mu) &\coloneqq \sum_{s=1}^{N_R} \mathfrak{g}(\lambda_s;\mu) \quad \text{where}\\
        \mathfrak{g}(\lambda;\tau) &= 2(\lambda - \tau)\chi_{(-\infty,\tau)}(\lambda).
    \end{split}
\end{align}
See Lemma~\ref{lem:g_convergence_2} for justification of this limit. Moreover, we define the zero temperature particle number as the limit as $\beta \to \infty$:
\[
    N^{\infty,R}(u;\tau) \coloneqq 2\#\{ \lambda \in \sigma(\Ham^R(u)) \colon \lambda < \tau \} + \#\{ \lambda \in \sigma(\Ham^R(u)) \colon \lambda = \tau \}.
\]

For the Helmholtz free energy, we must also consider the Fermi level constraint. However, taking the $\beta \to \infty$ limit of the particle number functional yields a step function. This means that there may not be a unique solution to the particle number constraint \cref{eq:FL} at zero Fermi-temperature. We define the zero temperature Fermi level as the zero temperature limit of the finite temperature Fermi levels:
\begin{lemma}[Fermi Level at Zero Fermi-Temperature]
    \label{lem:fermilevelconverge}
    Suppose $u\colon\Lambda_R \to \mathbb R^d$ satisfies \asNonInterR~and that 
    $\ep_\mathrm{F}^{\beta,R}$
    is the corresponding Fermi level solving \cref{eq:FL}. Then, 
    \[ 
        \lim_{\beta\to\infty}\ep_\mathrm{F}^{\beta,R} = \ep_\mathrm{F}^{\infty,R} \coloneqq 
            \begin{cases}
                \underline{\ep} &\text{if }N^{\infty,R}(u;\tfrac{1}{2}(\underline{\ep} + \overline{\ep})) > N_{e,R} \\
                \frac{1}{2}\left( \underline{\ep} + \overline{\ep}\right) &\text{if }N^{\infty,R}(u;\tfrac{1}{2}(\underline{\ep} + \overline{\ep})) = N_{e,R}\\
                \overline{\ep} &\text{if } N^{\infty,R}(u;\tfrac{1}{2}(\underline{\ep} + \overline{\ep})) < N_{e,R}  
            \end{cases}
    \] 
    where 
    $\underline{\ep} \coloneqq \arg\max_{\sigma(\Ham^R(u))} \{N^{\infty,R}(u;\cdot) \leq N_{e,R}  \}$
    and 
    $\overline{\ep} \coloneqq \argmin_{\sigma(\Ham^R(u))} \{ N^{\infty,R}(u;\cdot) \geq N_{e,R}  \}$.
\end{lemma}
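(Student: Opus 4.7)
My plan is to prove that the bounded sequence $\tau_\beta \coloneqq \ep_\mathrm{F}^{\beta,R}$ has every subsequential limit equal to the value $\ep_\mathrm{F}^{\infty,R}$ prescribed in the statement, which then forces convergence of the full sequence. The main tools are the strict monotonicity of $\tau \mapsto N^{\beta,R}(u;\tau)$, the pointwise convergence $N^{\beta,R}(u;\tau) \to N^{\infty,R}(u;\tau)$ as $\beta \to \infty$ (exponentially fast when $\tau \notin \sigma(\Ham^R(u))$), and, in the delicate equal case, the symmetry $1 - f_\beta(\ep) = f_\beta(-\ep)$ of the Fermi--Dirac distribution.

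First I observe that $\tau \mapsto N^{\beta,R}(u;\tau)$ is smooth and strictly increasing from $0$ to $2N_R$, so \cref{eq:FL} uniquely determines $\tau_\beta$. Evaluating $N^{\beta,R}(u;\cdot)$ slightly outside $\sigma(\Ham^R(u))$ forces $N^{\beta,R} \to 0$ or $2N_R$ as $\beta \to \infty$, so $\tau_\beta$ is uniformly bounded. Next I argue that any accumulation point $\tau^*$ of $\{\tau_\beta\}$ lies in $[\underline{\ep},\overline{\ep}]$: if $\tau^* < \underline{\ep}$, pick a non-eigenvalue $\tau_0 \in (\tau^*, \underline{\ep})$; writing $m_{\underline{\ep}} \geq 1$ for the multiplicity of $\underline{\ep}$,
\[
N^{\infty,R}(u;\tau_0) \leq 2\#\{\lambda_s < \underline{\ep}\} = N^{\infty,R}(u;\underline{\ep}) - m_{\underline{\ep}} \leq N_{e,R} - 1,
\]
and combining the exponentially fast convergence $N^{\beta,R}(u;\tau_0) \to N^{\infty,R}(u;\tau_0)$ with $\tau_\beta < \tau_0$ along the subsequence contradicts $N^{\beta,R}(u;\tau_\beta) = N_{e,R}$; symmetric reasoning gives $\tau^* \leq \overline{\ep}$.

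For the case analysis, if $\underline{\ep} = \overline{\ep}$, the trapping above already forces $\tau^* = \underline{\ep} = M$ with $M \coloneqq \tfrac12(\underline{\ep}+\overline{\ep})$, and one checks this coincides with the ``equal'' branch of the statement. If $\underline{\ep} < \overline{\ep}$, no eigenvalue lies in $(\underline{\ep},\overline{\ep})$, so $N^{\infty,R}(u;\cdot)$ equals a constant $N_0 = 2\#\{\lambda_s \leq \underline{\ep}\}$ there. When $N_0 > N_{e,R}$ (respectively $N_0 < N_{e,R}$), the same test-point argument as above, now applied at a non-eigenvalue $\tau_0 \in (\underline{\ep},\tau^*)$ (respectively $\tau_0 \in (\tau^*,\overline{\ep})$), pins $\tau^*$ to $\underline{\ep}$ (respectively $\overline{\ep}$), matching the remaining two branches.

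The main obstacle is the equal sub-case $\underline{\ep} < \overline{\ep}$ with $N_0 = N_{e,R}$, where $N^{\infty,R}$ is constant equal to $N_{e,R}$ on the whole gap, so the test-point technique gives no information. Here I exploit the Fermi--Dirac symmetry to rewrite $N^{\beta,R}(u;\tau_\beta) = N_{e,R}$ as the balance identity
\begin{equation*}
\sum_{\lambda_s \leq \underline{\ep}} f_\beta(\tau_\beta - \lambda_s) = \sum_{\lambda_s \geq \overline{\ep}} f_\beta(\lambda_s - \tau_\beta).
\end{equation*}
Using $\tfrac12 e^{-\beta\ep} \leq f_\beta(\ep) \leq e^{-\beta\ep}$ for $\ep \geq 0$ and isolating the dominant contributions (the $\lambda_s = \underline{\ep}$ term on the left and the $\lambda_s = \overline{\ep}$ term on the right, with remaining terms exponentially smaller by the spectral gap between $\underline{\ep}$ and its nearest-neighbour eigenvalues), this forces $(\tau_\beta - \underline{\ep}) - (\overline{\ep} - \tau_\beta) = O(\beta^{-1})$, hence $\tau_\beta \to M$. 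With every accumulation point of $\{\tau_\beta\}$ thus identified as the prescribed $\ep_\mathrm{F}^{\infty,R}$, convergence of the full sequence follows.
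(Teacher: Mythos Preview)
Your proposal is correct and shares the paper's key idea: in the delicate sub-case $\underline{\ep}<\overline{\ep}$ with $N^{\infty,R}(u;M)=N_{e,R}$, both you and the paper exploit the Fermi--Dirac symmetry $f_\beta(-\ep)=1-f_\beta(\ep)$ to rewrite the particle-number constraint as a balance identity, then extract $(\tau_\beta-\underline{\ep})-(\overline{\ep}-\tau_\beta)=O(\beta^{-1})$ from exponential two-sided bounds on $f_\beta$. This is the crux of the argument, and you have it.

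Where you differ is in the remaining sub-cases $N^{\infty,R}(u;M)\neq N_{e,R}$. You dispose of these with the same elementary test-point/monotonicity device you used to trap $\tau^\ast$ in $[\underline{\ep},\overline{\ep}]$, which is arguably cleaner. The paper instead treats all three sub-cases uniformly through the balance identity \cref{eq:FL_converge_1}, deriving from it the two-sided inequalities \cref{eq:FL_4} and then manipulating these to pin $\ep_{\mathrm F}^\beta$ to $\underline{\ep}$, $M$, or $\overline{\ep}$ according to the sign of $n=\tfrac12(N_{e,R}-N^{\infty,R}(u;M))$. Your route is more elementary for the unequal cases; the paper's is more uniform. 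One small point: your balance-identity step tacitly uses $\tau_\beta\in[\underline{\ep},\overline{\ep}]$ for large $\beta$ (not just that accumulation points lie there), so that all arguments of $f_\beta$ are nonnegative; the paper also glosses over this, and it follows in one line from the balance identity itself once $\underline{\ep}<\overline{\ep}$.
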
 
\begin{proof}
    This well-known result is elementary but, for the sake of completeness, a full proof is presented in \Cref{sec:appendix_FL}.
\end{proof}

This result suggests that we may formally define the zero Fermi-temperature Helmholtz free energy by considering the pointwise limit as $\beta\to\infty$ and fixing the Fermi level as in Lemma~\ref{lem:fermilevelconverge}. That is, we define
\begin{equation}
    \label{eq:Helmholtz_zero_temperature}
    \begin{split}   
        &E^{\infty,R}(u) \equiv E^{\infty,R}(u;\ep_\mathrm{F}^{\infty,R}(u)) 
            \coloneqq \sum_{s} \mathfrak{e}^\infty(\lambda_s;\ep_\mathrm{F}^{\infty,R})\\
        &\quad\text{where}\quad 
        \mathfrak{e}^\infty(\lambda;\tau) 
            \coloneqq 2 \lambda \chi_{(-\infty,\tau)}(\lambda) + \lambda \chi_{\{\tau\}}(\lambda).
    \end{split}
\end{equation} 

\subsubsection{Equilibiration of Nuclei Positions}
\label{sec:geometryeq}

For finite computational cells, $\Lambda_R$, $\beta \in (0,\infty]$ and fixed chemical potential, $\mu$, we consider the geometry optimisation problems
\begin{align}
    \label{problem:canonical_finitedomain}
    \tag{$\text{CE}^{{\beta},R}$}
    \overline{u} &\in \argmin\left\{E^{\beta,R}(u) \colon u \text{ satisfies \asNonInterR}
    \right\}, \quad \textup{and}\\
    \label{problem:grand_canonical_finitedomain}
    \tag{$\text{GCE}^{{\beta},{R}}_{\mu}$}
    \overline{u} &\in \argmin\left\{G^{\beta,R}(u;\mu) \colon u \text{ satisfies \asNonInterR}
    \right\}.
\end{align}
Here and throughout, ``$\argmin$'' denotes the set of local minimisers. We denote these problems by \refCE{problem:canonical_finitedomain}{\beta}{R} and \refGCE{problem:grand_canonical_finitedomain}{\beta}{R}{\mu} so that we can reference the problem and associated parameters using a single compact notation.

We will assume that solutions $\overline{u}$ to \refGCE{problem:grand_canonical_finitedomain}{\beta}{R}{\mu} are \textit{strongly stable} in the following sense:
\begin{align}
    \label{eq:strongly_stable_R}
		\Braket{\delta^2 G^{\beta,R}(\ubar;\mu)v,v} \geq c_0 \|Dv\|_{\ell^2_\ctGamma}^2 
\end{align}
for some positive constant $c_0>0$. In order to differentiate the energy, in the case $\beta = \infty$, we require $\mu \not\in \sigma(\Ham^R(\overline{u}))$.

\subsection{Thermodynamic Limit}
\label{sec:thermodynamiclimit}

In this section, we introduce the limiting model on an infinite point defect reference configuration for finite Fermi-temperature as formulated in \cite{ChenLuOrtner18} and state the analogous model at zero Fermi-temperature. 

\subsubsection{Band Structure of the Homogeneous Crystal}
\label{sec:homcrystal}

In the case $\Lambda = \Lambda^\mathrm{ref}$, we denote the Hamiltonian of the reference configuration by $\Ham^\mathrm{ref}$. That is, 
\[
    \big[\Ham^\mathrm{ref}\big]_{\ell k}^{ab} 
    \coloneqq h^{ab}_{\ell k}(\ell - k)\quad \forall\, \ell, k \in \Lambda^\mathrm{ref} 
    \text{ and } 1\leq a,b\leq \numorbitals.
\]	
By exploiting the translational symmetry of the reference configuration, we may apply the Bloch transform \cite{bk:kittel} to conclude that the spectrum of the reference Hamiltonian is banded in the sense that it is a union of at most finitely many spectral bands, see Appendix~\ref{sec:band_structure} for more details.

For the remainder of this paper, we assume that the system is an insulator:
\newcommand{\asGap}{\textup{\textbf{(GAP)}}}
\begin{assumption}[1.95cm]{\asGap}
	There is a gap in the reference spectrum, $\sigma(\Ham^\mathrm{ref})$, and the chemical potential, $\mu$, is fixed in the interior of this gap. 
\end{assumption}

\subsubsection{Energy Spaces of Displacements}
\label{sec:energy_spaces_of_displacements}

Following \cite{ChenNazarOrtner19,ChenLuOrtner18,ChenOrtner16,ChenOrtnerThomas2019}, we introduce a space of finite energy displacements which restricts the class of admissible configurations. 

Given $\ell \in \Lambda$ and $\rho \in \Lambda - \ell$, we define the finite difference $D_\rho u(\ell) \coloneqq u(\ell + \rho) - u(\ell)$. The full (infinite) finite difference stencil is then defined to be $Du(\ell) \coloneqq \left( D_\rho u(\ell)\right)_{\rho\in\Lambda - \ell}$ and for $\ctGamma > 0$, we define the $\ell^2_\ctGamma$ semi-norm by
\[ 
	\|Du\|_{\ell^2_\ctGamma} \coloneqq 
		\bigg(\sum_{\ell \in \Lambda} \sum_{\rho\in\Lambda - \ell} e^{-2\ctGamma|\rho|}|D_\rho u(\ell)|^2\bigg)^{1/2}. 
\]
Since all of the semi-norms $\|D\cdot\|_{\ell^2_\ctGamma}$, for $\ctGamma > 0$, are equivalent \cite{ChenNazarOrtner19}, we will fix $\ctGamma>0$ for the remainder of this paper and define the following function space of finite energy displacements:
\[ 
	\W(\Lambda) \coloneqq 
	\{ u \colon \Lambda \to \mathbb R^d \colon \|Du\|_{\ell^2_\ctGamma} < \infty \} \textup{ with semi-norm } \|D\cdot\|_{\ell^2_\ctGamma}.
\]

For $u \in \W(\Lambda)$, it is possible to approximate $\Ham(u)$ by a finite rank update of $\Ham^\mathrm{ref}$ which leads to the following result describing a decomposition of $\sigma(\Ham(u))$ \cite[Lemma~3]{ChenOrtnerThomas2019}:
\begin{lemma}[Decomposition of the Spectrum]
\label{lem:spectrum}
    Suppose that $u \in \W(\Lambda)$ is a displacement satisfying \asNonInter. Then, $\sigma_\mathrm{ess}(\Ham(u)) = \sigma(\Ham^\mathrm{ref})$. Moreover, outside any neighbourhood of $\sigma(\Ham^\mathrm{ref})$, there are at most finitely many isolated eigenvalues: for all $\delta > 0$, there exists $S_\delta > 0$ such that 
    \[
        \#\left(\sigma(\Ham(u)) \setminus B_\delta(\sigma(\Ham^\mathrm{ref}))\right) \leq S_\delta.
    \]
\end{lemma}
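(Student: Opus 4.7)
The plan is to realize $\Ham(u)$ as a Hilbert--Schmidt (hence compact) perturbation of a self-adjoint operator whose essential spectrum equals $\sigma(\Ham^\mathrm{ref})$, and then invoke Weyl's theorem on the stability of the essential spectrum under compact perturbations. Both conclusions will then follow at once: the essential spectrum identity directly from Weyl, and the finiteness bound $S_\delta$ from the fact that the remaining discrete spectrum consists of isolated eigenvalues of finite multiplicity that can only accumulate at the essential spectrum, so only finitely many can lie outside any open $\delta$-neighbourhood of it.

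First I would identify the underlying Hilbert spaces. Since $\Lambda$ and $\Lambda^\mathrm{ref}$ agree outside $B_{R_\mathrm{def}}$ by \asPointDefect, their symmetric difference is finite, so one can build a self-adjoint operator $\tilde\Ham^\mathrm{ref}$ on $\ell^2(\Lambda;\mathbb{C}^{\numorbitals})$ whose matrix entries coincide with those of the reference Hamiltonian for pairs $(\ell,k)$ both lying outside $B_{R_\mathrm{def}}$, and that differs from $\Ham^\mathrm{ref}$ (under the natural unitary identification of $\ell^2(\Lambda)$ and $\ell^2(\Lambda^\mathrm{ref})$, well-defined modulo a finite-dimensional subspace) by only a finite-rank modification. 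This will yield $\sigma_\mathrm{ess}(\tilde\Ham^\mathrm{ref})=\sigma_\mathrm{ess}(\Ham^\mathrm{ref})=\sigma(\Ham^\mathrm{ref})$, where the last equality reflects the purely banded structure of the perfect crystal recalled in the appendix.

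Next I would show that $V := \Ham(u) - \tilde\Ham^\mathrm{ref}$ is Hilbert--Schmidt. Pairs $(\ell,k)$ with at least one site in $B_{R_\mathrm{def}}$ contribute only a finite-rank term, while for the remaining pairs one expects a bound of the form
\[
	\big|V_{\ell k}^{ab}\big|^2 \lesssim e^{-c|\ell - k|}\,|u(\ell) - u(k)|^2
\]
for some $c>0$ depending on $\ctTBexponent$ and $\ctnoninterpen$. Summing and using the equivalence of the weighted semi-norms $\|D\cdot\|_{\ell^2_\ctGamma}$ for different rates (so that one may take $\ctGamma < c/2$),
\[
	\|V\|_\mathrm{F}^2 \lesssim \|Du\|_{\ell^2_\ctGamma}^2 + C(R_\mathrm{def}) < \infty,
\]
so $V$ is compact. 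Weyl's theorem then delivers $\sigma_\mathrm{ess}(\Ham(u)) = \sigma(\Ham^\mathrm{ref})$, and the discrete remainder is controlled as in the opening paragraph.

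The hardest part will be the pointwise entry bound on $|V_{\ell k}^{ab}|$: the mean-value estimate on $\nabla h^{ab}_{\ell k}$ along the segment from $\ell-k$ to $\bm r_{\ell k}(u)$ preserves the exponential decay only if that segment stays away from the origin. I would handle this by a case split: when $|u(\ell)-u(k)|\leq \tfrac12|\ell-k|$, the segment stays in $\{|\xi|\geq \tfrac12|\ell-k|\}$ and Taylor's theorem with \asTB\ gives the factor $|u(\ell)-u(k)|$ directly; in the complementary regime, one bounds the two terms of $V_{\ell k}^{ab}$ separately using \asTB\ and \asNonInter\ (the latter ensuring $r_{\ell k}(u)\geq \ctnoninterpen |\ell-k|$), and then reinserts the factor $|u(\ell)-u(k)|^2 \gtrsim |\ell-k|^2$ at the cost of shrinking the exponent slightly through the elementary inequality $1 \leq C|\ell-k|^{-2}|u(\ell)-u(k)|^2$.
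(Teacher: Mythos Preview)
Your approach is correct and shares the paper's structural outline: realize $\Ham(u)$ as a compact perturbation of an operator whose essential spectrum is $\sigma(\Ham^\mathrm{ref})$, then apply Weyl's theorem. The route to compactness differs. You bound $\|V\|_\mathrm{F}$ directly via a pointwise entry estimate with a case split on whether $|D_{k-\ell}u(\ell)|\leq\tfrac12|\ell-k|$; this is clean and self-contained, and your handling of the complementary case (reinserting the factor $|D_{k-\ell}u(\ell)|^2$ at the cost of a slightly smaller exponent) is sound. The paper instead passes through the decomposition $\widetilde\Ham(u)=\widetilde\Ham^\mathrm{ref}+P(u)+Q(u)$ of Lemma~\ref{lem:decomp_Ham_infty} (from \cite{ChenOrtnerThomas2019}), where for each $\delta>0$ one has $\|P(u)\|_\mathrm{F}\leq\delta$ and $Q(u)$ is finite rank supported on $B_{R_\delta}\times B_{R_\delta}$; the finiteness bound $S_\delta$ then comes from eigenvalue interlacing under the finite-rank update (cf.\ the proof of the $R<\infty$ analogue, Lemma~\ref{lem:perturbation_spec}). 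That decomposition of course also implies compactness, but its real payoff is the \emph{spatial localization} of the finite-rank piece, which the paper reuses in the improved Combes--Thomas estimates (Lemma~\ref{lem:CT}) and in the spectral-pollution analysis (Lemma~\ref{lem:spectral_pollution}). Your argument is the more economical one for this lemma in isolation; the paper's is tuned to feed into those subsequent results.
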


\subsubsection{Grand Potential Difference Functional}
\label{sec:grandpotdiff}

For a finite system, $\Lambda_R$, $\beta \in (0,\infty]$, and $u \colon \Lambda_R \to \mathbb R^d$ satisfying \asNonInterR~we may write
\begin{align}
	\label{eq:site_decomp}
	\begin{split}
		&G^{\beta,R}(u) = 
			\sum_{\ell \in \Lambda_R} G^{\beta,R}_\ell(u) \quad \text{where} \quad G_\ell^{\beta,R}(u) \coloneqq 
			\sum_{s} \mathfrak{g}^{\beta}(\lambda_s;\mu)\sum_a[\psi_s]_{\ell a}^2.
	\end{split}
\end{align}

By \cite[Theorem~10]{ChenLuOrtner18}, the site energies defined in \cref{eq:site_decomp} can be extended to the infinite domain, $\Lambda$, by taking a sequence of computational cells, $\Lambda_R$, and defining the site energy as the limit along this sequence: $G_\ell^\beta(u) = G_{\ell}^{\beta,\infty}(u) \coloneqq \lim_{R\to\infty} G_\ell^{\beta,R}(u|_{\Lambda_R})$.

It will be convenient to rewrite the site energies as a function of the full interaction stencil, $Du(\ell)$. This can be done since the site energies inherit the translational invariance from the Hamiltonian operators (as in \cite{ChenOrtner16,ChenLuOrtner18}):
\begin{align}\label{def:site_energy_stencil}
    \mathcal G^{\beta, R}_\ell(Du(\ell)) \coloneqq G^{\beta, R}_\ell(u) 
\end{align}
for each $\beta \in (0,\infty]$ and $R \in (0,\infty]$. In the case where $R = \infty$, we simply write $\mathcal G^{\beta}_\ell \coloneqq \mathcal G^{\beta, \infty}_\ell$.

With this definition in hand, we may renormalise the total energy and, formally at first, define the following grand potential difference functional: for $\beta \in (0,\infty]$,
\begin{equation}\label{eq:grand_pot_diff}
    \mathcal G^\beta(u) \coloneqq \sum_{\ell \in \Lambda}
    \bigg[ \mathcal G^\beta_\ell( Du(\ell) ) - \mathcal G^\beta_\ell(\bm{0}) \bigg].
\end{equation}
Strong locality estimates for the site energies \cite[Theorems~4 and 5]{ChenOrtnerThomas2019}, together with the results of \cite{ChenNazarOrtner19} allow us to conclude that the grand potential difference functional is well defined on the following space of admissible displacements:
\begin{equation}
	\label{eq:adm_zero_temp}
	\mathrm{Adm}(\Lambda) \coloneqq \{ u \in \dot{\mathscr{W}}^{1,2}(\Lambda) \textup{ satisfying } \asNonInter \colon \mu \not\in \sigma(\Ham(u)) \}.
\end{equation}
We may therefore consider the following problems: for $\beta \in (0, \infty]$,
\begin{align}
	\label{problem:infinite_domain}
	\tag{$\mathrm{GCE}_\mu^{\beta,\infty}$}
	\overline{u} &\in 
		\argmin \{ \mathcal G^\beta(u) \colon u \in \mathrm{Adm}(\Lambda) \}.
\end{align}
In the main results of this paper, we will assume that solutions $\overline{u}$ to \refGCE{problem:infinite_domain}{\beta}{\infty}{\mu} are \textit{strongly stable} in the following sense:
\begin{align}
    \label{eq:strongly_stable}
		\Braket{\delta^2 \mathcal G^\beta(\ubar)v,v} \geq c_0 \|Dv\|_{\ell^2_\ctGamma}^2 
		\quad \forall \,v \in \dot{\mathscr W}^{1,2}(\Lambda)
\end{align}
for some positive constant $c_0>0$.

For $\beta \in (0,\infty]$, we denote by $\mathcal G^\beta_\mathrm{ref}$ the reference grand potential which is given by \cref{eq:grand_pot_diff} but with $\Lambda$ replaced with $\Lambda^\mathrm{ref}$. We assume that the reference configuration is an equilibrium state and stable in the sense that: there exists ${c}_\mathrm{stab} > 0$ such that
\begin{align}
\label{eq:reference_stability}
\delta\mathcal G^\beta_\mathrm{ref}(\bm{0}) = 0 
\quad \textup{and} \quad
\Braket{\delta^2\mathcal G^\beta_\mathrm{ref}(\bm{0})v,v} \geq c_\mathrm{stab} \|Dv\|_{\ell^2_\ctGamma(\Lambda^\mathrm{ref})}^2 \quad\forall\, v \in \dot{\mathscr W}^{1,2}(\Lambda^\mathrm{ref}).
\end{align}

We will often drop the superscript in the site energy and grand potential difference functional in the case of zero Fermi-temperature: that is, $\mathcal G_\ell = \mathcal G_\ell^\infty$ and $\mathcal G = \mathcal G^\infty$. 

\subsection{Main Results}
\label{sec:main_results}

The results of this paper can be summarised in Figure~\ref{fig:cd-1}. Each arrow represents the following mathematical statements: \textit{(i) Strong limit:} for any strongly stable solution to the limit problem, there exists a sequence of solutions to the finite domain or finite temperature problem that converges to the solution to the limit problem; \textit{(ii) Weak limit:} for any bounded sequence of solutions to the finite domain or finite temperature problem, there is a weak limit along a subsequence that is a critical point of the limiting energy functional.

This paper generalises the results of \cite{ChenLuOrtner18} to the zero temperature case but also to the case where $\Lambda^\mathrm{ref}$ is not necessarily a Bravais lattice. 

When we say that the limiting model is given by a grand-canonical model, we have to be careful in specifying the limiting chemical potential. For example, the limit of \refCE{problem:canonical_finitedomain}{\beta}{R} as $R\to \infty$ for $\beta < \infty$ is given by \refGCE{problem:infinite_domain}{\beta}{\infty}{\mu_\#} where $\mu_\#$ is the Fermi level for the homogeneous crystal \cite[Theorems~A.2 and A.3]{ChenLuOrtner18}. This subtlety means that Figure~\ref{fig:cd-1} \emph{cannot} be seen as a commutative diagram.

We also stress that these results only hold in the case where $\mu \not\in\sigma(\Ham(\ubar))$. This is simply because the site energies and hence the grand potential difference functional are not differentiable if the chemical potential is an eigenvalue. In this case the main techniques used in this paper cannot be applied. We will study the consequences of this restriction in future work.

\subsubsection{Zero Temperature Limit}

First, we state the zero Fermi-temperature limit result for the grand canonical ensemble model:
\begin{theorem}
[Strong Zero Temperature Limit,
\refGCE{problem:infinite_domain}{\beta}{\infty}{\mu}
$\to$ 
\refGCE{problem:infinite_domain}{\infty}{\infty}{\mu}]\label{thm:0T_limit_infinite_domain}
Suppose that $\mu$ satisfies \asGap~and $\ubar$ is a solution to
\refGCE{problem:infinite_domain}{\infty}{\infty}{\mu}
that is strongly stable \cref{eq:strongly_stable}. Then, there exist solutions $\ubar_\beta$ to 
\refGCE{problem:infinite_domain}{\beta}{\infty}{\mu}
such that
	\[
		\| D\ubar_\beta - D\ubar \|_{\ell^2_\ctGamma} \leq C e^{- \frac{1}{12}\ctDist\beta} \quad \textup{and} \quad 
		|\mathcal G^\beta(\overline{u}_\beta) - \mathcal G(\overline{u})| \leq C e^{-\frac{1}{12}\mathsf{d}\beta}
	\]
	where $\ctDist \coloneqq \mathrm{dist}(\mu, \sigma(\Ham(\overline{u})))$. 
\end{theorem}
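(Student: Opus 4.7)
The strategy is a quantitative inverse function theorem around $\ubar$. Since $\ubar$ is a strongly stable critical point of $\mathcal G = \mathcal G^\infty$ in \cref{eq:strongly_stable}, the task reduces to controlling the difference $\mathcal G^\beta - \mathcal G$ in a $C^2$ sense on a neighbourhood of $\ubar$ in $\W(\Lambda)$, and then running a Newton / implicit-function argument.

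\textbf{Step 1 (contour representation).} For $u$ in an $\ell^2_\ctGamma$-neighbourhood of $\ubar$, stability of the gap (\asGap~together with \Cref{lem:spectrum} and the Lipschitz continuity of $\Ham$ in $Du$) gives $\mathrm{dist}(\mu, \sigma(\Ham(u))) \geq \tfrac12 \ctDist$. Each site energy admits the Dunford representation
\[
\mathcal G^\beta_\ell(Du(\ell)) \;=\; \frac{1}{2\pi\mathrm i}\oint_{\mathcal C}\mathfrak g^\beta(z;\mu)\sum_{a=1}^{\numorbitals}\big[(z - \Ham(u))^{-1}\big]_{\ell\ell}^{aa}\,\mathrm dz,
\]
with $\mathcal C$ a fixed contour encircling the filled band of $\Ham(\ubar)$ at distance $\tfrac12\ctDist$ from the spectrum (a suitable closed contour for $\beta = \infty$ which, for $\beta < \infty$, is decomposed into a contour around the filled band plus a correction from the virtual band that is controlled by the smallness of $f_\beta$).

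\textbf{Step 2 ($C^2$ operator-norm estimates).} On $\mathcal C$, one has $|\mathfrak g^\beta(z;\mu) - \mathfrak g(z;\mu)| \lesssim e^{-c\beta\ctDist}$ for an explicit $c$, uniform resolvent bounds $\|(z-\Ham(u))^{-1}\|_{\ell^2\to\ell^2}\lesssim \ctDist^{-1}$, and Combes--Thomas off-diagonal exponential decay at a rate $\sim\ctDist$. Differentiating the contour representation in $Du(\ell)$ once and twice and using the third-order site-energy locality developed in \cite{ChenOrtnerThomas2019,ChenLuOrtner18}, one obtains pointwise bounds
\[
|\mathcal G^\beta_\ell - \mathcal G_\ell| + \|\partial\mathcal G^\beta_\ell - \partial\mathcal G_\ell\| + \|\partial^2\mathcal G^\beta_\ell - \partial^2\mathcal G_\ell\| \;\lesssim\; e^{-c\beta\ctDist},
\]
uniform in $\ell\in\Lambda$ and in $u$ near $\ubar$. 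Summing over $\ell$ with the $\ell^2_\ctGamma$ weights, together with standard duality arguments, yields
\[
\|\delta\mathcal G^\beta(\ubar)\|_{(\W)^*} + \|\delta^2\mathcal G^\beta(\ubar) - \delta^2\mathcal G(\ubar)\|_{\W\to(\W)^*} \;\lesssim\; e^{-\frac{1}{12}\ctDist\beta},
\]
where the concrete constant $\tfrac{1}{12}$ arises from balancing the contour convergence of $\mathfrak g^\beta$, the Combes--Thomas decay rate along $\mathcal C$, and the $e^{-\ctGamma|\rho|}$ stencil weights.

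\textbf{Step 3 (inverse function theorem and energy bound).} Strong stability \cref{eq:strongly_stable} gives $\delta^2\mathcal G(\ubar) \geq c_0$ on $\W(\Lambda)$; combined with Step~2, $\delta^2\mathcal G^\beta(\ubar)$ is coercive with constant $\geq c_0/2$ for $\beta$ large. A standard quantitative inverse function theorem (Newton contraction in a ball of radius $O(1)$ around $\ubar$, using a uniform Lipschitz bound on $\delta^2\mathcal G^\beta$ obtained as in Step~2 for displacements in that ball) then yields a unique nearby critical point $\ubar_\beta$ of $\mathcal G^\beta$ with
\[
\|D\ubar_\beta - D\ubar\|_{\ell^2_\ctGamma} \;\leq\; \tfrac{2}{c_0}\,\|\delta\mathcal G^\beta(\ubar)\|_{(\W)^*} \;\leq\; C e^{-\frac{1}{12}\ctDist\beta}.
\]
For the energy, Taylor expansion at $\ubar$ gives
\[
\mathcal G^\beta(\ubar_\beta) - \mathcal G(\ubar) = \big(\mathcal G^\beta(\ubar) - \mathcal G(\ubar)\big) + \langle\delta\mathcal G^\beta(\ubar), \ubar_\beta - \ubar\rangle + O\big(\|D\ubar_\beta - D\ubar\|_{\ell^2_\ctGamma}^2\big),
\]
and each of the three terms is $O(e^{-\ctDist\beta/12})$ by Step~2 and the displacement bound.

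\textbf{Main obstacle.} The crux is the $C^2$ operator-norm convergence in Step~2: three exponential scales must be tracked in concert, namely (i) the convergence of $\mathfrak g^\beta$ to $\mathfrak g$ along the contour, governed by $e^{-\beta|z - \mu|}$; (ii) the Combes--Thomas decay of the resolvent at distance $\sim\ctDist$ from $\sigma(\Ham(\ubar))$, which also needs to be stable under small perturbations of $u$; and (iii) the $e^{-\ctGamma|\rho|}$ stencil weights defining the $\ell^2_\ctGamma$-topology. Optimising this trade-off produces the specific exponent $\tfrac{1}{12}\ctDist$ in the statement.
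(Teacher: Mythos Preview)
Your overall architecture (inverse function theorem driven by consistency + stability) is correct, and the stability estimate in Step~3 is fine: the second-variation difference is genuinely summable because the bilinear form carries two factors of $Dv$, so Cauchy--Schwarz closes. The paper does exactly this in \cref{Thm1Part1:2ndvariationerror}.

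The gap is in the consistency estimate. Your claim that the pointwise bounds $|\mathcal G^\beta_{\ell,\rho}(D\ubar(\ell)) - \mathcal G_{\ell,\rho}(D\ubar(\ell))| \lesssim e^{-c\beta\ctDist}e^{-\ctCT|\rho|}$ can be ``summed over $\ell$ with the $\ell^2_\ctGamma$ weights'' to bound $\|\delta\mathcal G^\beta(\ubar)\|_{(\W)^*}$ does not hold. The first variation pairs the site-energy derivative with a \emph{single} factor $D_\rho v(\ell)$, and the bound on $|\mathcal G^\beta_{\ell,\rho} - \mathcal G_{\ell,\rho}|$ carries no decay in the site index $\ell$. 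Cauchy--Schwarz therefore produces a divergent prefactor $(\sum_{\ell\in\Lambda}1)^{1/2}$; equivalently, for $v$ supported in $B_R$ the naive estimate gives only $\|\delta\mathcal G^\beta(\ubar)\|\lesssim R^{d/2}e^{-c\beta\ctDist}$, which is not uniform in $R$. The same obstruction hits the energy term $\mathcal G^\beta(\ubar)-\mathcal G(\ubar)$ in your Step~3.

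The paper's proof addresses this with a genuine additional idea: a core/far-field decomposition $v = T_{R}v + (v-T_{R}v)$ of the test function (Lemma~\ref{lem:truncation}). On the core the naive estimate is used and produces the polynomial $R$-factor. On the far field one replaces $\ubar$ by a compactly supported approximant $\widetilde u = T_{\widetilde R}\ubar$ (incurring a controlled second-variation error, \cref{eq:partIconsistency}), and then compares $\delta\mathcal G^\beta(\widetilde u)$ and $\delta\mathcal G(\widetilde u)$ to the \emph{reference} functionals, using the assumption $\delta\mathcal G^\beta_{\mathrm{ref}}(\bm 0)=0$ from \cref{eq:reference_stability}. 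This localises the far-field contribution to a neighbourhood of the defect and yields exponential decay in the buffer width $R-\widetilde R$, \cref{eq:FF_estimate}. Balancing the core and far-field then gives the uniform consistency bound. Note also that the constant $\tfrac{1}{12}$ arises precisely here: the linear interpolation between $\ubar$ and $\widetilde u$ only guarantees a gap $\geq\tfrac14\ctDist$ along the path (\cref{eq:spec_perturbation}), which halves the exponent from $\tfrac16$ to $\tfrac{1}{12}$; it is not a Combes--Thomas / stencil-weight trade-off as you suggest.
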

\begin{prop}[Weak Zero Temperature Limit,
\refGCE{problem:infinite_domain}{\beta}{\infty}{\mu}
$\to$ 
\refGCE{problem:infinite_domain}{\infty}{\infty}{\mu}]
\label{prop:0T_limit_infinite_domain}
	Suppose that $\ubar_{\beta_j}$ is a bounded sequence (with $\beta_j \to \infty$) of solutions to
	\refGCE{problem:infinite_domain}{\beta_j}{\infty}{\mu}
	each satisfying \asNonInter~with an accumulation parameter uniformly bounded below by $\mathfrak{m}>0$ and such that $\mu$ is (eventually) uniformly bounded away from $\sigma(\Ham(\overline{u}_{\beta_j}))$. Then, there exists $\ubar\in\W(\Lambda)$ such that along a subsequence
	\begin{equation}
	\label{eq:theorem1part2_1}
	    D_\rho\ubar_{\beta_j}(\ell) \to D_\rho\ubar(\ell) 
	    \quad \forall \ell \in \Lambda,\,\rho \in \Lambda - \ell.
	\end{equation}
	Moreover, $\ubar$ is a critical point of $\mathcal G$.
\end{prop}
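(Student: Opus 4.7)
The plan is to extract a pointwise-convergent subsequence by weak compactness in $\W(\Lambda)$, verify that the limit $\ubar$ is admissible, and then pass to the limit in the Euler--Lagrange equation $\langle \delta \mathcal G^{\beta_j}(\ubar_{\beta_j}), v\rangle = 0$ using the contour-integral representation of the site-energy derivatives. First, since $\{\ubar_{\beta_j}\}$ is bounded in the Hilbert space $(\W(\Lambda), \|D\cdot\|_{\ell^2_\ctGamma})$ (modulo constants), Banach--Alaoglu furnishes a subsequence (not relabelled) and $\ubar \in \W(\Lambda)$ with $D\ubar_{\beta_j} \rightharpoonup D\ubar$; testing weak convergence against the unit basis vector indexed by a single pair $(\ell,\rho)$ gives the pointwise statement $D_\rho\ubar_{\beta_j}(\ell) \to D_\rho\ubar(\ell)$ asserted in \cref{eq:theorem1part2_1}. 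Pointwise convergence of finite differences preserves the uniform non-interpenetration bound, $r_{\ell k}(\ubar) \geq \ctnoninterpen|\ell - k|$, so $\ubar$ satisfies \asNonInter; and the assumed eventual uniform lower bound $\mathrm{dist}(\mu, \sigma(\Ham(\ubar_{\beta_j}))) \geq \mathsf d > 0$, combined with the continuity of the Hamiltonian under pointwise convergence of finite differences (Lemma~\ref{lem:perturbation_spec}), implies $\mu \not\in \sigma(\Ham(\ubar))$. Hence $\ubar \in \mathrm{Adm}(\Lambda)$ and $\delta \mathcal G(\ubar)$ is well defined.

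To pass to the limit in $\langle \delta \mathcal G^{\beta_j}(\ubar_{\beta_j}), v\rangle = 0$ for a finitely supported test function $v$, I rewrite each site-energy derivative as a contour integral along a fixed rectifiable contour $\mathscr C$ encircling $(-\infty,\mu]$ at distance at least $\mathsf d/2$ from $\sigma(\Ham(\ubar_{\beta_j}))$ for all large $j$, crossing the real axis inside the spectral gap. On $\mathscr C$, the Fermi-Dirac weight $\mathfrak g^{\beta_j}$ converges uniformly to $\mathfrak g$ with error $O(e^{-c\beta_j \mathsf d})$, while the resolvents $(\Ham(\ubar_{\beta_j})-z)^{-1}$ converge entrywise to $(\Ham(\ubar)-z)^{-1}$ by the pointwise convergence of the Hamiltonian entries (which follows from \asTB~and pointwise convergence of finite differences) together with the uniform off-diagonal exponential decay (Combes--Thomas) of the resolvent along $\mathscr C$. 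The exponential locality of site-energy derivatives proved in \cite{ChenOrtnerThomas2019} supplies $\ell$-summable majorants independent of $j$, so dominated convergence in the outer sum over $\ell$ yields $\langle \delta \mathcal G^{\beta_j}(\ubar_{\beta_j}), v\rangle \to \langle \delta \mathcal G(\ubar), v\rangle$, which therefore vanishes. Density of finitely supported test functions in $\W(\Lambda)$ promotes this identity to all $v \in \W(\Lambda)$, so $\ubar$ is a critical point of $\mathcal G$.

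The main obstacle is the coupling of the two limits: the displacement sequence converges only pointwise (equivalently, weakly), while the Fermi-temperature limit individually gives only pointwise-in-$z$ convergence of the scalar integrand on $\mathscr C$. Showing that the combined error on the site-energy contour integral is summable in $\ell$ uniformly in $j$ rests essentially on the eventual uniform spectral gap around $\mu$ (which allows a single contour $\mathscr C$ to be used for all large $j$) and on the exponential locality of the first variation of the site energies; without either ingredient the dominated-convergence closure would collapse, and one would obtain only convergence on finite subregions with no control of the tails.
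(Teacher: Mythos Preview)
Your proposal is correct and follows the same overall arc as the paper: extract a weak limit in $\W(\Lambda)$ via Banach--Alaoglu, verify admissibility of $\ubar$, and pass to the limit in the first-order condition using the contour-integral representation and Combes--Thomas decay. Two small points are worth noting. First, the spectral step ``$\mu \notin \sigma(\Ham(\ubar))$'' is not a consequence of Lemma~\ref{lem:perturbation_spec} (which concerns $R<\infty$); the paper invokes the spectral-pollution statement for weak limits (Remark~\ref{rem:spectral_pollution}), namely $\sigma(\Ham(\ubar)) \subset \liminf_j \sigma(\Ham(\ubar_{\beta_j}))$, together with the uniform gap assumption. Second, where you pass to the limit in one combined step (pointwise resolvent convergence plus $\mathfrak g^{\beta_j}\to\mathfrak g$, closed by dominated convergence over $\ell$), the paper rewrites $\langle\delta\mathcal G^{\beta_j}(\ubar_{\beta_j}),v\rangle = \sum_{\ell}\mathcal F_\ell^{\beta_j}(\ubar_{\beta_j})\cdot v(\ell)$ as a \emph{finite} sum over $\mathrm{supp}(v)$ and then splits the force error as $\mathcal F_\ell^{\beta_j}(\ubar_{\beta_j}) - \mathcal F_\ell^{\beta_j}(\ubar)$ (change of displacement at fixed $\beta$, controlled by a localized resolvent-difference estimate, \cref{eq:Fl_general_formula}) plus $\mathcal F_\ell^{\beta_j}(\ubar) - \mathcal F_\ell(\ubar)$ (zero-temperature limit at fixed displacement, Lemma~\ref{lem:convergence_site_energy}). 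This two-step split sidesteps the need to justify dominated convergence in an infinite $\ell$-sum and makes the role of the parameter $R_v$ (first sent large, then $j\to\infty$) explicit; your single-step version is valid but relies more heavily on the uniform locality majorants being invoked correctly.
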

The exact same arguments can be made in the finite domain case. That is, every strongly stable solution to
\refGCE{problem:grand_canonical_finitedomain}{\infty}{R}{\mu}
is an accumulation point of a sequence of solutions to
\refGCE{problem:grand_canonical_finitedomain}{\beta}{R}{\mu}. 
Moreover, for every bounded sequence of solutions to 
\refGCE{problem:grand_canonical_finitedomain}{\beta}{R}{\mu},
with spectrum uniformly bounded away from the chemical potential, up to a subsequence, there exists a weak limit. Moreover, the limit is a critical point of the limiting functional.

Further, we have an analogous result in the canonical ensemble. Here we only consider $R<\infty$ since $(\textup{CE}^{\infty,\infty})$ is not well defined.
\begin{theorem}[Strong Zero Temperature Limit, \refCE{problem:canonical_finitedomain}{\beta}{R}
$\to$ 
\refCE{problem:canonical_finitedomain}{\infty}{R}]
\label{cor:strong_0T_limit_infinite_domain}
Suppose $\ubar_R$ is a solution to 
\refCE{problem:canonical_finitedomain}{\infty}{R}
with $\mu \coloneqq \ep_\mathrm{F}^{\infty,R}(\overline{u}_R) \not\in \sigma(\Ham^R(\overline{u}_R))$ and such that \cref{eq:strongly_stable_R} is satisfied with the chemical potential $\mu$ and $\beta = \infty$. Then, there exist solutions $\ubar_{R,\beta}$ to 
\refCE{problem:canonical_finitedomain}{\beta}{R}
such that
	\[
		\| D\ubar_{R,\beta} - D\ubar_{R} \|_{\ell^2_\ctGamma} + 
		\big| \ep_\mathrm{F}^{\beta,R}(\ubar_{R,\beta}) - \ep_\mathrm{F}^{\infty,R}(\ubar_{R}) \big| 
		    \leq C e^{-\frac{1}{12}\ctDist\beta}
	\]
	where $\ctDist = \mathrm{dist}(\mu, \sigma(\Ham^R(\overline{u}_R)))$.
\end{theorem}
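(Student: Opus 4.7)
The plan is to apply a quantitative inverse function theorem to the equation $\delta E^{\beta,R}(u) = 0$ in a neighbourhood of $\overline{u}_R$. A direct calculation from $E^{\beta,R}(u) = G^{\beta,R}(u;\tau) + \tau N^{\beta,R}(u;\tau)$ together with $\partial_\tau G^{\beta,R} = -N^{\beta,R}$, evaluated at $\tau = \ep_\mathrm{F}^{\beta,R}(u)$, yields the identity $\delta E^{\beta,R}(u) = \delta_u G^{\beta,R}(u;\ep_\mathrm{F}^{\beta,R}(u))$, since the implicit Fermi-level contribution vanishes by the constraint. In particular, $\overline{u}_R$ is automatically a critical point of $G^{\infty,R}(\,\cdot\,;\mu)$ for $\mu := \ep_\mathrm{F}^{\infty,R}(\overline{u}_R)$. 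By \Cref{lem:fermilevelconverge}, the hypothesis $\mu \notin \sigma(\Ham^R(\overline{u}_R))$ forces $\mu = \tfrac{1}{2}(\underline{\ep}+\overline{\ep})$ to sit at the exact midpoint of a spectral gap, with $\ctDist = \tfrac{1}{2}(\overline{\ep}-\underline{\ep}) > 0$; the assumption \eqref{eq:strongly_stable_R} then provides coercivity of $\delta^2 G^{\infty,R}(\overline{u}_R;\mu)$ on $\W$.

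The first technical step is a residual bound $\|\delta E^{\beta,R}(\overline{u}_R)\| \leq C e^{-\ctDist\beta/12}$. Using $\delta_u G^{\infty,R}(\overline{u}_R;\mu) = 0$, I split
\[
    \delta E^{\beta,R}(\overline{u}_R) = \bigl[\delta_u G^{\beta,R}(\overline{u}_R;\mu) - \delta_u G^{\infty,R}(\overline{u}_R;\mu)\bigr] + \bigl[\delta_u G^{\beta,R}(\overline{u}_R;\ep_\mathrm{F}^{\beta,R}(\overline{u}_R)) - \delta_u G^{\beta,R}(\overline{u}_R;\mu)\bigr].
\]
The first bracket is the finite- versus zero-Fermi-temperature grand-potential-gradient error at a fixed $\mu$ in the gap, controlled by the contour-integral representation of site energies; this is precisely the quantity shown to be $\mathcal O(e^{-\ctDist\beta/12})$ in the proof of \Cref{thm:0T_limit_infinite_domain}, adapted to the finite torus. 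The second bracket is bounded by $\|\delta_u N^{\beta,R}(\overline{u}_R;\mu)\|\cdot|\ep_\mathrm{F}^{\beta,R}(\overline{u}_R)-\mu|$, in which both factors decay at the Fermi--Dirac rate in the gap. An analogous decomposition yields $\|\delta^2 E^{\beta,R}(\overline{u}_R) - \delta^2 G^{\infty,R}(\overline{u}_R;\mu)\| \leq C e^{-\ctDist\beta/12}$, so that by \eqref{eq:strongly_stable_R} the Hessian $\delta^2 E^{\beta,R}(\overline{u}_R)$ remains coercive with constant $c_0/2$ for $\beta$ sufficiently large.

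Small residual plus uniform coercivity then produce, via a Banach fixed-point (Newton) argument, a unique CE solution $\overline{u}_{R,\beta}$ near $\overline{u}_R$ satisfying $\|D\overline{u}_{R,\beta} - D\overline{u}_R\|_{\ell^2_\ctGamma} \leq (2/c_0)\|\delta E^{\beta,R}(\overline{u}_R)\| \leq C e^{-\ctDist\beta/12}$. For the Fermi-level estimate I expand the constraint $N^{\beta,R}(\overline{u}_{R,\beta};\ep_\mathrm{F}^{\beta,R}(\overline{u}_{R,\beta})) = N_{e,R} = N^{\infty,R}(\overline{u}_R;\mu)$ around $(\overline{u}_R,\mu)$: the Fermi--Dirac symmetry $f_\beta(-x)+f_\beta(x) = 1$ produces a cancellation between the $\underline{\ep}$- and $\overline{\ep}$-contributions at the midpoint in the numerator, the denominator $\partial_\tau N^{\beta,R}(\overline{u}_{R,\beta};\mu)$ is bounded below by $c\beta e^{-\ctDist\beta}$ coming from the same nearest eigenvalues, and the spectral perturbation $|\lambda_s(\overline{u}_{R,\beta}) - \lambda_s(\overline{u}_R)| \leq C\|D\overline{u}_{R,\beta}-D\overline{u}_R\|_{\ell^2_\ctGamma}$ controls the geometry-induced correction, together yielding $|\ep_\mathrm{F}^{\beta,R}(\overline{u}_{R,\beta}) - \mu| \leq Ce^{-\ctDist\beta/12}$.

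The hard part will be matching the exponential rate on the Fermi level with that on the configuration. A naive Newton step applied to $N^{\beta,R}(\overline{u}_{R,\beta};\,\cdot\,)=N_{e,R}$ gives only $\mathcal O(1/\beta)$, because both numerator and denominator are exponentially small of comparable order $e^{-\ctDist\beta}$. The exponential improvement depends crucially on the midpoint-gap structure guaranteed by \Cref{lem:fermilevelconverge}, which allows the leading Fermi--Dirac contributions at $\underline{\ep}$ and $\overline{\ep}$ to cancel in the electron-number residual, and on a careful combination with the already exponentially small configuration shift through the holomorphic functional calculus / contour-integral machinery of \cite{ChenOrtnerThomas2019} that underlies \Cref{thm:0T_limit_infinite_domain}.
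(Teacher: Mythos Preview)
Your route differs genuinely from the paper's. Rather than applying the inverse function theorem to $\delta E^{\beta,R}$, the paper parametrises by the chemical potential: for each $\mu'$ in a fixed closed interval $I$ inside the gap of $\sigma(\Ham^R(\overline{u}_R))$, it invokes the finite-$R$ version of Theorem~\ref{thm:0T_limit_infinite_domain} to produce a GCE solution $\overline{u}^{\mu'}_{R,\beta}$ with $\|D\overline{u}^{\mu'}_{R,\beta}-D\overline{u}_R\|_{\ell^2_\ctGamma}\leq Ce^{-\ctDist\beta/12}$ uniformly in $\mu'\in I$, proves that $\mu'\mapsto\ep_\mathrm{F}^{\beta,R}(\overline{u}^{\mu'}_{R,\beta})$ maps $I$ continuously into $I$, and applies Brouwer's fixed-point theorem to obtain $\mu^\star=\ep_\mathrm{F}^{\beta,R}(\overline{u}^{\mu^\star}_{R,\beta})$. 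The resulting $\overline{u}^{\mu^\star}_{R,\beta}$ then solves $(\mathrm{CE}^{\beta,R})$ with the configuration rate inherited directly; the paper never analyses $\delta^2 E^{\beta,R}$ nor the implicit map $u\mapsto\ep_\mathrm{F}^{\beta,R}(u)$. Your direct approach is sound for the configuration bound---the identity $\delta E^{\beta,R}(u)=\delta_u G^{\beta,R}(u;\ep_\mathrm{F}^{\beta,R}(u))$ is correct, and the rank-one Hessian correction $(\delta_u N)^{\otimes 2}/\partial_\tau N$ is $O(\beta e^{-\beta\ctDist})$---provided you check that the Lipschitz constant of $\delta^2 E^{\beta,R}$ is uniform in $\beta$ on a fixed neighbourhood of $\overline{u}_R$, since higher $u$-derivatives of $\ep_\mathrm{F}^{\beta,R}$ enter the third variation.

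Your Fermi-level argument, however, has a genuine gap. The ``midpoint cancellation'' you invoke between the $\underline{\ep}$- and $\overline{\ep}$-contributions requires the multiplicities of $\underline{\ep}$ and $\overline{\ep}$ to coincide; generically the numerator $N^{\beta,R}(\overline{u}_{R,\beta};\mu)-N_{e,R}$ is of exact order $e^{-\beta\ctDist}$, the denominator $\partial_\tau N^{\beta,R}$ is of order $\beta e^{-\beta\ctDist}$, and the Newton step yields only $|\ep_\mathrm{F}^{\beta,R}-\mu|=O(1/\beta)$---precisely the rate obtained in the proof of Lemma~\ref{lem:fermilevelconverge}. (A small related slip: in your residual split, the factor $|\ep_\mathrm{F}^{\beta,R}(\overline{u}_R)-\mu|$ is only $O(1/\beta)$, not ``Fermi--Dirac small''; it is the other factor $\|\delta_u N^{\beta,R}\|=O(\beta e^{-\beta\ctDist})$ that rescues the product.) The paper's Brouwer route sidesteps any quantitative Fermi-level analysis to construct the CE solution, needing only that $\ep_\mathrm{F}^{\beta,R}(\overline{u}^{\mu'}_{R,\beta})\in I$, for which mere convergence suffices.
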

\begin{prop}[Weak Zero Temperature Limit, \refCE{problem:canonical_finitedomain}{\beta}{R}
$\to$ 
\refCE{problem:canonical_finitedomain}{\infty}{R}]
\label{cor:weak_0T_limit_infinite_domain}
    Suppose that $\ubar_{R,\beta_j}$ is a bounded sequence (with $\beta_j \to \infty$) of solutions to 
    \refCE{problem:canonical_finitedomain}{\beta_j}{R}
    each satisfying \asNonInterR~with an accumulation parameter uniformly bounded below by $\mathfrak{m}>0$. Then, there exists $\ubar_R \colon \Lambda_R \to \mathbb R^d$ such that along a subsequence $\overline{u}_{R,\beta_j} \to \overline{u}_R$ and $\ep_\mathrm{F}^{\beta_j,R}(\overline{u}_{R,\beta_j}) \to \ep_\mathrm{F}^{\infty,R}(\overline{u}_R)$ as $j \to \infty$.
    %
	%
	
	Moreover, if 
	$\ep_\mathrm{F}^{\beta_j,R}(\overline{u}_{R,\beta_j})$ is (eventually) uniformly bounded away from $\sigma(\Ham^R(\overline{u}_{R,\beta_j}))$,
	%
	%
	then $\ubar_R$ is a critical point of $G^{\infty,R}(\,\cdot\,;\mu)$ with $\mu \coloneqq \ep_\mathrm{F}^{\infty,R}(\overline{u}_R)$.
\end{prop}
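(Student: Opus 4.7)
The plan is to use the finite-dimensionality of the displacement space on $\Lambda_R$ for subsequential convergence, to carry out a leading-order limit analysis of the Fermi-level constraint to identify $\mu_\star$, and then to transfer the Euler--Lagrange equation for $E^{\beta,R}$ to one for $G^{\beta,R}(\cdot;\mu_j)$ via the identity $\partial_\tau E^{\beta,R} = \tau\,\partial_\tau N^{\beta,R}$ evaluated at $\tau = \ep_\mathrm{F}^{\beta,R}(u)$.

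Since $\Lambda_R$ is finite, the bounded sequence $\ubar_{R,\beta_j}$ lives in a finite-dimensional space, so Bolzano--Weierstrass yields a convergent subsequence $\ubar_{R,\beta_j} \to \ubar_R$; the uniform non-interpenetration bound $\ctnoninterpen$ passes to the limit. By continuity of $u \mapsto \Ham^R(u)$ under \asTB, the Hamiltonians converge in operator norm, and hence every eigenvalue converges in some fixed ordering. The uniform spectral bound recalled after \cref{a:two-centre} then forces the Fermi levels $\mu_j \coloneqq \ep_\mathrm{F}^{\beta_j,R}(\ubar_{R,\beta_j})$ into a fixed compact interval, so I extract a further subsequence $\mu_j \to \mu_\star$. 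Passing to the limit in the particle-number constraint $2\sum_s f_{\beta_j}(\lambda_s(\ubar_{R,\beta_j}) - \mu_j) = N_{e,R}$ using the pointwise limit $f_\beta \to \chi_{(-\infty,0)} + \tfrac12 \chi_{\{0\}}$ gives $n_< \leq N_{e,R}/2 \leq n_< + n_=$ with $n_< \coloneqq \#\{s : \lambda_s(\ubar_R) < \mu_\star\}$ and $n_= \coloneqq \#\{s : \lambda_s(\ubar_R) = \mu_\star\}$, so $\mu_\star \in [\underline{\ep},\overline{\ep}]$ in the notation of \cref{lem:fermilevelconverge} applied to $\ubar_R$.

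To pin $\mu_\star$ down precisely I refine the same limit passage to subleading order: the residual contributions to the sum come from eigenvalues of $\Ham^R(\ubar_{R,\beta_j})$ clustering near $\underline{\ep}$ and $\overline{\ep}$, producing Boltzmann weights of order $e^{-\beta_j(\mu_j - \underline{\ep} + o(1))}$ and $e^{-\beta_j(\overline{\ep} - \mu_j + o(1))}$. In the half-filling case $N^{\infty,R}(\ubar_R;\tfrac12(\underline{\ep}+\overline{\ep})) = N_{e,R}$ these two must cancel, pinning $\mu_j$ to $\tfrac12(\underline{\ep}+\overline{\ep}) + O(1/\beta_j)$; in the two non half-filling cases, the only way to absorb the residual deficit or excess relative to $N_{e,R}/2$ is to give the degenerate eigenvalues at $\underline{\ep}$ (respectively $\overline{\ep}$) a fractional Fermi--Dirac weight, which requires $\mu_j \to \underline{\ep}$ (respectively $\mu_j \to \overline{\ep}$) at rate $O(1/\beta_j)$. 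These three subcases reproduce exactly the three cases in \cref{lem:fermilevelconverge}, giving $\mu_\star = \ep_\mathrm{F}^{\infty,R}(\ubar_R)$.

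For the criticality claim, the additional hypothesis combined with the convergence $\Ham^R(\ubar_{R,\beta_j}) \to \Ham^R(\ubar_R)$ ensures $\mu_\star \notin \sigma(\Ham^R(\ubar_R))$. Differentiating the summand in \cref{def:Helmholtz_free_energy} in $\tau$ and using $(1/\beta)\,S'(f_\beta(\lambda-\tau)) = \tau-\lambda$ gives $\partial_\tau \mathfrak{e}^\beta(\lambda;\tau) = 2\tau\,\partial_\tau f_\beta(\lambda-\tau)$, hence $\partial_\tau E^{\beta,R}(u;\tau) = \tau\,\partial_\tau N^{\beta,R}(u;\tau)$; combined with implicit differentiation of $N^{\beta,R}(u;\ep_\mathrm{F}^{\beta,R}(u)) \equiv N_{e,R}$, this yields $\delta E^{\beta,R}(u) = \delta_u G^{\beta,R}(u;\ep_\mathrm{F}^{\beta,R}(u))$, so the Euler--Lagrange equation for $\ubar_{R,\beta_j}$ in $(\text{CE}^{\beta_j,R})$ reads $\delta_u G^{\beta_j,R}(\ubar_{R,\beta_j};\mu_j) = 0$. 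A contour-integral representation of $\delta_u G^{\beta,R}(u;\mu)$ with the contour enclosing $\sigma(\Ham^R(u)) \cap (-\infty,\mu)$ at uniform distance from $\mu$ shows this first variation depends continuously on $(\beta,u,\mu)$ as $\beta \to \infty$ near $(\infty, \ubar_R, \mu_\star)$, so passing to the limit gives $\delta_u G^{\infty,R}(\ubar_R;\mu_\star) = 0$. The main obstacle is the identification $\mu_\star = \ep_\mathrm{F}^{\infty,R}(\ubar_R)$, because $u \mapsto \ep_\mathrm{F}^{\infty,R}(u)$ is generally discontinuous (it jumps as eigenvalues cross the chemical potential), so joint convergence in $(\beta,u)$ cannot be reduced to a soft continuity argument and requires the explicit leading-order analysis above.
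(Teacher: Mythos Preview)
Your proof is correct and follows essentially the same approach as the paper's: extract a subsequential limit by finite-dimensionality, use eigenvalue convergence plus the analysis behind Lemma~\ref{lem:fermilevelconverge} to identify the limiting Fermi level, then pass to the limit in the Euler--Lagrange equation $\delta_u G^{\beta_j,R}(\ubar_{R,\beta_j};\mu_j)=0$. The only organisational difference is that the paper triangulates the last step by first freezing $\mu$ at its limit value (bounding $|\delta_u G^{\beta_j,R}(\ubar_{R,\beta_j};\mu)-\delta_u G^{\beta_j,R}(\ubar_{R,\beta_j};\mu_j)|\lesssim |\mu-\mu_j|$) and then invoking the $(\beta,u)$-limit already established for Proposition~\ref{prop:0T_limit_infinite_domain}, whereas you take the joint limit in $(\beta,u,\mu)$ directly via the contour representation; both are equally valid, and your explicit derivation of the identity $\delta E^{\beta,R}(u)=\delta_u G^{\beta,R}(u;\ep_\mathrm{F}^{\beta,R}(u))$ is a point the paper uses but leaves implicit.
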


\begin{remark}[Convergence Rates]
\label{rem:0T_convergence_rates}
The convergence rates in Theorems~\ref{thm:0T_limit_infinite_domain} and~\ref{cor:strong_0T_limit_infinite_domain} are obtained by a suitable consistency estimate and application of the inverse function theorem. For example, in the $R<\infty$ case, we may use the fact that $\overline{u}$ is an equilibrium state for $\beta = \infty$ to conclude that,
\begin{align}\label{eq:rates_R}\begin{split}
    \left|\frac{\partial G^{R,\beta}(\overline{u})}{\partial u(\ell)}\right|
    &\leq \sum_{s\colon \lambda_s < \mu} 
    2\left( 1- 
    f_\beta(\lambda_s - \mu) \right)
    \left|\frac{\partial \lambda_s(\overline{u})}{\partial \overline{u}(\ell)}\right| + 
    \sum_{s\colon \lambda_s > \mu} 
    2 
    f_\beta(\lambda_s - \mu)    \left|\frac{\partial \lambda_s(\overline{u})}{\partial \overline{u}(\ell)}\right|
\end{split}\end{align}
where $\lambda_s = \lambda_s(\overline{u})$ is some enumeration of $\sigma(\Ham(\overline{u}))$. The dominant contribution in \cref{eq:rates_R} is exponentially small in the distance from the closest eigenvalue to the chemical potential. Unless there is significant cancellation in this summation (which we have no reason to expect), this simple calculation suggests that the convergence rate obtained in Theorems~\ref{thm:0T_limit_infinite_domain} and~\ref{cor:strong_0T_limit_infinite_domain} depend on the defect states within the band gap and can be no better than a constant multiple of $\mathrm{dist}(\mu, \sigma\left(\Ham(\overline{u}))\right)$. 
\end{remark}

\subsubsection{Thermodynamic Limit}

Now we move on to consider the thermodynamic limit results. 

\begin{theorem}[Strong Thermodynamic Limit, 
\refGCE{problem:grand_canonical_finitedomain}{\infty}{R}{\mu} $\to$ \refGCE{problem:infinite_domain}{\infty}{\infty}{\mu}]
\label{thm:0T_thermodynamic_limit}
Suppose that $\mu$ is fixed such that \asGap~is satisfied and $\ubar$ is a solution to
\refGCE{problem:infinite_domain}{\infty}{\infty}{\mu} 
that is strongly stable \cref{eq:strongly_stable}. Then, there exist solutions $\ubar_R$ to
\refGCE{problem:grand_canonical_finitedomain}{\infty}{R}{\mu}
such that $\ubar_R \to \ubar$ in $\dot{\mathscr W}^{1,2}(\Lambda)$.
\end{theorem}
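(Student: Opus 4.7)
The plan is a quantitative inverse function theorem argument: construct an approximate periodic critical point $\tilde u_R$ by truncating and periodising $\bar u$, then combine a consistency estimate $\|\delta \mathcal{G}^R(\tilde u_R;\mu)\|_{\W(\Lambda_R)^*}\to 0$ with a uniform stability bound $\langle \delta^2 \mathcal{G}^R(\tilde u_R;\mu)v,v\rangle \geq \tfrac{c_0}{2}\|Dv\|_{\ell^2_\ctGamma}^2$ valid for large $R$. A quantitative IFT around $\tilde u_R$ then delivers a genuine local minimiser $\bar u_R$ of $G^{\infty,R}(\,\cdot\,;\mu)$ at a distance controlled by the consistency error, and $\bar u_R\to\bar u$ follows.

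For the trial state, fix a smooth cut-off $\eta_R$ equal to one on $B_{R/2}$ and vanishing outside $B_R$, and let $\tilde u_R\colon\Lambda_R\to\mathbb{R}^d$ be the $\mathsf{M}_R$-periodic extension of $\eta_R\, \bar u|_{\Lambda_R}$. Since $\bar u\in\W(\Lambda)$, the exponentially-weighted tails of $D\bar u$ vanish as $R\to\infty$, so $\|D\tilde u_R - D\bar u\|_{\ell^2_\ctGamma}\to 0$ and, for $R$ large, $\tilde u_R$ satisfies \asNonInterR~with an accumulation parameter uniformly bounded below by $\ctnoninterpen/2$. Because $\mu$ lies in the interior of the reference gap (\asGap) and by Lemma~\ref{lem:spectrum} we have $\ctDist\coloneqq\mathrm{dist}(\mu,\sigma(\Ham(\bar u)))>0$, a Hausdorff-convergence argument for the spectra in any fixed neighbourhood of $\mu$ (using absence of spectral pollution from Lemma~\ref{lem:spectral_pollution} to rule out torus-only eigenvalues in the gap) yields
\[
    \mathrm{dist}\bigl(\mu,\sigma(\Ham^R(\tilde u_R))\bigr)\geq \tfrac{\ctDist}{2}
    \qquad \text{for all $R$ large.}
\]
This uniform gap legitimises a Dunford–Riesz contour representation of the zero-temperature site energies along a rectifiable contour encircling only the occupied spectrum at distance $\gtrsim \ctDist$.

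Equipped with this contour, the strong locality estimates of \cite{ChenOrtnerThomas2019} and the construction of $\mathcal{G}_\ell$ as a thermodynamic limit of $\mathcal{G}^R_\ell$ (\cref{def:site_energy_stencil}) allow a site-by-site comparison: for every $\ell\in\Lambda_R$, the difference $|\delta\mathcal{G}^R_\ell(D\tilde u_R(\ell))-\delta\mathcal{G}_\ell(D\bar u(\ell))|$ is controlled by the resolvent discrepancy between the torus and bulk Hamiltonians, which decays exponentially in the $\mathsf{M}_R$-torus distance from $\ell$ to the boundary of $\Omega_R$ with rate determined by $\ctDist$. Summing against a test displacement and exploiting that $\delta\mathcal{G}(\bar u)=0$ yields the consistency bound $\|\delta\mathcal{G}^R(\tilde u_R;\mu)\|_{\W(\Lambda_R)^*}\to 0$. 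The same locality machinery applied to second variations transfers the strong stability \cref{eq:strongly_stable} from $\bar u$ to $\tilde u_R$ with constant at least $c_0/2$ for $R$ large, using reference stability \cref{eq:reference_stability} to control contributions from displacements concentrated near the periodic boundary.

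A standard quantitative inverse function theorem (in the form used in \cite{ChenOrtner16,ChenLuOrtner18}) applied at $\tilde u_R$ now produces a critical point $\bar u_R$ of $G^{\infty,R}(\,\cdot\,;\mu)$ with
\[
    \|D\bar u_R-D\tilde u_R\|_{\ell^2_\ctGamma}\;\leq\; \tfrac{2}{c_0}\,\|\delta\mathcal{G}^R(\tilde u_R;\mu)\|_{\W(\Lambda_R)^*}\;\longrightarrow\;0,
\]
which is in fact a local minimiser because the Hessian remains uniformly positive in a $\ell^2_\ctGamma$-neighbourhood of $\tilde u_R$. Combining with $\|D\tilde u_R-D\bar u\|_{\ell^2_\ctGamma}\to 0$ gives $\bar u_R\to\bar u$ in $\W(\Lambda)$. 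The main obstacle is the consistency step: one must make the torus-vs-bulk resolvent comparison quantitative on a contour whose size is controlled only by $\ctDist$, despite the presence of defect eigenvalues of $\Ham(\bar u)$ inside the band gap. This is precisely why the defect-adapted strong locality of \cite{ChenOrtnerThomas2019}—rather than the perfect-crystal locality of \cite{ChenOrtner16}—is the essential technical input.
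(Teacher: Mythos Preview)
Your overall strategy matches the paper's: truncate $\bar u$ to a periodic trial state, prove consistency and uniform stability, then apply a quantitative inverse function theorem. The ingredients you name (contour representation with a uniform gap, the defect-adapted locality of \cite{ChenOrtnerThomas2019}, reference stability \cref{eq:reference_stability} for the far field) are the right ones.

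There is, however, a real gap in the consistency step. You propose to bound $\langle\delta G^R(\tilde u_R),v\rangle$ by subtracting $\langle\delta\mathcal{G}(\bar u),v\rangle=0$ and then estimating site-wise differences that decay like $e^{-\eta\,\mathrm{dist}(\ell,\partial\Omega_R)}$. But $v$ is a \emph{periodic} test displacement, not an element of $\W(\Lambda)$, so the pairing $\langle\delta\mathcal{G}(\bar u),v\rangle$ is not even defined; and summing such site-wise bounds against a generic $v$ via Cauchy--Schwarz produces a prefactor of order $R^{(d-1)/2}$, not $o(1)$. The paper resolves this by splitting the \emph{test function}: write $v=T_{R^\star}v+(I-T_{R^\star})v$ with $R^\star<R$. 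The compactly supported piece $T_{R^\star}v$ lies in $\W(\Lambda)$, so one may legitimately subtract $\langle\delta\mathcal{G}(\bar u),T_{R^\star}v\rangle=0$ and invoke Lemma~\ref{eq:thermodynamic_site_energies} with buffer $R-R^\star$. The far-field piece $(I-T_{R^\star})v$ is handled by a further truncation of the trial state and comparison with \emph{reference} forces, using $\delta\mathcal{G}_{\mathrm{ref}}(\bm 0)=0$ rather than $\delta\mathcal{G}(\bar u)=0$; see \cref{eq:T1T2T3}--\cref{eq:consistency}. Your stability sketch is correct in spirit but also thinner than what is needed: transferring stability from $\mathcal G$ at $\bar u$ (tested against $\W(\Lambda)$) to $G^R$ at $\tilde u_R$ (tested against arbitrary periodic displacements) requires the core/far-field decomposition with weak-limit extraction of \cite{EhrlacherOrtnerShapeev2013arxiv}, as carried out in the paper's Step~3, to separate the defect contribution (controlled by strong stability at $\bar u$) from the boundary contribution (controlled by \cref{eq:reference_stability}).
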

\begin{prop}[Weak Thermodynamic Limit, 
\refGCE{problem:grand_canonical_finitedomain}{\infty}{R}{\mu} $\to$ \refGCE{problem:infinite_domain}{\infty}{\infty}{\mu}]
\label{prop:0T_thermodynamic_limit}
Suppose that $\ubar_{R_j}$ is a bounded sequence (with $R_j \to \infty$) of solutions to
\refGCE{problem:grand_canonical_finitedomain}{\infty}{R_j}{\mu}
each satisfying \textup{\textbf{($\textrm{L}_{R_j}$)}} with an accumulation parameter uniformly bounded below by $\mathfrak{m}>0$ and such that $\mu$ is uniformly bounded away from $\sigma(\Ham^{R_j}(\overline{u}_{R_j}))$. Then, there exists $\ubar\in\dot{\mathscr{W}}^{1,2}(\Lambda)$ such that along a subsequence
	\begin{equation}
	    D_\rho\ubar_{R_j}(\ell) \to D_\rho\ubar(\ell) \quad \forall \ell \in \Lambda,\,\rho \in \Lambda - \ell.
	\end{equation}
	Moreover, $\ubar$ is a critical point of $\mathcal G$.
\end{prop}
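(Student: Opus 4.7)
The plan is to combine pointwise compactness of the finite-difference stencils $D_\rho \ubar_{R_j}(\ell)$ with a passage to the limit in the first-order optimality condition, in parallel with the weak-limit results for the finite Fermi-temperature case established in \cite{ChenLuOrtner18}.

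\textbf{Compactness and construction of the limit.} Boundedness of $\|D\ubar_{R_j}\|_{\ell^2_\ctGamma}$ implies that each scalar sequence $\{D_\rho \ubar_{R_j}(\ell)\}_j$ is bounded (for $j$ large enough that $\ell, \ell+\rho \in \Lambda_{R_j}$). A diagonal extraction over the countable index set $\{(\ell,\rho) : \ell \in \Lambda,\ \rho \in \Lambda - \ell\}$ produces a subsequence (not relabelled) along which $D_\rho \ubar_{R_j}(\ell) \to z_\rho(\ell)$ pointwise for some $z_\rho(\ell) \in \mathbb R^d$. The cocycle identity $D_{\rho_1+\rho_2} u(\ell) = D_{\rho_2} u(\ell+\rho_1) + D_{\rho_1} u(\ell)$ is preserved in the limit, so fixing a base point $\ell_0 \in \Lambda$ and setting $\ubar(\ell) \coloneqq z_{\ell-\ell_0}(\ell_0)$ produces a well-defined displacement with $D_\rho \ubar(\ell) = z_\rho(\ell)$. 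Fatou's lemma applied to the weighted double sum defining $\|D\cdot\|_{\ell^2_\ctGamma}$ then yields $\ubar \in \W(\Lambda)$. Taking $\alpha = 0$ in the periodic non-interpenetration bound $\distR{\ell k}{\#}(\ubar_{R_j}) \geq \ctnoninterpen|\ell-k|$ and passing to the pointwise limit shows that $\ubar$ satisfies \asNonInter. Finally, the uniform-gap hypothesis combined with Lemma~\ref{lem:perturbation_spec} yields $\mu \notin \sigma(\Ham(\ubar))$, so $\ubar \in \mathrm{Adm}(\Lambda)$.

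\textbf{Passing the Euler--Lagrange equation to the limit.} Each $\ubar_{R_j}$ satisfies $\langle \delta G^{\infty,R_j}(\ubar_{R_j};\mu), v \rangle = 0$ for every periodic test displacement $v$. I pick $v$ with compact support in $\Lambda$, which is an admissible periodic test on $\Lambda_{R_j}$ once $R_j$ exceeds the diameter of $\mathrm{supp}(v)$. Using the site decomposition \cref{eq:site_decomp} together with \cref{def:site_energy_stencil}, the variation equals $\sum_{\ell \in \Lambda_{R_j}} \langle \delta \mathcal G^{\infty,R_j}_\ell(D\ubar_{R_j}(\ell)), Dv(\ell)\rangle$. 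The strong-locality estimates from \cite{ChenOrtnerThomas2019}, available because $\mu$ lies at uniform distance from $\sigma(\Ham^{R_j}(\ubar_{R_j}))$, provide $j$-independent exponential decay of $\delta \mathcal G^{\infty,R_j}_\ell$ in the stencil entries, while entrywise convergence $\delta \mathcal G^{\infty,R_j}_\ell(D\ubar_{R_j}(\ell)) \to \delta \mathcal G^\infty_\ell(D\ubar(\ell))$ follows from the stability of the contour-integral representation of the occupied-state projector (as already exploited in Theorem~\ref{thm:0T_thermodynamic_limit}). Dominated convergence then gives $\langle \delta \mathcal G(\ubar), v\rangle = 0$ for every compactly supported $v$, and density of such test displacements in $\W(\Lambda)$ promotes this to all $v \in \W(\Lambda)$, so $\ubar$ is a critical point of $\mathcal G$.

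\textbf{Main obstacle.} The delicate point is securing the locality and convergence of the site derivatives \emph{uniformly} in $j$. This rests on choosing a single contour $\mathscr C \subset \mathbb C$ that encircles the occupied part of $\sigma(\Ham^{R_j}(\ubar_{R_j}))$ at distance bounded below by a positive constant independent of $j$, and that simultaneously encloses the occupied part of $\sigma(\Ham(\ubar))$. Such a contour exists precisely because of the uniform spectral-gap hypothesis on $\mu$; without it the contour would pinch in the limit, the resolvent bounds would blow up, and both the exponential locality of $\delta \mathcal G^{\infty,R_j}_\ell$ and the convergence of the site derivatives would be lost. Once $\mathscr C$ is fixed, Combes--Thomas resolvent bounds along $\mathscr C$, combined with the two-centre structure \cref{a:two-centre} and the exponential decay \cref{a:TB}, deliver the uniform exponential locality required to justify the dominated convergence step.
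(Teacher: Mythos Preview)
Your overall strategy matches the paper's: extract a weak limit via compactness, verify $\ubar \in \mathrm{Adm}(\Lambda)$, then pass the first-order condition to the limit against compactly supported test displacements. Two citations misfire, and the second conceals the substantive work.

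First, Lemma~\ref{lem:perturbation_spec} does not yield $\mu \notin \sigma(\Ham(\ubar))$: that lemma only bounds the number of eigenvalues of the \emph{finite-domain} operator $\Ham^{R_j}(\ubar_{R_j})$ outside a neighbourhood of the reference bands, and says nothing about the spectrum of the infinite-domain limit. The correct input is Lemma~\ref{lem:spectral_pollution}~(ii), which gives $\sigma(\Ham(\ubar)) \subset \liminf_j \sigma(\Ham^{R_j}(\ubar_{R_j}))$; combined with the uniform-gap hypothesis this excludes $\mu$.

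Second, the appeal to Theorem~\ref{thm:0T_thermodynamic_limit} for the entrywise convergence $\delta \mathcal G^{\infty,R_j}_\ell(D\ubar_{R_j}(\ell)) \to \delta \mathcal G^\infty_\ell(D\ubar(\ell))$ is misplaced: that theorem treats the strong limit around a \emph{fixed} compactly supported approximant $T_R\ubar$, not a weakly converging sequence of unknown minimisers. Two things change simultaneously here---the displacement and the domain---and neither Theorem~\ref{thm:0T_thermodynamic_limit} nor a bare contour argument handles both at once. The paper splits the force error as $\big(\mathcal F^{R_j}_\ell(\ubar_{R_j}) - \mathcal F^{R_j}_\ell(\ubar)\big) + \big(\mathcal F^{R_j}_\ell(\ubar) - \mathcal F_\ell(\ubar)\big)$. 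The first piece is bounded via the resolvent-difference estimate \cref{eq:Fl_general_formula} by $\|D(\ubar_{R_j}-\ubar)\|_{\ell^2_\ctGamma(\Lambda \cap B_{2R_v})}$ plus an exponentially small remainder, and pointwise convergence of $D\ubar_{R_j}$ on the fixed ball $B_{2R_v}$ sends this to zero. The second piece is Lemma~\ref{eq:thermodynamic_site_energies}, which compares torus and infinite-domain site derivatives at the \emph{same} displacement and decays like $e^{-\eta(R_j-R_v)}$. Your dominated-convergence wrapper is fine once these two ingredients are in place, but they are the real content and cannot be absorbed into a reference to Theorem~\ref{thm:0T_thermodynamic_limit}.
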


We now turn our attention to the thermodynamic limit of the canonical model. Here, we see that the prescribed number of particles in the sequence of finite domain approximations is vital in identifying a limiting model.
\begin{theorem}[Strong Thermodynamic Limit, 
\refCE{problem:canonical_finitedomain}{\infty}{R} $\to$ \refGCE{problem:infinite_domain}{\infty}{\infty}{\mu}]
\label{thm:0T_thermodynamic_limit_CE_GCE}
	Suppose that $\mu$ is fixed such that \asGap~is satisfied and $\ubar$ is a solution to \refGCE{problem:infinite_domain}{\infty}{\infty}{\mu} that is strongly stable \cref{eq:strongly_stable}. Then, there exists a sequence $N_{e,R}$ and solutions $\ubar_R$ to \refCE{problem:canonical_finitedomain}{\infty}{R} such that $\ubar_R \to \ubar$ in $\dot{\mathscr W}^{1,2}(\Lambda)$. 
		
	Moreover, $\ep_\mathrm{F}^{\infty,R}(\ubar_R) \to \nu$ as $R\to\infty$ for some $\nu \in \mathbb R$ and $\overline{u}$ is also a solution to \refGCE{problem:infinite_domain}{\infty}{\infty}{\nu}. 
\end{theorem}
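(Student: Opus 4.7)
The plan is to leverage Theorem~\ref{thm:0T_thermodynamic_limit} together with the observation that, inside a spectral gap, the canonical and grand canonical free energies differ only by an additive constant fixed by the prescribed electron number. Accordingly, first apply Theorem~\ref{thm:0T_thermodynamic_limit} to obtain solutions $\ubar_R$ of \refGCE{problem:grand_canonical_finitedomain}{\infty}{R}{\mu} with $\ubar_R \to \ubar$ in $\dot{\mathscr W}^{1,2}(\Lambda)$, and prescribe the electron number
\[
    N_{e,R} \;:=\; N^{\infty,R}(\ubar_R;\mu),
\]
i.e.~twice the number of eigenvalues of $\Ham^R(\ubar_R)$ strictly below $\mu$. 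Once it is shown (Step~3 below) that $\mu$ lies in a spectral gap of $\Ham^R(\ubar_R)$ for $R$ large enough, this quantity will be locally constant in $u$ near $\ubar_R$.

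Next, on the open neighbourhood of $\ubar_R$ on which $\mu$ remains in a gap of $\sigma(\Ham^R(u))$, Lemma~\ref{lem:fermilevelconverge} combined with $N^{\infty,R}(u;\mu) = N_{e,R}$ forces $\ep_\mathrm{F}^{\infty,R}(u)$ to lie in the same gap, and hence the occupied states are the states below $\mu$. A direct computation from \cref{eq:Helmholtz_zero_temperature} and \cref{eq:zero_temp_finite_domian} then yields the pointwise identity
\[
    E^{\infty,R}(u) \;=\; G^{\infty,R}(u;\mu) + \mu N_{e,R}
\]
on this neighbourhood. Consequently the strongly stable local minimisation of $G^{\infty,R}(\,\cdot\,;\mu)$ at $\ubar_R$ transfers verbatim to $E^{\infty,R}$, and $\ubar_R$ is a solution of \refCE{problem:canonical_finitedomain}{\infty}{R} for this choice of $N_{e,R}$.

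To identify the limit $\nu$, denote by $\underline{\ep}_\infty$ and $\overline{\ep}_\infty$ the largest element of $\sigma(\Ham(\ubar))$ strictly below $\mu$ and the smallest strictly above; by $\ubar \in \mathrm{Adm}(\Lambda)$ we have $\underline{\ep}_\infty < \mu < \overline{\ep}_\infty$. The core technical step is to use the convergence $\ubar_R \to \ubar$, the tight-binding locality of the Hamiltonian perturbation, the gap decomposition in Lemma~\ref{lem:spectrum}, and the absence of spectral pollution for the torus approximation (Lemma~\ref{lem:spectral_pollution}) to deduce
\[
    \underline{\ep}_R \;:=\; \max\{\lambda \in \sigma(\Ham^R(\ubar_R)) : \lambda < \mu\} \;\to\; \underline{\ep}_\infty,
\]
and analogously $\overline{\ep}_R \to \overline{\ep}_\infty$. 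Granted this, Lemma~\ref{lem:fermilevelconverge} yields $\ep_\mathrm{F}^{\infty,R}(\ubar_R) = \tfrac{1}{2}(\underline{\ep}_R + \overline{\ep}_R) \to \tfrac{1}{2}(\underline{\ep}_\infty + \overline{\ep}_\infty) =: \nu$, and in particular $\nu$ lies in the same gap of $\sigma(\Ham(\ubar))$ as $\mu$.

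Finally, to see that $\ubar$ also solves \refGCE{problem:infinite_domain}{\infty}{\infty}{\nu}, note that for every $u$ in a small $\dot{\mathscr W}^{1,2}$-neighbourhood of $\ubar$ both $\mu$ and $\nu$ remain in a common gap of $\sigma(\Ham(u))$, so the integrand $\mathfrak{g}(\lambda;\mu) - \mathfrak{g}(\lambda;\nu) = 2(\nu-\mu)\chi_{(-\infty,\mu)}(\lambda)$ applied to the eigenvalues of $\Ham(u)$ and propagated through the site-energy decomposition \cref{eq:site_decomp} and the renormalisation \cref{eq:grand_pot_diff} produces a contribution that is locally constant in $u$. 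The first and second variations of $\mathcal G^\infty(\,\cdot\,;\mu)$ and $\mathcal G^\infty(\,\cdot\,;\nu)$ therefore agree at $\ubar$, so $\ubar$ inherits its criticality and strong stability under the new chemical potential $\nu$. The main obstacle throughout is the spectral-convergence step used to identify $\nu$: since $\ubar_R \to \ubar$ only in the semi-norm $\|D\cdot\|_{\ell^2_\ctGamma}$, naive norm perturbation of bounded operators does not apply, and one must exploit tight-binding locality together with the resolvent localisation around $\mu$ afforded by \asGap~and Lemma~\ref{lem:spectrum} to control the torus approximation uniformly in $R$.
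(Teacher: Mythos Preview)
Your proposal is correct and follows essentially the same route as the paper: invoke Theorem~\ref{thm:0T_thermodynamic_limit} to obtain $\ubar_R\to\ubar$, use the strong-convergence case of Lemma~\ref{lem:spectral_pollution} (part~\textit{(iv)}) to conclude that the gap edges $\underline{\ep}_R,\overline{\ep}_R$ converge to the corresponding edges $\underline{\ep},\overline{\ep}$ of the gap in $\sigma(\Ham(\ubar))$ containing $\mu$, and then read off $\ep_\mathrm{F}^{\infty,R}(\ubar_R)=\tfrac12(\underline{\ep}_R+\overline{\ep}_R)\to\nu=\tfrac12(\underline{\ep}+\overline{\ep})$ from Lemma~\ref{lem:fermilevelconverge}. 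The only cosmetic difference is your definition $N_{e,R}=N^{\infty,R}(\ubar_R;\mu)$ versus the paper's $N_{e,R}=N^{\infty,R}(\ubar_R;\tfrac12(\underline{\ep}+\overline{\ep}))$; since $\mu$ and the midpoint lie in the same spectral gap of $\Ham^R(\ubar_R)$ for large $R$, these counts coincide, and your explicit identity $E^{\infty,R}(u)=G^{\infty,R}(u;\mu)+\mu N_{e,R}$ makes transparent what the paper leaves implicit when asserting that $\ubar_R$ solves \refCE{problem:canonical_finitedomain}{\infty}{R}.
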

\begin{prop}[Weak Thermodynamic Limit, 
\refCE{problem:canonical_finitedomain}{\infty}{R} $\to$ \refGCE{problem:infinite_domain}{\infty}{\infty}{\mu}]
\label{prop:0T_thermodynamic_limit_CE_GCE}
    Suppose that $\ubar_{R_j}$ is a bounded sequence of solutions to 
    \refCE{problem:canonical_finitedomain}{\infty}{R}
    each satisfying \textup{\textbf{(L$_{R_j}$)}}~with an accumulation parameter uniformly bounded below by $\mathfrak{m}>0$. 
    Then, there exists $\overline{u} \in \dot{\mathscr W}^{1,2}(\Lambda)$ and $\mu \in \mathbb{R}$ such that along a subsequence 
    \[
    D_\rho \overline{u}_{R_j}(\ell) \to D_\rho \overline{u}(\ell) \quad \forall\,\ell \in \Lambda, \rho \in \Lambda - \ell
    \quad \textup{and} \quad 
    \ep_\mathrm{F}^{\infty, R_j}(\overline{u}_{R_j}) \to \mu.
    \]

    Moreover, if $\ep_\mathrm{F}^{\infty,R_j}(\overline{u}_{R_j})$ is (eventually) uniformly bounded away from $\sigma(\Ham^{R_j}(\overline{u}_{R_j}))$, then 
	%
	%
	$\ubar$ is a critical point of $\mathcal G(\,\cdot\,;\mu)$. 
\end{prop}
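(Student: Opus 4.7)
The plan is to argue in three steps: (i) extract a weakly convergent subsequence of displacements in $\W(\Lambda)$; (ii) extract a convergent subsequence of Fermi levels; and (iii) under the additional spectral-gap hypothesis, exploit a Hellmann--Feynman-type cancellation together with the site-energy locality of \cite{ChenOrtnerThomas2019} to identify the limit as a critical point of $\mathcal{G}(\,\cdot\,;\mu)$.

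For step (i), since $\|D\overline{u}_{R_j}\|_{\ell^2_\ctGamma}$ is uniformly bounded, Banach--Alaoglu in the weighted $\ell^2$ space carrying the finite-difference stencils yields, along a subsequence, weak convergence to some $\overline{u} \in \W(\Lambda)$; since the basis vectors $e_{(\ell,\rho)}$ lie in the predual, weak convergence immediately gives the asserted pointwise convergence $D_\rho \overline{u}_{R_j}(\ell) \to D_\rho \overline{u}(\ell)$. For step (ii), \textbf{(L$_{R_j}$)} with uniform accumulation parameter $\ctnoninterpen>0$, combined with the uniform spectral bound recorded after \cref{a:two-centre}, confines every $\sigma(\Ham^{R_j}(\overline{u}_{R_j}))$ inside a common compact interval depending only on $\ctnoninterpen,d,\ctTBprefactor,\ctTBexponent$. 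The Fermi levels $\FL{\infty}{R_j}(\overline{u}_{R_j})$ therefore lie in this interval and, passing to a further subsequence, converge to some $\mu \in \mathbb{R}$.

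Step (iii) proceeds from the identity
\[
    E^{\infty,R_j}(u) = G^{\infty,R_j}\bigl(u;\FL{\infty}{R_j}(u)\bigr) + \FL{\infty}{R_j}(u)\,N_{e,R_j},
\]
which, combined with $\partial_\mu G^{\infty,R_j}(u;\mu) = -N^{\infty,R_j}(u;\mu)$ and the particle-number constraint $N^{\infty,R_j}(u;\FL{\infty}{R_j}(u)) = N_{e,R_j}$, causes the two $\delta \FL{\infty}{R_j}$ contributions to cancel and yields the Hellmann--Feynman-type identity
\[
    \Braket{\delta E^{\infty,R_j}(u),v} = \Braket{\delta_u G^{\infty,R_j}\bigl(u;\FL{\infty}{R_j}(u)\bigr),v}.
\]
Under the additional hypothesis that $\mathrm{dist}(\FL{\infty}{R_j}(\overline{u}_{R_j}),\sigma(\Ham^{R_j}(\overline{u}_{R_j}))) \geq \ctDist > 0$ eventually, $E^{\infty,R_j}$ is $C^1$ at the local minimiser $\overline{u}_{R_j}$, so that $\delta E^{\infty,R_j}(\overline{u}_{R_j}) = 0$; writing $\mu_{R_j} \coloneqq \FL{\infty}{R_j}(\overline{u}_{R_j})$ we deduce $\Braket{\delta_u G^{\infty,R_j}(\overline{u}_{R_j};\mu_{R_j}),v} = 0$ for every test displacement $v$.

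It remains to pass to the limit in this identity for $v$ of compact support, using the site-energy decomposition $G^{\infty,R}(u;\mu) = \sum_\ell \mathcal{G}^{\infty,R}_\ell(Du(\ell);\mu)$ together with the strong locality estimates of \cite{ChenOrtnerThomas2019}; density of compactly supported displacements in $\W(\Lambda)$ and continuity of $\delta\mathcal{G}(\overline{u};\mu)$ then yield criticality. The main obstacle is precisely this final convergence of site-energy derivatives, which must be uniform in $R_j$, in the pointwise perturbation $\overline{u}_{R_j}\to\overline{u}$, and in the chemical-potential shift $\mu_{R_j}\to\mu$; this is handled via a contour-integral representation of the spectral projector below $\mu$, the uniform gap $\ctDist>0$ being exactly what keeps the contour clear of the spectrum for all sufficiently large $j$.
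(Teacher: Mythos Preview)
Your three-step plan is correct and tracks the paper's proof closely. Steps (i) and (ii) are identical to the paper's. For step (iii), your Hellmann--Feynman identity is exactly the mechanism by which a solution of the canonical problem becomes a critical point of $G^{\infty,R_j}(\,\cdot\,;\mu_{R_j})$, and your plan to pass to the limit via the contour representation and the locality estimates of \cite{ChenOrtnerThomas2019} would go through.

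The paper, however, streamlines your step (iii) by exploiting a zero-temperature feature you do not use: since $\mathfrak g(\lambda;\nu)=2(\lambda-\nu)\chi_{(-\infty,\nu)}(\lambda)$, the force $\delta_u G^{\infty,R_j}(u;\nu)$ depends on $\nu$ only through which eigenvalues lie below $\nu$. Hence, once $\mu_{R_j}$ and the limit $\mu$ both sit in the same spectral gap of $\Ham^{R_j}(\overline u_{R_j})$ (which follows from the uniform-gap hypothesis and $\mu_{R_j}\to\mu$), one has $\delta_u G^{\infty,R_j}(\overline u_{R_j};\mu)=\delta_u G^{\infty,R_j}(\overline u_{R_j};\mu_{R_j})=0$. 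In other words, $\overline u_{R_j}$ solves the grand-canonical problem at the \emph{fixed} chemical potential $\mu$, and the paper simply invokes Proposition~\ref{prop:0T_thermodynamic_limit} rather than redoing the convergence argument with the extra parameter $\mu_{R_j}$ varying. Your route works but carries an unnecessary moving part.

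One small gap in your sketch: to define $\mathcal G(\,\cdot\,;\mu)$ and its derivative at $\overline u$, you need $\mu\notin\sigma(\Ham(\overline u))$. Your contour discussion only addresses the finite-domain spectra $\sigma(\Ham^{R_j}(\overline u_{R_j}))$; the limiting statement requires the spectral-pollution result (Lemma~\ref{lem:spectral_pollution}), which the paper invokes inside Proposition~\ref{prop:0T_thermodynamic_limit}. You should cite it explicitly if you carry out the limit by hand.
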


\begin{remark}
Every strongly stable solution $\overline{u}$ to \refGCE{problem:infinite_domain}{\infty}{\infty}{\mu} also solves \refGCE{problem:infinite_domain}{\infty}{\infty}{\nu} for all $\nu$ in some maximal interval $(\underline{\nu},\overline{\nu})$, displayed in Figure~\ref{fig:spectrum}. In particular, $\overline{u}$ is a strongly stable solution to \refGCE{problem:infinite_domain}{\infty}{\infty}{\nu} with $\nu \coloneqq \tfrac{1}{2}(\underline{\nu} + \overline{\nu})$. Theorem~\ref{thm:0T_thermodynamic_limit_CE_GCE} states that, under some appropriate choice of particle number, there exist solutions $\overline{u}_R$ solving \refCE{problem:canonical_finitedomain}{\infty}{R} such that $\overline{u}_R \to \overline{u}$ and $\ep_\mathrm{F}^{\infty,R}(\overline{u}_R) \to \nu$ as $R\to\infty$.

\begin{figure}[htbp]
	\centering
	\resizebox{\columnwidth} {!} {
		\tikzset{every picture/.style={line width=0.75pt}} 

\begin{tikzpicture}[x=0.75pt,y=0.75pt,yscale=-1,xscale=1]

\draw  [fill={rgb, 255:red, 0; green, 0; blue, 0 }  ,fill opacity=1 ] (31,41) -- (236.8,41) -- (236.8,55.2) -- (31,55.2) -- cycle ;
\draw  [fill={rgb, 255:red, 0; green, 0; blue, 0 }  ,fill opacity=1 ] (415.8,41) -- (621.6,41) -- (621.6,55.2) -- (415.8,55.2) -- cycle ;
\draw [line width=3.75]    (267.8,38.2) -- (268,57) ;

\draw [line width=3.75]    (355.8,38.2) -- (356,57) ;

\draw  [dash pattern={on 4.5pt off 4.5pt}]  (283.8,20.2) -- (284,75) ;

\draw  [dash pattern={on 4.5pt off 4.5pt}]  (309.8,20.2) -- (310,75) ;

\draw (284,80) node   {$\mu $};
\draw (311,80) node   {$\nu $};
\draw (269,32) node   {$\underline{\nu }$};
\draw (357,31.2) node   {$\overline{\nu }$};

\end{tikzpicture} }%
	\caption{Cartoon depicting an approximation of $\sigma(\Ham(\overline{u}))$ together with the limiting chemical potential, $\nu$, from Theorem~\ref{thm:0T_thermodynamic_limit_CE_GCE}.}
	\label{fig:spectrum}
\end{figure}
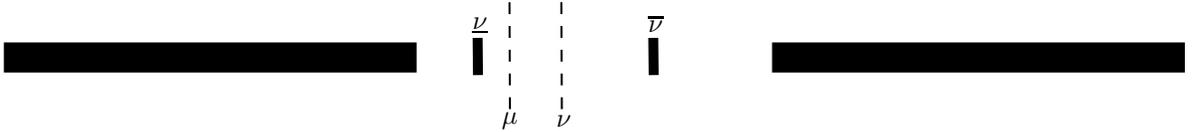

We are not implying that any of the problems \refGCE{problem:infinite_domain}{\infty}{\infty}{\nu} are equivalent for $\nu \in (\underline{\nu},\overline{\nu})$, only that they are locally equivalent around the fixed displacement $\overline{u}$. This is simply because there are no eigenvalues between $\underline{\nu}$ and $\overline{\nu}$, as depicted in Figure~\ref{fig:spectrum}, and we are considering the case of zero Fermi-temperature.
\end{remark}

\begin{remark}[Convergence Rates]
\label{rem:convergence_rates}
Since we are considering the case where $\Lambda^\mathrm{ref}$ is more general than a Bravais lattice $\mathrm{B}\mathbb Z^d$, we do not prove any convergence rates for $R\to\infty$. In the case where $\Lambda^\mathrm{ref} = \mathsf{B}\mathbb Z^d$, strongly stable solutions $\overline{u}$ to \refGCE{problem:infinite_domain}{\infty}{\infty}{\mu} satisfy the following far field decay estimate \cite{ChenNazarOrtner19}:
\begin{align}\label{eq:far_field_decay}
    \bigg(\sum_{\rho \in \Lambda - \ell} e^{-2\ctGamma|\rho|}|D_\rho \overline{u}(\ell)|^2\bigg)^{1/2} 
    \lesssim (1 + |\ell|)^{-d} \quad \text{for all }\ell \in \Lambda.
\end{align}
If the estimate \cref{eq:far_field_decay} holds in the case where $\Lambda^\mathrm{ref} \not = \mathsf{B}\mathbb Z^d$, we can simply repeat the proofs of Theorem~\ref{thm:0T_thermodynamic_limit} and~\ref{thm:0T_thermodynamic_limit_CE_GCE} verbatim and obtain the following convergence rate:
\[
    \|D\overline{u}_R - D\overline{u}\|_{\ell^2_\ctGamma} \lesssim R^{-d/2}.
\]
For finite interaction range models and in the case of multilattices, far-field decay estimates of the form \cref{eq:far_field_decay} are satisfied \cite{OlsonOrtner2017}. In light of \cite{ChenNazarOrtner19}, which extends \cite{EhrlacherOrtnerShapeev16} to infinite interaction range models, it is safe to assume that \cref{eq:far_field_decay} can be extended to our setting.  
\end{remark}
\begin{remark}
In the weak convergence results (Propositions~\ref{prop:0T_limit_infinite_domain},~\ref{cor:weak_0T_limit_infinite_domain},~\ref{prop:0T_thermodynamic_limit}~and~\ref{prop:0T_thermodynamic_limit_CE_GCE}) we assume that the chemical potential (or Fermi level) is uniformly bounded away from the spectrum. By the spectral pollution results (Lemma~\ref{lem:spectral_pollution}, below), this implies that the chemical potential (or the limit of the Fermi level) is not in the limiting spectrum. We opt to make the assumption on the sequence of solutions rather than imposing a condition on the (a priori unknown) weak limit.
\end{remark}

\section{Conclusions}
\label{sec:conclusions}
In this paper, we have formulated the zero Fermi-temperature limit models for geometry relaxation problems in the context of simple two-centre linear tight binding models for point defects and quantified an exponential rate of convergence for the nuclei positions.

Further, we have extended the results of \cite{ChenLuOrtner18} to the case of zero Fermi-temperature under the assumption that the chemical potential is not an eigenvalue of the Hamiltonian. That is, we have formulated zero Fermi-temperature models in the grand canonical ensemble for the electrons for general point defects. We have shown that, under an assumption on the number of electrons imposed in the sequence of finite domain approximations, this is a limiting model as domain size is sent to infinity in a tight binding model in the canonical ensemble for the electrons and at zero Fermi-temperature. In contrast to the finite Fermi-temperature results of \cite{ChenLuOrtner18}, we have shown that a specific choice of electron number in the sequence of finite domain approximations is crucial in identifying the limiting model.

A consequence of these results is that, in general, the zero Fermi-temperature and thermodynamic limits of the geometry optimisation problem do not commute. In particular, taking the thermodynamic limit first, we obtain a limiting model with fixed chemical potential at the reference domain level. On the other hand, if we take the zero Fermi-temperature limit first, the limit model is a grand canonical model but the fixed chemical potential depends on the sequence of solutions to the finite domain problems. The limit of the Fermi levels depends on the polluted band structure and so there is no reason why the limiting Fermi level agrees with the reference Fermi level.

We stress again here that the weak convergence results of Propositions~\ref{prop:0T_limit_infinite_domain},~\ref{cor:weak_0T_limit_infinite_domain},~\ref{prop:0T_thermodynamic_limit}~and~\ref{prop:0T_thermodynamic_limit_CE_GCE} are weaker than the analogous results in the finite Fermi-temperature case \cite{ChenLuOrtner18}. Indeed, by assuming the chemical potential (or sequence of Fermi levels) is bounded away from the spectrum, we are ensuring that the limiting Fermi level is not an eigenvalue of the Hamiltonian, an assumption that we cannot justify in general. We do this to ensure differentiability of the limiting site energies which is required to define the zero Fermi-temperature grand potential difference functional. Exploring the consequences of lifting this technical assumption is beyond the scope of this paper.

Thus, we have completed the diagram in Figure~\ref{fig:cd-1} however, we must reiterate here that some care is needed in order to interpret this diagram correctly.

\section{Acknowledgements}
Stimulating discussions with Huajie Chen are gratefully acknowledged.

\section{Proofs}\label{sec:proofs}
Throughout this section we will use the following notation: for $\beta \in (0,\infty]$, $R \in (0,\infty]$, $\ell \in \Lambda$, $\bm{m} = (m_1,\dots,m_j) \in \Lambda^j$ and $\bm{\rho} = (\rho_1,\dots,\rho_j) \in (\Lambda - \ell)^j$, we write
\begin{align*}
    \Ham_{,\bm{m}}(u) &\coloneqq \frac{\partial^j \Ham(u)}{\partial u(m_1)\dots \partial u(m_j)},
    \quad \textrm{and} \\
    \mathcal G_{\ell, \bm{\rho}}^{\beta,R}(Du(\ell)) 
    &\coloneqq \frac{\partial^j \mathcal G_\ell^{\beta,R}(Du(\ell))}{\partial D_{\rho_1}u(\ell)\dots\partial D_{\rho_j}u(\ell)} 
    = \frac{\partial^j G_\ell^{\beta,R}(u)}{\partial u(\ell+\rho_1)\dots\partial u(\ell+\rho_j)} 
\end{align*}
and similarly for $\Ham^R_{,\bm{m}}(u)$ (with appropriate $\bm{m}$ and $u$).

\subsection{Preliminaries}

In order to prove the main theorems of this paper, we require some preliminary results. First, we state perturbation results for the Hamiltonian operators, resulting in the description of the corresponding spectra. We then move on to explicitly define the site energies by use of resolvent calculus and state a Combes-Thomas estimate for the resolvent operators. Finally, we show that, because we are considering periodic boundary conditions, spectral pollution cannot happen.

\subsubsection{Spectrum of the Hamiltonian}

In this section, we describe $\sigma(\Ham^\mathrm{ref})$, $\sigma(\Ham(u))$ and $\sigma(\Ham^R(u))$ for admissible displacements $u$. 

We first show that small perturbations in the atomic configuration results in small perturbations in the Hamiltonian (in the sense of Frobenius norm) and thus also small perturbations in the corresponding spectra (in the sense of Hausdorff distance).

We start with the case that $R < \infty$:

\begin{lemma}\label{ham_converge_R}
	Suppose $u_1, u_2 \colon \Lambda_R \to \mathbb R^d$ satisfy \asNonInter~for both $l = 1,2$ and some $\mathfrak m >0$. Then, for $\ell,k \in \Lambda_R$, if $|D_{k-\ell}(u_1 - u_2)(\ell)| \leq \mathfrak{m}|\ell - k|$, then
	\begin{align} 
	\label{eq:ham_diff_1_R}
	\begin{split}
	\left|\left[\Ham^R(u_1) - \Ham^R(u_2)\right]_{\ell k}^{ab}\right| 
	    &\leq  C e^{-c\gamma_0 \min_{\alpha\in\mathbb Z^d}|\ell - k + \mathsf{M}_R\alpha|}
	        \left| D_{k-\ell}(u_1-u_2)(\ell)\right|
	\end{split}
	\end{align}
	where $c = \tfrac{\mathfrak{m}\sqrt{3}}{4}$. 
	
	Moreover, if $\|D(u_1 - u_2)\|_{\ell^2_\ctGamma}$ is sufficiently small, then 
	\begin{align} 
	\label{eq:ham_diff_2_R}
	\begin{split}
	\mathrm{dist}\big(\sigma(\Ham^R(u_1)), \sigma(\Ham^R(u_2))\big)
	\leq 
	\|\Ham^R(u_1) - \Ham^R(u_2)\|_\mathrm{F} 
	    &\leq  C \| D(u_1-u_2)\|_{\ell^2_\ctGamma}.
	\end{split}
	\end{align}
\end{lemma}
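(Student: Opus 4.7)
The plan is to prove the two inequalities separately, with the pointwise bound \cref{eq:ham_diff_1_R} feeding directly into the Frobenius-norm bound \cref{eq:ham_diff_2_R}.

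For \cref{eq:ham_diff_1_R}, I would start from the definition \cref{a:two-centre} and subtract term by term:
\[
    [\Ham^R(u_1) - \Ham^R(u_2)]_{\ell k}^{ab}
    = \sum_{\alpha \in \mathbb Z^d}
        \bigl[ h^{ab}_{\ell k}(\bm r_{\ell k}(u_1) + \mathsf{M}_R\alpha)
             - h^{ab}_{\ell k}(\bm r_{\ell k}(u_2) + \mathsf{M}_R\alpha) \bigr].
\]
Applying the fundamental theorem of calculus in each summand gives an integral of $\nabla h^{ab}_{\ell k}$ along the segment $\xi_\alpha(t) \coloneqq \bm r_{\ell k}(u_2) + \mathsf{M}_R\alpha + t\bigl(\bm r_{\ell k}(u_1) - \bm r_{\ell k}(u_2)\bigr)$, and the common vector factor $\bm r_{\ell k}(u_1)-\bm r_{\ell k}(u_2) = -D_{k-\ell}(u_1-u_2)(\ell)$ can be pulled out. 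Using \cref{a:TB} for $|\nabla h^{ab}_{\ell k}|$ then reduces the problem to estimating $\sum_\alpha \int_0^1 e^{-\ctTBexponent|\xi_\alpha(t)|}\,dt$ from above by $C e^{-c\ctTBexponent\min_\alpha|\ell-k+\mathsf{M}_R\alpha|}$. The core geometric step is a segment-to-origin lower bound: the identity $|\xi_\alpha(t)|^2 = t|A_\alpha|^2 + (1-t)|B_\alpha|^2 - t(1-t)|A_\alpha - B_\alpha|^2$, with $A_\alpha = \bm r_{\ell k}(u_1)+\mathsf{M}_R\alpha$, $B_\alpha = \bm r_{\ell k}(u_2)+\mathsf{M}_R\alpha$, combined with \asNonInterR~(which bounds $|A_\alpha|, |B_\alpha| \geq \ctnoninterpen|\ell-k+\mathsf{M}_R\alpha|$) and the pointwise hypothesis ($|A_\alpha - B_\alpha| \leq \ctnoninterpen|\ell-k|$), yields
\[
    |\xi_\alpha(t)|^2 \geq \ctnoninterpen^2|\ell-k+\mathsf{M}_R\alpha|^2 - \tfrac{1}{4}\ctnoninterpen^2|\ell-k|^2,
\]
from which a lower bound of the form $|\xi_\alpha(t)| \geq \tfrac{\ctnoninterpen\sqrt{3}}{2}\cdot\text{max}\bigl(|\ell-k+\mathsf{M}_R\alpha|, \text{torus distance}\bigr)$ emerges after splitting cases on whether $|\ell-k+\mathsf{M}_R\alpha|$ is comparable to, or much larger than, the torus distance $\min_\beta|\ell-k+\mathsf{M}_R\beta|$. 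Peeling off a factor $e^{-c\ctTBexponent\min_\beta|\ell-k+\mathsf{M}_R\beta|}$ with $c = \ctnoninterpen\sqrt{3}/4$ (keeping a strict safety margin below $\ctnoninterpen\sqrt{3}/2$) leaves a residual sum $\sum_\alpha e^{-c'\ctTBexponent|\ell-k+\mathsf{M}_R\alpha|}$, which is dominated by a $d$-dimensional lattice sum that converges uniformly in $R$ and in $\ell, k$.

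For \cref{eq:ham_diff_2_R}, I would square the pointwise bound, sum over fundamental-domain indices $\ell, k \in \Lambda_R$ and orbitals $a,b$, and switch to the variable $\rho = k - \ell$ on the periodic extension. Using the fact that the exponential weights $e^{-2\ctGamma|\,\cdot\,|}$ on different scales produce equivalent semi-norms \cite{ChenNazarOrtner19}, the resulting bound takes the shape
\[
    \|\Ham^R(u_1) - \Ham^R(u_2)\|_\mathrm{F}^2
    \lesssim
    \sum_{\ell}\sum_\rho e^{-2c\ctTBexponent|\rho|}|D_\rho(u_1-u_2)(\ell)|^2
    \lesssim \|D(u_1-u_2)\|_{\ell^2_\ctGamma}^2,
\]
after checking $c\ctTBexponent > \ctGamma$ (again invoking norm equivalence if necessary). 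The Hausdorff-distance inequality is then immediate from Weyl's bound $\mathrm{dist}(\sigma(A),\sigma(B)) \leq \|A-B\|_{\ell^2\to\ell^2} \leq \|A-B\|_\mathrm{F}$ applied to the Hermitian Hamiltonians. The ``sufficiently small'' hypothesis on $\|D(u_1-u_2)\|_{\ell^2_\ctGamma}$ is used precisely to guarantee the pointwise smallness $|D_{k-\ell}(u_1-u_2)(\ell)| \leq \ctnoninterpen|\ell-k|$ for every $\ell, k$ via the embedding $\ell^2_\ctGamma \hookrightarrow \ell^\infty$, so that the pointwise bound is applicable everywhere in the sum.

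The main obstacle is the geometric lower bound for $|\xi_\alpha(t)|$: it must deliver a decay factor uniformly in $\alpha$, scaled by the torus distance rather than the Euclidean distance, while allowing the residual sum over $\alpha$ to remain convergent uniformly in $R$. The bookkeeping of the constant $c = \ctnoninterpen\sqrt{3}/4$ (and the corresponding requirement that $c\ctTBexponent$ exceed the weight $\ctGamma$) is the delicate part; everything else consists of routine estimates and an application of Weyl's inequality.
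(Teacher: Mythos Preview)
Your proposal is correct and follows the same approach as the paper: mean value theorem on each summand, a geometric lower bound on $|\xi_\alpha(t)|$, lattice sum over $\alpha$, then square-and-sum plus Weyl's inequality for the spectral bound. The only organizational difference is that the paper replaces your case-splitting step by first extending $\Ham^R$ periodically and choosing representatives with $|\ell - k| = \min_\alpha|\ell - k + \mathsf{M}_R\alpha|$, after which the segment bound $|\xi^\alpha| \geq \tfrac{\sqrt{3}}{2}\mathfrak{m}\,|\ell - k + \mathsf{M}_R\alpha|$ holds uniformly in $\alpha$ without a case distinction.
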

\begin{proof}
After extending $\Ham^R$ by periodicity, we may assume $|\ell - k + \mathsf{M}_R\alpha| \geq |\ell - k|$ for all $\alpha \in \mathbb Z^d$. Applying Taylor's theorem we can conclude that there exists %
$\xi^\alpha = (1 - \theta)( \bm{r}_{\ell k}(u_1) + \mathsf{M}_R\alpha) + \theta( \bm{r}_{\ell k}(u_2) + \mathsf{M}_R\alpha)$
for some $\theta = \theta(\alpha, a,b,\ell,k) \in [0,1]$ such that
\begin{align*}
    \left|\left[\Ham^R(u_1) - \Ham^R(u_2)\right]_{\ell k}^{ab}\right| 
    &= \bigg|\sum_{\alpha \in \mathbb Z^d}\nabla h_{\ell k}^{ab}(\xi^\alpha) \cdot D_{k - \ell}(u_1 - u_2)(\ell)\bigg| \\
    &\leq \ctTBprefactor \sum_{\alpha \in \mathbb Z^d} e^{-\gamma_0|\xi^\alpha|}|D_{k - \ell}(u_1 - u_2)(\ell)|.
\end{align*}
Since $|\bm{r}_{\ell k}(u_l) + \mathsf{M}_R\alpha| \geq \mathfrak{m}|\ell - k|$ for $l = 1,2$ and $|D_{k-\ell}(u_1 - u_2)(\ell)| \leq \mathfrak{m}|\ell - k|$ we can conclude that $|\xi^\alpha| \geq \tfrac{\mathfrak{m}\sqrt{3}}{2}|\ell - k|$ (here, we have used the following: if $x,y \in \mathbb R^d$ with $|x|, |y| \geq r$ and $|x-y| \leq r$, then $|tx + (1-t)y| \geq \sqrt{r^2 - (\frac{r}{2})^2} = \tfrac{\sqrt{3}}{2}r$ for all $t \in [0,1]$). Therefore, after summing over $\alpha \in \mathbb Z^d$, we obtain \cref{eq:ham_diff_1_R}.

We suppose that $\|D(u_1 - u_2)\|_{\ell^2_\ctGamma}$ is sufficiently small such that 
    $|D_\rho (u_1 - u_2)(\ell)| \leq \mathfrak{m} |\rho|$
for all $\ell \in \Lambda$ and $\rho \in \Lambda - \ell$. This can be done as the semi-norm defined by $\sup_{\ell,\rho} |D_\rho v(\ell)| / |\rho|$ is equivalent to $\|D\cdot\|_{\ell^2_\ctGamma}$ \cite{ChenNazarOrtner19}.

We extend $\Ham^R$ and $u_1,u_2$ by periodicity and so, for each $\ell \in \Lambda_R$, we can sum over the set $\Lambda_R(\ell)$ of all $k \in \bigcup_{\alpha} (\Lambda_R + \mathsf{M}_R\alpha)$ for which $|\ell - k| = \min_{\alpha} |\ell - k + \mathsf{M}_R\alpha|$:
\begin{align}\label{eq:Ham_converge_trick}\begin{split}
    \|\Ham^R(u_1) - \Ham^R(u_2)\|_\mathrm{F}^2 
    &\leq C \sum_{\ell \in \Lambda_R} \sum_{k \in \Lambda_R(\ell)} e^{-2c\gamma_0|\ell - k|}|D_{k-\ell}(u_1 - u_2)(\ell)|^2\\
    &\leq C \sum_{\ell \in \widetilde{\Lambda}_R} \sum_{k \in \widetilde{\Lambda}_R} e^{-2c\gamma_0|\ell - k|}|D_{k-\ell}(u_1 - u_2)(\ell)|^2 \\
    &\leq C \|D(u_1 - u_2)\|_{\ell^2_\ctGamma(\widetilde{\Lambda}_R)}^2 \leq C_d \|D(u_1 - u_2)\|_{\ell^2_\ctGamma(\Lambda_R)}^2 
\end{split}\end{align}
where $\widetilde{\Lambda}_R \coloneqq \bigcup_{\ell \in \Lambda_R} \Lambda_R(\ell)$.

The perturbation in the spectrum \cref{eq:ham_diff_2_R} follows directly from \cref{eq:ham_diff_1_R} since small perturbations in the Frobenius norm give rise to small perturbations in the spectrum \cite{Kato95}. 
\end{proof}

We now consider the $R = \infty$ case:
%
\begin{lemma}\label{ham_converge}
	Suppose $u_1, u_2 \in \mathrm{Adm}(\Lambda)$ satisfying \asNonInter~for both $l = 1,2$ and some $\mathfrak m >0$. Then, for $\ell,k,m \in \Lambda$, if $|D_{k-\ell}(u_1 - u_2)(\ell)| \leq \mathfrak{m}|\ell - k|$, we have 
	\begin{align} \label{eq:ham_diff_1}
	\begin{split}
	\left|\left[\Ham(u_1) - \Ham(u_2)\right]_{\ell k}^{ab}\right| 
	    &\leq  C e^{-c\gamma_0 |\ell - k|}
	        \left| D_{k-\ell}(u_1-u_2)(\ell)\right|, \quad \text{and}\\
	\left|\left[\Ham_{,m}(u_1) - \Ham_{,m}(u_2)\right]_{\ell k}^{ab}\right| 
	    &\leq  C e^{-c\gamma_0 (|\ell - m| + |m - k|)}
	        \left| D_{k-\ell}(u_1-u_2)(\ell)\right|.
	\end{split}
	\end{align}
	where $c = \frac{\mathfrak{m}\sqrt{3}}{2}$. 
	
	In particular, if $\|D(u_1 - u_2)\|_{\ell^2_\ctGamma}$ is sufficiently small, we have
	\begin{gather}
	\label{eq:Ham_diff_2}
	    \mathrm{dist}\left(\sigma(\Ham(u_1)), \sigma(\Ham(u_2))\right) \leq \|\Ham(u_1) - \Ham(u_2)\|_\mathrm{F} \leq C \|D(u_1 - u_2)\|_{\ell^2_{\ctGamma}}. 
	\end{gather}
\end{lemma}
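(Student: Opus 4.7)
The strategy is to mirror the proof of \cref{ham_converge_R} (the previous lemma), with two simplifications and one additional step. The simplification is that, since $R = \infty$, we do not have to sum over the shifts $\mathsf{M}_R\alpha$ nor worry about the periodic extension, so the geometric constant improves from $c = \tfrac{\mathfrak{m}\sqrt{3}}{4}$ to $c = \tfrac{\mathfrak{m}\sqrt{3}}{2}$. The additional step is the derivative bound on $\Ham_{,m}(u_1) - \Ham_{,m}(u_2)$, which did not appear in \cref{ham_converge_R}.

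First, for the pointwise bound on the Hamiltonian difference, I apply Taylor's theorem to write
\[
  \bigl[\Ham(u_1) - \Ham(u_2)\bigr]_{\ell k}^{ab}
  \;=\; \nabla h^{ab}_{\ell k}(\xi) \cdot D_{k-\ell}(u_1 - u_2)(\ell),
\]
where $\xi = (1-\theta)\bm r_{\ell k}(u_1) + \theta \bm r_{\ell k}(u_2)$ for some $\theta \in [0,1]$. By \asNonInter{} applied to both $u_1, u_2$, we have $|\bm r_{\ell k}(u_l)| \geq \mathfrak{m}|\ell - k|$, and by hypothesis $|\bm r_{\ell k}(u_1) - \bm r_{\ell k}(u_2)| = |D_{k-\ell}(u_1-u_2)(\ell)| \leq \mathfrak{m}|\ell - k|$. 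The elementary geometric inequality already used in \cref{ham_converge_R} (if $|x|, |y| \geq r$ and $|x-y| \leq r$, then $|tx+(1-t)y| \geq \tfrac{\sqrt 3}{2} r$) then yields $|\xi| \geq \tfrac{\mathfrak{m}\sqrt 3}{2}|\ell - k|$. Combining with the exponential decay of $\nabla h^{ab}_{\ell k}$ from \asTB{} gives the first estimate in \cref{eq:ham_diff_1}.

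For the second estimate in \cref{eq:ham_diff_1}, I first observe that because $\Ham(u)_{\ell k}$ depends on $u$ only through $\bm r_{\ell k}(u) = \ell + u(\ell) - k - u(k)$, the matrix $\Ham_{,m}(u)$ has entries nonzero only on row or column $m$; more precisely,
\[
  \bigl[\Ham_{,m}(u)\bigr]_{\ell k}^{ab} = \nabla h^{ab}_{\ell k}(\bm r_{\ell k}(u))\,(\delta_{\ell m} - \delta_{k m}).
\]
Applying Taylor's theorem again but now to $\nabla h^{ab}_{\ell k}$ (which is admissible since $\nu \geq 3$), and repeating the geometric argument above, I obtain
\[
  \bigl|\bigl[\Ham_{,m}(u_1) - \Ham_{,m}(u_2)\bigr]_{\ell k}^{ab}\bigr| \leq C e^{-c \gamma_0 |\ell - k|} |D_{k-\ell}(u_1 - u_2)(\ell)| (\delta_{\ell m} + \delta_{km}).
\]
The Kronecker factor forces $\ell = m$ or $k = m$, in either of which cases $|\ell - m| + |m - k| = |\ell - k|$, so the displayed bound is equivalent to the second claim in \cref{eq:ham_diff_1}.

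For the spectrum bound \cref{eq:Ham_diff_2}, I first reduce to the case $|D_\rho(u_1 - u_2)(\ell)| \leq \mathfrak{m}|\rho|$ for all $\ell, \rho$ by exploiting the equivalence of the norm $\sup_{\ell,\rho} |D_\rho v(\ell)|/|\rho|$ with $\|D\cdot\|_{\ell^2_\ctGamma}$ (as in the previous lemma). The first estimate of \cref{eq:ham_diff_1} then applies for every $\ell, k$, so squaring and summing yields
\[
  \|\Ham(u_1) - \Ham(u_2)\|_\mathrm{F}^2 \leq C \sum_{\ell \in \Lambda} \sum_{k \in \Lambda} e^{-2c\gamma_0|\ell - k|} |D_{k-\ell}(u_1-u_2)(\ell)|^2 \leq C \|D(u_1 - u_2)\|_{\ell^2_\ctGamma}^2,
\]
using equivalence of the $\ell^2_\ctGamma$ seminorms for any two positive exponents. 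The Hausdorff-distance bound then follows from the standard fact that for self-adjoint operators the spectrum is Lipschitz (with constant one) in operator norm, which is itself dominated by the Frobenius norm. No serious obstacle is expected; the only mildly delicate point is ensuring that the Kronecker structure of $\Ham_{,m}$ is used to upgrade the factor $e^{-c\gamma_0|\ell-k|}$ to $e^{-c\gamma_0(|\ell - m| + |m - k|)}$.
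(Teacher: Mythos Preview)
Your proposal is correct and follows essentially the same approach as the paper: Taylor's theorem with the geometric lower bound on $|\xi|$, the Kronecker-delta structure of $\Ham_{,m}$ to pass from $|\ell-k|$ to $|\ell-m|+|m-k|$, and the norm-equivalence argument carried over from the $R<\infty$ lemma for \cref{eq:Ham_diff_2}. The paper's own proof is in fact terser than yours---it simply writes the two Taylor-remainder bounds and refers back to Lemma~\ref{ham_converge_R} for the rest---so your additional detail on the Kronecker structure is a welcome clarification rather than a deviation.
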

\begin{proof}
    Using the same idea as in Lemma~\ref{ham_converge_R}, we have 
	\begin{gather*}
	    \big| \left[\Ham(u_1) - \Ham(u_2)\right]_{\ell k}^{ab} \big|
	        \leq \ctTBprefactor e^{-\ctTBexponent |\xi_0|} |D_{k-\ell}(u_1 - u_2)(\ell)| \quad \textrm{and}\\
	  \big|  \left[\Ham_{,m}(u_1) - \Ham_{,m}(u_2)\right]_{\ell k}^{ab} \big| 
	        \leq \ctTBprefactor e^{-\ctTBexponent |\xi_1|} |D_{k-\ell}(u_1 - u_2)(\ell)|
	\end{gather*}
	where $\xi_j = (1-\theta_j) \bm{r}_{\ell k}(u_1) +  \theta_j \bm{r}_{\ell k}(u_2)$ for some $\theta_j = \theta_j(a,b,\ell,k) \in [0,1]$ and both $j = 1,2$. Now, since ${r}_{\ell k}(u_l) \geq \mathfrak{m}|\ell - k|$ for both $l = 1,2$ and $|D_{k-\ell}(u_1-u_2)(\ell)| \leq \mathfrak{m}|\ell - k|$, we necessarily have that $|\xi_l| \geq \frac{\sqrt{3}}{2}\mathfrak{m}|\ell - k|$. Therefore, we obtain \cref{eq:ham_diff_1} and thus \cref{eq:Ham_diff_2} as in the proof of Lemma~\ref{ham_converge_R}.
\end{proof}

We next approximate the Hamiltonian $\Ham(u)$ by a finite rank update of the reference Hamiltonian $\Ham^\mathrm{ref}$. However, we must first redefine $\Ham(u)$ and $\Ham^\mathrm{ref}$ so that they are defined on the same spatial domain.

For $\ell, k \in \Lambda\cup\Lambda^\mathrm{ref}$, we define
\begin{align}\label{eq:extend_hamiltonian}
	\widetilde{\Ham}(u)_{\ell k}^{ab} \coloneqq 
		\begin{cases}
			\Ham(u)_{\ell k}^{ab} 	& \text{if } \ell, k \in \Lambda \\
			0	& \text{if } \ell \in \Lambda^\mathrm{ref} \setminus \Lambda \text{ or } k \in \Lambda^\mathrm{ref} \setminus \Lambda.
		\end{cases}
\end{align}
Similarly, for the reference Hamiltonian, we define 
\[	
	\big(\widetilde{\Ham}^\mathrm{ref}\big)_{\ell k}^{a b} \coloneqq 
	\begin{cases}
		\big(\Ham^\mathrm{ref}\big)_{\ell k}^{ab} 	& \text{if } \ell, k \in \Lambda^\mathrm{ref} \\
		0	& \text{if } \ell \in \Lambda \setminus \Lambda^\mathrm{ref} \text{ or } k \in \Lambda \setminus \Lambda^\mathrm{ref}.
	\end{cases}
\]	
The new operators defined in this way are obtained from the original ones by adding a finite number of zero columns and rows. This means that, apart from adding zero as an eigenvalue of finite multiplicity, the spectrum is unchanged. We may ignore this subtlety by shifting the spectrum away from $\{0\}$ (by artificially adding on a multiple of the identity to the Hamiltonian operators). A more detailed explanation of this idea can be found in \cite[\S4.3]{ChenOrtnerThomas2019}.

%
%
The following decomposition of the Hamiltonian has been shown in \cite[Lemma~9]{ChenOrtnerThomas2019}:
\begin{lemma}[Decomposition of the Hamiltonian, $R=\infty$]
\label{lem:decomp_Ham_infty}
Fix $u \in \mathrm{Adm}(\Lambda)$. Then, for all $\delta > 0$, there exists a constant, $R_\delta > 0$, and operators $P(u),Q(u)$ such that 
\begin{align}\label{eq:decomp_Ham}\begin{split}
   \widetilde{\Ham}(u) &= \widetilde{\Ham}^{\mathrm{ref}} + P(u) + Q(u),
\end{split}\end{align}
where $\|P(u)\|_\mathrm{F} \leq \delta$ and $Q(u)_{\ell k}^{ab} =0$ for all $(\ell, k) \not\in B_{R_\delta} \times B_{R_\delta}$. 
\end{lemma}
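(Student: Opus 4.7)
The plan is to set $T(u) \coloneqq \widetilde{\Ham}(u) - \widetilde{\Ham}^{\mathrm{ref}}$ and, for a truncation radius $R_\delta > R_\mathrm{def}$ to be chosen, to define
\[
Q(u)_{\ell k}^{ab} \coloneqq T(u)_{\ell k}^{ab}\,\chi_{B_{R_\delta}}(\ell)\,\chi_{B_{R_\delta}}(k), \qquad P(u) \coloneqq T(u) - Q(u).
\]
By construction $Q(u)$ vanishes outside $B_{R_\delta} \times B_{R_\delta}$, so the problem reduces to choosing $R_\delta$ large enough that $\|P(u)\|_\mathrm{F} \leq \delta$.

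I would first establish pointwise bounds for the entries of $T(u)$ on the disjoint regions dictated by \asPointDefect{} and \cref{eq:extend_hamiltonian}. In the \emph{bulk region} where $\ell, k \in \Lambda \cap \Lambda^\mathrm{ref}$, Lemma~\ref{ham_converge} applied with $u_1 = u$ and $u_2 = 0$ yields
\[
|T(u)_{\ell k}^{ab}| \leq C\, e^{-c\ctTBexponent|\ell - k|}\,|D_{k-\ell}u(\ell)|
\]
on those pairs satisfying the smallness hypothesis $|D_{k-\ell}u(\ell)| \leq \mathfrak{m}|\ell - k|$; on the remaining bulk pairs \asTB{} together with \asNonInter{} gives the crude bound $|T(u)_{\ell k}^{ab}| \leq 2\ctTBprefactor e^{-\mathfrak{m}\ctTBexponent|\ell-k|}$, and since those pairs satisfy $|\ell - k| \leq \mathfrak{m}^{-1}|D_{k-\ell}u(\ell)|$, the crude bound is absorbed into the same functional form up to a constant. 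In the \emph{defect region}, where $\ell$ or $k$ lies in the finite set $(\Lambda \setminus \Lambda^\mathrm{ref}) \cup (\Lambda^\mathrm{ref} \setminus \Lambda) \subset B_{R_\mathrm{def}}$, exactly one of the two contributions to $T(u)_{\ell k}^{ab}$ survives and \asTB{} combined with \asNonInter{} delivers $|T(u)_{\ell k}^{ab}| \leq \ctTBprefactor e^{-\mathfrak{m}\ctTBexponent|\ell - k|}$; elsewhere the entry vanishes by \cref{eq:extend_hamiltonian}.

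Next I would bound $\|P(u)\|_\mathrm{F}^2$ by splitting it into a bulk tail and a defect tail. For the bulk tail, substituting $\rho = k - \ell$ and symmetrising in $\ell, k$,
\[
\sum_{\ell \notin B_{R_\delta}} \sum_{\rho \in \Lambda - \ell} e^{-2c\ctTBexponent|\rho|}\,|D_\rho u(\ell)|^2 \;=\; \|Du\|_{\ell^2_{c\ctTBexponent}(\Lambda \setminus B_{R_\delta})}^2,
\]
which tends to zero as $R_\delta \to \infty$ by equivalence of the seminorms $\|D\cdot\|_{\ell^2_{\ctGamma'}}$ for different $\ctGamma' > 0$ (\cite{ChenNazarOrtner19}) combined with $u \in \W(\Lambda)$. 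The defect tail is a finite sum over $\ell \in B_{R_\mathrm{def}}$ of expressions $\sum_{k \notin B_{R_\delta}} e^{-2\mathfrak{m}\ctTBexponent|\ell - k|}$, which is exponentially small in $R_\delta - R_\mathrm{def}$. Combining the two and choosing $R_\delta$ sufficiently large, depending on $\delta$, $\mathfrak{m}$, $\ctTBprefactor$, $\ctTBexponent$ and $\|Du\|_{\ell^2_\ctGamma}$, produces $\|P(u)\|_\mathrm{F} \leq \delta$.

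The main (minor) obstacle is the handling of the bulk pairs on which the smallness hypothesis of Lemma~\ref{ham_converge} fails; the observation that on such pairs $|\ell - k|$ is itself controlled by $|D_{k-\ell}u(\ell)|$ lets us absorb them into the same exponentially weighted $\ell^2_{c\ctTBexponent}$-tail of $Du$, so no separate mechanism is required. All other steps are straightforward bookkeeping based on \asPointDefect{}, \asTB{} and the definition \cref{eq:extend_hamiltonian}.
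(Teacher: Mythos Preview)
Your proposal is correct and follows essentially the same strategy as the cited reference \cite[Lemma~9]{ChenOrtnerThomas2019} (and the analogous $R<\infty$ construction sketched in Appendix~\ref{app:decomp_Ham}): split $T(u)=\widetilde{\Ham}(u)-\widetilde{\Ham}^{\mathrm{ref}}$ into a compactly supported piece and a small Frobenius-norm tail, using $u\in\W(\Lambda)$ to kill the bulk tail and the exponential off-diagonal decay to kill the defect tail. The only cosmetic difference is that the paper's construction defines $Q$ via a position cutoff \emph{together with} an interaction cutoff $|\ell-k|\le R_1$ (so that $Q$ is nonzero only when $\ell$ \emph{or} $k$ lies in a fixed ball and $|\ell-k|$ is bounded), whereas you truncate directly to $B_{R_\delta}\times B_{R_\delta}$; both choices yield the required finite-rank-in-a-box structure and the Frobenius bound on the remainder goes through identically.
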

We now discuss the corresponding $R<\infty$ case which allows us to describe $\sigma(\Ham^R(u_R))$ (in Lemma~\ref{lem:perturbation_spec}, below) and the limiting spectrum as $R\to \infty$ (in Lemma~\ref{lem:spectral_pollution}, below):
\begin{lemma}[Decomposition of the Hamiltonian, $R<\infty$]
\label{lem:decomp_Ham_R}
Suppose that $u_R\colon \Lambda_R \to \mathbb R^d$ satisfies \asNonInterR, with some constant uniformly bounded below by $\mathfrak{m}>0$, and with $\sup_R \|Du_R\|_{\ell_\ctGamma^2} < \infty$. Then, for all $\delta>0$, there exists an $R$ independent constant, $R_\delta > 0$, a constant $R_\infty$ with $R_\infty \to \infty$ as $R\to\infty$ and operators 
$P^R_\delta, P_{\mathrm{loc}}^R$, $P^R_\infty$ 
such that 
\begin{align}\label{eq:decomp_Ham_R}\begin{split}
    \widetilde{\Ham}^R(u_R) &= \widetilde{\Ham}^{\mathrm{ref},R} + P_\delta^R + P_\mathrm{loc}^R + P^R_\infty
\end{split}\end{align}
where $\|P^R_\delta\|_\mathrm{F}\leq \delta$ and $P^R_\mathrm{loc}, P^R_\infty$ are finite rank operators with rank independent of $R$, and matrix entries non-zero only on $B_{R_\delta}\times B_{R_\delta}$ and $(\Lambda_R\setminus B_{R_\infty}) \times (\Lambda_R\setminus B_{R_\infty})$, respectively. 

Moreover, if $u_R \rightharpoonup u$, then $P_\mathrm{loc}^R \to P^\infty_\mathrm{loc}$ where $[P^\infty_\mathrm{loc}]_{\ell k}^{ab} \coloneqq [\widetilde{\Ham}(u) - \widetilde{\Ham}^\mathrm{ref}]_{\ell k}^{ab}$ as $R\to\infty$ for all $\ell, k \in B_{R_\delta}$.
%
%
\end{lemma}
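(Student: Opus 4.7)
The plan is to mimic the infinite-domain decomposition of \cref{lem:decomp_Ham_infty} in the torus setting by splitting $\widetilde{\Ham}^R(u_R) - \widetilde{\Ham}^{\mathrm{ref},R}$ into three pieces: a bulk Frobenius-small contribution $P^R_\delta$, a piece $P^R_\mathrm{loc}$ concentrated near the defect, and a piece $P^R_\infty$ that captures the wrap-around effects concentrated near the torus ``boundary''. First, applying \cref{ham_converge_R} with $u_1 = u_R$ and $u_2 \equiv 0$ yields the pointwise bound
\[
    \bigl|[\widetilde{\Ham}^R(u_R) - \widetilde{\Ham}^{\mathrm{ref},R}]^{ab}_{\ell k}\bigr| \leq C\, e^{-c\ctTBexponent \distR{\ell k}{\#}(u_R)}\, |D_{k-\ell}u_R(\ell)|,
\]
and the unfolding trick from \cref{eq:Ham_converge_trick} will let me control Frobenius norms of index-restricted pieces by the weighted $\ell^2_\ctGamma$ norm of $Du_R$ over the corresponding subset.

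Second, I would define
\begin{align*}
    [P^R_\mathrm{loc}]^{ab}_{\ell k} &\coloneqq [\widetilde{\Ham}^R(u_R) - \widetilde{\Ham}^{\mathrm{ref},R}]^{ab}_{\ell k}\,\chi_{B_{R_\delta}}(\ell)\chi_{B_{R_\delta}}(k), \\
    [P^R_\infty]^{ab}_{\ell k} &\coloneqq [\widetilde{\Ham}^R(u_R) - \widetilde{\Ham}^{\mathrm{ref},R}]^{ab}_{\ell k}\,\chi_{\Lambda_R\setminus B_{R_\infty}}(\ell)\chi_{\Lambda_R\setminus B_{R_\infty}}(k), \\
    P^R_\delta &\coloneqq \widetilde{\Ham}^R(u_R) - \widetilde{\Ham}^{\mathrm{ref},R} - P^R_\mathrm{loc} - P^R_\infty,
\end{align*}
with $R_\delta < R_\infty$ so that the first two supports are disjoint. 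The parameter $R_\delta$ is chosen independent of $R$ by exploiting the uniform bound $\sup_R \|Du_R\|_{\ell^2_\ctGamma} < \infty$ together with tightness of the weighted sequence (extracted via weak compactness, passing to a subsequence if needed), while $R_\infty$ is chosen so that $\Omega_R \setminus B_{R_\infty}$ has cardinality uniformly bounded in $R$; this is possible since $B_R \subset \Omega_R$ by \asRefR, and $R_\infty$ may be taken to approach the diameter of $\Omega_R$.

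Third, I would verify $\|P^R_\delta\|_\mathrm{F} \leq \delta$. Its support consists of index pairs $(\ell,k)$ with at least one of them in the annulus $B_{R_\infty} \setminus B_{R_\delta}$ (and the other in $B_{R_\infty}$ or cross-paired with $\Lambda_R \setminus B_{R_\infty}$). Inserting the Step-1 bound and repeating the unfolding calculation of \cref{eq:Ham_converge_trick} yields
\[
    \|P^R_\delta\|_\mathrm{F}^2 \leq C\, \|Du_R\|_{\ell^2_\ctGamma(\Lambda_R \setminus B_{R_\delta})}^2,
\]
which is $\leq \delta^2$ by the choice of $R_\delta$; the cross terms with one index in $B_{R_\delta}$ and the other in $\Lambda_R\setminus B_{R_\infty}$ enjoy the additional exponential decay factor $e^{-c\ctTBexponent(R_\infty - R_\delta)}$ from the torus distance. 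The rank of $P^R_\mathrm{loc}$ is then bounded by $\#(B_{R_\delta}\cap\Lambda)\cdot \numorbitals$, and the rank of $P^R_\infty$ by $\#((\Omega_R\setminus B_{R_\infty})\cap\Lambda)\cdot \numorbitals$, both independent of $R$ by construction.

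Fourth, for the ``moreover'' statement, weak convergence $u_R \rightharpoonup u$ in $\W(\Lambda)$ implies pointwise convergence $D_\rho u_R(\ell)\to D_\rho u(\ell)$ for each $\ell\in\Lambda$ and $\rho \in \Lambda - \ell$. For each fixed $\ell,k \in B_{R_\delta}$, a dominated-convergence argument applied to \cref{a:two-centre} using the exponential bound of \cref{a:TB} shows that the $\alpha\neq 0$ wrap-around contributions vanish as $R\to\infty$ (since $|\mathsf{M}_R\alpha|\to\infty$ uniformly over $\alpha \neq 0$), leaving only the $\alpha = 0$ term, which converges by continuity of $h^{ab}_{\ell k}$ to $[\widetilde{\Ham}(u)]^{ab}_{\ell k}$. \textbf{Main obstacle.} The subtlest point will be ensuring the uniform-in-$R$ tail bound $\|Du_R\|_{\ell^2_\ctGamma(\Lambda_R\setminus B_{R_\delta})} \to 0$ as $R_\delta \to \infty$: this is not automatic from mere boundedness and likely requires weak compactness combined with lower semicontinuity of the weighted norm, or passage to a subsequence. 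The geometric bound on $\#(\Omega_R\setminus B_{R_\infty})$ also requires a careful choice of $R_\infty$ adapted to the supercell geometry from \asRefR.
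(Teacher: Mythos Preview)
Your decomposition is not the paper's, and the obstacle you correctly flag is in fact fatal for your scheme. You read the support condition on $P^R_\infty$ as forcing $\#(\Lambda_R\setminus B_{R_\infty})$ to be uniformly bounded and therefore interpret $P^R_\infty$ as absorbing torus wrap-around near $\partial\Omega_R$; but the lemma only says the \emph{support} of $P^R_\infty$ lies in $(\Lambda_R\setminus B_{R_\infty})^2$, not that this set is small. In the paper $P^R_\infty$ isolates finitely many $O(1)$ local distortions of $u_R$ that drift off to infinity as $R\to\infty$ (and hence disappear in the weak limit). The concrete failure of your approach: take $u_R$ to be a single bump of fixed $\ell^2_\ctGamma$-mass centred at a site $\ell^R$ with $|\ell^R|=R/2$. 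For any fixed $R_\delta$ and any $R_\infty$ near the outer radius of $\Omega_R$, the bump eventually sits in the annulus $B_{R_\infty}\setminus B_{R_\delta}$ and lands entirely in your $P^R_\delta$, so $\|P^R_\delta\|_{\mathrm F}$ stays bounded away from zero. Weak compactness cannot rescue this: weak convergence in $\W(\Lambda)$ does not imply the uniform tail estimate you need (that would amount to strong convergence).

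The paper avoids a spatial partition altogether. From $\sum_\ell |Du_R(\ell)|_\ctGamma^2\leq C$ it extracts, for each $R$, at most $n$ ``bad'' sites $\ell_1^R,\dots,\ell_n^R$ (with $n$ depending on $\delta$ but not on $R$) such that the remaining sites contribute at most $\delta^2$ to the weighted sum. These bad sites are then separated into those that stay in a fixed ball $B_{R_0}$ and those with $|\ell_j^R|\to\infty$. The operator $P^R_{\mathrm{loc}}$ is the block of $\widetilde{\Ham}^R(u_R)-\widetilde{\Ham}^{\mathrm{ref},R}$ with one index in $\Lambda_R\cap B_{R_0}$ and $|\ell-k|\leq R_1$; $P^R_\infty$ is the analogous block with one index among the escaping sites and $r^\#_{\ell k}(x)\leq R_1$. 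The rank bounds come from there being at most $n$ bad sites, each contributing an $O(R_1^d)$-rank block, and $R_\infty$ is taken as $\min_{j\geq n_0}|\ell_j^R|-R_1\to\infty$. The remainder $P^R_\delta$ is controlled by applying \cref{ham_converge_R} on the good sites (where the hypothesis $|D_{k-\ell}u_R(\ell)|\leq\mathfrak m|\ell-k|$ follows once $\delta$ is small enough) together with the crude off-diagonal decay for the cross terms with $r^\#_{\ell k}>R_1$; see \cref{eq:Pdelta_1}--\cref{eq:Pdelta_2}. Your Step~4 argument for the ``moreover'' clause is essentially correct and matches the paper's.
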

\begin{proof}
The construction is similar to that of Lemma~\ref{lem:decomp_Ham_infty}. We give a sketch of the argument in \Cref{app:decomp_Ham} for completeness.
\end{proof}

As shown in \cite[Lemma~3]{ChenOrtnerThomas2019}, Lemma~\ref{lem:decomp_Ham_infty} is sufficient to prove Lemma~\ref{lem:spectrum}. We now state and prove an analogous $R<\infty$ result:
%
%
\begin{lemma}[Decomposition of the Spectrum]\label{lem:perturbation_spec}
    For $\delta > 0$ and $u_R$ satisfying \asNonInterR, with some constant uniformly bounded below by $\mathfrak{m}>0$, and $\sup_R \|Du_R\|_{\ell^2_\ctGamma} < \infty$, there exists $S_\delta$ such that
	\begin{align}
	    \#\left( \sigma(\Ham^R(u_R)) \setminus B_{\delta}(\sigma(\Ham^{\mathrm{ref}})) \right) \leq S_{\delta}.\label{eq:decomposition_spectrum_2}
	\end{align}
	The constant $S_\delta$ depends on $\delta$ and $(u_R)$ but not on $R$.

\end{lemma}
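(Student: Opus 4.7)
The plan is to adapt the infinite-lattice argument of \Cref{lem:spectrum} to the torus setting by combining the decomposition of \Cref{lem:decomp_Ham_R} with a Bloch-theoretic spectral inclusion and a Weyl-type eigenvalue counting inequality. The three ingredients I would assemble are: (i) $\sigma(\widetilde{\Ham}^{\mathrm{ref},R}) \subseteq \sigma(\Ham^\mathrm{ref})$ uniformly in $R$; (ii) a Frobenius-norm perturbation step that keeps us within a small neighbourhood of $\sigma(\Ham^\mathrm{ref})$; and (iii) a finite-rank eigenvalue-counting bound to absorb the remaining piece of the decomposition.

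First I would verify that $\sigma(\widetilde{\Ham}^{\mathrm{ref},R}) \subseteq \sigma(\Ham^\mathrm{ref})$. Assumption \asRefR{} forces the supercell lattice $\mathsf{M}_R\mathbb Z^d$ to be a sublattice of $\mathsf{A}\mathbb Z^d$, and so the periodic reference Hamiltonian on the torus is block-diagonalised by the discrete Bloch transform at the finite set of momenta compatible with $\mathsf{M}_R$; its eigenvalues are therefore a subset of the values taken by the Bloch band functions over the whole Brillouin zone, which is precisely $\sigma(\Ham^\mathrm{ref})$. Next I apply \Cref{lem:decomp_Ham_R} with the parameter $\delta/2$ in place of $\delta$, obtaining
\[
    \widetilde{\Ham}^R(u_R) = \widetilde{\Ham}^{\mathrm{ref},R} + P^R_{\delta/2} + \bigl(P_\mathrm{loc}^R + P^R_\infty\bigr),
\]
with $\|P^R_{\delta/2}\|_\mathrm{F} \leq \delta/2$ and $\mathrm{rank}(P_\mathrm{loc}^R + P^R_\infty) \leq K$ for some constant $K = K(\delta,(u_R))$ independent of $R$. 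Setting $A^R_0 \coloneqq \widetilde{\Ham}^{\mathrm{ref},R} + P^R_{\delta/2}$ and using that the operator norm is dominated by the Frobenius norm in finite dimensions, we obtain $\sigma(A^R_0) \subseteq B_{\delta/2}(\sigma(\widetilde{\Ham}^{\mathrm{ref},R})) \subseteq B_{\delta/2}(\sigma(\Ham^\mathrm{ref}))$.

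Finally I invoke the Weyl rank inequality for eigenvalue counting functions: for Hermitian matrices $A, A'$ on a common finite-dimensional space, $|N_I(A) - N_I(A')| \leq 2\,\mathrm{rank}(A - A')$ for every interval $I \subseteq \mathbb R$, where $N_I(T)$ denotes the number of eigenvalues of $T$ in $I$ counted with multiplicity. Applied with $A = \widetilde{\Ham}^R(u_R)$, $A' = A^R_0$, and $I$ any connected component of $\mathbb R \setminus B_\delta(\sigma(\Ham^\mathrm{ref}))$, this yields $N_I(\widetilde{\Ham}^R(u_R)) \leq 2K$, since $N_I(A^R_0) = 0$. Because $\sigma(\Ham^\mathrm{ref})$ consists of at most $\numorbitals$ bands, its $\delta$-neighbourhood has at most $\numorbitals$ connected components and its complement at most $\numorbitals + 1$; summing over these yields $S_\delta \leq 2K(\numorbitals+1)$. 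The finite discrepancy between $\Ham^R(u_R)$ and its extension $\widetilde{\Ham}^R(u_R)$ described in \cref{sec:tight_binding_hamiltonian} contributes only a bounded correction, which can be absorbed into $S_\delta$ (or eliminated by the spectral shift mentioned there).

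I expect the argument to be essentially bookkeeping once the decomposition is in place. The only mildly nontrivial step is the Bloch-theoretic inclusion of spectra, which crucially relies on the compatibility of the supercell $\mathsf{M}_R$ with the $\mathsf{A}$-lattice guaranteed by \asRefR{}; without this, the reference torus spectrum need not sit inside $\sigma(\Ham^\mathrm{ref})$ and one would have to track an additional $R$-dependent correction throughout. The Weyl counting step and the choice of $\delta/2$ in invoking \Cref{lem:decomp_Ham_R} are standard perturbation-theoretic manoeuvres.
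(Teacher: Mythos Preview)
Your argument is correct and follows essentially the same route as the paper's: both use the Bloch inclusion $\sigma(\Ham^{\mathrm{ref},R})\subset\sigma(\Ham^{\mathrm{ref}})$, the decomposition of \Cref{lem:decomp_Ham_R} into a small-Frobenius perturbation plus a finite-rank piece, and then an eigenvalue interlacing/Weyl counting argument to bound the number of eigenvalues escaping the reference bands. One minor slip: the reference Hamiltonian has at most $\numorbitals\cdot\#\Gamma$ bands, not $\numorbitals$, but this only affects the explicit constant in your final estimate and not the argument itself.
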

\begin{proof}
	We first note that $\sigma(\Ham^{\mathrm{ref},R}) \subset \sigma(\Ham^{\mathrm{ref}})$. This can be shown by writing $\sigma(\Ham^\mathrm{ref})$ as the union of energy bands defined on the Brillouin zone and noting that $\sigma(\Ham^{\mathrm{ref},R})$ can then be written as a union of these energy bands over a discretised Brillouin zone (see Appendix~\ref{sec:band_structure} for the details).
	
	As a result of Lemma~\ref{lem:decomp_Ham_R}, we can approximate $\widetilde{\Ham}^R(u_R)$ by a finite rank update of $\widetilde{\Ham}^\mathrm{ref}$. That is, there exists an operator $Q^R(u_R)$ of finite rank (independent of $R$) such that
	\begin{align*}
	\mathrm{dist}\left(\sigma(\Ham^R(u_R)), \sigma(\Ham^{\mathrm{ref},R} + Q^R(u_R))\right) \leq \delta. 
	\end{align*}
	Since $Q^R(u_R)$ is of finite rank, we may apply an interlacing theorem for the eigenvalues of rank one updates \cite{DancisDavis1987} finitely many times (independently of $R$) to conclude that,
	\begin{align*}
	    \#\left(
	    \sigma(\Ham^{\mathrm{ref},R} + Q^R(u_R))
	    \setminus \sigma(\Ham^{\mathrm{ref}})
	    \right) 
	    \leq S_\delta
	\end{align*}
	for some $S_\delta$ independent of $R$. 
\end{proof}

\subsubsection{Resolvent Calculus}

In order to write the finite Fermi-temperature site energies using resolvent calculus, we need $\mathfrak{g}^\beta(\cdot\,;\mu)$ to extend to holomorphic functions on some open neighbourhood of the spectrum \cite[Lemma~6]{ChenOrtnerThomas2019}:
\begin{lemma}[Analytic Continuation]
	\label{lem:analytic-cont}
	Fix $\beta \in(0,\infty)$. Then, $z\mapsto\mathfrak{g}^\beta(z;\mu)$ can be analytically continued to the set $\mathbb C \setminus \left\{\mu + ir \colon r \in \mathbb R, |r| \geq \pi\beta^{-1}\right\}$.
\end{lemma}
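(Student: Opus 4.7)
The plan is to realise $\mathfrak{g}^\beta(\,\cdot\,;\mu)$ as a composition of elementary holomorphic functions and then locate the singularities. First, for real $\lambda$ I rewrite
\[
    1 - f_\beta(\lambda - \mu) = 1 - \frac{1}{1+e^{\beta(\lambda-\mu)}} = \frac{e^{\beta(\lambda-\mu)}}{1+e^{\beta(\lambda-\mu)}},
\]
which gives the identity
\[
    \mathfrak{g}^\beta(\lambda;\mu) = 2(\lambda - \mu) - \tfrac{2}{\beta}\log\bigl(1 + e^{\beta(\lambda - \mu)}\bigr).
\]
The first summand is entire, so the task reduces to analytically continuing $F(z) \coloneqq 1 + e^{\beta(z-\mu)}$ and then its logarithm. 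The function $F$ is entire, and its zero set is
\[
    \{ z \in \mathbb C \colon e^{\beta(z-\mu)} = -1 \} = \bigl\{ \mu + i(2k+1)\pi\beta^{-1} \colon k \in \mathbb Z \bigr\}.
\]
All of these points satisfy $|\mathrm{Im}(z - \mu)| = (2k+1)\pi\beta^{-1} \geq \pi\beta^{-1}$, so they belong to the set that is removed, and $F$ has no zeros on $U \coloneqq \mathbb C \setminus \{\mu + ir \colon |r| \geq \pi\beta^{-1}\}$.

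Next I would verify that $U$ is simply connected, so that a global single-valued branch of $\log F$ exists on $U$. The complement of $U$ in $\mathbb C$ consists of the two vertical rays $\{\mu + ir \colon r \geq \pi\beta^{-1}\}$ and $\{\mu + ir \colon r \leq -\pi\beta^{-1}\}$; together with the point at infinity these form a connected subset of the Riemann sphere $\widehat{\mathbb C}$, so $U$ is simply connected. Since $F$ is holomorphic and nowhere vanishing on $U$, the standard construction (integrate $F'/F$ along paths from a chosen base point) produces a holomorphic function $L \colon U \to \mathbb C$ with $e^{L(z)} = F(z)$, uniquely determined up to an additive constant $2\pi i k$.

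I would then fix the branch by requiring $L(\mu) = \log 2$, which matches the real-valued logarithm of $F$ on the real axis at one point. Since $U$ contains the whole real line and $L$ is continuous, $L(\lambda) = \log F(\lambda)$ for every $\lambda \in \mathbb R$ by continuity (the real axis is connected and $F > 0$ there, so the principal real logarithm is the unique continuous real-valued branch agreeing at $\lambda = \mu$). Defining
\[
    \widetilde{\mathfrak g}^\beta(z; \mu) \coloneqq 2(z - \mu) - \tfrac{2}{\beta} L(z)
\]
therefore gives a holomorphic function on $U$ agreeing with $\mathfrak{g}^\beta(\,\cdot\,;\mu)$ on $\mathbb R$, which is the required analytic continuation.

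The only mildly delicate step is the topological verification that $U$ is simply connected; once that is in hand, the rest is routine nowhere-vanishing-plus-simply-connected existence of a holomorphic logarithm, and the matching of branches on the real axis.
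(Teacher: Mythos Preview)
Your argument is correct and complete. The identity
\[
    \mathfrak g^\beta(\lambda;\mu) = 2(\lambda-\mu) - \tfrac{2}{\beta}\log\bigl(1+e^{\beta(\lambda-\mu)}\bigr)
\]
is right, the zeros of $F$ are located exactly where you say, and the simple-connectedness of $U$ (complement of two vertical rays meeting at $\infty$) is standard. The existence of a global holomorphic logarithm of a nonvanishing holomorphic function on a simply connected domain then finishes the job.

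The paper takes a different, more explicit route: rather than invoking the abstract existence of a branch of $\log F$, it writes down a piecewise formula using the principal logarithm of $1-f_\beta(z-\mu)$ directly, adding the correction $2k\pi i$ on the half-plane $\mathrm{Re}(z)\leq\mu$ according to which horizontal strip $\beta\,\mathrm{Im}(z)\in((2k-1)\pi,(2k+1)\pi]$ the point lies in, and then checks that the pieces patch analytically across the strip boundaries. Your approach is cleaner and avoids this bookkeeping. The trade-off is that the paper's explicit formula is immediately usable in the subsequent quantitative estimate (Lemma~\ref{lem:g_convergence_2}), where one needs to control $|\mathfrak g^\beta(z;\mu)-\mathfrak g(z;\mu)|$ along contours and the explicit $2k\pi i$ term makes the imaginary-part analysis concrete; with your construction one would have to recover that information from the integral of $F'/F$ before carrying out those bounds.
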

From now on, we will denote the analytic continuation again by $z\mapsto\mathfrak{g}^\beta(z;\mu)$.

For fixed $u \in \mathrm{Adm}(\Lambda)$, we suppose that $\mathscr C^-$ and $\mathscr C^+$ are simple closed contours encircling $\sigma(\Ham(u)) \cap (-\infty,\mu)$ and $\sigma(\Ham(u)) \cap (\mu,\infty)$, respectively, and avoiding the line $\mu + i\mathbb R$, see Figure~\ref{fig:contours}. Further, we may suppose that 
\begin{gather}
\label{eq:distance_contour_0T}
	\mathrm{dist}(z, \sigma(\Ham(u))) \geq \frac{1}{2}\ctDist(u) \quad \text{and} \quad
	|\Re{z}-\mu|\geq \frac{1}{2}\ctDist(u) \quad \forall\, z \in \mathscr C^- \cup \mathscr C^+\\
\label{def:du}
    \textrm{where} \quad 
    \ctDist(u) \coloneqq \mathrm{dist}(\mu, \sigma(\Ham(u))) 
    \quad \textrm{and} \quad
    \ctDist^\mathrm{ref} \coloneqq \mathrm{dist}(\mu, \sigma(\Ham^\mathrm{ref})).
\end{gather}
\begin{figure}[ht]
	\centering
	\resizebox{\columnwidth} {!} {
		\tikzset{every picture/.style={line width=0.75pt}} 

\begin{tikzpicture}[x=0.75pt,y=0.75pt,yscale=-1,xscale=1]

\draw [color={rgb, 255:red, 0; green, 0; blue, 0 }  ,draw opacity=1 ][line width=1.5]  [dash pattern={on 1.69pt off 2.76pt}]  (334.5,12) -- (334.5,279) ;

\draw  [color={rgb, 255:red, 34; green, 4; blue, 230 }  ,draw opacity=1 ][line width=1.5]  (41.45,126.96) .. controls (41.45,97.49) and (65.34,73.6) .. (94.81,73.6) -- (252.64,73.6) .. controls (282.11,73.6) and (306,97.49) .. (306,126.96) -- (306,160.24) .. controls (306,189.71) and (282.11,213.6) .. (252.64,213.6) -- (94.81,213.6) .. controls (65.34,213.6) and (41.45,189.71) .. (41.45,160.24) -- cycle ;
\draw [line width=1.5]    (86.9,135.96) -- (86.9,150.96) ;

\draw [line width=2.25]    (324.3,232.2) -- (344,255) ;

\draw [line width=2.25]    (323.3,253.2) -- (345.65,235.1) ;

\draw [line width=2.25]    (324.3,33.2) -- (344,56) ;

\draw [line width=2.25]    (322.97,53.65) -- (345.32,35.55) ;

\draw [color={rgb, 255:red, 123; green, 123; blue, 123 }  ,draw opacity=1 ][line width=1.5]  [dash pattern={on 1.69pt off 2.76pt}]  (251.34,115.52) -- (251.3,175.5) ;

\draw  [fill={rgb, 255:red, 0; green, 0; blue, 0 }  ,fill opacity=1 ] (116,142) -- (251.3,142) -- (251.3,145.2) -- (116,145.2) -- cycle ;
\draw  [fill={rgb, 255:red, 0; green, 0; blue, 0 }  ,fill opacity=1 ] (431,142) -- (566.3,142) -- (566.3,145.2) -- (431,145.2) -- cycle ;
\draw [line width=1.5]    (274.9,135.96) -- (274.9,150.96) ;

\draw [line width=1.5]    (584.9,135.96) -- (584.9,150.96) ;

\draw [color={rgb, 255:red, 123; green, 123; blue, 123 }  ,draw opacity=1 ]   (331.9,121.98) -- (253.15,122.58) ;
\draw [shift={(251.15,122.6)}, rotate = 359.56] [fill={rgb, 255:red, 123; green, 123; blue, 123 }  ,fill opacity=1 ][line width=0.75]  [draw opacity=0] (8.93,-4.29) -- (0,0) -- (8.93,4.29) -- cycle    ;
\draw [shift={(333.9,121.96)}, rotate = 179.56] [fill={rgb, 255:red, 123; green, 123; blue, 123 }  ,fill opacity=1 ][line width=0.75]  [draw opacity=0] (8.93,-4.29) -- (0,0) -- (8.93,4.29) -- cycle    ;
\draw [color={rgb, 255:red, 123; green, 123; blue, 123 }  ,draw opacity=1 ]   (337.6,137.26) -- (382.25,137.43) ;
\draw [shift={(384.25,137.44)}, rotate = 180.22] [fill={rgb, 255:red, 123; green, 123; blue, 123 }  ,fill opacity=1 ][line width=0.75]  [draw opacity=0] (8.93,-4.29) -- (0,0) -- (8.93,4.29) -- cycle    ;
\draw [shift={(335.6,137.25)}, rotate = 0.22] [fill={rgb, 255:red, 123; green, 123; blue, 123 }  ,fill opacity=1 ][line width=0.75]  [draw opacity=0] (8.93,-4.29) -- (0,0) -- (8.93,4.29) -- cycle    ;
\draw [line width=1.5]    (384.9,135.96) -- (384.9,150.96) ;

\draw  [color={rgb, 255:red, 34; green, 4; blue, 230 }  ,draw opacity=1 ][line width=1.5]  (359.8,126.96) .. controls (359.8,97.49) and (383.69,73.6) .. (413.16,73.6) -- (566.64,73.6) .. controls (596.11,73.6) and (620,97.49) .. (620,126.96) -- (620,160.24) .. controls (620,189.71) and (596.11,213.6) .. (566.64,213.6) -- (413.16,213.6) .. controls (383.69,213.6) and (359.8,189.71) .. (359.8,160.24) -- cycle ;

\draw (70,191) node [scale=1.2,color={rgb, 255:red, 0; green, 0; blue, 255 }  ,opacity=1 ]  {$\mathscr{C}^{-}$};
\draw (285,115) node [scale=0.8,color={rgb, 255:red, 0; green, 0; blue, 0 }  ,opacity=1 ]  {$\mathsf{d}^{\mathrm{ref}}$};
\draw (348,146) node [scale=0.8,color={rgb, 255:red, 0; green, 0; blue, 0 }  ,opacity=1 ]  {$\mathsf{d}(u)$};
\draw (400,196) node [scale=1.2,color={rgb, 255:red, 0; green, 0; blue, 255 }  ,opacity=1 ]  {$\mathscr{C}^{+}$};

\end{tikzpicture} }%
	\caption{
	Cartoon depicting an approximation of $\sigma(\Ham(u))$ for $u \in \mathrm{Adm}(\Lambda)$ (on the real axis) and the contours $\mathscr C^-$ and $\mathscr C^+$. The positive constants $\ctDist(u)$ and $\ctDist^\mathrm{ref}$ are also displayed.}
	\label{fig:contours}
\end{figure}

With these contours defined, we may write the site energies for both finite and zero Fermi-temperature defined in \cref{eq:site_decomp} using resolvent calculus: for $\ell \in \Lambda$ and $\beta \in (0,\infty]$
\begin{equation}
	\label{eq:site_energy_resolvent}
	\mathcal G_\ell^{\beta}(Du(\ell)) = - \frac{1}{2\pi i} \sum_{a=1}^{\numorbitals}
	\oint_{\mathscr{C}^- \cup \mathscr C^+}
	\mathfrak{g}^\beta(z;\mu)\left[\mathscr (\Ham(u) - z)^{-1}\right]_{\ell \ell}^{a a} \mathrm{d}z 
\end{equation}
We shall often simplify notation and write $\mathcal G_\ell \coloneqq \mathcal G_\ell^\infty$. 

Similarly, for the site energies defined for an admissible periodic displacement $u \colon \Lambda_R \to \mathbb R^d$ for $R>0$, we can fix contours $\mathscr C^-, \mathscr C^+$ satisfying \cref{eq:distance_contour_0T} and \cref{def:du} with $\Ham$ replaced with $\Ham^R$, and write
\begin{equation}
	\label{eq:site_energy_resolvent_R}
	G_\ell^{\beta,R}(u) = - \frac{1}{2\pi i} \sum_{a=1}^{\numorbitals} \oint_{\mathscr{C}^-\cup \mathscr{C}^+}
	\mathfrak{g}^\beta(z;\mu)\left[\mathscr (\Ham^R(u) - z)^{-1}\right]_{\ell \ell}^{a a} \mathrm{d}z.
\end{equation}

From now on, we shall denote the resolvent operators in \cref{eq:site_energy_resolvent} and \cref{eq:site_energy_resolvent_R} by $\mathscr R_z(u) \coloneqq (\Ham(u) - z)^{-1}$ and $\mathscr R^R_z(u) \coloneqq (\Ham^R(u) - z)^{-1}$, respectively. This resolvent calculus approach has been widely used for the tight binding model \cite{ChenLuOrtner18,ChenOrtner16,ELu10,Goedecker1995}.

The following Combes-Thomas type resolvent estimates will be useful in the main proofs:
\begin{lemma}[Combes-Thomas Resolvent Estimates]
	\label{lem:CT}
	Fix $u \in \mathrm{Adm}(\Lambda)$, $\mathfrak{d}^\mathrm{ref}>0$ and $z \in \mathbb C$ such that $\mathrm{dist}(z, \sigma(\Ham^\mathrm{ref})) \geq \mathfrak{d}^\mathrm{ref}$. Then, for all $0\leq j\leq\ctHamregularity$, $\ell,k \in \Lambda$ and $\bm{m} = (m_1,\dots,m_j) \in \Lambda^j$, there exist positive constants $C_0 = C_0(\ell,k)$ and $C_j = C_j(\ell,\bm{m})$ such that
	\begin{gather}
		 \big| \big[\mathscr R_z(u)\big]^{ab}_{\ell k} \big|
		 \leq C_0(\ell,k) e^{-\gamma_\mathrm{CT} |\ell - k|} 
		 \quad\text{and} \label{eq:resolvent}\\
		 \left| \frac{\partial^j \big[\mathscr R_z(u)\big]^{aa}_{\ell\ell}}{\partial [u(m_1)]_{i_1}\dots \partial [u(m_j)]_{i_j}} \right|
		 \leq C_j(\ell,\bm{m}) e^{-\gamma_\mathrm{CT} \sum_{l=1}^j |\ell - m_l|}. \label{eq:der_resolvent}
	\end{gather}
    where $\gamma_{\mathrm{CT}} = c \min\{ 1, \mathfrak{d}^\mathrm{ref} \}$ and $c$ depends on $j,\ctTBprefactor,\ctTBexponent,\mathfrak{m},d$. 
    
    If $\Lambda = \Lambda^\mathrm{ref}$ and $u = 0$, then the constants $C_j^\mathrm{ref} \coloneqq C_j$ are independent of $\ell,\bm{m}$.  
	
	Moreover, $C_0(\ell,k) \to C_0^\mathrm{ref}$ as $|\ell| + |k| - |\ell - k| \to \infty$ and  $C_j(\ell,\bm{m}) \to C_j^\mathrm{ref}$ as the subsystem containing $\ell,m_1,\dots,m_j$ moves away from the defect core together.
\end{lemma}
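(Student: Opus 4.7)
The plan is to execute the classical Combes--Thomas argument, first on the translation-invariant reference Hamiltonian and then transfer to the defected setting via the finite-rank decomposition of Lemma~\ref{lem:decomp_Ham_infty}. For the reference case, fix $k$ and introduce the weight operator $W_\gamma$ acting as multiplication by $e^{\gamma|m-k|}$. Then
\[
    \bigl[W_\gamma \Ham^{\mathrm{ref}} W_\gamma^{-1}\bigr]_{\ell m}^{ab}
    = e^{\gamma(|\ell-k|-|m-k|)} h^{ab}_{\ell m}(\ell-m),
\]
and the map $m \mapsto |m-k|$ is $1$-Lipschitz, so the exponent is bounded by $\gamma|\ell-m|$. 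Combined with the decay $|h^{ab}_{\ell m}| \lesssim e^{-\gamma_0 \mathfrak{m}|\ell-m|}$ from \asTB~and \asNonInter, plus $|e^x-1|\leq|x|e^{|x|}$, a Schur test yields $\|W_\gamma\Ham^{\mathrm{ref}}W_\gamma^{-1} - \Ham^{\mathrm{ref}}\|_{\ell^2\to\ell^2} \leq C\gamma$ for $\gamma \leq \gamma_0\mathfrak{m}/2$. Setting $\gamma = \ctCT = c\min\{1,\ctDist^{\mathrm{ref}}\}$ for $c$ sufficiently small forces the perturbation to be at most $\ctDist^{\mathrm{ref}}/2$, so the conjugated resolvent is invertible with operator norm at most $2/\ctDist^{\mathrm{ref}}$. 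Reading off the $(\ell,k)$ matrix entry gives the reference bound $|[\mathscr R_z^{\mathrm{ref}}]^{ab}_{\ell k}| \leq C_0^{\mathrm{ref}}\,e^{-\ctCT|\ell-k|}$ with a constant independent of $\ell,k$.

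The derivative estimate follows by iterating $\partial_{[u(m_l)]_{i_l}}\mathscr R_z(u) = -\mathscr R_z(u)\,\Ham_{,m_l}(u)\,\mathscr R_z(u)$ a total of $j$ times, expressing $\partial^j \mathscr R_z(u)$ as a sum of products of resolvents separated by $j$ derivative factors $\Ham_{,m_l}(u)$. By \asTB, each $[\Ham_{,m_l}]_{pq}^{cd}$ is supported on terms with $m_l \in \{p,q\}$ and decays exponentially in $|p-q|$. Inserting the pointwise CT bound on each resolvent factor and summing the intermediate indices $p_1,\dots,p_{j-1}$ by repeated Schur estimates yields the exponential weight $e^{-\ctCT \sum_{l=1}^j|\ell-m_l|}$. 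The resulting constant $C_j(\ell,\bm m)$ inherits position dependence only through the endpoint resolvent factors.

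To pass to general $u\in\mathrm{Adm}(\Lambda)$, apply Lemma~\ref{lem:decomp_Ham_infty} to write $\widetilde\Ham(u) = \widetilde\Ham^{\mathrm{ref}} + P(u) + Q(u)$ with $\|P(u)\|_{\mathrm F}\leq\delta$ and $Q(u)$ of finite rank localized on $B_{R_\delta}\times B_{R_\delta}$. Choose $\delta$ small relative to $\ctDist^{\mathrm{ref}}$. The second resolvent identity and Neumann expansion in the small perturbation $P(u)$ reduce the problem to bounding $\mathscr R_z^{\mathrm{ref}} + (\text{finite-rank correction through }Q)$. The $Q$-contribution is a finite sum of terms of the form
\[
    \sum_{m_1,m_2\in B_{R_\delta}}
        [\mathscr R_z^{\mathrm{ref}}]_{\ell m_1}\,[Q]_{m_1 m_2}\,[\mathscr R_z(u)]_{m_2 k},
\]
to which the reference CT bound may be applied on the first factor; an analogous CT bound on $\mathscr R_z(u)$ relative to $\sigma(\Ham(u))$ (with a possibly worse but finite constant since $u\in\mathrm{Adm}$ ensures $z\not\in\sigma(\Ham(u))$ on the contour) controls the last factor. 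The combination of exponential factors gives an estimate of the form $e^{-\ctCT|\ell-k|}$ multiplied by a ``detour cost'' factor that shrinks when the geodesic from $\ell$ to $k$ avoids $B_{R_\delta}$, i.e.\ when $|\ell|+|k|-|\ell-k|\to\infty$. This explains both the finiteness of $C_0(\ell,k)$ and its convergence to $C_0^{\mathrm{ref}}$ in the far-field limit; the analogous argument for the derivatives gives $C_j(\ell,\bm m) \to C_j^{\mathrm{ref}}$ when the subsystem $\{\ell, m_1,\dots,m_j\}$ drifts away from the defect core as a whole.

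The main obstacle is the Part~3 analysis when $z$ lies close to an \emph{isolated} defect eigenvalue of $\Ham(u)$: the Neumann series in $P+Q$ may fail to converge in operator norm because $\mathrm{dist}(z,\sigma(\Ham(u)))$ can be much smaller than $\ctDist^{\mathrm{ref}}$. The remedy is to exploit the finite-rank structure of $Q(u)$ via a Schur-complement / Woodbury identity, which isolates the defect-eigenvalue singularities into explicit prefactors involving the (exponentially localised) defect eigenvectors. These prefactors remain bounded uniformly in $\ell,\bm m$ and decay exponentially once $\ell,\bm m$ leave the defect support region, which is precisely the mechanism producing position-dependent constants $C_j(\ell,\bm m)$ that saturate to $C_j^{\mathrm{ref}}$ in the claimed limiting regime.
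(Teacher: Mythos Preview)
Your proposal is correct and follows essentially the same route as the paper (via the references it cites): the direct Combes--Thomas conjugation argument for the reference Hamiltonian, iteration of the resolvent differentiation identity for the derivative bounds, and then the decomposition of Lemma~\ref{lem:decomp_Ham_infty} together with the second resolvent identity to transfer to general $u$ with the exponent governed by $\mathfrak{d}^{\mathrm{ref}}$ rather than $\mathrm{dist}(z,\sigma(\Ham(u)))$. The paper's own proof is really just a pointer to \cite{ChenOrtner16,ChenOrtnerThomas2019}, and what you have sketched is precisely the content of those citations; your ``detour cost'' interpretation of the $Q$-correction and the Schur-complement remark for $z$ near defect eigenvalues match the mechanism in \cite[\S4.4]{ChenOrtnerThomas2019}.
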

\begin{proof}
	The first bound \cref{eq:resolvent} for $\Lambda = \Lambda^\mathrm{ref}$ and $u = 0$ is a standard resolvent estimate \cite[Lemma~6]{ChenOrtner16} where the exponent is explicitly calculated in terms of $\mathfrak{d}^\mathrm{ref}$ as in \cite[Lemma~8]{ChenOrtnerThomas2019}. The estimates for the derivatives of the resolvent \cref{eq:der_resolvent} for $\Lambda = \Lambda^\mathrm{ref}$ and $u=0$ follow from \cref{eq:resolvent} as in equations (4.6) and (4.8) of \cite[\S~4.2]{ChenOrtnerThomas2019}. The same estimates can be derived after replacing $\mathscr R_z^\mathrm{ref}$ with $\mathscr R_z(u)$ and $\mathfrak{d}^\mathrm{ref}$ with $\mathfrak{d}$ where $\mathrm{dist}(z,\sigma(\Ham(u))) \geq \mathfrak{d}$.
	
	The improved locality estimates (that is, with the exponent only depending on the reference band gap and not on the discrete spectrum) of \cref{eq:resolvent} and \cref{eq:der_resolvent} are derived in equations (4.19)$-$(4.24) of \cite[\S 4.4]{ChenOrtnerThomas2019}.
\end{proof}

\begin{remark}\label{rem:CT_R}
The same result holds in the case $R<\infty$ if $r_{\ell k}(u)$ is replaced with the torus distance $r_{\ell k}^\#(u)$, and the proof follows in the exact same way as in the corresponding $R=\infty$ result. The improved locality results can be proved by following equations (4.19)$-$(4.24) of \cite[\S 4.4]{ChenOrtnerThomas2019} and using the decomposition of the Hamiltonian \cref{eq:decomp_Ham_R}.
\end{remark}

\subsubsection{Spectral Pollution}

It is well known (see \cite{DaviesPlum2004, LewinSere2009} and references therein) that, in general, when approximating the spectrum of an operator with a sequence of finite dimensional spaces, spurious eigenvalues may be present in the limit. That is, accumulation points of eigenvalues along the sequence are not necessarily contained in the spectrum of the limit operator. In this section, we discuss the extent to which spectral pollution occurs when approximating $\sigma(\Ham(u))$ with $\sigma(\Ham^R(u_R))$.

More specifically, we are able to show that, if $u_R \to u$ strongly, then spectral pollution does not occur and, in the case that $u_R \rightharpoonup u$, we show that the spectral pollution is very mild. That is, we may use Lemma~\ref{lem:decomp_Ham_R} to conclude that there are at most finitely many additional eigenstates in the band gap which arise due to finitely many $O(1)$ distortions of the lattice. These distortions are sent to infinity as $R\to\infty$ and so the additional eigenstates are not present in the limit.

We remark here that the use of periodic boundary conditions prevents spectral pollution that is known to occur in the case of clamped boundary conditions, for example, see \cite{CancesEhrlacherMaday2012} for a proof in the case of local defects in a crystalline material in a PDE setting.

%
%
%
\begin{lemma}
\label{lem:spectral_pollution}
Suppose $u_R\colon \Lambda_R \to \mathbb R^d$ is a bounded sequence satisfying \asNonInterR~with some uniform constant $\mathfrak{m}>0$ and $u_R \rightharpoonup u$ for some $u\in \dot{\mathscr{W}}^{1,2}(\Lambda)$. Let $P^{R}_\infty(u_R)$ be the finite rank operator from Lemma~\ref{lem:decomp_Ham_R} with the constant $\delta > 0$. Then, 
\begin{itemize}
\setlength\itemsep{0em}

\item[(i)]
$\sigma(\Ham(u)) \subset \liminf\limits_{R\to\infty} \sigma\big(\Ham^R(u_R) - P^{R}_\infty(u_R) \big),
$

\item[(ii)] 
$
\sigma(\Ham(u)) \subset  \liminf\limits_{R\to\infty} \sigma\big(\Ham^R(u_R)\big),
$
%

\item[(iii)]
$
    \sigma(\Ham(u)) \supset 
    \limsup\limits_{R\to\infty} \big[
        \sigma\big(\Ham^R(u_R) - P^{R}_\infty(u_R)\big) \setminus B_{2\delta}(\sigma(\Ham^\mathrm{ref}))
    \big],
$
\item[(iv)] If $u_R \to u$ strongly, then 
$ \sigma(\Ham(u)) = \lim\limits_{R\to\infty} \sigma(\Ham^R(u_R))$.
\end{itemize}
\end{lemma}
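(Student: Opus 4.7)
The proof combines the Weyl-sequence characterisation of the spectrum of a self-adjoint operator with the decomposition Lemma~\ref{lem:decomp_Ham_R}, which writes $\widetilde\Ham^R(u_R) = \widetilde\Ham^{\mathrm{ref},R} + P_\delta^R + P_\mathrm{loc}^R + P_\infty^R$ where $P_\delta^R$ is globally small in Frobenius norm, $P_\mathrm{loc}^R$ is finite rank supported in $B_{R_\delta}$ (converging entrywise to the corresponding infinite-domain piece $P_\mathrm{loc}^\infty$), and $P_\infty^R$ is finite rank supported in $\Lambda_R\setminus B_{R_\infty}$ for some $R_\infty\to\infty$. Paired with the infinite-domain analogue Lemma~\ref{lem:decomp_Ham_infty}, this allows precise comparison of the finite-domain and limiting operators on compactly supported test vectors.

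For parts (i) and (ii), I fix $\lambda\in\sigma(\Ham(u))$ and a normalised Weyl sequence $\psi_n$. Using Lemma~\ref{lem:decomp_Ham_infty} I truncate to $\tilde\psi_n$ compactly supported in $B_{R_\delta}$ with $\|(\Ham(u)-\lambda)\tilde\psi_n\|\lesssim\delta+o_n(1)$. For $R$ large the support of $\tilde\psi_n$ lies inside $B_{R_\infty}$, so $P_\infty^R(u_R)\tilde\psi_n=0$ and $\Ham^{\mathrm{ref},R}\tilde\psi_n=\Ham^\mathrm{ref}\tilde\psi_n$; the entrywise convergence $P_\mathrm{loc}^R\to P_\mathrm{loc}^\infty$, together with $\|P_\delta^R\|_\mathrm{F}\leq\delta$ and the identity $\Ham^\mathrm{ref}+P_\mathrm{loc}^\infty=\Ham(u)$ on $B_{R_\delta}\times B_{R_\delta}$ up to a Frobenius-$\delta$ remainder, yields
\[
\limsup_{R\to\infty}\bigl\|(\Ham^R(u_R)-P_\infty^R(u_R)-\lambda)\tilde\psi_n\bigr\|\lesssim\delta,
\]
and the identical estimate holds with $P_\infty^R(u_R)$ omitted (the subtraction plays no role on localised test vectors). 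A diagonal extraction over $\delta\to 0$ and $n\to\infty$ delivers the required sequence $\lambda_R\to\lambda$ in both cases.

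For part (iii), let $\lambda_{R_j}\to\lambda$ with $\lambda_{R_j}\in\sigma(\Ham^{R_j}(u_{R_j})-P_\infty^{R_j}(u_{R_j}))\setminus B_{2\delta}(\sigma(\Ham^\mathrm{ref}))$ and a unit eigenvector $\phi_{R_j}$. Since $\sigma(\Ham^{\mathrm{ref},R_j})\subset\sigma(\Ham^\mathrm{ref})$ (from the band decomposition used in Lemma~\ref{lem:perturbation_spec}) we have $\|(\Ham^{\mathrm{ref},R_j}-\lambda_{R_j})^{-1}\|\leq(2\delta)^{-1}$, and rewriting the eigen-equation as
\[
\phi_{R_j}=-(\Ham^{\mathrm{ref},R_j}-\lambda_{R_j})^{-1}(P_\delta^{R_j}+P_\mathrm{loc}^{R_j})\phi_{R_j}
\]
forces $\|P_\mathrm{loc}^{R_j}\phi_{R_j}\|\geq\delta$, giving a uniform lower bound on the $B_{R_\delta}$-mass of $\phi_{R_j}$. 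Meanwhile, the Combes-Thomas estimate of Lemma~\ref{lem:CT} and Remark~\ref{rem:CT_R} applied to the resolvent on the right forces exponential decay of $\phi_{R_j}$ away from $B_{R_\delta}$ uniformly in $j$. This tightness extracts a strongly convergent subsequence $\phi_{R_j}\to\phi_\infty\neq 0$ in $\ell^2(\Lambda)$; passing to the limit in the eigen-equation, again identifying $\Ham^\mathrm{ref}+P_\mathrm{loc}^\infty$ with $\Ham(u)$ up to a Frobenius-$\delta$ error, yields $\|(\Ham(u)-\lambda)\phi_\infty\|\lesssim\delta\|\phi_\infty\|$; repeating the decomposition with $\delta\to 0$ places $\lambda\in\sigma(\Ham(u))$. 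Part (iv) follows by combining (ii) and (iii) with the observation that strong convergence $u_R\to u$ allows one, via Lemma~\ref{ham_converge_R} applied to the far-field tails of $u_R$, to arrange $\|P_\infty^R(u_R)\|_\mathrm{F}\to 0$, so that the topological spectral limits of $\Ham^R(u_R)$ and $\Ham^R(u_R)-P_\infty^R(u_R)$ coincide, and both equal $\sigma(\Ham(u))$.

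\textbf{Main obstacle.} Part (iii) is the delicate step: one must quantitatively localise eigenvectors of $\Ham^R(u_R)-P_\infty^R(u_R)$ whose eigenvalues sit in the reference gap, in order to prevent their mass from escaping to infinity in the weak limit. The Combes-Thomas estimate applied to $(\Ham^{\mathrm{ref},R_j}-\lambda_{R_j})^{-1}$ is precisely what pins this down, and it is crucial that the resolvent bound is uniform in $R_j$---which is why the reference torus Hamiltonians inheriting the bulk gap from $\sigma(\Ham^\mathrm{ref})$ is so important.
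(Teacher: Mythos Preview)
Your overall strategy matches the paper's: Weyl sequences for (i)/(ii), Combes--Thomas localisation of gap eigenvectors for (iii), and the vanishing of the far-field piece under strong convergence for (iv). Parts (i), (ii) and (iv) are fine. There is, however, a genuine gap in your treatment of (iii).

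When you write $\phi_{R_j} = -(\Ham^{\mathrm{ref},R_j} - \lambda_{R_j})^{-1}(P_\delta^{R_j} + P_\mathrm{loc}^{R_j})\phi_{R_j}$ and apply Combes--Thomas to the reference resolvent, you do \emph{not} obtain exponential decay of $\phi_{R_j}$. The $P_\mathrm{loc}^{R_j}$ contribution is fine (its support in $B_{R_\delta}$ gives a bound $\lesssim e^{-\ctCT|\ell|}$), but the $P_\delta^{R_j}$ contribution only yields, via Cauchy--Schwarz, a bound $\lesssim\delta$ that is \emph{uniform} in $\ell$. Summed over $\Lambda_{R_j}$ this blows up, so you get neither tightness nor a strongly convergent subsequence. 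The fix is to absorb $P_\delta^{R_j}$ into the reference: since $P_\delta^{R_j}$ inherits exponential off-diagonal decay from the Hamiltonians (it is a difference of operators satisfying \asTB), the operator $\Ham^{\mathrm{ref},R_j}+P_\delta^{R_j}$ still admits a Combes--Thomas estimate, and $\mathrm{dist}\bigl(\lambda_{R_j},\sigma(\Ham^{\mathrm{ref},R_j}+P_\delta^{R_j})\bigr)\geq\delta$. Rewriting the eigen-equation as $\phi_{R_j} = -(\Ham^{\mathrm{ref},R_j}+P_\delta^{R_j} - \lambda_{R_j})^{-1}P_\mathrm{loc}^{R_j}\phi_{R_j}$ then gives genuine exponential decay. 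The paper does precisely this grouping, though via a contour-integral representation of the spectral projector onto $B_\delta(\lambda)$ rather than the single eigenvector equation; see \cref{eq:spec_projector} and \cref{eq:RminusRdef}.

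A second, smaller issue: your closing step ``$\|(\Ham(u)-\lambda)\phi_\infty\|\lesssim\delta$, then repeat with $\delta\to 0$'' is delicate because the $\delta$ in the hypothesis is fixed and determines $P_\infty^R$. This \emph{can} be salvaged by introducing an independent parameter $\delta'\to 0$ in the decomposition (the operators $P_\infty^{R,\delta}$ and $P_\infty^{R,\delta'}$ differ only outside $B_{R_\infty}$, where $\phi_{R_j}$ is already exponentially small), but the paper sidesteps the problem entirely: once the eigenvector decay \cref{eq:eigenvector_decay} is in hand, it estimates $\|(\Ham(u)-\lambda_{R_j})\widetilde\psi_{R_j}\|_{\ell^2}$ \emph{directly} (comparing $\Ham(u)$ with $\Ham^{R_j}(u_{R_j})-P_\infty^{R_j}$ on the exponentially localised vector, using only weak convergence $u_{R_j}\rightharpoonup u$ and the torus wrap-around bounds), obtaining a bound that tends to zero with no residual $\delta$-error, and concludes via Weyl's criterion.
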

\begin{remark}\label{rem:spectral_pollution}
Following the proof of Lemma~\ref{lem:spectral_pollution}, one can easily see that if $u_\beta, u \in \dot{\mathscr{W}}^{1,2}(\Lambda)$, satisfying \asNonInter~with a uniform constant $\mathfrak{m}$, and $u_\beta \rightharpoonup u$ as $\beta \to \infty$, then $\sigma(\Ham(u)) \subset \liminf\limits_{\beta \to \infty} \sigma(\Ham(u_\beta))$.
\end{remark}
\begin{figure}[ht]
	\centering
	\resizebox{\columnwidth} {!} {
		\input{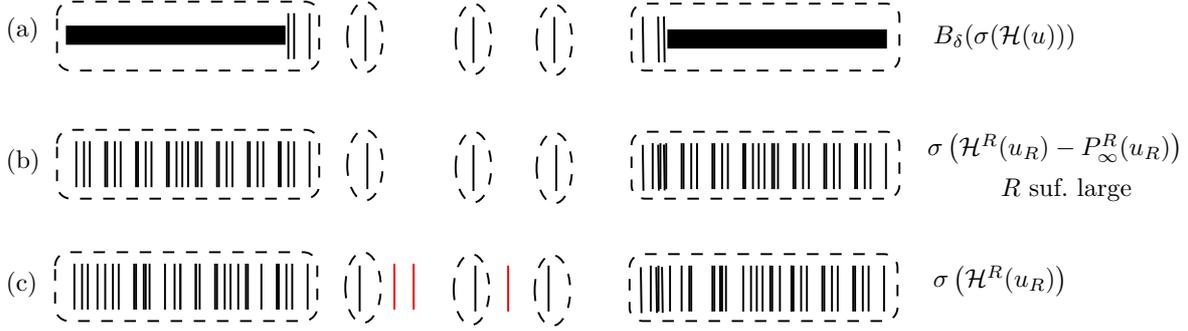} }%
	\caption{Cartoon illustrating Lemma~\ref{lem:spectral_pollution}. (a) is qualitatively similar to $\sigma(\Ham(u))$ for $u \in \dot{\mathscr W}^{1,2}(\Lambda)$ as asserted in Lemma~\ref{lem:spectrum}. (b) illustrates Lemma~\ref{lem:spectral_pollution}~\textit{(i)}~and~\textit{(iii)}: eigenvalues of $\Ham(u)$ lying in the band gap can be approximated by eigenvalues of $\Ham^R(u_R) - P^{R}_\infty(u_R)$ and every accumulation point of $\sigma\big(\Ham^R(u_R) - P^{R}_\infty(u_R)\big)$ is contained in $\sigma(\Ham(u))$. (c) illustrates Lemma~\ref{lem:spectral_pollution}~\textit{(ii)} where the finitely many eigenvalues outside $B_\delta(\sigma(\Ham(u)))$ are the ``defect states'' that arise when including the far-field contribution $P_\infty^{R}(u_R)$. These defect states vanish in the weak limit.}\label{fig:spectral_pollution}
\end{figure}
\begin{proof}
The first part of this proof loosely follows the first part of \cite[Proof of Thm.~3.1]{CancesEhrlacherMaday2012}. 

\textit{(i).} Take $\lambda \in \sigma(\Ham(u))$. For every $\tau > 0$, we may choose $\psi$ of compact support such that $\|\psi\|_{\ell^2} = 1$, $\mathrm{supp}(\psi) \subset B_{R_0}$ for some $R_0>0$, and 
\begin{align*}
    \| (\Ham(u) - \lambda) \psi \|_{\ell^2} \leq \tau.
\end{align*}
For $R\geq R_0$, we let $\psi_R \coloneqq \psi|_{\Lambda_R}$ and calculate: for $\ell \in \Lambda_R$ and $1\leq a\leq \numorbitals$,
\begin{align}\label{eq:sp_part_1}
    &\big[\big(\Ham^R(u_R) - P^{R}_\infty(u_R) \big) \psi_R\big]_{\ell}^a 
    = [\Ham(u)\psi]_\ell^a 
    -   \sum_{1 \leq b \leq \numorbitals}
    \sum_{k \in \Lambda_{R} \cap B_{R_0} }
    P^{R}_\infty(u_R)_{\ell k}^{ab}\psi_R(k;b)  \\ 
    &\,+\sum_{\above{k \in \Lambda_{R} \cap B_{R_0}}{1 \leq b \leq \numorbitals}}
    \left(h_{\ell k}^{ab}(\bm{r}_{\ell k}(u_R)) - h_{\ell k}^{ab}(\bm{r}_{\ell k}(u))\right) \psi(k;b) 
     + \sum_{\above{k \in \Lambda_{R} \cap B_{R_0}}{1 \leq b \leq \numorbitals}}
    \sum_{\above{\alpha \in \mathbb Z^d}{\alpha \not=0}} h_{\ell k}^{ab}(\bm{r}_{\ell k}(u_R) + \mathsf{M}_R\alpha) \psi_R(k;b). \nonumber
\end{align}
Therefore, after choosing $R$ sufficiently large such that $P_\infty^{R}(u_R)_{\ell k}^{ab} = 0$ for all $k \in \Lambda_R \cap B_{R_0}$, squaring, summing over $\ell \in \Lambda_R$ and applying Lemma~\ref{ham_converge}, we have: for sufficiently large $R$, 
\begin{align}\label{eq:HamRminuslambda}\begin{split}
    &\big\|\big(\Ham^R(u_R) - P^{R}_\infty(u_R) - \lambda\big)\psi_R\big\|_{\ell^2(\Lambda_R)} \\
    &\qquad\leq 
    \big\|\big(\Ham^R(u_R) - P^{R}_\infty(u_R)\big) \psi_R - \Ham(u) \psi \big\|_{\ell^2(\Lambda_R)} 
    + \|(\Ham(u) - \lambda)\psi\|_{\ell^2(\Lambda_R)} \\
    &\qquad\leq C\|D(u_R - u)\|_{\ell^2_\ctGamma(\Lambda_R \cap B_{2R_0})} + C e^{- \gamma_0\mathfrak{m} R_0 } + C e^{-\frac{1}{2}\ctTBexponent \mathfrak{m} (R - R_0)} + \tau.
\end{split}\end{align}
Here, we have used the fact that for $\ell\in \Lambda_R$ and $k \in B_{R_0}$, we have $|\ell - k + \mathsf{M}_R\alpha| \geq R - R_0$ for all $\alpha \in \mathbb Z^d \setminus \{0\}$.

Therefore, by choosing $R_0$ and then $R$ sufficiently large, either $\lambda \in \sigma\big(\Ham^R(u_R)- P^{R}_\infty(u_R)\big)$ or we can write 
\begin{align*}
    1 = \|\psi_R\|_{\ell^2(\Lambda_R)} 
    &\leq 
    \|(\Ham^R(u_R) - P^{R}_\infty(u_R) - \lambda)^{-1}\|_{\ell^2\to\ell^2} \|(\Ham^R(u_R) - P^{R}_\infty(u_R) - \lambda)\psi_R\|_{\ell^2(\Lambda_R)} \\
    &\leq \|(\Ham^R(u_R) - P^{R}_\infty(u_R) - \lambda)^{-1}\|_{\ell^2\to\ell^2} \cdot 2\tau.
\end{align*}
That is, in the case that $\lambda \not\in \sigma\big(\Ham^R(u_R)- P^{R}_\infty(u_R)\big)$, we know that $(\Ham^R(u_R) - P^{R}_\infty(u_R) - \lambda)^{-1}$ defines a bounded linear operator and so 
\begin{align*}
    \mathrm{dist}\left(\lambda, \sigma\big(\Ham^R(u_R) - P^{R}_\infty(u_R)\big)\right) %
    = \frac{1}{\|(\Ham^R(u_R) - P^{R}_\infty(u_R) - \lambda)^{-1}\|_{\ell^2\to\ell^2}} \leq 2\tau.  
\end{align*}
Here, we have used the fact that, for a bounded normal operator, the operator norm equals the spectral radius.

\textit{(ii).} The exact same arguments may be made for the operator $\Ham^R(u_R)$. In this case, the second term in \cref{eq:sp_part_1} is omitted and we obtain $\|(\Ham^R(u_R) - \lambda)\psi_R\|_{\ell^2} \leq 2\tau$ for all $R$ sufficiently large as in the proof of \textit{(i)}.

\textit{(iii).} We suppose that $\lambda \in \limsup_{R\to\infty} \sigma\big(\Ham^R(u_R) - P^{R}_\infty(u_R)\big)$
with $\lambda \not\in B_{2\delta}(\sigma(\Ham^\mathrm{ref}))$. By Lemma~\ref{lem:decomp_Ham_R} and~\ref{lem:perturbation_spec}, there exists $S_\delta > 0$ such that 
\begin{gather}
    \sigma\big(\Ham^{\mathrm{ref},R} + P_\delta^R(u_R)\big) \cap B_\delta(\lambda) = \emptyset \label{eq:sp_pt3_1} \\
    \# \left( \sigma\big(\Ham^R(u_R) - P^{R}_\infty(u_R)\big) \cap B_\delta(\lambda) \right) \leq S_\delta \quad \forall \, R \label{eq:sp_pt3_2}
\end{gather}
where $P^R_\delta(u_R)$ is the perturbation arising in Lemma~\ref{lem:decomp_Ham_R} with $\|P^R_\delta(u_R)\|_\mathrm{F} \leq \delta$.

By \cref{eq:sp_pt3_1} and \cref{eq:sp_pt3_2}, we may let $\mathscr C = \partial B_\delta(\lambda)$ be the positively oriented circle of radius $\delta$ centred at $\lambda$, and obtain
\begin{align}
\begin{split}\label{eq:spec_projector}
    &\sum_j \lambda^{(j)}_R \psi_R^{(j)} \otimes \psi_R^{(j)} = - \frac{1}{2\pi i} \oint_\mathscr{C}(\Ham^R(u_R)- P^{R}_\infty(u_R) - z)^{-1} \mathrm{d}z \\
    &\qquad=  - \frac{1}{2\pi i} \oint_\mathscr{C}(\Ham^R(u_R) - P^{R}_\infty(u_R) - z)^{-1} - (\Ham^{\mathrm{ref},R} + P_\delta^R(u_R) - z)^{-1}\mathrm{d}z
\end{split}
\end{align}
where $\mathrm{span} \{\psi_R^{(j)}\}_j$ is the eigenspace corresponding to the eigenvalues $\lambda_R^{(j)} \in \sigma\big(\Ham^R(u_R) - P^{R}_\infty(u_R)\big)$ with $\lambda_R^{(j)} \in B_\delta(\lambda)$ and $\|\psi_R^{(j)}\|_{\ell^2} = 1$. By Lemma~\ref{lem:decomp_Ham_R}, for sufficiently large $R$, we have
\begin{align}
    &\left|\left[(\Ham^R(u_R) - P^{R}_\infty(u_R)- z)^{-1} - (\Ham^{\mathrm{ref},R} + P_\delta^R(u_R) - z)^{-1}\right]_{\ell k}^{ab}\right| \nonumber \\ 
    &\qquad 
    = \left|\left[
    (\Ham^R(u_R) -P^{R}_\infty(u_R)- z)^{-1}
   P^{R}_\mathrm{loc}(u_R)
    (\Ham^{\mathrm{ref},R} + P_\delta^R(u_R) - z)^{-1}
    \right]_{\ell k}^{ab}\right| 
        \nonumber\\
    &\qquad
    \leq C \sum_{\ell_1,\ell_2 \in \Lambda_R \cap B_{R_\delta}}
        e^{-\ctCT(r^\#_{\ell \ell_1} + r^\#_{\ell_2 k})} 
    %
    %
    \leq Ce^{-\ctCT (|\ell| + |k|)}.
        \label{eq:RminusRdef}
\end{align}
Equation~\cref{eq:RminusRdef} is analogous to the $R = \infty$ result shown in \cite[Eq.~(4.19)]{ChenOrtnerThomas2019}. Therefore, by applying \cref{eq:spec_projector}, we have
\begin{align}
\label{eq:eigenvector_decay}
    |\psi_R^{(j)}(\ell; a)| \leq C e^{-\ctCT|\ell|} \quad \text{for all }\ell \in \Lambda_R \text{ and } 1\leq a \leq \numorbitals.
\end{align}

Now, after defining $\widetilde{\psi}_R^{(j)}$ to be equal to $\psi_R^{(j)}$ on $\Lambda_R$ and extending by zero to $\Lambda$, we have: for sufficiently large $R$,
\begin{align*}
    \|(\Ham(u) - \lambda^{(j)}_R)\widetilde{\psi}_R^{(j)}\|_{\ell^2} \leq C\|D(u_R - u)\|_{\ell^2_\ctGamma(\Lambda \cap B_{R_0})} + C \big( e^{-\eta_1 R_0} + e^{-\eta_2 (R - R_0)} + e^{-\eta_3 R_\infty} \big)   
\end{align*}
where $\eta_j > 0$ for each $j \in \{1,2,3\}$ and $R_\infty$ is the constant from Lemma~\ref{lem:decomp_Ham_R} (that is, $P^{R}_\infty(u_R)$ zero on $(\Lambda_R \setminus B_{R_\infty})^2$ with $R_\infty \to \infty$ as $R\to\infty$). This calculation is analogous to \cref{eq:HamRminuslambda} where, instead of exploiting the fact the (approximate) eigenvectors are of compact support, we now use the exponential decay of the eigenvectors \cref{eq:eigenvector_decay}.

For a strictly increasing sequence $(R_n)_n \subset \mathbb N$ and sequence of indices $(j_n)_n$, we define the subsequence $(\lambda_n, \psi_n) \coloneqq (\lambda^{(j_n)}_{R_n}, \widetilde{\psi}^{(j_n)}_{R_n})$. We can conclude that if $\lambda_n \to \lambda$ as $n \to \infty$, we have 
\begin{align*}
    \big\|
    (\Ham(u) - \lambda){\psi}_n
    \big\|_{\ell^2} &\leq 
    \big\|
    (\Ham(u) - \lambda_n){\psi}_n
    \big\|_{\ell^2} + 
    |\lambda - \lambda_n| 
    \to 0 \quad \text{as }n \to \infty.
\end{align*}
Therefore, by applying Weyl's criterion \cite[Ch.~7]{HislopSigal1995}, we can conclude that $\lambda \in \sigma(\Ham(u))$.

\textit{(iv).} In the case that $u_R \to u$ strongly as $R\to\infty$, we have: for all $\delta>0$, there exists $R_\delta > 0$ such that $\|Du_R\|_{\ell^2_\ctGamma(\Lambda_R\setminus B_{R_\delta})} \leq \delta$ for all $R$ sufficiently large. Following the proof of Lemma~\ref{lem:decomp_Ham_R}, we can conclude that $P^R_\infty(u_R) = 0$.
\end{proof}

\subsubsection{Limits of the Site Energies}

We now state that $\mathfrak{g}^\beta(\,\cdot\,;\mu)$ converges exponentially as $\beta \to \infty$ which is used in the convergence of the site energies in the zero temperature limit.
\begin{lemma}\label{lem:g_convergence_2}
	Fix $z \in \mathbb C$ such that 
	$\ctDist \coloneqq \frac{1}{2}|\mathrm{Re}(z) - \mu|>0$. 
	Then, for all $\beta_0 > 0$, there exists a positive constant $C_{\beta_0\ctDist}$ such that
	\begin{align*}
		|\mathfrak{g}^\beta(z;\mu) - \mathfrak{g}(z;\mu)| \leq C_{\beta_0\ctDist} \beta^{-1} e^{-\frac{1}{3}\beta|\Re{z} - \mu|} \quad \forall\, \beta > \beta_0.
	\end{align*}
\end{lemma}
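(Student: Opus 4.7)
The plan is to split into the two cases $\Re{z} < \mu$ and $\Re{z} > \mu$, derive an explicit closed-form expression for the difference $\mathfrak{g}^\beta(z;\mu) - \mathfrak{g}(z;\mu)$ in each case, and bound the resulting logarithm via a Taylor estimate.

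First, using $1 - f_\beta(\ep) = e^{\beta\ep}/(1+e^{\beta\ep})$ for $\Re{z} < \mu$, I would write $\mathfrak{g}^\beta(z;\mu) = 2(z-\mu) - \frac{2}{\beta}\log(1+e^{\beta(z-\mu)})$, and using $1 - f_\beta(\ep) = 1/(1+e^{-\beta\ep})$ for $\Re{z} > \mu$ I would write $\mathfrak{g}^\beta(z;\mu) = -\frac{2}{\beta}\log(1+e^{-\beta(z-\mu)})$. Both identities agree with $\frac{2}{\beta}\log(1-f_\beta(z-\mu))$ on the real axis and extend analytically throughout the respective open half-planes, since in each case $\Re{1+e^{\pm\beta(z-\mu)}} \geq 1 - e^{-\beta|\Re{z}-\mu|} > 0$, so the principal branch of $\log$ stays well-defined and matches the analytic continuation from Lemma~\ref{lem:analytic-cont}. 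Recalling that $\mathfrak{g}(z;\mu) = 2(z-\mu)$ for $\Re{z} < \mu$ and $\mathfrak{g}(z;\mu) = 0$ for $\Re{z} > \mu$ (by analytic extension from the real axis, consistent with its use on the contours $\mathscr{C}^-$ and $\mathscr{C}^+$), setting $w := e^{\beta(z-\mu)}$ in the first case and $w := e^{-\beta(z-\mu)}$ in the second reduces both cases to $|\mathfrak{g}^\beta(z;\mu) - \mathfrak{g}(z;\mu)| = \frac{2}{\beta}|\log(1+w)|$ with $|w| = e^{-2\beta\ctDist} \in (0,1)$.

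It then suffices to bound $|\log(1+w)|$ on the open unit disc, which I would split into two subregimes. When $|w| \leq 1/2$, the standard estimate $|\log(1+w)| \leq 2|w|$ gives directly $|\mathfrak{g}^\beta(z;\mu) - \mathfrak{g}(z;\mu)| \leq \frac{4}{\beta} e^{-\beta|\Re{z}-\mu|}$, which is strictly stronger than the stated bound. When $1/2 < |w| < 1$, which can only occur if $\beta\ctDist$ lies below a fixed threshold (forcing $\beta_0\ctDist$ to lie below the same threshold for this case to be possible at all), I would use $|\log(1+w)| \leq \log\tfrac{1}{1-|w|} + \pi$ together with $1 - |w| \geq 1 - e^{-2\beta_0\ctDist}$ to produce a constant $C_{\beta_0\ctDist}$ bounding $|\log(1+w)|$. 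In this second regime, $e^{-\beta|\Re{z}-\mu|/3}$ is itself bounded below by a positive constant depending only on $\beta_0$ and $\ctDist$, so a purely constant bound translates into the exponential form claimed.

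The factor $1/3$ in the stated exponent plays no essential structural role; it simply provides the slack needed to absorb the constant from the second regime into a single clean estimate. I do not foresee any significant obstacle: the only technical care required is identifying the correct branch for the analytic continuation of $\mathfrak{g}^\beta$, which is handled by the explicit half-plane representations above.
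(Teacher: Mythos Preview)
Your argument is correct and noticeably cleaner than the paper's. The key simplification is your half-plane representations
\[
\mathfrak g^\beta(z;\mu)=2(z-\mu)-\tfrac{2}{\beta}\log\bigl(1+e^{\beta(z-\mu)}\bigr)\ (\Re z<\mu),\qquad
\mathfrak g^\beta(z;\mu)=-\tfrac{2}{\beta}\log\bigl(1+e^{-\beta(z-\mu)}\bigr)\ (\Re z>\mu),
\]
which are analytic on the respective open half-planes because $|w|<1$ forces $\Re(1+w)>0$, so the principal logarithm never sees a branch cut. Matching on the real axis and invoking the identity theorem identifies these with the analytic continuation of Lemma~\ref{lem:analytic-cont}, and the whole estimate collapses to bounding $|\log(1+w)|$ on the open unit disc. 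By contrast, the paper works directly with $1-f_\beta(z)=e^{\beta z}/(1+e^{\beta z})$, splits into real and imaginary parts, and on $\mathscr C^-$ must track the $2k\pi i$ branch corrections in the explicit definition of the continuation; this forces a further case split on $|\cos(\beta\,\Im z)|$ and several $\tan^{-1}$ estimates before the $e^{-\beta|\Re z-\mu|/3}$ bound emerges. Your route bypasses all of that. It also makes transparent that the sharp exponent is actually $1$ in the main regime $|w|\le 1/2$, with the factor $1/3$ only needed to absorb the constant from the transient small-$\beta\ctDist$ regime; the paper's proof exhibits the same phenomenon but less visibly, since the $1/3$ there arises only in the Step~4 case analysis for the imaginary part on $\mathscr C^-$.
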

\begin{proof}
    The proof is lengthy but elementary and so is given in \Cref{proof_gbeta_convergence}.
\end{proof}
 
We now apply Lemma~\ref{lem:g_convergence_2} together with the resolvent estimates of Lemma~\ref{lem:CT} to show that the site energies and their derivatives converge in the zero temperature limit:
\begin{lemma}[Zero Temperature Limit of the Site Energies]
	\label{lem:convergence_site_energy}
	Let $u \in \emph{Adm}(\Lambda)$. Then, for each $\beta_0 > 0$, $0 \leq j \leq \ctHamregularity$, $\ell \in \Lambda$, $\bm{m}= (m_1,\dots,m_j) \in \Lambda^j$ and $1\leq i_1,\dots,i_j \leq d$, there exists a constant $C$ depending on $\beta_0, \numorbitals, \ctDist(u), d$ such that, for all $\beta > \beta_0$, we have
	\begin{gather*}
	\left|\frac{\partial^j \mathcal G_{\ell}^\beta(Du(\ell))}{\partial [u(m_1)]_{i_1} \dots \partial [u(m_j)]_{i_j}} -  
	\frac{\partial^j \mathcal G_{\ell}(Du(\ell))}{\partial [u(m_1)]_{i_1} \dots \partial [u(m_j)]_{i_j}} \right| 
	\leq C \beta^{-1} e^{- \frac{1}{6}\ctDist(u) \beta} e^{-\gamma_\mathrm{CT} \sum_{l=1}^j r_{\ell m_l} }.
	\end{gather*}	
\end{lemma}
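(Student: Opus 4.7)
The plan is to differentiate the resolvent representation~\cref{eq:site_energy_resolvent} of the site energy under the contour integral, subtract the $\beta=\infty$ analogue, and apply the pointwise bound on $\mathfrak{g}^\beta-\mathfrak{g}$.

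First, I would fix contours $\mathscr{C}^\pm$ as in \cref{eq:distance_contour_0T}, independent of $\beta$. Differentiation under the integral is justified because the Combes--Thomas bound of Lemma~\ref{lem:CT} supplies a $\beta$-independent dominating bound for $\partial^j [\mathscr{R}_z(u)]^{aa}_{\ell\ell}$ on the contour. This yields
\[
\partial^j \mathcal{G}^\beta_\ell(Du(\ell)) = -\frac{1}{2\pi i}\sum_{a=1}^{\numorbitals} \oint_{\mathscr{C}^-\cup\mathscr{C}^+}\mathfrak{g}^\beta(z;\mu)\,\partial^j [\mathscr{R}_z(u)]^{aa}_{\ell\ell}\,\mathrm{d}z,
\]
and the analogous identity for $\beta=\infty$ with $\mathfrak{g}$ extended holomorphically on a neighborhood of each component of the contour: $\mathfrak{g}(z;\mu)=2(z-\mu)$ near $\mathscr{C}^-$ and $\mathfrak{g}(z;\mu)=0$ near $\mathscr{C}^+$. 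Subtracting, the difference of the two $j$-th derivatives becomes a single contour integral with integrand $(\mathfrak{g}^\beta-\mathfrak{g})(z;\mu)\cdot\partial^j[\mathscr{R}_z(u)]^{aa}_{\ell\ell}$.

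Next, I would estimate the two factors on the contour. By construction $|\Re(z)-\mu|\geq \tfrac{1}{2}\ctDist(u)$ for $z\in\mathscr{C}^\pm$, so Lemma~\ref{lem:g_convergence_2} gives
\[
|\mathfrak{g}^\beta(z;\mu)-\mathfrak{g}(z;\mu)| \leq C\,\beta^{-1} e^{-\frac{1}{3}\beta|\Re(z)-\mu|} \leq C\,\beta^{-1} e^{-\frac{1}{6}\ctDist(u)\,\beta}
\]
uniformly in $\beta\geq\beta_0$. The Combes--Thomas estimate bounds $|\partial^j[\mathscr{R}_z(u)]^{aa}_{\ell\ell}|$ by $C_j(\ell,\bm{m})\, e^{-\ctCT\sum_l r_{\ell m_l}}$, with prefactor $C_j(\ell,\bm{m})$ uniformly bounded (it converges to $C_j^{\mathrm{ref}}$ away from the defect). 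The total contour length is uniformly bounded because $\sigma(\Ham(u))$ lies in a fixed compact interval by \asTB~and \asNonInter. Combining these ingredients with the factor $\numorbitals$ from the orbital sum yields the desired estimate.

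The main (mild) technical point is the legitimacy of differentiation under the integral sign: both that $\mathfrak{g}^\beta$ is holomorphic on a neighborhood of each contour uniformly in $\beta\geq\beta_0$, and that the Combes--Thomas constants are uniform. The first holds because the contour stays at distance $\geq\tfrac{1}{2}\ctDist(u)$ from the line $\mu+i\mathbb{R}$, hence avoids the branch points $\mu\pm i\pi\beta^{-1}$ of $\mathfrak{g}^\beta$ for every $\beta\geq\beta_0$ (after possibly enlarging $\beta_0$), and the second is the final assertion of Lemma~\ref{lem:CT}. No further obstacle is expected since every estimate required is already supplied by the preliminary lemmas.
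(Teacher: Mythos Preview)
Your proposal is correct and follows essentially the same route as the paper's proof: write the difference of site-energy derivatives via the resolvent representation \cref{eq:site_energy_resolvent}, factor out $\mathfrak{g}^\beta-\mathfrak{g}$ on the contour, and apply Lemma~\ref{lem:g_convergence_2} together with the Combes--Thomas bound of Lemma~\ref{lem:CT}. One small remark: since the contours satisfy $|\Re(z)-\mu|\geq\tfrac{1}{2}\ctDist(u)$ they avoid the entire line $\mu+i\mathbb{R}$ and hence the singularities $\mu\pm i\pi\beta^{-1}$ of $\mathfrak{g}^\beta$ for \emph{every} $\beta>0$, so no enlargement of $\beta_0$ is needed.
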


\begin{proof}
With the resolvent estimates of Lemma~\ref{lem:CT} and the convergence of the integrand $\mathfrak{g}^\beta$ shown in Lemma~\ref{lem:g_convergence_2}, this proof is a simple corollary.

Using \cref{eq:site_energy_resolvent}, we may note that
\begin{align*}
	\left|\mathcal G^\beta_{\ell}(Du(\ell)) - \mathcal G_{\ell}(Du(\ell))\right|  
	&= \frac{1}{2\pi}\left|\sum_a \oint_{\mathscr C^-\cup\mathscr C^+} 
		\left[ \mathfrak{g}^\beta(z;\mu) - \mathfrak{g}(z;\mu)\right] 
		\left[\mathscr R_z(u)\right]_{\ell\ell}^{aa} \mathrm{d}{z}\right| \\
	&\lesssim  C_0 \max_{z \in \mathscr C^- \cup \mathscr C^+} 
		\big|\mathfrak{g}^\beta(z;\mu) - \mathfrak{g}(z;\mu)\big| \\
	&\lesssim C_0C_{\beta_0\ctDist} \beta^{-1} e^{-\frac{1}{6}\ctDist(u)\beta }  
\end{align*}
where $C_0 = C_0(\ell,\ell)$ is the constant from Lemma~\ref{lem:CT} and $C_{\beta_0\ctDist}$ is the constant from Lemma~\ref{lem:g_convergence_2}. 
	
Moreover, 
\begin{align*}
	&\bigg|\frac{\partial^j \mathcal G^\beta_{\ell}(Du(\ell))}{\partial[u(m_1)]_{i_1} 
		\dots \partial[u(m_j)]_{i_j}} - \frac{\partial^j \mathcal G_{\ell}(Du(\ell))}{\partial[u(m_1)]_{i_1} \dots \partial[u(m_j)]_{i_j}}\bigg| \\
	&\qquad= \frac{1}{2\pi}\left|\sum_a \oint_{\mathscr C^-\cup\mathscr C^+} 
		\left[ \mathfrak{g}^\beta(z;\mu) - \mathfrak{g}(z;\mu)\right] 
		\frac{\partial^j \left[\mathscr R_z(u)\right]_{\ell\ell}^{aa}}{\partial[u(m_1)]_{i_1} 
			\dots \partial[u(m_j)]_{i_j}} \mathrm{d}{z}\right| \\
	&\qquad\lesssim 
	C_{\beta_0\ctDist} \beta^{-1} e^{-\frac{1}{6}\ctDist(u) \beta} \cdot C_j e^{-\ctCT \sum_{l=1}^j r_{\ell m_l} }
\end{align*}
where $C_j = C_j(\ell,\bm{m})$ is the constant from Lemma~\ref{lem:CT}. 
\end{proof}

The corresponding $R\to\infty$ result is as follows:
\begin{lemma}[Thermodynamic Limit of the Site Energies]\label{eq:thermodynamic_site_energies}
Let $u \in \mathrm{Adm}(\Lambda)$ be of compact support and fix $\beta \in (0,\infty]$. Then, for sufficiently large $R$ and each $0\leq j \leq \nu$, $\ell \in \Lambda_R$, $\bm{m} = (m_1,\dots,m_j) \in \Lambda_R^j$ and $1\leq i_1,\dots,i_j \leq d$, we have 
\begin{align*}
    \left| \frac{\partial^j  G_{\ell}^{\beta,R}(u)}{\partial [u(m_1)]_{i_1} \dots \partial [u(m_j)]_{i_j}} 
	    -  \frac{\partial^j G^\beta_{\ell}(u)}{\partial [u(m_1)]_{i_1} \dots \partial [u(m_j)]_{i_j}} \right|
	\leq C e^{-\eta \big(\mathrm{dist}(\ell, \mathbb R^d \setminus \Omega_R) +  \sum_{l=1}^j r_{\ell m_l}^\#(u)\big)}   
\end{align*}
where $\eta \coloneqq \frac{1}{2}\mathfrak{m}\min\{\ctCT, \frac{1}{2}\ctTBexponent\}$.
\end{lemma}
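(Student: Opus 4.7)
The plan is to express both $G_\ell^{\beta,R}(u)$ and $G_\ell^\beta(u)$ via the resolvent calculus representations \cref{eq:site_energy_resolvent}--\cref{eq:site_energy_resolvent_R} on a single common contour, differentiate under the integral, and then estimate the difference of the integrands by telescoping and the resolvent identity. Because $u$ has compact support, Lemma~\ref{lem:spectral_pollution}\textit{(iv)} (combined with $\sigma(\Ham^{\mathrm{ref},R})\subset\sigma(\Ham^\mathrm{ref})$) guarantees that, for $R$ large, one fixed pair of contours $\mathscr C^\pm$ satisfies \cref{eq:distance_contour_0T} simultaneously for $\Ham(u)$ and $\Ham^R(u)$, and on the bounded contour $\mathscr C\coloneqq\mathscr C^-\cup\mathscr C^+$ the integrand weight $|\mathfrak g^\beta(\,\cdot\,;\mu)|$ is uniformly bounded. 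Hence the problem reduces to a uniform-in-$z\in\mathscr C$ bound on
\[
    \Delta_z^{(j)} \coloneqq \frac{\partial^j[\mathscr R_z^R(u)]^{aa}_{\ell\ell}}{\partial[u(m_1)]_{i_1}\cdots\partial[u(m_j)]_{i_j}} - \frac{\partial^j[\mathscr R_z(u)]^{aa}_{\ell\ell}}{\partial[u(m_1)]_{i_1}\cdots\partial[u(m_j)]_{i_j}}.
\]

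Iterating $\partial_u\mathscr R_z = -\mathscr R_z(\partial_u\Ham)\mathscr R_z$ rewrites each $j$-th derivative as a combinatorial sum, with $R$-independent coefficients, of products of $j+1$ resolvent factors alternating with $j$ Hamiltonian-derivative factors $\Ham_{,\bm p}$ (respectively $\Ham^R_{,\bm p}$). Telescoping the difference of the two sums so that each summand contains exactly one ``periodic-versus-infinite'' factor, and then applying the resolvent identity $\mathscr R^R_z-\mathscr R_z = -\mathscr R^R_z(\Ham^R-\Ham)\mathscr R_z$ to the terms in which that factor is $\mathscr R_z^R-\mathscr R_z$, reduces $\Delta_z^{(j)}$ to a finite sum in which the only non-reference factor is a single matrix element $[\Ham^R_{,\bm p}(u)-\Ham_{,\bm p}(u)]^{bc}_{pq}$.

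Two elementary estimates then close the argument. The Hamiltonian-difference entry admits, as in the proof of Lemma~\ref{ham_converge_R}, the bound
\[
    \bigl|[\Ham^R_{,\bm p}(u)-\Ham_{,\bm p}(u)]^{bc}_{pq}\bigr|\leq C\sum_{\alpha\neq 0} e^{-\ctTBexponent|\bm r_{pq}(u)+\mathsf M_R\alpha|}.
\]
The key geometric input is that for $q\in\Omega_R$ and $\alpha\neq 0$ one has $q-\mathsf M_R\alpha\notin\Omega_R$, whence $|p-q+\mathsf M_R\alpha|\geq \mathrm{dist}(p,\mathbb R^d\setminus\Omega_R)$; combined with \asNonInter\ (and using that $\|u\|_\infty$ is bounded while $R$ is large), this yields $|\bm r_{pq}(u)+\mathsf M_R\alpha|\geq \ctnoninterpen\,\mathrm{dist}(p,\mathbb R^d\setminus\Omega_R)$. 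Splitting $\ctTBexponent=\tfrac12\ctTBexponent+\tfrac12\ctTBexponent$ and summing the geometric series over $\alpha$ then gives decay $e^{-\frac12\ctnoninterpen\ctTBexponent\,\mathrm{dist}(p,\mathbb R^d\setminus\Omega_R)}$. Meanwhile the surrounding $j+1$ resolvent factors contract the internal summation via Lemma~\ref{lem:CT} and Remark~\ref{rem:CT_R}, yielding the localisation $e^{-\ctCT r^\#_{\ell m_l}}$ at each differentiation site and, by the triangle inequality, transporting the boundary-decay factor from the internal vertex $p$ to $\ell$. Choosing $\eta=\tfrac12\ctnoninterpen\min\{\ctCT,\tfrac12\ctTBexponent\}$ absorbs the constants; the sums over internal vertices are uniformly convergent, and integrating along the bounded contour $\mathscr C$ yields the claim.

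The chief technical obstacle is the mismatch between the ambient spaces of $\mathscr R^R_z$ and $\mathscr R_z$, needed to make the resolvent identity meaningful. The plan is to handle this by identifying $\Ham^R(u)$ with its lift to $\mathsf M_R\mathbb Z^d$-periodic sequences (as already used implicitly in the proof of Lemma~\ref{ham_converge_R}); because $u$ has compact support well inside $\Omega_R$ for $R$ large, $\Ham^R(u)$ and $\Ham(u)$ coincide with $\Ham^\mathrm{ref}$ on the annulus separating the defect core from the cell boundary, so the telescoping above is literally an operator identity on that region. Every other step is bookkeeping with the tools already in hand.
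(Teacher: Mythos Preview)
Your overall strategy matches the paper's: contour representation on a common $\mathscr C^-\cup\mathscr C^+$ (justified via Lemma~\ref{lem:spectral_pollution}\textit{(iv)} for compactly supported $u$), differentiation under the integral, telescoping so that each summand carries a single factor of $\Ham^R-\Ham$ or $\Ham^R_{,\bm m}-\Ham_{,\bm m}$, and then the Combes--Thomas bounds together with the geometric fact that the $\alpha\neq 0$ periodic images lie outside $\Omega_R$, with the boundary decay transported from the internal vertex to $\ell$ by the triangle inequality.

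The one place where your outline does not go through is the ambient-space mismatch, which you correctly flag as the chief obstacle but do not actually resolve. Lifting $\Ham^R(u)$ to $\mathsf M_R\mathbb Z^d$-periodic sequences does not place it on the same Hilbert space as $\Ham(u)$: the periodic space carries a copy of the defect in every translate of $\Omega_R$, whereas $\Lambda$ contains only one, so the identity $\mathscr R_z^R-\mathscr R_z=-\mathscr R_z^R(\Ham^R-\Ham)\mathscr R_z$ still has no meaning as an operator equation. (Also, $\Ham^R(u)$ does \emph{not} coincide with $\Ham^{\mathrm{ref}}$ on the annulus---the $\alpha\neq 0$ summands are present there too, merely small---and operator identities do not localise to subregions.) The paper resolves this differently and more simply: it extends $\Ham^R(u)$ by zero to an operator $\widetilde{\Ham}^R(u)$ on $\ell^2(\Lambda)$, so that $\widetilde{\mathscr R}_z^R\coloneqq(\widetilde{\Ham}^R(u)-z)^{-1}$ is block-diagonal and agrees with $\mathscr R_z^R$ on the $\Lambda_R$ block. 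The resolvent identity is then literal on $\ell^2(\Lambda)$. The price is an extra case in $[\widetilde{\Ham}^R(u)-\Ham(u)]_{\ell_1\ell_2}$: when $\ell_1,\ell_2\in\Lambda_R$ it is the $\sum_{\alpha\neq 0}$ tail you described, but when $\ell_1\notin\Lambda_R$ or $\ell_2\notin\Lambda_R$ it equals $-[\Ham(u)]_{\ell_1\ell_2}$, and the paper bounds this second contribution separately via $r_{\ell\ell_1}+r_{\ell_1\ell_2}\geq\mathfrak m\,\mathrm{dist}(\ell,\Omega_R^c)$. Once you substitute this zero-padding for your periodic lift and add that second case, the remainder of your argument is exactly the paper's.
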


\begin{proof}
Similar to the calculations in the proof of Lemma~\ref{lem:convergence_site_energy}, we write the site energies using resolvent calculus. Using the fact $\mathfrak{g}^\beta(z;\mu)$ is uniformly bounded along the contour $\mathscr C^-\cup\mathscr C^+$, it is sufficient to prove that the derivatives of the resolvent operators converge in the thermodynamic limit. A full proof is given in Appendix~\ref{app:thermodynamic_limit_site_energies}.
\end{proof}

\subsection{Zero Temperature Limit}

We are now in a position to prove the first main convergence results:

\subsubsection{Proof of Theorem~\ref{thm:0T_limit_infinite_domain}: \texorpdfstring{$\beta \to \infty$}{Zero Temperature Limit} in the Grand Canonical Ensemble}

    We may choose $r>0$ such that $B_r(\overline{u};\|D\cdot\|_{\ell^2_\ctGamma}) \subset \mathrm{Adm}(\Lambda)$. Now, since $\mathcal G^\beta \in C^3(B_r(\overline{u};\|D\cdot\|_{\ell^2_\ctGamma}))$, we know that $\delta^2 \mathcal G^\beta$ is Lipschitz in a neighbourhood of $\overline{u}$.
    
    For the remainder of the proof, we fix $\beta_0 > 0$. By Lemma~\ref{lem:convergence_site_energy}, for all 
	 $v,w \in \W(\Lambda)$,
	 we have,
	\begin{align}
	\begin{split}\label{Thm1Part1:2ndvariationerror}
		&\Braket{\left(\delta^2\mathcal G^\beta(\ubar) - \delta^2\mathcal G(\ubar)\right)v, w} \\
		&\qquad= \sum_{\ell \in \Lambda} \sum_{\rho_1,\rho_2 \in \Lambda - \ell} 
				D_{\rho_1}v(\ell)^T \left(\mathcal G_{\ell,{\rho_1\rho_2}}^\beta(D\ubar(\ell)) - 
				\mathcal G_{\ell,{\rho_1\rho_2}} (D\ubar(\ell))\right) D_{\rho_2}w(\ell) \\
		&\qquad\leq C C_{\beta_0 \ctDist} \beta^{-1} e^{-\frac{1}{6}\beta \ctDist(\overline{u})} \cdot C_2 
		\sum_{\ell \in \Lambda}\sum_{\rho_1,\rho_2\in\Lambda-\ell} e^{-\ctCT ( |\rho_1| + |\rho_2| )} |D_{\rho_1} v(\ell)||D_{\rho_2} w(\ell)| 
		\\ 
		&\qquad\leq C\beta^{-1} e^{-\frac{1}{6}\beta \ctDist(\overline{u})} \|Dv\|_{\ell^2_\ctGamma}\|Dw\|_{\ell^2_\ctGamma}
	\end{split}
	\end{align}
	for all $\beta > \beta_0$. The constant $C$ in the final line depends on $\ctDist\coloneqq \ctDist(\overline{u})$ but this dependence is suppressed for notational simplicity. By the assumed strong stability \cref{eq:strongly_stable} and \cref{Thm1Part1:2ndvariationerror}, we immediately obtain the following stability estimate,
	\begin{align}
	\label{eq:stability_zero_temp_limit}
	\begin{split}
		\Braket{\delta^2\mathcal G^\beta(\ubar)v, v} &= 
			\Braket{\delta^2\mathcal G(\ubar)v, v} + 
			\Braket{\left(\delta^2\mathcal G^\beta(\ubar) - \delta^2\mathcal G(\ubar)\right)v, v} \\ 
		&\geq \left( c_0 - C \beta^{-1} e^{-\frac{1}{6}\beta\ctDist(\overline{u})}\right) \|Dv\|^2_{\ell^2_\ctGamma}.
	\end{split}\end{align}

	We now move on to consider consistency. It will be useful to consider the following truncation operator to split a given displacement into core and far field contributions \cite[Lemma~7.3]{EhrlacherOrtnerShapeev2013arxiv}: 
	\begin{lemma}[Truncation Operator]\label{lem:truncation}
		For $R>0$, there exist operators $T_R \colon \left(\mathbb R^d\right)^\Lambda \to \dot{\mathscr W}^c(\Lambda)$ such that $T_R u$ has compact support in $B_R$ and, for all $R$ sufficiently large, $DT_R u(\ell) = Du(\ell)$ for all $\ell \in \Lambda \cap B_{R/2}$ and 
		\begin{align*}
			\|DT_Ru - Du\|_{\ell^2_\ctGamma} &\leq C \|Du\|_{\ell^2_\ctGamma(\Lambda\setminus B_{R/2})}, \quad \text{and}\\
			\|DT_Ru\|_{\ell^2_\ctGamma} &\leq C	\|Du\|_{\ell^2_\ctGamma(\Lambda\cap B_{R})},
		\end{align*}
		where $C$ is independent of $R$ and $u$. 
	\end{lemma}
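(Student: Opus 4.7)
The plan is to build $T_R u$ as a cut-off of $u$ (after centering), in the standard fashion. First I would pick a smooth function $\eta_R \in C^\infty(\mathbb R^d;[0,1])$ with $\eta_R \equiv 1$ on a neighbourhood of $B_{R/2}$, $\operatorname{supp}(\eta_R) \subset B_R$, and $\|\nabla \eta_R\|_\infty \leq C/R$ (the buffer is needed so that when we later say $\eta_R = 1$ on $B_{R/2}$ it also covers lattice sites at distance slightly larger than $R/2$). Next I would pick a centering constant $c_R \in \mathbb R^d$ given by the average of $u$ over the annular lattice set $\Lambda \cap (B_R \setminus B_{R/2})$ (or equivalently the value of $u$ at a single fixed site in that annulus, provided $|\Lambda \cap (B_R \setminus B_{R/2})|$ stays comparable to its volume). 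Then set
\[
    T_R u(\ell) \coloneqq \eta_R(\ell)\big(u(\ell) - c_R\big),
\]
which manifestly has support in $B_R$, so $T_R u \in \dot{\mathscr{W}}^c(\Lambda)$.

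The identity property is immediate from the construction: if $\ell, \ell + \rho \in B_{R/2}$ then $\eta_R(\ell) = \eta_R(\ell+\rho) = 1$, and the two copies of $c_R$ cancel, giving $D_\rho T_R u(\ell) = D_\rho u(\ell)$. For the error estimate I would use the algebraic identity
\[
    D_\rho T_R u(\ell) - D_\rho u(\ell) = \big(\eta_R(\ell) - 1\big) D_\rho u(\ell) + D_\rho \eta_R(\ell)\,\big(u(\ell+\rho) - c_R\big),
\]
where $D_\rho \eta_R(\ell) \coloneqq \eta_R(\ell+\rho) - \eta_R(\ell)$. The first term vanishes on $B_{R/2}$ and is pointwise dominated by $|D_\rho u(\ell)|$, so after weighting by $e^{-\ctGamma|\rho|}$ and summing it contributes at most a constant multiple of $\|Du\|_{\ell^2_\ctGamma(\Lambda \setminus B_{R/2})}$. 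The second term is supported on $\{\ell : \eta_R \text{ is non-constant near } \ell\}$, i.e.\ essentially on the annulus $B_R \setminus B_{R/2}$ (with a boundary layer of width $|\rho|$), and satisfies $|D_\rho \eta_R(\ell)| \leq C|\rho|/R$ by the mean value theorem.

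The main obstacle is controlling $|u(\ell+\rho) - c_R|$ for $\ell+\rho$ in the annulus. This is a discrete, weighted Poincaré-type estimate: I would express $u(\ell+\rho) - c_R$ as a telescoping sum of nearest-neighbour differences along a bounded-length path joining $\ell+\rho$ to (any) reference point defining $c_R$, all of which can be chosen to lie in a slightly enlarged annulus around $B_R \setminus B_{R/2}$. Cauchy–Schwarz together with the summability of $e^{-\ctGamma|\rho'|}$ over short connecting stencils, combined with the factor $|\rho|/R$ coming from $\nabla\eta_R$ (which tames the growth of the path length compared to $R$), then yields
\[
    \Big\| D_\rho \eta_R(\cdot)\,\big(u(\cdot+\rho) - c_R\big) \Big\|_{\ell^2_\ctGamma} \leq C\,\|Du\|_{\ell^2_\ctGamma(\Lambda \cap (B_R \setminus B_{R/2}))}.
\]

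Combining the two contributions gives $\|D T_R u - Du\|_{\ell^2_\ctGamma} \leq C\|Du\|_{\ell^2_\ctGamma(\Lambda \setminus B_{R/2})}$. The second bound $\|D T_R u\|_{\ell^2_\ctGamma} \leq C \|Du\|_{\ell^2_\ctGamma(\Lambda \cap B_R)}$ then follows from the triangle inequality by adding and subtracting $Du$ (noting that $Du$ only enters through values of $u$ on $B_R$, since $c_R$ depends only on $u|_{B_R \setminus B_{R/2}}$ and $\eta_R$ is supported in $B_R$). Equivalence of all the $\|D\cdot\|_{\ell^2_\ctGamma}$ semi-norms for different $\ctGamma$, already invoked in the excerpt, shows the estimates are independent of the particular $\ctGamma$ fixed earlier, and the constants depend only on $d$, $\Lambda^{\mathrm{ref}}$, and the shape of $\eta_R$.
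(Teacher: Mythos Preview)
The paper does not give its own proof of this lemma; it is quoted directly from \cite[Lemma~7.3]{EhrlacherOrtnerShapeev2013arxiv}. Your construction---multiply $u-c_R$ by a smooth radial cutoff and control the commutator term via a discrete Poincar\'e inequality on the annulus---is exactly the argument used in that reference, so there is nothing substantive to compare.

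Two minor remarks on your sketch. First, the assertion $DT_Ru(\ell)=Du(\ell)$ for the \emph{full} infinite stencil cannot hold for any compactly supported $T_Ru$ (it would force $T_Ru-u$ to be constant on all of $\Lambda$); what your construction actually gives, and what is used in the paper, is $D_\rho T_Ru(\ell)=D_\rho u(\ell)$ whenever both $\ell$ and $\ell+\rho$ lie in the region $\{\eta_R=1\}$. Second, your Poincar\'e step is phrased for $\ell+\rho$ in the annulus; when $\ell+\rho$ sits deep inside $B_{R/2}$ or far outside $B_R$, the telescoping path cannot stay in a ``slightly enlarged annulus'', but in that case $|\rho|$ is necessarily large (one endpoint must lie where $\eta_R$ is non-constant) and the weight $e^{-2\ctGamma|\rho|}$ absorbs the longer path. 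This is routine, but worth saying explicitly. Your derivation of the second norm bound via the observation that $T_Ru$ depends only on $u|_{B_R}$ (rather than a direct triangle inequality against $Du$ on all of $\Lambda$) is the correct way to proceed.
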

	We use the notation of Lemma~\ref{lem:truncation} and let $v^\mathrm{co} \coloneqq T_R v$ and $v^\mathrm{ff} = v - v^\mathrm{co}$ for some $R>0$ to be chosen later. In the following, we use the fact that $\delta \mathcal G(\overline{u}) = 0$ and estimate each of the terms in the following expression:
	\begin{align}
		\label{eq:splitting}
		\Braket{\delta \mathcal G^\beta(\overline{u}), v} = 
			\Braket{\delta \mathcal G^\beta(\overline{u}) - \delta \mathcal G(\overline{u}), v^\mathrm{co}}  
				+ \Braket{\delta \mathcal G^\beta(\overline{u}) - \delta \mathcal G(\overline{u}), v^\mathrm{ff}} 
	\end{align}

	\textit{Core. }Since the core region is finite, the first term of \cref{eq:splitting} is straightforward to deal with. Here, we simply apply the convergence of the site energies directly to obtain
	\begin{align}\label{eq:core}
	\begin{split}
		&\bigg|\Braket{\delta \mathcal G^\beta(\overline{u}) - \delta \mathcal G(\overline{u}), v^\mathrm{co}}\bigg|  \\
		&\qquad \leq \sum_{\above{\ell \in \Lambda, \rho \in \Lambda - \ell\colon}{|\ell|\leq R \text{ or } |\ell + \rho| \leq R } }
		\Big|\big( \mathcal G^\beta_{\ell,\rho}(D\overline{u}(\ell)) - \mathcal G_{\ell,\rho}(D\overline{u}(\ell))\big) 
			\cdot D_\rho v^\mathrm{co}(\ell)\Big| \\
		&\qquad \leq C \beta^{-1}e^{-\frac{1}{6}\ctDist(\overline{u})\beta} 
			\sum_{\above{\ell \in \Lambda, \rho \in \Lambda - \ell\colon}{|\ell|\leq R \text{ or } |\ell + \rho| \leq R } } 
			e^{-\ctCT |\rho|}  \big| D_\rho v^{\mathrm{co}}(\ell) \big|.
	\end{split} 
	\end{align}
	Now we may use the fact that $v^\mathrm{co}$ has compact support inside $B_R$, to conclude:
	\begin{align}\label{eq:core_1}\begin{split}
		\sum_{\above{\ell \in \Lambda}{|\ell|\geq R}} \sum_{\above{\rho \in \Lambda-\ell}{|\ell+\rho|\leq R}} 
			e^{-\eta|\rho|}  \big| D_\rho v^{\mathrm{co}}(\ell) \big|
		& \leq C R^{d/2} \bigg( \sum_{|\ell|\geq R} e^{-\eta (|\ell| - R)} \bigg)^{1/2} 
			\|Dv^\mathrm{co}\|_{\ell^2_\ctGamma} \\
		&\leq C R^{d/2}R^{(d-1)/2}\|Dv^\mathrm{co}\|_{\ell^2_\ctGamma}.
	\end{split}\end{align}
	In the exact same way,
	\begin{align}\label{eq:core_2}
		&\sum_{\above{\ell \in \Lambda}{|\ell| < R}} \sum_{\rho \in \Lambda-\ell}  e^{-\eta|\rho|}  \big| D_\rho v^{\mathrm{co}}(\ell) \big| 
			\leq C R^{d/2}R^{(d-1)/2} \|Dv^\mathrm{co}\|_{\ell^2_\ctGamma}.
	\end{align}
	Combining \cref{eq:core}, \cref{eq:core_1}, \cref{eq:core_2} and Lemma~\ref{lem:truncation} we have 
	\begin{align}\label{eq:const_core}
		\bigg|\Braket{\delta \mathcal G^\beta(\overline{u}) - \delta \mathcal G(\overline{u}), v^\mathrm{co}} \bigg| 
			\leq C \beta^{-1} e^{-\frac{1}{6} \ctDist(\overline{u}) \beta } R^{d/2}R^{(d-1)/2} \|Dv\|_{\ell^2_\ctGamma(\Lambda \cap B_{R})}.
	\end{align}

	\textit{Far-field. }We now turn our attention to the far field contribution in \cref{eq:splitting}. We will replace $\overline{u}$ with some compactly supported approximation $\widetilde{u}$ and show that the error in this approximation can be bounded appropriately. We then use the fact that $\widetilde{u}$ has compact support to bound the far field contribution to \cref{eq:splitting}. 

	We define $\widetilde{u} \coloneqq T_{\widetilde{R}} \overline{u}$ for some $0 < \widetilde{R} < R$ to be chosen later and note that, by Lemma~\ref{lem:truncation}, we have
	$
		\|D\widetilde{u} - D\overline{u}\|_{\ell^2_\ctGamma} 
		\leq C \|D\overline{u}\|_{\ell^2_\ctGamma(\Lambda \setminus B_{\widetilde{R}/2})}.
	$
	Therefore, by Lemma~\ref{ham_converge}, for $\widetilde{R}$ sufficiently large, we have
	\begin{align}\label{eq:spec_perturbation}
		\mathrm{dist}(\mu, \sigma(\Ham(u))) \geq \frac{1}{4} \ctDist(\overline{u}) \quad \text{for all } u \coloneqq t \overline{u} + (1-t)\widetilde{u} \text{ and } t \in [0,1].
	\end{align}
	The inequality in \cref{eq:spec_perturbation} implies that, for every displacement along the linear path between $\ubar$ and $\widetilde{u}$, we have uniform convergence rates in the site energies as $\beta \to \infty$ (as in Lemma~\ref{lem:convergence_site_energy}). Since we have perturbed the displacement, the exponent in the convergence estimates are reduced (in this case by a factor of $2$, but this factor is arbitrary). 

	We will now estimate the error committed by replacing $\overline{u}$ with the compactly supported displacement $\widetilde{u}$. By \cref{Thm1Part1:2ndvariationerror} and \cref{eq:spec_perturbation}, we have
	\begin{align}\label{eq:partIconsistency}
		\begin{split}
			&\Braket{\delta\mathcal G^\beta(\ubar) - \delta\mathcal G(\ubar), v^\mathrm{ff}} 
				- \Braket{\delta\mathcal G^\beta(\widetilde{u}) - \delta\mathcal G(\widetilde{u}), v^\mathrm{ff}}\\
			&\qquad= \int_0^1\Braket{\left(\delta^2\mathcal G^\beta(t\ubar + (1-t)\widetilde{u}) 
				- \delta^2\mathcal G(t\ubar+ (1-t)\widetilde{u})\right) (\ubar - \widetilde{u}) ,v^\mathrm{ff}} \mathrm{d}t \\
			&\qquad\leq C \beta^{-1} e^{-\frac{1}{12}\ctDist(\overline{u})\beta} 
				\|D(\ubar - \widetilde{u})\|_{\ell^2_\ctGamma} \|Dv^\mathrm{ff}\|_{\ell^2_\ctGamma} \\
			&\qquad \leq C \beta^{-1} e^{-\frac{1}{12}\ctDist(\overline{u})\beta} 
		    	\|D\overline{u}\|_{\ell^2_\ctGamma(\Lambda \setminus B_{\widetilde{R}/2})}  \|Dv\|_{\ell^2_\ctGamma(\Lambda \setminus B_{R/2})}.
		\end{split}
	\end{align}

	Now, since we are only considering the far field behaviour of $v$ and $\widetilde{u}$ is of compact support, we are able to show that
	$\Braket{\delta\mathcal G^\beta(\widetilde{u}) - \delta\mathcal G(\widetilde{u}), v^\mathrm{ff}}$
	decays exponentially in the buffer region $B_R \setminus B_{\widetilde{R}}$:
	\begin{align}\label{eq:FF_estimate}
		\bigg| \Braket{\delta \mathcal G^\beta(\widetilde{u}) - \delta \mathcal G(\widetilde{u}), v^\mathrm{ff} } \bigg| \leq C \beta^{-1} e^{-\frac{1}{12}\ctDist(\overline{u}) \beta} \widetilde{R}^{d/2} e^{-\eta(R-\widetilde{R})} \|Dv^\mathrm{ff}\|_{\ell^2_\ctGamma}
	\end{align}
	where $\eta \coloneqq \frac{1}{2}\mathfrak{m}\min\{\ctCT,\ctTBexponent\}$. A full proof of \cref{eq:FF_estimate} is given after the conclusion of the current proof.
	
	Therefore, by applying \cref{eq:splitting}, the estimate for the core region \cref{eq:const_core} and \cref{eq:partIconsistency} and choosing $\widetilde{R}$ and $R$ sufficiently large (independently of $\beta$) we obtain
	\begin{align*}
		\Big| \Braket{\delta \mathcal G^\beta(\overline{u}) , v } 
		\Big| \leq C \beta^{-1} e^{-\frac{1}{12}\ctDist(\overline{u}) \beta} 
		\left(
		R^{d/2}R^{(d-1)/2} + \|D\overline{u}\|_{\ell^2_\ctGamma(\Lambda\setminus B_{\widetilde{R}/2})} +
		\widetilde{R}^{d/2} e^{-\eta(R - \widetilde{R})}
		\right) \| Dv \|_{\ell^2_\ctGamma}.
	\end{align*}
	We may choose $R,\widetilde{R}$ in such a way as to obtain an exponential rate of convergence as $\beta\to \infty$. 

	Applying the Inverse Function Theorem \cite[Lemma~B.1]{LuskinOrtner2013}, we can conclude that, for sufficiently large $\beta$, there exist $\ubar_\beta \in \W(\Lambda)$ and $c_1>0$ such that 
	\begin{gather*}
		\|D\ubar_\beta - D\ubar\|_{\ell^2_\ctGamma} \leq 
		C e^{-\frac{1}{12}\ctDist(\overline{u})\beta}, \quad \delta\mathcal G^\beta(\ubar_\beta) = 0 \\
		\Braket{\delta^2\mathcal G^\beta(\ubar_\beta)v,v} \geq c_1 \|Dv\|_{\ell^2_\ctGamma}^2 
	\end{gather*}
	for all $v \in \W(\Lambda)$. 

Finally we consider the error in the energy. Using an analogous argument to that of \cite[Eq.~(78)]{ChenOrtner16}, we obtain  
\begin{align}\label{eq:error_energy_1}
    |\mathcal G^\beta(\overline{u}_\beta) - \mathcal G^\beta(\overline{u})| \leq 
    C\|D\overline{u}_\beta - D\overline{u}\|_{\ell^2_\ctGamma}^2.
\end{align}
In order to deal with the model error, we consider a compactly supported displacement $T_R\overline{u}$:
\begin{align*}\begin{split}
    &\big( \mathcal G^\beta(\overline{u}) - \mathcal G^\beta(T_R\overline{u}) \big)
    - \big( \mathcal G(\overline{u}) - \mathcal G(T_R\overline{u}) \big)  
    = \int_{0}^1 \Braket{\delta \mathcal G^\beta(u_t) - \delta \mathcal G(u_t), T_R\overline{u} - \overline{u} } \textrm{d}t
    %
    %
    %
\end{split}\end{align*}
where $u_t \coloneqq (1-t)\overline{u} + tT_R\overline{u}$. Therefore, choosing $R$ sufficiently large such that 
$\mathrm{dist}(\mu,\sigma(\Ham(u_t))) \geq \frac{1}{4}\mathsf{d}(\overline{u})$ for all $t \in [0,1]$,
we may replace $u_t$ with a compactly supported displacement $\widetilde{u}$ as in \cref{eq:partIconsistency} and \cref{eq:FF_estimate} and obtain
\begin{align}\label{eq:error_energy_2}
    \left|\big( \mathcal G^\beta(\overline{u}) - \mathcal G^\beta(T_R\overline{u}) \big)
    - \big( \mathcal G(\overline{u}) - \mathcal G(T_R\overline{u}) \big)\right|
    \leq C e^{-\frac{1}{12}\mathsf{d}(\overline{u})\beta} \|D\overline{u}\|_{\ell^2_\ctGamma(\Lambda \setminus B_{R/2})}.
\end{align}
Finally, by applying Lemma~\ref{ham_converge}, we obtain
\begin{align}\label{eq:error_energy_3}\begin{split}
    &|\mathcal G^\beta(T_R\overline{u}) - \mathcal G(T_R\overline{u})| \\
    &\leq C \sum_{\ell\in\Lambda} \sum_{1\leq a\leq \numorbitals} \max_{z \in \mathscr C^- \cup \mathscr C^+} |\mathfrak{g}^\beta(z;\mu) - \mathfrak{g}(z;\mu) | \left|\big[ \mathscr R_z(T_R\overline{u}) [\Ham(T_R\overline{u}) - \Ham(x)] \mathscr R_z(x) \big]_{\ell\ell}^{aa}\right| \\
    &\leq C \beta^{-1} e^{-\frac{1}{12}\mathsf{d}\beta} \sum_{a_1,a_2} \sum_{\above{\ell_1 \in \Lambda}{\ell_2 \in\Lambda\cap B_R}} \big|\big[\Ham(T_R\overline{u}) - \Ham(x) \big]_{\ell_1\ell_2}^{a_1a_2}\big|
    \leq C R^{d/2} \|D\overline{u}\|_{\ell^2_\ctGamma} \beta^{-1} e^{-\frac{1}{12}\mathsf{d}(\overline{u})\beta}
    \end{split}
\end{align}
where $\eta$ is some positive constant and $x\colon\Lambda\to\Lambda$ denotes the identity configuration.

Combining \cref{eq:error_energy_1}, \cref{eq:error_energy_2} and \cref{eq:error_energy_3}, we obtain 
$
    |\mathcal G^\beta(\overline{u}_\beta) - \mathcal G(\overline{u})| \leq Ce^{-\frac{1}{12}\mathsf{d}(\overline{u}) \beta}
$
as required.

\begin{proof}[Proof of \cref{eq:FF_estimate}]
We will argue that site energies are close to the corresponding reference site energies. We again define $\widetilde{\Ham}(\widetilde{u})$ and $\widetilde{\Ham}^\mathrm{ref}$ as in \cref{eq:extend_hamiltonian} so that we can compare these quantities.
	
	If $\ell \in (\Lambda^\mathrm{ref} \cup \Lambda) \cap B_{\widetilde{R}}$ or $k \in (\Lambda^\mathrm{ref} \cup \Lambda) \cap B_{\widetilde{R}}$, then 
	$|\ell - k| \geq \mathrm{dist}(\ell,B_{\widetilde{R}}) + \mathrm{dist}(k,B_{\widetilde{R}})$
	and so we have
	\begin{align}\label{eq:Ham_minus_Hamref}
		\left| \left[\widetilde{\Ham}(\widetilde{u}) - \widetilde{\Ham}^\mathrm{ref}\right]^{ab}_{\ell k}\right|
			&\leq C e^{-\ctTBexponent \mathfrak{m}\left( 	|\ell| + |k| - 2\widetilde{R} ) \right)} 
	\end{align}
	Similarly, for $m \in \Lambda$, 
	\begin{align}\label{eq:Ham_minus_Hamref_der}
		\left| \left[\widetilde{\Ham}(\widetilde{u})_{,m} - \widetilde{\Ham}^\mathrm{ref}_{,m}\right]^{ab}_{\ell k}\right|
		&\leq C e^{-\frac{1}{2}\ctTBexponent \mathfrak{m}\left( |\ell| + |k| - 2\widetilde{R} \right)} 
			e^{-\frac{1}{2}\ctTBexponent \left( r_{\ell m}(\widetilde{u}) + r_{km}(\widetilde{u}) \right) }.
	\end{align}
	In the following, we use the notation 
	${\mathscr R}_z(\widetilde{u}) \coloneqq (\widetilde{\Ham}(\widetilde{u}) - z)^{-1}$,
	and extend $v^\mathrm{ff}$ by zero to $\Lambda \cup \Lambda^\mathrm{ref}$. Since $\delta \mathcal G^\beta_\mathrm{ref}(\bm{0}) = 0$, we have 
	\begin{align}
		&\Braket{\delta \mathcal G^\beta(\widetilde{u}) - \delta \mathcal G(\widetilde{u}), v^\mathrm{ff}}
		= \Braket{\delta {\mathcal G}^\beta(\widetilde{u}) - \delta {\mathcal G}^\beta_\mathrm{ref} - 
		\big( \delta {\mathcal G}(\widetilde{u}) - \delta {\mathcal G}_\mathrm{ref}\big), v^\mathrm{ff}} \nonumber\\
		& = \frac{1}{2\pi i} \sum_{a} 
			\oint_{\mathscr C^- \cup \mathscr C^+} 
			\left[\mathfrak{g}^\beta(z;\mu) - \mathfrak{g}(z;\mu)\right]
			\sum_{\above{\ell \in \Lambda\cup \Lambda^\mathrm{ref}, \rho \in \Lambda \cup \Lambda^\mathrm{ref} - \ell }{|\ell| \geq R \text{ or } |\ell + \rho|\geq R}} 
			\frac{\partial [\mathscr R_z(\widetilde{u}) - {\mathscr R}^\mathrm{ref}_z]^{aa}_{\ell\ell}}{\partial u(\ell+\rho)}
			\cdot D_\rho v^\mathrm{ff}(\ell)\mathrm{d}z \nonumber\\
		&\leq C \beta^{-1} e^{-\frac{1}{12} \ctDist(\overline{u}) \beta} \sum_{\above{\ell,k \in \Lambda\cup \Lambda^\mathrm{ref}}{|\ell| \geq R \text{ or } |k|\geq R}} \max_{z \in\mathscr C^- \cup \mathscr C^+}\left| \frac{\partial [\mathscr R_z(\widetilde{u}) - {\mathscr R}^\mathrm{ref}_z]_{\ell\ell}}{\partial u(k)} \cdot D_{k-\ell} v^\mathrm{ff}(\ell)\right|.
\label{eq:ff}
\end{align}
	We have therefore reduced the problem to considering the derivatives of the difference of two resolvent operators:
	\begin{align}
		&\frac{\partial[\mathscr R_z(\widetilde{u}) - {\mathscr R}^\mathrm{ref}_z]^{aa}_{\ell\ell}}{\partial u(k)} 
			= \left[ -\mathscr R_z(\widetilde{u})\Ham(\widetilde{u})_{,k}\mathscr R_z(\widetilde{u}) 
			+ {\mathscr R}_z^\mathrm{ref}\widetilde{\Ham}_{,k}^\mathrm{ref}{\mathscr R}^\mathrm{ref}_z \right]^{aa}_{\ell\ell}\label{eq:diff_resolvents}\\
		&=  \left[ (\mathscr R_z^\mathrm{ref} - \mathscr R_z(\widetilde{u})) \widetilde{\Ham}^\mathrm{ref}_{,k}\mathscr R_z^\mathrm{ref} 
			+ {\mathscr R}_z(\widetilde{u})(\widetilde{\Ham}^\mathrm{ref}_{,k} - \widetilde{\Ham}(\widetilde{u})_{,k}) \mathscr R_z^\mathrm{ref}
			+ \mathscr R_z(\widetilde{u})\widetilde{\Ham}(\widetilde{u})_{,k}(\mathscr R_z^\mathrm{ref} - \mathscr R_z(\widetilde{u})) \right]^{aa}_{\ell\ell}.\nonumber
	\end{align}
	In the following, we shall drop the argument $(\widetilde{u})$. Now, since 
	${\mathscr R}_z - \mathscr R^\mathrm{ref}_z = {\mathscr R}_z(\widetilde{\Ham}^\mathrm{ref} - \widetilde{\Ham}){\mathscr R}^\mathrm{ref}_z $,
	we have: for $z\in \mathscr C^- \cup \mathscr C^+$, 
	\begin{align}
		&\sum_{\above{\ell,k \in \Lambda\cup \Lambda^\mathrm{ref}}{|\ell| \geq R \text{ or } |k|\geq R}} 
			\left| \left[ ({\mathscr R}_z - \mathscr R^\mathrm{ref}_z) \widetilde{\Ham}_{,k}{\mathscr R}_z\right]_{\ell\ell}^{aa}  
			\cdot D_{k-\ell} v^\mathrm{ff}(\ell)\right| \nonumber\\
		&\leq \sum_{\above{\ell,k \in \Lambda\cup \Lambda^\mathrm{ref}}{|\ell| \geq R \text{ or } |k|\geq R}}
			\sum_{\above{\ell_1,\ell_2,\ell_3,\ell_4 \in \Lambda \cup \Lambda^\mathrm{ref}}{1\leq a_1,a_2,a_3,a_4 \leq \numorbitals}} 
			\Big|
			[{\mathscr R}_z]^{aa_1}_{\ell\ell_3} 
			( \widetilde{\Ham}^\mathrm{ref} - \widetilde{\Ham})^{a_1a_2}_{\ell_3\ell_4} 
			[\mathscr R^\mathrm{ref}_z]^{a_2a_3}_{\ell_4\ell_1}
			[\widetilde{\Ham}_{,k}]^{a_3a_4}_{\ell_1\ell_2} 
			[{\mathscr R}_z]^{a_4a}_{\ell_2\ell}\Big| 
			\big| D_{k-\ell} v^\mathrm{ff}(\ell) \big|\nonumber\\
		&\leq C \sum_{\above{\ell,k,\ell_1,\ell_2,\ell_3,\ell_4 
		}{|\ell| \geq R \text{ or } |k|\geq R}} 
			%
			e^{-\ctCT \left( r_{\ell\ell_3} + r_{\ell_4\ell_1} + r_{\ell_2\ell} \right)} 
			%
			e^{-\frac{1}{2}\ctTBexponent\mathfrak{m} \left(|\ell_3| + |\ell_4| - 2\widetilde{R}\right)}
			e^{- \ctTBexponent\left( r_{\ell_1k} + r_{\ell_2k}\right)}
			\big| D_{k-\ell} v^\mathrm{ff}(\ell) \big|\nonumber\\
		&\leq C \Bigg(\sum_{\above{\ell,k \in \Lambda\cup \Lambda^\mathrm{ref}}{|\ell| \geq R \text{ or } |k|\geq R}} \bigg(\sum_{\ell_1,\ell_3,\ell_4}
			e^{-\eta \left( r_{\ell\ell_3} + r_{\ell_4\ell_1} + 	|\ell_3| - \widetilde{R} + |\ell_4| - \widetilde{R}  + r_{\ell_1k} \right)} \bigg)^{2} \Bigg)^{1/2}
			\|Dv^\mathrm{ff}\|_{\ell^2_\ctGamma} \label{eq:1stterm}
\end{align}
	where $\eta \coloneqq\frac{1}{2}\mathfrak{m}\min\{\ctCT,\frac{1}{2}\ctTBexponent\}$.
	
We now bound the first term in the product \cref{eq:1stterm}. Here we only consider the summation over $\ell, k \in \Lambda \cup \Lambda^\mathrm{ref}$ and $|\ell|\geq R$ (the case where $|\ell|\leq R$ and $|k|\geq R$ can be treated in a similar way):
	\begin{align}
		\sum_{\above{\ell,k,\ell_1,\ell_3,\ell_4}{|\ell| \geq R} }
		e^{-\eta \left( r_{\ell\ell_3} + r_{\ell_4\ell_1} + |\ell_3| + |\ell_4| - 2\widetilde{R} + r_{\ell_1k} \right)} 
		&\leq C \bigg(\sum_{\above{\ell,\ell_3}{|\ell| \geq R }} e^{-\eta \left( r_{\ell\ell_3} + |\ell_3| - \widetilde{R} \right)} \bigg)
		\bigg(\sum_{k,\ell_4} e^{-\eta \left( r_{k\ell_4} + |\ell_4| - \widetilde{R} \right)}  \bigg)\nonumber \\
		&\leq C \widetilde{R}^d e^{-\frac{1}{2}\eta(R-\widetilde{R})}.\label{eq:1stterm_1}
	\end{align}
	Here, the $\widetilde{R}^d$ comes from the second factor in the line above. The exact same argument can be used to bound the third term in \cref{eq:diff_resolvents} similarly. 
	
	We now consider the second term in \cref{eq:diff_resolvents}: for $z \in \mathscr C^- \cup \mathscr C^+$,
	\begin{align}\label{eq:2ndterm-}
		\begin{split}
		&\sum_{\above{\ell,k \in \Lambda\cup \Lambda^\mathrm{ref}}{|\ell| \geq R \text{ or } |k|\geq R}} 
		\left| \left[ {\mathscr R}^\mathrm{ref}_z (\widetilde{\Ham}_{,k} - \widetilde{\Ham}^\mathrm{ref}_{,k}) {\mathscr R}_z\right]_{\ell\ell}^{aa}  
		\cdot D_{k-\ell} v^\mathrm{ff}(\ell)\right| \\
		&\qquad\leq
			\sum_{\above{\ell,k \in \Lambda\cup \Lambda^\mathrm{ref}}{|\ell| \geq R \text{ or } |k|\geq R}}
			\sum_{\above{\ell_1,\ell_2\in\Lambda\cup\Lambda^\mathrm{ref}}{ 1\leq a_1,a_2\leq\numorbitals}}
			\Big|[\mathscr R^\mathrm{ref}_z]^{aa_1}_{\ell\ell_1}
			[\widetilde{\Ham}_{,k} - \widetilde{\Ham}^\mathrm{ref}_{,k}]_{\ell_1\ell_2}^{a_1a_2}
			[\mathscr R_z^\mathrm{ref}]_{\ell_2\ell}^{a_2a} \Big| \Big| D_{k-\ell} v^\mathrm{ff}(\ell) \Big|\\
		&\qquad\leq C\sum_{\above{\ell,k,\ell_1,\ell_2 \in \Lambda\cup \Lambda^\mathrm{ref}}{|\ell| \geq R \text{ or } |k|\geq R}}
			%
			e^{-\ctCT( r_{\ell\ell_1} + r_{\ell_2\ell})}
			%
			e^{-\frac{1}{2}\ctTBexponent\mathfrak{m} ( |\ell_1| + |\ell_2| - 2\widetilde{R})}
			e^{-\ctTBexponent( r_{\ell_1k} + r_{\ell_2k})}
			\big|D_{k-\ell} v^\mathrm{ff}(\ell)\big|\\
		&\qquad\leq C \sum_{\above{\ell,k \in \Lambda\cup \Lambda^\mathrm{ref}}{|\ell| \geq R \text{ or } |k|\geq R}} 
			\bigg( \sum_{\ell_1} e^{-\eta( r_{\ell\ell_1} +|\ell_1| - \widetilde{R} + r_{\ell_1 k} )} \bigg) 
			e^{-\frac{1}{2}\eta r_{\ell k}} \big| D_{k-\ell} v^\mathrm{ff}(\ell) \big| \\
		&\qquad\leq C \Bigg( \sum_{\above{\ell,k,\ell_1 \in \Lambda\cup \Lambda^\mathrm{ref}}{|\ell| \geq R \text{ or } |k|\geq R}} 
			e^{-\eta( r_{\ell\ell_1} +|\ell_1| - \widetilde{R} + r_{\ell_1 k} )} \Bigg)^{1/2}
			\| D v^\mathrm{ff} \|_{\ell^2_\ctGamma}.
		\end{split}
	\end{align}
	We again show that the prefactor in this expression is bounded: by summing over $k,\ell_1$ and $\ell$ (in that order) we have,
	\begin{align}\label{eq:2ndterm_2}\begin{split}
		&\sum_{\above{\ell,k,\ell_1 \in \Lambda\cup \Lambda^\mathrm{ref}}{|\ell| \geq R \text{ or } |k|\geq R}}
			e^{-\eta( r_{\ell\ell_1} +|\ell_1| - \widetilde{R} + r_{\ell_1 k} )}  
		\leq C e^{-\frac{1}{2}\eta(R - \widetilde{R})}.
	\end{split}\end{align}

	Therefore, after collecting \cref{eq:1stterm}$-$\cref{eq:2ndterm_2}	and applying \cref{eq:ff} and \cref{eq:diff_resolvents}, we obtain \cref{eq:FF_estimate}.
\end{proof}

\subsubsection{Proof of Proposition~\ref{prop:0T_limit_infinite_domain}: \texorpdfstring{$\beta \to \infty$}{Zero Temperature Limit} in the Grand Canonical Ensemble}

	We consider a sequence, $\ubar_{\beta_j}$, of solutions to \refGCE{problem:infinite_domain}{\beta_j}{\infty}{\mu} (with $\beta_j \to \infty$ as $j \to \infty$) such that 
	$\sup_j\|D\ubar_{\beta_j}\|_{\ell^2_\ctGamma} < \infty$. 
	Noting that, after factoring out a constant shift, $\dot{\mathscr W}^{1,2}(\Lambda)$ is a Hilbert space and so we may apply the Banach-Alaoglu theorem to conclude that there exists a $\ubar \in \dot{\mathscr W}^{1,2}(\Lambda)$ such that 
	\[ 
		\ubar_{\beta_j} \rightharpoonup \ubar \quad \text{in }\dot{\mathscr W}^{1,2} \text{ as }j \to \infty
	\]
	along a subsequence (which we do not relabel). Now, because $v \mapsto D_\rho v(\ell)$ is a linear functional for all $\ell \in \Lambda$ and $\rho \in \Lambda - \ell$, we have obtained \cref{eq:theorem1part2_1}. 
	
	To simplify notation, let us define the forces 
	\begin{align}\label{eq:FlbetatoFl}
		\mathcal F_\ell^\beta(u) \coloneqq 
		\frac{\partial \mathcal G^\beta(u)}{\partial u(\ell)} 
			\quad \text{and} \quad 
			\mathcal F_\ell(u) \coloneqq \frac{\partial \mathcal G(u)}{\partial u(\ell)}.
	\end{align}

	Since $\ubar_{\beta_j}$ solves \refGCE{problem:infinite_domain}{\beta_j}{\infty}{\mu}, we have
	\begin{align*}
		0 	= \Braket{\delta \mathcal G^{\beta_j}(\ubar_{\beta_j}), v} 
			= \sum_{\ell \in \Lambda}\mathcal F_\ell^{\beta_j}(\ubar_{\beta_j})\cdot v(\ell) 
				\quad \text{for all }v \in \dot{\mathscr W}^{1,2}(\Lambda).
	\end{align*}
	Let us fix $v \in \dot{\mathscr W}^{1,2}(\Lambda)$ with compact support in $B_{R_v}$ for some $R_v>0$. Now, it is sufficient to show that \[ \mathcal F_\ell^{\beta_j}(\ubar_{\beta_j})  \to \mathcal F_\ell(\ubar) \quad \text{as } j \to\infty\]for all $\ell \in \Lambda\cap B_{R_v}$. Here, we may apply Remark~\ref{rem:spectral_pollution} and the fact that $\mu$ is uniformly bounded away from $\sigma(\Ham(\overline{u}_{\beta_j}))$ to conclude that $\mathcal{G}(\overline{u})$ is differentiable.
	
	For sufficiently large $j$, 
	\begin{align}
	    \label{eq:Fl_ub_minus_u}
	    |\mathcal F_\ell^{\beta_j}(\overline{u}_{\beta_j})
	        - \mathcal F_\ell^{\beta_j}(\overline{u})|
	        \leq C \left( 
		e^{- \ctCT R_v} +
		\| D(\overline{u}_{\beta_j} - \overline{u}) \|_{\ell^2_\ctGamma(\Lambda \cap B_{2R_v})} 
		\right).
	\end{align}
    The proof of this estimate is given below. By first choosing $R_v$ and then $j$ sufficiently large, \cref{eq:Fl_ub_minus_u} can be made arbitrarily small.
    
	Applying Lemma~\ref{lem:convergence_site_energy}, we obtain
	\begin{align}
	\label{eq:Fl_bu_minus_bu}
		|\mathcal F_\ell^{\beta_j}(\overline{u})
	        - \mathcal F_\ell(\overline{u})|	& \leq C\beta_j^{-1} e^{-\frac{1}{6}\ctDist(\overline{u})\beta_j} \sum_{k \in \Lambda} e^{-\ctCT r_{\ell k}} 
		\leq C\beta_j^{-1} e^{-\frac{1}{6}\ctDist(\overline{u})\beta_j}.
	\end{align}

	Combining \cref{eq:Fl_ub_minus_u} and \cref{eq:Fl_bu_minus_bu}, we obtain $\mathcal F_\ell^{\beta_j}(\ubar_{\beta_j}) \to \mathcal F_\ell(\ubar)$ as $j \to \infty$ and so $\mathcal F_\ell(\ubar) = 0$.

\begin{proof}[Proof of \cref{eq:Fl_ub_minus_u}]
    We will prove a more general statement: for $\beta \in(0,\infty]$ and $u_1, u_2 \in \dot{\mathscr W}^{1,2}(\Lambda)$, with $\|D(u_1 - u_2)\|_{\ell^2_\ctGamma(\Lambda \cap B_{2R_v})}$ sufficiently small, we have
	\begin{align}\label{eq:Fl_general_formula}
		|\mathcal F^\beta_\ell(u_1) - \mathcal F^{\beta}_\ell(u_2)| \leq C \left( 
		e^{- \ctCT R_v} +
		\| D(u_1 - u_2) \|_{\ell^2_\ctGamma(\Lambda \cap B_{2R_v})} 
		\right).
	\end{align}
	This result is also true in the case of periodic displacements $u_1, u_2$ as will become clear in the proof.

	Using the chain rule, we obtain the formula:
	\begin{align}\label{eq:chainrule}
		\mathcal F_\ell^\beta(u) = \sum_{\rho\in\ell-\Lambda} \mathcal G^\beta_{\ell - \rho,\rho}(Du(\ell - \rho)) - \sum_{\rho \in\Lambda-\ell} \mathcal G^\beta_{\ell,\rho}(Du(\ell)),
	\end{align}
	which is valid for both $\beta < \infty$ and $\beta = \infty$.
	
	We first notice that
	\begin{align}
	\label{eq:differencesiteengergies}
		\mathcal G^\beta_{\ell,k-\ell}(Du_1(\ell)) - \mathcal G^\beta_{\ell,k-\ell}(Du_2(\ell)) = 
		&- \frac{1}{2\pi i} \sum_a\oint_{\mathscr C^+ \cup \mathscr C^-} 
			\mathfrak{g}^\beta(z;\mu) 
			\frac{\partial \left[ \mathscr R_z(u_1) - \mathscr R_z(u_2)\right]^{aa}_{\ell\ell} }{\partial u(k)} \mathrm{d}z.
	\end{align}
	We again consider the derivative of the difference of two resolvents as in \cref{eq:diff_resolvents}, above. In the following calculations we shall ignore the indices for the atomic orbitals. By Lemma~\ref{ham_converge}, if $\|D(u_1 - u_2)\|_{\ell^2_\ctGamma(\Lambda \cap B_{2R_v})}$ is sufficiently small, we have: for $m \in \Lambda$,
	\begin{align*}
		&\left[ \mathscr R_z(u_2) - \mathscr R_z(u_1) \right]_{\ell m} 
		= \left[\mathscr R_z(u_2)\left(\Ham(u_1) - \Ham(u_2)\right)\mathscr R_z(u_1)\right]_{\ell m} \\
		&\quad\leq C
		\sum_{\above{\ell_1,\ell_2 \in \Lambda}{|\ell_1| > 2R_v \textrm{ or } |\ell_2| > 2R_v}}
			e^{-\ctCT (r_{\ell\ell_1} + r_{\ell_2 m})} 
			e^{-\gamma_0 r_{\ell_1 \ell_2}}
		    \\&\qquad\qquad\qquad 
		    + C\bigg(\sum_{\ell_1,\ell_2 \in \Lambda \cap B_{2R_v}} 
		        e^{- \ctCT (r_{\ell\ell_1} + r_{\ell_2 m})}
		    \bigg)^{1/2}
		    \bigg(\sum_{\ell_1,\ell_2 \in \Lambda \cap B_{2R_v}} \left|\left[\Ham(u_1) - \Ham(u_2) \right]_{\ell_1\ell_2}\right|^2 \bigg)^{1/2} \\
		&\quad\leq C
		\left(
    		e^{-\frac{1}{2} \min\{\ctCT,\gamma_0\}  R_v} 
    		+ \| D(u_1 - u_2) \|_{\ell^2_\ctGamma(\Lambda \cap B_{2R_v})}
		\right).
	\end{align*}
	Here, we have used the fact that $\ell \in B_{R_v}$ in the first term.
	Therefore, we have: 
	\begin{align}
	    \label{eq:diff_two_resolvents_1}
	    \begin{split}
		&\big[ (\mathscr R_z(u_2) - \mathscr R_z(u_1)) \Ham(u_2)_{,k} \mathscr R_z(u_2) \big]_{\ell\ell}\\
		&\qquad\leq C \left(
		e^{-\ctCT R_{v} }
        + \| D(u_1 - u_2) \|_{\ell^2_\ctGamma(\Lambda \cap B_{2R_v})} \right)
		\sum_{\ell_1, \ell_2 \in \Lambda}
		e^{-\gamma_0 (r_{\ell_1 k} + r_{\ell_2 k}) }
		e^{-\ctCT r_{\ell_2 \ell} } \\ 
		&\qquad \leq C \left(
		e^{-\ctCT R_{v} }
        + \| D(u_1 - u_2) \|_{\ell^2_\ctGamma(\Lambda \cap B_{2R_v})} \right)
		e^{-\frac{1}{2} \min\{\ctCT, \gamma_0\} r_{\ell k} }.
		\end{split}
	\end{align}
    Similarly, by Lemma~\ref{ham_converge}, we have
	\begin{align}\label{eq:diff_two_resolvents_3}
	\begin{split}
		&\big[ \mathscr R_z(u_1) [\Ham(u_2)_{,k} - \Ham(u_1)_{,k}] \mathscr R_z(u_2) \big]_{\ell\ell}\\
		&\qquad\leq C\left(
		e^{-\ctCT R_{v} }
        + \| D(u_1 - u_2) \|_{\ell^2_\ctGamma(\Lambda \cap B_{2R_v})} \right)
		e^{-\frac{1}{2}\min\{\ctCT, c\ctTBexponent\} r_{\ell k}}
	\end{split}
	\end{align}
	where $c = \frac{\mathfrak{m}\sqrt{3}}{2}$ is the constant from Lemma~\ref{ham_converge}.
	
	Therefore, by combining \cref{eq:diff_two_resolvents_1} and \cref{eq:diff_two_resolvents_3} and using the formula for the derivative of the difference between two resolvent operators \cref{eq:diff_resolvents} together with the chain rule formula \cref{eq:chainrule}, we have \cref{eq:Fl_general_formula}.
\end{proof}

\subsubsection{Proofs of Theorem~\ref{cor:strong_0T_limit_infinite_domain} and Proposition~\ref{cor:weak_0T_limit_infinite_domain}: \texorpdfstring{$\beta \to \infty$}{Zero Temperature Limit} in the Canonical Ensemble}

Before we proceed with the proofs of Theorem~\ref{cor:strong_0T_limit_infinite_domain} and Proposition~\ref{cor:weak_0T_limit_infinite_domain}, we first recall that $\ep_\mathrm{F}^{\beta,R}(u_R)$ denotes the Fermi level given by \cref{eq:FL}. For $\beta = \infty$, we define $\ep_\mathrm{F}^{\infty,R}(u_R)$ via the zero Fermi-temperature limit (Lemma~\ref{lem:fermilevelconverge}).

\begin{proof}[Proof of Theorem~\ref{cor:strong_0T_limit_infinite_domain}]
    We suppose that $\ubar_R$ is a strongly stable solution to \refCE{problem:canonical_finitedomain}{\infty}{R}. In particular, $\ubar_R$ is a strongly stable solution to \refGCE{problem:grand_canonical_finitedomain}{\infty}{R}{\mu} for all $\mu \in I$ where $I$ is a closed interval such that $I \cap B_\delta(\sigma(\Ham^R(\ubar_R))) = \emptyset$ for some $\delta>0$ and $\ep_\mathrm{F}^{\infty,R}(\ubar_R) \in I$. Therefore, by Theorem~\ref{thm:0T_limit_infinite_domain}, for sufficiently large $\beta$ (depending on $\delta$ and not on $\mu$), there exists a unique solution $\ubar^\mu_{R,\beta}$ to \refGCE{problem:grand_canonical_finitedomain}{\beta}{R}{\mu} satisfying 
    \begin{equation}\label{eq:cor6_1}
        \|D\ubar^\mu_{R,\beta} - D\ubar_R\|_{\ell^2_\ctGamma} \leq C e^{-\frac{1}{12}\ctDist(\overline{u}_R)\beta} \eqqcolon \tau_\beta.
    \end{equation}
    Since $I \cap B_\delta(\sigma(\Ham(\ubar_R))) = \emptyset$, the pre-factor and exponent can be chosen to depend on $\delta$ but not on $\mu \in I$. Now by Lemma~\ref{ham_converge_R}, for $\beta$ sufficiently large, we have
    \begin{align}
        |\lambda_s(\ubar^\mu_{R,\beta}) - \lambda_s(\ubar_R)| \leq C e^{-\frac{1}{12}\ctDist(\overline{u}_R)\beta}
    \end{align}
    where $\lambda_s(u)$ denotes the eigenvalues of $\Ham^R(u)$ in increasing order (for $s = 1,\dots, N_R$). In particular, $\ep_\mathrm{F}^{\beta,R}(\overline{u}^\mu_{R,\beta}) \to \ep_\mathrm{F}^{\infty,R}(\overline{u}_R)$ as $\beta \to \infty$ and so, for sufficiently large $\beta$, we have $\ep_\mathrm{F}^{\beta,R}(\ubar_{R,\beta}^\mu) \in I$ for all $\mu \in I$. 
    
    For now, we assume that the mapping $I \to I$ given by $\mu \mapsto \ep_\mathrm{F}^{\beta,R}(\overline{u}_{R,\beta}^\mu)$ is continuous for all sufficiently large $\beta$. We will prove this fact after noting that this is sufficient to conclude. Since $I$ is a compact and convex set, we can apply Brouwer's fixed point theorem to conclude that there exists $\mu^\star = \ep_\mathrm{F}^{\beta,R}(\overline{u}^{\mu^\star}_{R,\beta})\in I$. In particular, $\overline{u}_{R,\beta}^{\mu^\star}$ is a solution to \refGCE{problem:grand_canonical_finitedomain}{\beta}{R}{\mu^\star} with Fermi level $\mu^\star$. That is, $\overline{u}^{\mu^\star}_{R,\beta}$ solves \refCE{problem:canonical_finitedomain}{\beta}{R} and, by \cref{eq:cor6_1}, we have
    $    
        \|D\overline{u}^{\mu^\star}_{R,\beta} - D\overline{u}_R\|_{\ell^2_\ctGamma} \leq Ce^{-\frac{1}{12}\ctDist(\overline{u}_R)\beta}.
    $

    \textit{Continuity of $\mu \mapsto \ep_\mathrm{F}^{\beta,R}(\overline{u}_{R,\beta}^\mu)$:}
    We now wish to show that $I\to I\colon\mu \mapsto \ep_\mathrm{F}^{\beta,R}(\overline{u}_{R,\beta}^\mu)$ is continuous for all sufficiently large $\beta$. To do so, we fix $\nu\in I$ and apply the inverse function theorem on $\delta G^{\beta,R}(\,\cdot\,;\nu)$ around $\overline{u}_{R,\beta}^\mu$ for $\mu\in I$ close to $\nu$. 
    
    Firstly, we remark that $\delta^2G^{\beta,R}(\,\cdot\,;\nu)$ is Lipschitz continuous in a neighbourhood of $\overline{u}_{R,\beta}^\mu$ for all $\mu \in I$ with Lipschitz constant uniformly bounded below by a positive constant for all $\mu \in I$. 
    
    Since $\mathfrak{g}^\beta(\,\cdot\,;\nu)$ is analytic on $\mathbb C\setminus\{\nu + ir\colon r \in \mathbb R\}$, we know that $\mathfrak{g}^\beta(\,\cdot\,;\nu)$ is Lipschitz continuous on all compact sets $K \subset \mathbb C\setminus\{\nu + ir\colon r \in \mathbb R\}$. In particular, if $\mu \in I$, then
    \begin{align*}
        |\mathfrak{g}^\beta(z;\mu) - \mathfrak{g}^\beta(z;\nu)| = |\mathfrak{g}^\beta(z + \nu - \mu;\nu) - \mathfrak{g}^\beta(z;\nu)| \leq L |\mu - \nu| 
    \end{align*}
    for all $z \in \mathscr C^- \cup \mathscr C^+$ for some appropriate choice of contours as in Figure~\ref{fig:contours} with 
    $\mathrm{dist}(\mathrm{Re}(z), I) \geq \frac{1}{2}\delta$
    for all $z \in \mathscr C^- \cup \mathscr C^+$. Since $\mathfrak{g}^\beta(z;\cdot) \to 2(z - \cdot\,)$ pointwise as $\beta \to \infty$, we can conclude that the Lipschitz constant can be chosen uniformly (for sufficiently large $\beta$). Using this, together with the stability of $\overline{u}_{R,\beta}^\mu$ (where the stability constant $c_1$ is independent of $\mu \in I$), we obtain
    \begin{align*}
        \Braket{\delta^2 G^{\beta,R}(\overline{u}_{R,\beta}^\mu;\nu)v,v} &=
        \Braket{\delta^2 G^{\beta,R}(\overline{u}_{R,\beta}^\mu;\mu)v,v} - 
        \Braket{\left(\delta^2 G^{\beta,R}(\overline{u}_{R,\beta}^\nu;\nu)- \delta^2 G^{\beta,R}(\overline{u}_{R,\beta}^\nu;\mu)\right)v,v} \\
        &\geq \left(c_1 - C|\mu-\nu|\right)\|Dv\|_{\ell^2_\ctGamma} \quad \textrm{and} \\
    \Braket{\delta G^{\beta,R}(\overline{u}_{R,\beta}^\mu;\nu),v} %
    &= \Braket{\delta G^{\beta,R}(\overline{u}_{R,\beta}^\mu;\nu) - \delta G^{\beta,R}(\overline{u}_{R,\beta}^\mu;\mu),v} \\
    &\leq C|\mu - \nu|.
    \end{align*}

    Therefore, if $|\mu - \nu|$ is sufficiently small, the inverse function theorem yields the existence of   $\overline{u}_{R,\beta}^{\mu\nu}$ satisfying
    \begin{align}\label{eq:convergence_munu}
       \delta G^{\beta,R}(\overline{u}_{R,\beta}^{\mu\nu};\nu) = 0
       \quad \textrm{and} \quad 
       \|D\overline{u}_{R,\beta}^{\mu\nu} - D\overline{u}_{R,\beta}^{\mu}\|_{\ell^2_\ctGamma} \leq C|\mu-\nu|. 
    \end{align}
    In particular, $\overline{u}_{R,\beta}^{\mu\nu}$ solves \refGCE{problem:grand_canonical_finitedomain}{\beta}{R}{\nu}. By \cref{eq:cor6_1}, if $|\nu-\mu|$ is sufficiently small, we necessarily have
    $\overline{u}_{R,\beta}^{\mu\nu} \in B_{\tau_\beta}(\overline{u}_R;\|D\cdot\|_{\ell^2_\ctGamma})$
    and thus, by uniqueness of the solution 
    $\overline{u}_{R,\beta}^{\nu}$
    to \refGCE{problem:grand_canonical_finitedomain}{\beta}{R}{\nu} on
    $B_{\tau_\beta}(\overline{u}_R;\|D\cdot\|_{\ell^2_\ctGamma})$,
    we have $\overline{u}_{R,\beta}^{\mu\nu} = \overline{u}_{R,\beta}^{\nu}$. Therefore, for all $\mu,\nu \in I$ with $|\mu - \nu|$ sufficiently small, 
    $\|D\overline{u}_{R,\beta}^{\nu} - D\overline{u}_{R,\beta}^{\mu}\|_{\ell^2_\ctGamma} \leq C|\mu-\nu|$ 
    and thus, by Lemma~\ref{ham_converge_R},
    $\mathrm{dist}\big(\sigma(\Ham^R(\overline{u}_{R,\beta}^\mu)), \sigma(\Ham^R(\overline{u}_{R,\beta}^\nu))\big) \leq C|\mu - \nu|$. 
    In particular, $\mu \mapsto \ep_\mathrm{F}^{\beta,R}(\overline{u}_{R,\beta}^\mu)$ is continuous on $I$.
\end{proof}
\begin{proof}[Proof of Proposition~\ref{cor:weak_0T_limit_infinite_domain}]
   Suppose that $\ubar_{\beta_j}$ is a bounded sequence (with $\beta_j \to \infty$) of solutions to \refCE{problem:canonical_finitedomain}{\beta_j}{R}. As before, along a subsequence, there exists a weak limit $\overline{u}_{\beta_j} \rightharpoonup \overline{u}$ as $j \to \infty$. By applying Lemma~\ref{ham_converge_R} and noting that weak convergence on a finite domain implies strong convergence, we can conclude that 
    \[ |\lambda_s(\overline{u}_{\beta_j}) - \lambda_s(\overline{u})| \to 0\]
    as $j \to \infty$ for each $s = 1,\dots,N_R$. Therefore $\ep_\mathrm{F}^{\beta_j,R}(\overline{u}_{\beta_j}) \to \ep_\mathrm{F}^{\infty,R}(\overline{u})$ as $j \to \infty$. 
    
     Since, we assume that $\ep_\mathrm{F}^{\beta_j,R}(\overline{u}_{\beta_j})$ is uniformly bounded away from $\sigma(\Ham^R(\overline{u}_{\beta_j}))$, we can apply Remark~\ref{rem:spectral_pollution} to conclude $\mu \coloneqq \ep_\mathrm{F}^{\infty,R}(\overline{u}) \not \in \sigma(\Ham^R(\overline{u}))$.
    
    Since $\overline{u}_{\beta_j}$ solves \refCE{problem:canonical_finitedomain}{\beta_j}{R},
    \begin{align}\label{eq:GCERpartii_2}\begin{split}
     \left| \frac{\partial G^{\beta_j, R}(\overline{u}_{\beta_j};\mu )}{\partial \overline{u}_{\beta_j}(\ell)} \right| 
     &= \left| \frac{\partial G^{\beta_j, R}(\overline{u}_{\beta_j};\mu )}{\partial \overline{u}_{\beta_j}(\ell)} 
     - \frac{\partial G^{\beta_j, R}(\overline{u}_{\beta_j};\tau )}{\partial \overline{u}_{\beta_j}(\ell)}\Bigg|_{\tau = \ep_\mathrm{F}^{\beta_j,R}(\overline{u}_{\beta_j})} \right| \\
     &\leq C \big|\ep_\mathrm{F}^{\beta_j,R}(\overline{u}_{\beta_j}) - \mu \big| \to 0 \quad \textrm{as }j \to \infty.
    \end{split}\end{align}
   
    On the other hand, as in the proof of Proposition~\ref{prop:0T_limit_infinite_domain} (see, \cref{eq:FlbetatoFl}$-$\cref{eq:Fl_bu_minus_bu}), we have 
    \begin{align}\label{eq:GCERpartii_1}
    \frac{\partial G^{\beta_j, R}(\overline{u}_{\beta_j};\mu )}{\partial \overline{u}_{\beta_j}(\ell)} 
    \to \frac{\partial G^{\infty, R}(\overline{u};\mu )}{\partial \overline{u}(\ell)}\quad \textup{as }j \to \infty.
    \end{align}

    Therefore, by combining \cref{eq:GCERpartii_1} and \cref{eq:GCERpartii_2}, we can conclude that $\overline{u}$ is a critical point of $G^{\infty,R}(\,\cdot\,;\mu)$.
\end{proof}
\subsection{Thermodynamic Limit}

The results of \cite{ChenLuOrtner18} are analogous to Theorem~\ref{thm:0T_thermodynamic_limit_CE_GCE} and Proposition~\ref{prop:0T_thermodynamic_limit_CE_GCE} but in the case of finite Fermi-temperature. Moreover, the authors of \cite{ChenLuOrtner18} only consider the case of a Bravais lattice $\Lambda^\mathrm{ref} = \mathsf{B}\mathbb Z^d$. In this section, we prove that these results can be extended to the case of zero Fermi-temperature for insulators in the more general case where $\Lambda^\mathrm{ref}$ need not be a Bravais lattice. 

In principle, one may prove these thermodynamic limit results by using Theorem~\ref{thm:0T_limit_infinite_domain} and Proposition~\ref{prop:0T_limit_infinite_domain} to compare the zero Fermi-temperature problems with the analogous finite temperature problems and showing that the convergence rates in \cite{ChenLuOrtner18} are independent of Fermi-temperature. However, we opt for a more direct approach here because the case $\Lambda^\mathrm{ref} \not= \mathsf{B}\mathbb Z^d$ was not considered in \cite{ChenLuOrtner18} and thus a rigorous treatment would be lengthy.

\subsubsection{Proof of Theorem~\ref{thm:0T_thermodynamic_limit}: \texorpdfstring{$R \to \infty$}{Thermodynamic Limit} in the Grand Canonical Ensemble}

Throughout this proof $\beta \in (0,\infty]$ will be fixed and therefore we omit the index corresponding to Fermi-temperature on the grand potential and the site energies. Again, we will use the notation of Lemma~\ref{lem:truncation} for the truncation operator $T_R\colon \W(\Lambda) \to \dot{\mathscr W}^\textrm{c}(\Lambda)$.

\textit{Step 1: Quasi-best approximation.}

For some $r>0$ sufficiently small, we have $x + B_{2r}(\ubar)\subset \mathrm{Adm}(\Lambda)$. We may choose $R$ sufficiently large such that $T_R\ubar\in B_r(\ubar)$ and so $x + B_r(T_R\ubar)\subset\mathrm{Adm}(\Lambda)$. We know that $\mathcal G\in C^3(\textrm{Adm}(\Lambda))$ and so $\delta\mathcal G$ and $\delta^2\mathcal G$ are Lipschitz continuous on $\textrm{Adm}(\Lambda)\cap B_r(\ubar)$. In particular, 
\begin{align}
   \label{eq:op_norm}
   \|\delta\mathcal G(\ubar) - \delta\mathcal G(T_R\ubar)\| &
        \leq C \|D\ubar - DT_R\ubar\|_{\ell^2_\ctGamma} 
        \leq C \|D\overline{u}\|_{\ell^2_\ctGamma(\Lambda\setminus B_{R/2})}, 
            \quad \text{and}\\
	\|\delta^2\mathcal G(\ubar) - \delta^2\mathcal G(T_R\ubar)\| &
	    \leq C \|D\ubar - DT_R\ubar\|_{\ell^2_\ctGamma} 
	    \leq C \|D\overline{u}\|_{\ell^2_\ctGamma(\Lambda\setminus B_{R/2})}.
\end{align}
%

\textit{Step 2: Consistency.}

We fix $v \colon \Lambda_R \to \mathbb R^d$ satisfying \asNonInterR. Since $v$ is periodic and not necessarily an admissible displacement on $\Lambda$, we consider the compactly supported displacement $\TRstar v$ for some $R^\star < R$ and extend by zero to $\Lambda$. Rewriting $\Braket{\delta G^R(T_R\ubar),v}$, we have 
\begin{align}\label{eq:T1T2T3}\begin{split}
	\Braket{\delta G^R(T_R\ubar),v} 
	&= \Braket{\delta G^R(T_R\ubar),(I - \TRstar)v}
	           + \Braket{\delta G^R(T_R\ubar)-\delta \mathcal G(T_R\ubar),\TRstar v} \\
	           &\qquad\qquad+ \Braket{\delta \mathcal G(T_R\ubar)-\delta \mathcal G(\ubar),\TRstar v}
\end{split}\end{align}
We consider each of these contributions in turn. 

Replacing $T_R\ubar$ with $\TRtil\ubar$ for some $\tilde{R} < R^\star < R$ in the first term of \cref{eq:T1T2T3} gives an approximation error of $C \|D\ubar \|_{\ell^2_\ctGamma(\Lambda \cap B_R \setminus B_{\tilde{R}/2})}$. Since 
$(I - \TRstar)v = 0$ on $\Lambda \cap B_{R^\star}$ 
and 
$\TRtil\ubar = 0$ on $\Lambda \setminus B_{\tilde{R}}$,
we can bound $\Braket{\delta G^R(\TRtil\ubar),(I - \TRstar)v}$ as follows:
\begin{align}
\label{eq:T1_}
\begin{split}
    &|\Braket{\delta G^R(T_R\ubar),(I - T_{R^\star})v}| \\ 
    &\qquad\leq |\Braket{\delta G^R(\TRtil\ubar),(I - T_{R^\star})v}| 
            + |\Braket{\delta G^R(T_R\ubar) - \delta G^R(\TRtil\ubar),(I - T_{R^\star})v}| \\
   &\qquad\leq C\tilde{R}^{d/2} e^{-\eta(R^\star - \tilde{R})} \|Dv\|_{\ell^2_\ctGamma(\Lambda \setminus B_{R^\star/2})} 
   + C \|D\ubar \|_{\ell^2_\ctGamma(\Lambda \cap B_R \setminus B_{\tilde{R}/2})}\|Dv\|_{\ell^2_\ctGamma(\Lambda \setminus B_{R^\star/2})}
\end{split}
\end{align}
where $\eta \coloneqq \frac{1}{2}\mathfrak{m}\min\{\ctCT, \frac{1}{2}\ctTBexponent\}$. The first term in \cref{eq:T1_} is bounded by comparing the first derivative of the grand potential with the corresponding reference grand potential and taking the derivatives inside the contour integration in an argument that is exactly the same as in \cref{eq:FF_estimate}. Bounding the second term in \cref{eq:T1_} is done by applying Taylor's theorem and using the locality of the second derivatives of the site energies (for an identical argument see \cref{eq:partIconsistency}). 
	
Next, we consider the second term of \cref{eq:T1T2T3}. We simply apply the convergence of the site energies as $R \to \infty$ (Lemma~\ref{eq:thermodynamic_site_energies}), together with the locality of the site energies (Lemma~\ref{lem:CT}) and the fact that $T_{R^\star}v$ has compact support in $B_{R^\star}$ to conclude:
\begin{align}
\label{eq:consistency_T2_}
& \Braket{\delta G^R(T_R\ubar)-\delta \mathcal G(T_R\ubar),\TRstar v} \nonumber\\
&\qquad=\sum_{\ell \in \Lambda_R}\sum_{\rho \in \Lambda_R - \ell} \left(\mathcal G^R_{\ell,\rho}(DT_R\ubar(\ell)) - \mathcal G_{\ell,\rho}(DT_R\ubar(\ell))\right)\cdot D_\rho T_{R^\star} v(\ell) \nonumber\\
    &\qquad \qquad- 
    \sum_{\above{\ell \in \Lambda, \rho \in \Lambda - \ell}{\ell \not \in \Lambda_R \textrm{ or } \ell + \rho \not \in \Lambda_R}} 
    \mathcal G_{\ell,\rho}(DT_R\ubar(\ell))\cdot 
    D_\rho T_{R^\star} v(\ell)\nonumber\\
&\qquad\leq C\sum_{\ell \in \Lambda_R}
\sum_{\rho \in \Lambda_R - \ell} 
e^{-\eta (\mathrm{dist}(\ell, \Omega_R^c) + |\rho|)}
|D_\rho T_{R^\star}v(\ell)| 
    + C\sum_{\above{\ell \in \Lambda, \rho \in \Lambda - \ell}{\ell \not \in \Lambda_R \textrm{ or } \ell + \rho \not \in \Lambda_R}} 
    e^{-\ctCT |\rho|}
    |D_\rho T_{R^\star}v(\ell)| \nonumber\\
&\qquad\leq C(R^\star)^{d/2} e^{-\frac{1}{2}\eta(R - R^\star)}
\|DT_{R^\star}v\|_{\ell^2_\ctGamma}
\end{align}
where $\eta \coloneqq \frac{1}{2}\mathfrak{m}\min\{\ctCT, \frac{1}{2}\ctTBexponent\}$. In the final line, we have used the fact that $T_{R^\star}v$ has compact support in $B_{R^\star}$. More specifically, in the first term, we have used the fact that, if $\ell \in B_{R^\star}$ or $\ell + \rho \in B_{R^\star}$, then $\mathrm{dist}(\ell, \Omega_R^c) + |\rho| > R - R^\star$. Moreover, in the second term, we have that, if $\ell \not \in \Lambda_R$ then we must sum over $\ell + \rho \in B_{R^\star}$ and so $|\rho| > R - R^\star$ (and \textit{vice versa}).

Finally, we consider the third contribution from \cref{eq:T1T2T3}: by \cref{eq:op_norm}, we have
\begin{align}
	&|\Braket{\delta \mathcal G(T_R\ubar)-\delta \mathcal G(\ubar),\TRstar v}| 
    \leq C\|D\overline{u}\|_{\ell^2_\ctGamma(\Lambda \setminus B_{R/2})} \|D\TRstar v\|_{\ell^2_\ctGamma}. \label{eq:consistency_T3_}
\end{align}
Combining \cref{eq:T1T2T3}$-$\cref{eq:consistency_T3_} we obtain the following consistency estimate:
\begin{align}\label{eq:consistency}
    \left| \Braket{\delta G^R(T_R\ubar),v} \right| 
    \leq C\left( 
    \widetilde{R}^{d/2} e^{-\eta(R^\star - \tilde{R})} +
    \|D\ubar\|_{\ell^2_\ctGamma(\Lambda \setminus B_{\tilde{R}/2})} + 
    (R^\star)^{d/2} e^{-\frac{1}{2}\eta (R-R^\star)}  
    \right) 
    \|Dv\|_{\ell^2_\ctGamma}
\end{align}
where $\eta \coloneqq \frac{1}{2}\mathfrak{m}\min\{\ctCT, \frac{1}{2}\ctTBexponent\}$. Here, we can see that if $\|D\overline{u}\|_{\ell^2_{\ctGamma}(\Lambda \setminus B_R)} \lesssim R^{-d/2}$ (which would follow if \cref{eq:far_field_decay} holds), then we obtain a convergence rate as discussed in Remark~\ref{rem:convergence_rates}.
%

\textit{Step 3: Stability.}

We now show the following stability estimate: there exists $c_1 > 0$ such that
\begin{align}\label{eq:stab_estimate}
    \Braket{\delta^2 G^{R}(T_R\overline{u})v,v} \geq c_1 \|Dv\|_{\ell^2_\ctGamma}^2
\end{align}
for all sufficiently large $R$.

We first take a sequence $v_R$ of test functions with $\|Dv_R\|_{\ell^2_\ctGamma} = 1$ and note that $w_R \coloneqq T_R v_R \in \dot{\mathscr W}^{1,2}(\Lambda)$ and 
$\|Dw_R\|_{\ell^2_\ctGamma} \leq C \|Dv_R\|_{\ell^2_\ctGamma(\Lambda\cap B_R)} \leq C$ 
where $C$ is independent of $R$. Therefore, along a subsequence (which we do not relabel) we have $w_{R} \rightharpoonup v$ in $\dot{\mathscr W}^{1,2}(\Lambda)$ as $R \to \infty$ for some $v \in \dot{\mathscr W}^{1,2}(\Lambda)$. By \cite[Lemma~7.8]{EhrlacherOrtnerShapeev2013arxiv}, we may choose a sequence of radii $S(R)$ with $S(R) \to \infty$ as $R \to \infty$ ``sufficiently slowly'' such that
\begin{align}
    T_{S(R)}w_R \to v \quad&\textrm{strongly in } \dot{\mathscr W}^{1,2}(\Lambda),
     \label{eq:strong_convergence_FF}\\
    T_{S(R)}w_R - w_R \rightharpoonup 0 \quad&\textrm{weakly in } \dot{\mathscr W}^{1,2}(\Lambda) \textrm{ and} \label{eq:weak_convergence_FF}\\
    R - S(R) \to \infty \quad&\textrm{as } R \to \infty.
    \label{eq:RminusS}
\end{align}
We let $v^\mathrm{co}_R \coloneqq T_{S(R)}w_R$ and $v^{\mathrm{ff}}_R \coloneqq v_R - v_R^\mathrm{co}$ and expand $\Braket{\delta^2G^R(T_R\overline{u})v_R,v_R}$ as follows:
\begin{align}
        &\Braket{\delta^2G^R(T_R\overline{u})v_R,v_R} \nonumber\\
        &\qquad= 
        \Braket{\delta^2 G^R(T_R\overline{u})v_R^\mathrm{co},v_R^\mathrm{co}} 
        + 2\Braket{\delta^2 G^R(T_R\overline{u})v_R^\mathrm{co},v_R^\mathrm{ff}} 
        + \Braket{\delta^2 G^R(T_R\overline{u})v_R^\mathrm{ff},v_R^\mathrm{ff}} \nonumber\\
        &\qquad\eqqcolon \mathrm{T}_1 + 2\mathrm{T}_2 + \mathrm{T}_3.
        \label{eq:stability_expansion}
\end{align}
We shall consider each of these terms separately.

Using the fact that $v_R^\mathrm{co}$ has compact support in $B_{S(R)}$, we obtain:
\begin{align}\label{eq:T1T2_general_bound}\begin{split}
    &\Braket{\left(\delta^2 G^R(T_R\overline{u}) - \delta^2 \mathcal G(\overline{u})  \right)
    v_R^\mathrm{co},w} \\
    &\hspace{2cm}\leq C \left( e^{-\frac{1}{2}\eta(R - S(R))} + \|D\overline{u}\|_{\ell^2_\ctGamma(\Lambda \setminus B_{R/2})}\right)\|Dv_R^\mathrm{co}\|_{\ell^2_\ctGamma} \|Dw\|_{\ell^2_\ctGamma}
    \end{split}
\end{align}
where $\eta \coloneqq \frac{1}{2}\mathfrak{m}\min\{\ctCT, \frac{1}{2}\ctTBexponent\}$. The proof of this estimate is similar to that of \textit{Step 2} and is shown in \Cref{app:stab} for completeness.

\textit{$\mathrm{T}_1$: core term.} 
Using the strong stability of the solution $\overline{u}$ together with \cref{eq:T1T2_general_bound}, we may conclude that, for sufficiently large $R$, we have
\begin{align}\label{eq:T1_core}\begin{split}
    \mathrm{T}_1 
    &= \Braket{\delta^2 G^R(T_R\overline{u})v_R^\mathrm{co},v_R^\mathrm{co}} \\
    &= \Braket{\delta^2 \mathcal G(\overline{u})v_R^\mathrm{co},v_R^\mathrm{co}} - 
    \Braket{\left(\delta^2 \mathcal G(\overline{u}) - \delta^2 G^R(T_R\overline{u})  \right)v_R^\mathrm{co},v_R^\mathrm{co}} \\
    &\geq \frac{c_0}{2} \|Dv_R^\mathrm{co}\|_{\ell^2_\ctGamma}^2.
\end{split}\end{align}

\textit{$\mathrm{T}_2$: cross term}. Now, we show that the cross term in \cref{eq:stability_expansion} vanishes in the $R\to\infty$ limit. Rewriting $\mathrm{T}_2$, we have
\begin{align}\label{eq:T2_stability_R}
\begin{split}
    \mathrm{T}_2 &= \Braket{\delta^2 G^R(T_R\overline{u})v_R^\mathrm{co},v_R^\mathrm{ff}} \\
    &= \Braket{\left(\delta^2 G^R(T_R\overline{u}) 
        - \delta^2\mathcal G(\overline{u})\right)v_R^\mathrm{co}, v^\mathrm{ff}_R} 
    + \Braket{\delta^2 \mathcal G(\overline{u})(v_R^\mathrm{co}-v), v^\mathrm{ff}_R}
    + \Braket{\delta^2 \mathcal G(\overline{u})v, v^\mathrm{ff}_R} 
\end{split}
\end{align}
By \cref{eq:T1T2_general_bound}, the first term of \cref{eq:T2_stability_R} vanishes as $R\to\infty$ and, since $v_{R}^\mathrm{co}\to v$ strongly in $\dot{\mathscr{W}}^{1,2}(\Lambda)$, the second term in \cref{eq:T2_stability_R} also vanishes:
\begin{align*}
    \Braket{\delta^2 \mathcal G(\overline{u})(v_R^\mathrm{co} - v), v^\mathrm{ff}_R}
    &\leq C\|D(v^\mathrm{co}_R - v)\|_{\ell^2_\ctGamma} \|Dv^\mathrm{ff}_R\|_{\ell^2_\ctGamma} \to 0 \quad \textrm{as }R\to\infty.
\end{align*}
Finally, since $\delta^2 \mathcal G(\overline{u})v$ is a bounded linear functional on $\dot{\mathscr{W}}^{1,2}(\Lambda)$, we may apply the Riesz representation theorem to conclude that there exists $\Phi \in \dot{\mathscr W}^{1,2}(\Lambda)$ such that 
\begin{align*}
    \Braket{\delta^2 \mathcal G(\overline{u})v, v^\mathrm{ff}_R} &=
    \Braket{D\Phi,Dv_R^\mathrm{ff}}_{\ell^2_\ctGamma}.
\end{align*}
This quantity vanishes as $R \to \infty$ by the weak convergence of $v_R^\mathrm{ff} \rightharpoonup 0$ as $R\to\infty$ (see \cref{eq:weak_convergence_FF}).

\textit{$\mathrm{T}_3$: far field term}. Since $v^\mathrm{ff}_R$ only sees the far field behaviour of the test function, and not the point defect, we may replace $\delta^2  G^R(T_R\overline{u})$ by $\delta^2  G^R_\mathrm{ref}$ giving an approximation error of $C\|DT_R\overline{u}\|_{\ell^2_\ctGamma(\Lambda_R \setminus B_{S(R)/2})}$. To do this, we extend $v_R^\mathrm{ff}$ by zero to $\Lambda_R \cup \Lambda^\mathrm{ref}_R$ and note that,
\begin{align}\label{eq:T3minus_}\begin{split}
    \left|\mathrm{T}_3 - \Braket{\delta^2  G^{R}_\mathrm{ref}v^\mathrm{ff}_R,v^\mathrm{ff}_R} \right| 
    &=\left|\Braket{\left(\delta^2  G^R(T_R\overline{u}) - \delta^2  G_{\mathrm{ref}}^R\right)v^\mathrm{ff}_R,v^\mathrm{ff}_R}\right| \\
    &\leq C \|DT_R\overline{u}\|_{\ell^2_\ctGamma(\Lambda_R \setminus B_{S(R)/2})}
    \|Dv_R^{\mathrm{ff}}\|^2_{\ell^2_\ctGamma}. 
\end{split}\end{align}
It is now sufficient to prove that there exists $c_1 > 0$ such that
\begin{align}\label{eq:stability_ref}
    \Braket{\delta^2 G^{R}_\mathrm{ref}v^\mathrm{ff}_R,v^\mathrm{ff}_R} \geq c_1 \|Dv_R^\mathrm{ff}\|_{\ell^2_\ctGamma}^2.
\end{align}
A proof of this fact for Bravais lattices can be found in \cite{HudsonOrtner2011} which can be adapted to the multi-lattice setting. We give an alternative proof in \Cref{app:stab} for completeness.

Therefore, applying \cref{eq:T3minus_} and \cref{eq:stability_ref} we can conclude that
\begin{align*}
   \mathrm{T}_3 = \Braket{\delta^2  G^{R}(T_R\overline{u}) v_R^\mathrm{ff}, v_R^\mathrm{ff}} \geq \frac{c_1}{2} \|Dv_R^\mathrm{ff}\|_{\ell^2_\ctGamma}^2
\end{align*}
for all $R$ sufficiently large.

Using the fact that $\|Dv^\mathrm{co}_R\|_{\ell^2_\ctGamma}^2 + \|Dv^\mathrm{ff}_R\|_{\ell^2_\ctGamma}^2 \geq  \frac{1}{2} \|Dv_R\|_{\ell^2_\ctGamma}^2$ for all sufficiently large $R$, which follows from \cite[Lemma~7.9]{EhrlacherOrtnerShapeev2013arxiv}, allows us to conclude the proof of the stability estimate \cref{eq:stab_estimate}.

\textit{Step 4: Application of the Inverse Function Theorem.}
The consistency \cref{eq:consistency} and stability \cref{eq:stab_estimate} estimates allow us to apply the inverse function theorem \cite[Lemma~B.1]{LuskinOrtner2013} to conclude: for sufficiently large $R$, there exists $\ubar_R$ such that 
\[ 
\delta G^R(\ubar_R) = 0 \quad \text{and} \quad 
\|D\ubar_R - D\ubar\|_{\ell^2_\ctGamma}\to 0 \text{ as }R\to\infty. 
\]
Moreover, there exists a constant $c_2>0$ such that 
\[ 
\Braket{\delta^2 G^R(\ubar_R)v,v} \geq c_2 \|Dv\|_{\ell^2_\ctGamma}^2. 
\]
%

\subsubsection{Proof of Proposition~\ref{prop:0T_thermodynamic_limit}: \texorpdfstring{$R \to \infty$}{Thermodynamic Limit} in the Grand Canonical Ensemble}

Since $\sup_j \|D\overline{u}_{R_j}\|_{\ell^2_\ctGamma} <\infty$, there exists a $\ubar \in \dot{\mathscr W}^{1,2}(\Lambda)$ such that $\ubar_{R_j}\rightharpoonup \ubar$ along a subsequence as $j \to \infty$. Using Lemma~\ref{lem:spectral_pollution} and the fact $\mu$ is uniformly bounded away from $\sigma(\Ham^{R_j}(\overline{u}_{R_j}))$, we obtain $\mu \not \in \sigma(\Ham(\overline{u}))$.

We wish to show that $\delta \mathcal G(\ubar) = 0$. Using the notation from \cref{eq:FlbetatoFl} and noting that $\overline{u}_{R_j}$ solves \refGCE{problem:grand_canonical_finitedomain}{\infty}{R_j}{\mu} we have, for all $v \in \dot{\mathscr W}^{1,2}(\Lambda)$,
\begin{align}
0 = \Braket{\delta G^{R_j}(u_{R_j}), v} = \sum_{\ell\in \Lambda_{R_j}} \mathcal F^{R_j}_\ell(u_{R_j})\cdot v(\ell).
\end{align}
It is sufficient to suppose that $\mathrm{supp}(v) \subset B_{R_v}$ for some $R_v > 0$ and show that $\mathcal F_\ell^{R_j}(\overline{u}_{R_j}) \to \mathcal F_\ell(\overline{u})$ as $j \to \infty$:
\begin{align}\label{eq:GCEweakconvergence}\begin{split}
    |\mathcal F_\ell^{R_j}(\overline{u}_{R_j}) - \mathcal F_\ell(\overline{u})| &\leq
    |\mathcal F_\ell^{R_j}(\overline{u}_{R_j}) - \mathcal F^{R_j}_\ell(\overline{u})| + 
    |\mathcal F_\ell^{R_j}(\overline{u}) - \mathcal F_\ell(\overline{u})|.
\end{split}\end{align}
The first term of \cref{eq:GCEweakconvergence} may be treated in the exact same way as in \cref{eq:Fl_general_formula} 
to conclude that $|\mathcal F_\ell^{R_j}(\overline{u}_{R_j}) - \mathcal F^{R_j}_\ell(\overline{u})| \to 0$ as $j\to\infty$.

For the second term of \cref{eq:GCEweakconvergence} we may use the chain rule formula \cref{eq:chainrule} to write the forces as sums over site energies. Using the fact that $\ell \in B_{R_v}$ and Lemma~\ref{eq:thermodynamic_site_energies}, we have
\begin{align*}
    &\sum_{\rho \in \Lambda_{R_j} - \ell} 
    \left(
    \mathcal G^{R_j}_{\ell-\rho,\rho}(Du(\ell)) - \mathcal G_{\ell,\rho}(Du(\ell)) 
    \right) - \sum_{\rho \in \Lambda\setminus\Lambda_{R_j} - \ell} \mathcal G_{\ell,\rho}(Du(\ell)) \\
    &\qquad\leq \sum_{\rho \in \Lambda_{R_j} - \ell} 
    e^{-\eta(\mathrm{dist}(\ell,\Omega_{R_j}^c) + |\rho|)} 
        - \sum_{\rho \in \Lambda\setminus\Lambda_{R_j} - \ell} e^{-\ctCT|\rho|}\\
    &\qquad\leq C e^{-\frac{1}{2}\eta (R_j - R_{v})}.
\end{align*}
where $\eta \coloneqq \frac{1}{2}\mathfrak{m}\min\{\ctCT,\frac{1}{2}\ctTBexponent\}$.
We can therefore conclude that $|\mathcal F_\ell^{R_j}(\overline{u}) - \mathcal F_\ell(\overline{u})| \to 0$ as $j \to \infty$. That is, $\mathcal F_\ell(\overline{u}) = 0$ for all $\ell \in \Lambda \cap B_{R_v}$.

\subsubsection{Proofs of Theorem~\ref{thm:0T_thermodynamic_limit_CE_GCE} and Proposition~\ref{prop:0T_thermodynamic_limit_CE_GCE}: \texorpdfstring{$R \to \infty$}{Thermodynamic Limit} in the Canonical Ensemble}

\begin{proof}[Proof of Theorem~\ref{thm:0T_thermodynamic_limit_CE_GCE}]
We suppose that $\overline{u}$ is a strongly stable solution to \refGCE{problem:infinite_domain}{\infty}{\infty}{\mu}. By Theorem~\ref{thm:0T_thermodynamic_limit}, there is a sequence of solutions $\overline{u}_R$ to \refGCE{problem:grand_canonical_finitedomain}{\infty}{R}{\mu} with $\overline{u}_R \to \overline{u}$ in $\dot{\mathscr W}^{1,2}(\Lambda)$ as $R \to \infty$. This strong convergence means that, by Lemmas~\ref{ham_converge_R} and \ref{lem:spectral_pollution}, every isolated eigenvalue of $\sigma(\Ham(\overline{u}))$ is a limit point of a sequence of eigenvalues contained in $\sigma(\Ham^R(\overline{u}_R))$ and the accumulation points of every such sequence are contained in $\sigma(\Ham(\overline{u}))$. Since $\mu \not \in \sigma(\Ham(\overline{u}))$, we can find adjacent points $\underline{\ep}, \overline{\ep} \in \sigma(\Ham(\overline{u}))$ such that $\underline{\ep} < \mu < \overline{\ep}$. Now, choosing the electron number %
$N_{e,R} \coloneqq N^{\infty,R}\left(\tfrac{1}{2}(\underline{\ep} + \overline{\ep})\right)$,
and by applying Lemma~\ref{lem:fermilevelconverge}, we can conclude that $\ep_\mathrm{F}^{\infty,R}(\overline{u}_R) \to \tfrac{1}{2}(\underline{\ep} + \overline{\ep})$ as $R \to \infty$. Further, the interval between $\ep_\mathrm{F}^{\infty,R}(\overline{u}_R)$ and $\tfrac{1}{2}(\underline{\ep} + \overline{\ep})$ does not intersect $\sigma(\Ham(\overline{u}))$ for all sufficiently large $R$. That is, for all sufficiently large $R$, $\overline{u}_R$ solves \refCE{problem:canonical_finitedomain}{\infty}{R}.
\end{proof}

\begin{proof}[Proof of Proposition~\ref{prop:0T_thermodynamic_limit_CE_GCE}]
    Since $\|D\overline{u}_{R_j}\|_{\ell^2_\ctGamma}$ is uniformly bounded, along a subsequence, 
    $\overline{u}_{R_j} \rightharpoonup \overline{u}$ as $j \to \infty$
    for some $\overline{u} \in \dot{\mathscr W}^{1,2}(\Lambda)$. Now, because $\sigma(\Ham^{R_j}(\overline{u}_{R_j}))$ is uniformly bounded, 
    $\ep_\mathrm{F}^{\infty, R_j}(\overline{u}_{R_j}) \to \mu$
    along a further subsequence as $j\to\infty$ for some $\mu \in \mathbb R$.
    
    Supposing that $\sigma(\Ham^{R_j}(\overline{u}_{R_j}))$ is (eventually) bounded away from 
    $\ep_\mathrm{F}^{\infty,R_j}(\overline{u}_{R_j})$,
    we know that $\mu$ is eventually bounded away from $\sigma(\Ham^{R_j}(\overline{u}_{R_j}))$. This means that, for all $j$ sufficiently large, $\overline{u}_{R_j}$ solves \refGCE{problem:grand_canonical_finitedomain}{\infty}{R}{\nu} for all $\nu$ in a neighbourhood of $\mu$. Therefore, by Proposition~\ref{prop:0T_thermodynamic_limit}, $\overline{u}$ is a critical point of $\mathcal G(\,\cdot\,;\mu)$.
    
    We remark here that the boundedness of the sequence $(N_{e,R_j} - N_{R_j})_j$ is a necessary condition for the limit $\mu$ to be contained in the band gap. We do not state this as an assumption in Proposition~\ref{prop:0T_thermodynamic_limit_CE_GCE} because we require the stronger condition that $\mu$ is (eventually) bounded away from $\sigma(\Ham^{R_j}(\overline{u}_{R_j}))$.
\end{proof}

\appendix

\section{Symmetries of the Hamiltonian}
\label{sec:symmetries}

In addition to assumption \asTB, the Hamiltonian also satisfies the following symmetry property which is not required for this paper but is included for completeness:
\begin{enumerate}[label=(\roman*)]
	\item[] Fix a displacement $u$ and define $y \coloneqq x + u$ where $x$ is the identity configuration. For an isometry $\mathcal I \colon \mathbb R^d \to \mathbb R^d$, we let $u_{\mathcal I}$ be the displacement for which $\mathcal I\circ y = x + {u}_\mathcal{I}$. Then, for $\ell, k \in \Lambda$, there exist orthogonal $Q^\ell, Q^k \in \mathbb R^{\numorbitals\times\numorbitals}$ such that 
	\begin{equation}
		\label{eq:isometry_invariance}
		\big[\Ham(u_{\mathcal I})^{ab}_{\ell k}\big]_{a,b = 1}^{\numorbitals} 
			= Q^\ell \cdot \big[\Ham(u)^{ab}_{\ell k}\big]_{a,b = 1}^{\numorbitals} \cdot \big(Q^k\big)^T.
	\end{equation}
\end{enumerate}

\cref{eq:isometry_invariance} states that the Hamiltonian is invariant under isometries of $\mathbb R^d$ up to an orthogonal change of basis. That is \cite{SlaterKoster1954},
\begin{equation}
	\label{a:isometry_invariance} 
	\Ham(u_{\mathcal I}) = Q\cdot \Ham(u)\cdot Q^T\quad 
	\text{where} \quad Q = \mathrm{diag}\big( Q^\ell \big)_{\ell \in \Lambda}
\end{equation}
This assumption is derived in \cite[Appendix~A]{ChenOrtner16} and, for a single atomic orbital per atom, takes the more familiar form: 
$\Ham(u_{\mathcal I}) = \Ham(u)$.
This isometry invariance means that the spectrum of $\Ham(u)$ and $\Ham(u_{\mathcal I})$ agree and thus the site energies associated with $u$ are equal to that of $u_{\mathcal I}$, for example.


\section{Non-constant Number of Atomic Orbitals per Atom}
\label{app:numorbitals}

As noted in \cref{sec:tight_binding_hamiltonian}, the number of atomic orbitals per atom should depend on the atomic species. That is, for $\ell \in \Lambda$, the number of atomic orbitals corresponding to $\ell$ depends on the atomic species of $\ell$. We shall denote by $\numorbitals(\ell)$ this number of atomic orbitals. Now, the linear tight binding Hamiltonian takes the following form: 
\begin{align*}
    [\Ham(u)]_{\ell k}^{ab} = h_{\ell k}^{ab}(\bm{r}_{\ell k}(u)) \quad 
    \text{for }\ell, k \in \Lambda \textrm{, }
    1 \leq a \leq \numorbitals(\ell) \textrm{ and }
    1 \leq b \leq \numorbitals(k).
\end{align*}
We may define $\numorbitals \coloneqq \max_{\ell \in \Lambda} \numorbitals(\ell)$, and, for $\ell,k\in\Lambda$ and $1\leq a,b\leq \numorbitals$,
\begin{align*}
    [\widetilde{\Ham}(u)]_{\ell k}^{ab} \coloneqq 
    \begin{cases} 
    h_{\ell k}^{ab}(\bm{r}_{\ell k}(u)) &
    \text{if }
    1 \leq a \leq \numorbitals(\ell) \textrm{ and }
    1 \leq b \leq \numorbitals(k) \\
    0 & \textrm{if } \numorbitals(\ell)<a\leq\numorbitals \textrm{ or }\numorbitals(k)<b\leq \numorbitals.
    \end{cases}
\end{align*}
Moreover, for $\psi = \left\{\psi(\ell;a) \colon \ell \in \Lambda, a \in \{1,\dots,\numorbitals(\ell)\}\right\}$ we may define
\begin{align*}
    \widetilde{\psi}(\ell;a) \coloneqq \begin{cases}
        \psi(\ell;a) &\textrm{if } 1\leq a \leq \numorbitals(\ell) \\
        0 &\textrm{if } \numorbitals(\ell) < a \leq \numorbitals.
    \end{cases}
\end{align*}
Without loss of generality we may suppose that $\lambda \geq a_0 > 0$ for all $\lambda \in \sigma(\Ham(u))$. This can be done by artificially shifting the spectrum by adding a suitable multiple of the identity to the Hamiltonian. A detailed description of this construction is given in \cite[\S4.3]{ChenOrtnerThomas2019}.

We claim that $\Ham(u)$ may be replaced with $\widetilde{\Ham}(u)$ (which satisfies the assumptions of this paper) throughout. For example, in the site energies, for an appropriate choice of contour $\mathscr C$ that does not encircle $\{0\}$, we have
\begin{align}\label{eq:site_energies_multiple_orbitals}\begin{split}
    G_\ell^\beta(u) &= \sum_s \mathfrak{g}^\beta(\lambda_s;\mu)\sum_{a = 1}^{\numorbitals(\ell)}[\psi_s]_{\ell a}^2 
    = \sum_s \mathfrak{g}^\beta(\lambda_s;\mu)\sum_{a = 1}^{\numorbitals}[\widetilde{\psi}_s]_{\ell a}^2 \\
    &= -\frac{1}{2\pi i}\sum_a\oint_\mathscr{C} \mathfrak{g}^\beta(z;\mu)\left[\left(\widetilde{\Ham}(u) - z\right)^{-1}\right]_{\ell\ell}^{aa}\mathrm{d}z.
\end{split}
\end{align}
The calculation in \cref{eq:site_energies_multiple_orbitals} is valid because, apart from the additional zero eigenvalues that are not encircled by $\mathscr C$, the spectrum of $\Ham(u)$ is equal to that of $\widetilde{\Ham}(u)$.

\section{Potential Energy Surface at Finite Fermi Temperature}
\label{sec:pot_energy_surface}

In this section, we give a more detailed derivation of the Helmholtz free energy. Again, we have a particle system containing $N_{e,R}$ electrons and nuclei described by some admissible displacement $u \colon \Lambda_R \to \mathbb{R}^d$. 

The Helmholtz free energy is given by minimising the functional \cite{DavidMermin1965,ChenLuOrtner18}
\begin{align*}
    \mathfrak F\left( \{ \psi_s \}_{s=1}^{N_R} , \{f_s\}_{s=1}^{N_R} \right) 
        &\coloneqq 2 \sum_{s=1}^{N_R} \bigg( f_s \Braket{\psi_s | \Ham^R(u) | \psi_s} + k_B TS(f_s) \bigg),\\
    \text{where} \quad S(f) &\coloneqq f \log f + (1-f) \log(1-f)
\end{align*}
under the constraints that 
$\psi_s \colon \Lambda_R \times \{1,\,\dots,N_\textrm{b}\} \to \mathbb R$
are orthogonal orbital functions and the corresponding {occupation numbers}, $f_s$, satisfy 
$0\leq f_s \leq 1$ and $2\sum_{s=1}^{N_R} f_s = N_{e,R}$. 
Here, the factor of two accounts for the spin and the inverse Fermi-temperature is given by
$\beta \coloneqq (k_B T)^{-1}$
where $k_B$ is the Boltzmann constant. 

A simple calculation yields the existence of a minimiser 
$\{ \psi_s \}_{s=1}^{N_R}$, $\{f_s\}_{s=1}^{N_R}$
(where the dependence on $y$ is omitted) satisfying
\begin{align}
    \label{eq:Helmholtz} 
    \Ham^R(u) \psi_s = \lambda_s \psi_s \quad \text{and} \quad 
    f_s = f_\beta(\lambda_s - \ep_\mathrm{F}^{\beta,R})
\end{align}
where 
$\{\lambda_s\}_{s = 1}^{N_R}$
is some enumeration of $\sigma(\Ham^R(u))$. The {Fermi level}, 
$\ep^{\beta,R}_\textrm{F}(u) = \ep_\mathrm{F}^{\beta,R}$,
is the Lagrange multiplier for the particle number constraint. Moreover, there is a unique solution to this constraint: the {particle number} functional, $N^{\beta,R}(u,\cdot)$, is continuous and strictly increasing (in $\tau\in\mathbb R$) with 
$N^{\beta,R}(u;\tau)\to0$ as $\tau \to-\infty$ and $N(u;\tau) \to 2N_R$ as $\tau\to\infty$ 
and thus there is a unique value, $\ep_\mathrm{F}^{\beta,R}$, satisfying
\begin{align*}
    N^{\beta,R}(u;\ep_\mathrm{F}^{\beta,R}) = N_{e,R}.
\end{align*}

After writing the entropy contribution, $S(f_s)$, as
\begin{align*} 
    S(f_s) &= f_s \log f_s + (1-f_s) \log(1-f_s) \\ 
    &= f_s \log \left( e^{-\beta(\lambda_s-\ep^{\beta,R}_\mathrm{F})} (1 - f_s) \right) + (1-f_s) \log(1-f_s)\\
    &= -\beta(\lambda_s - \ep_\mathrm{F}^{\beta,R})f_s + \log(1-f_s),
\end{align*}
we may rewrite the Helmholtz free energy as 
\begin{align*}
    \begin{split}
        E^{\beta,R}(u) &= \sum_{s=1}^{N_R} \mathfrak{e}^\beta(\lambda_s,\ep^{\beta,R}_\mathrm{F}) \quad \text{where}\\
        \mathfrak{e}^\beta(\lambda;\tau) 
            &\coloneqq 2 \lambda f_\beta(\lambda-\tau) + \frac{2}{\beta} S(f_\beta(\lambda-\tau)) \\
        &= 2 \tau f_\beta(\lambda-\tau) + \frac{2}{\beta} \log\left(1-f_\beta(\lambda-\tau)\right).
    \end{split}
\end{align*}

\section{Convergence of the Fermi Level}
\label{sec:appendix_FL}

Here we prove the convergence of the Fermi level in the zero temperature limit for finite computational domains. This proof is a much simpler version of the one presented in \cite{ThomasMSc2018}.

\begin{proof}[Proof of Lemma~\ref{lem:fermilevelconverge}]

We first suppose that there exists $\ep \in \sigma(\Ham^R(u))$ such that $N^{\infty,R}(\ep) = N_{e,R}$. This means that $\underline{\ep} = \overline{\ep} = \ep$. Suppose that $\ep_\mathrm{F}^\beta \to \ep^\prime$ along a subsequence (which we do not relabel) as $\beta \to \infty$. Using the fact that $N^{\beta,R}(\cdot)$ is strictly increasing and taking $\beta \to \infty$, we obtain $N^{\infty,R}(\ep^\prime - \delta) \leq N_{e,R} \leq N^{\infty,R}(\ep^\prime + \delta)$ for all $\delta > 0$. That is, $\ep^\prime = \ep$ and we can conclude.

Now we instead suppose that $N^{\infty,R}(\ep) \not= N_{e,R}$ for all $\ep \in \sigma(\Ham(u))$. In this case, $\underline{\ep} < \overline{\ep}$ are adjacent eigenvalues. Since $N^{\beta,R}(\ep_\mathrm{F}^\beta) = N_{e,R}$, it is not hard to show that, for sufficiently large $\beta$, $\ep_\mathrm{F}^\beta \in (\underline{\ep},\overline{\ep})$. We now simply use the fact that $f_{\beta}(-\tau) = 1 - f_\beta(\tau)$ to obtain
%
%
%
\begin{align}\label{eq:FL_converge_1}\begin{split}
 \sum_{s\colon \lambda_s \geq \overline{\ep}} f_\beta(|\lambda_s - \ep_\mathrm{F}^\beta|) - \sum_{s\colon \lambda_s \leq \underline{\ep}} f_\beta(|\lambda_s - \ep_\mathrm{F}^\beta|) = \tfrac{1}{2}\left( N_{e,R} - N^{\infty,R}(\tfrac{1}{2}(\underline{\ep} + \overline{\ep}))\right).
\end{split}\end{align}
We suppose that $n \coloneqq \tfrac{1}{2}\left( N_{e,R} - N^{\infty,R}(\tfrac{1}{2}(\underline{\ep} + \overline{\ep}))\right) \geq 0$ (we will see that the other case is very similar). Now, using \cref{eq:FL_converge_1}, we have
\begin{align}\label{eq:FL_converge_2}\begin{split}
  f_\beta(|\overline{\ep} - \ep_\mathrm{F}^\beta|) &\leq \sum_{s\colon \lambda_s \geq \overline{\ep}} f_\beta(|\lambda_s - \ep_\mathrm{F}^\beta|) 
  = n + \sum_{s\colon \lambda_s \leq \underline{\ep}} f_\beta(|\lambda_s - \ep_\mathrm{F}^\beta|) 
  \leq n + m f_\beta(|\underline{\ep} - \ep_\mathrm{F}^\beta|)
\end{split}\end{align}
where $m \coloneqq \#\{{s\colon \lambda_s \leq \underline{\ep}}\}$. In the exact same way, we have
\begin{align}\label{eq:FL_converge_3}\begin{split}
  n + f_\beta(|\underline{\ep} - \ep_\mathrm{F}^\beta|) 
  \leq (N_R - m) f_\beta(|\overline{\ep} - \ep_\mathrm{F}^\beta|).
\end{split}\end{align}
Therefore, by combining \cref{eq:FL_converge_2} and \cref{eq:FL_converge_3}, we have
\begin{align}\label{eq:FL_4}
   (N_R - m)^{-1}\left( n+ f_\beta(|\underline{\ep} - \ep_\mathrm{F}^\beta|) \right)
   \leq  
   f_\beta(|\overline{\ep} - \ep_\mathrm{F}^\beta|) \leq n + m f_\beta(|\underline{\ep} - \ep_\mathrm{F}^\beta|).
\end{align}

Now we consider the case that $n = 0$ (that is, $N_{e,R} = N^{\infty,R}(\tfrac{1}{2}(\underline{\ep}+\overline{\ep}))$). In this case, we may simplify \cref{eq:FL_4} to obtain the following bound 
\begin{align*}
  c \leq  e^{\beta(|\underline{\ep} - \ep_\mathrm{F}^\beta| - |\overline{\ep} - \ep_\mathrm{F}^\beta|)} \leq C
\end{align*}
for some $c,C > 0$. This immediately gives $\left||\underline{\ep} - \ep_\mathrm{F}^\beta| - |\overline{\ep} - \ep_\mathrm{F}^\beta|\right| \lesssim \beta^{-1}$. That is, $\ep_\mathrm{F}^\beta \to \tfrac{1}{2}(\underline{\ep} + \overline{\ep})$. 

On the other hand, we now suppose that $n > 0$ (that is, $N^{\infty,R}(\tfrac{1}{2}(\underline{\ep}+\overline{\ep})) < N_{e,R}$). We now replace $|\overline{\ep} - \ep_\mathrm{F}^\beta|$ with $|\overline{\ep} -\underline{\ep}| - |\underline{\ep} - \ep_\mathrm{F}^\beta|$ in \cref{eq:FL_4} to obtain the following bound
\begin{align}
    c \leq e^{\beta(|\underline{\ep} - \ep_\mathrm{F}^\beta| - |\overline{\ep} - \underline{\ep}|)} \leq C
\end{align}
for some $c,C>0$. This means that $\left||\underline{\ep} - \ep_\mathrm{F}^\beta| - |\overline{\ep} - \underline{\ep}|\right| \lesssim \beta^{-1}$ and thus $\ep_\mathrm{F}^\beta \to \overline{\ep}$.
\end{proof}

\section{Band Structure of the Reference Hamiltonian}
\label{sec:band_structure}

Recall, the unit cell $\Gamma \subset \Lambda^\mathrm{ref}$ is finite and satisfies $\Lambda^\mathrm{ref} = \bigcup_{\gamma\in\mathbb Z^d}\left(  \Gamma + \mathsf{A}\gamma \right)$ and $\Gamma + \mathsf{A}\gamma$ are pairwise disjoint for all $\gamma \in \mathbb Z^d$. 

For $\xi \in \mathbb R^d$ and $\psi \in \ell^2(\Lambda^\mathrm{ref}\times \{1,\dots,\numorbitals\})$, we define $(U\psi)_\xi \in \ell^2(\Lambda^\mathrm{ref}\times\{1,\dots,\numorbitals\})$ by
\begin{align}
\label{eq:U}
    (U\psi)_\xi(\ell;a) = \sum_{\gamma\in\mathbb Z^d} \psi(\ell + \mathsf{A}\gamma;a) e^{-i (\ell + \mathsf{A}\gamma) \cdot \xi}.
\end{align}
We let $\Gamma^\star \subset \mathbb R^d$ be a bounded connected domain containing the origin such that $\mathbb R^d$ is the disjoint union of $\Gamma^\star + 2\pi\mathsf{A}^{-\mathrm{T}}\eta$ for $\eta \in \mathbb Z^d$. In particular, for $\xi \in \mathbb R^d$, there exist unique $\eta \in \mathbb Z^d$ and $\xi_0 \in \Gamma^\star$ such that $\xi = \xi_0 + 2\pi \mathsf{A}^{-\mathrm{T}}\eta$ and so 
$e^{-i\mathsf{A}\gamma \cdot \xi} = e^{-i\mathsf{A}\gamma \cdot \xi_0} e^{-i\mathsf{A}\gamma \cdot 2\pi\mathsf{A}^{-\mathrm{T}}\eta} =e^{-i\mathsf{A}\gamma \cdot \xi_0}$ for any $\gamma \in \mathbb Z^d$.
We therefore restrict $\xi$ to $\Gamma^\star$ and define $U \colon \ell^2(\Lambda^\mathrm{ref}\times\{1,\dots,\numorbitals\}) \to L^2(\Gamma^\star; \ell^2(\Gamma\times\{1,\dots,\numorbitals\}))$ by \cref{eq:U}. It is not hard to see that this operator is unitary where $L^2(\Gamma^\star; \ell^2(\Gamma\times\{1,\dots,\numorbitals\}))$ is a Hilbert space with norm corresponding to the following inner product:
\begin{align*}
    \Braket{\Psi,\Phi}_{L^2(\Gamma^\star;\ell^2)} &=
    \frac{1}{|\Gamma^\star|} \int_{\Gamma^\star} \Braket{\Psi_\xi,\Phi_\xi}_{\ell^2}\mathrm{d}\xi = 
    \frac{1}{|\Gamma^\star|}\sum_{\ell \in \Gamma}\sum_{1\leq a \leq \numorbitals} \int_{\Gamma^\star}\Psi_\xi(\ell;a) \overline{\Phi_\xi(\ell;a)} \mathrm{d}\xi.
\end{align*}
Therefore, we can write 
\begin{align*}
    &(U\Ham^\mathrm{ref}\psi)_\xi(\ell;a) =
    \sum_{\gamma\in\mathbb Z^d} \sum_{\above{1\leq b\leq \numorbitals}{k \in \Lambda^\mathrm{ref}}}
    \big[\Ham^\mathrm{ref}\big]_{\ell+\mathsf{A}\gamma,k}^{ab} \psi(k;b)e^{-i(\ell + \mathsf{A}\gamma)\cdot\xi} \\
    &\quad= \sum_{\above{1\leq b\leq \numorbitals}{k \in \Gamma}}\sum_{\eta\in\mathbb Z^d} \left(\sum_{\gamma\in\mathbb Z^d}
    h^{ab}_{\ell + \mathsf{A}\gamma, k + \mathsf{A}\eta}(\ell - k + \mathsf{A}(\gamma - \eta))e^{-i(\ell - k + \mathsf{A}(\gamma - \eta)\cdot\xi} \right) \psi(k+\mathsf{A}\eta;b)e^{-i(k + \mathsf{A}\eta)\cdot\xi} \\
   &\quad= \sum_{\above{1\leq b\leq \numorbitals}{k \in \Gamma}} \left(\sum_{\gamma\in\mathbb Z^d}
    h^{ab}_{\ell k}(\ell - k + \mathsf{A}\gamma)e^{-i(\ell - k +  \mathsf{A}\gamma)\cdot\xi} \right) (U\psi)_\xi(k;b)
\end{align*}
and can conclude that,
\begin{align}
\label{eq:Ham_bloch_1}
    ( U \Ham^\mathrm{ref} \psi)_\xi = \Ham^\mathrm{ref}_\xi (U\psi)_\xi
\end{align}
where $\Ham^\mathrm{ref}_\xi \colon \ell^2(\Gamma\times\{1,\dots,\numorbitals\}) \to \ell^2(\Gamma\times\{1,\dots,\numorbitals\})$ is the matrix with entries
\begin{align}
\label{eq:Ham_bloch_2}
    \big[\Ham^\mathrm{ref}_\xi\big]_{\ell k}^{ab} = \sum_{\gamma\in\mathbb Z^d} h_{\ell k}^{ab}(\ell - k + \mathsf{A}\gamma) e^{-i (\ell - k + \mathsf{A}\gamma) \cdot \xi}.
\end{align}
Therefore, by \cref{eq:Ham_bloch_1}, \cref{eq:Ham_bloch_2} and the fact that $U$ is unitary, we have that $\sigma(\Ham^\mathrm{ref})$ is composed of a finite number of spectral bands: let $\lambda_n(\xi)$ be the eigenvalues of $\Ham_\xi^\mathrm{ref}$ for $\xi \in \Gamma^\star$ and $n = 1,\dots,\numorbitals\cdot\#\Gamma$, then 
\begin{align*}
    \sigma(\Ham^\mathrm{ref}) = \bigcup_{n}\bigcup_{\xi \in \Gamma^\star} \lambda_n(\xi).
\end{align*}
It is important to note that $\lambda_n\colon \Gamma^\star \to \mathbb R$ are continuous.

A similar calculation can be carried out for $R < \infty$ to conclude that 
\begin{align*}
    \sigma(\Ham^{\mathrm{ref},R}) = \bigcup_{n}\bigcup_{\xi \in \Gamma^\star_R} \lambda_n(\xi).
\end{align*}
where $\Gamma_R^\star \subset \Gamma^\star$ is a discrete set depending on the choice of $\mathsf{B}_R$ in the sequence of finite domain approximations. 
Importantly, this means that $\sigma(\Ham^{\mathrm{ref},R}) \subset \sigma(\Ham^\mathrm{ref})$ for all $R>0$.

\section{Proof of Lemma~\ref{lem:decomp_Ham_R}: Decomposition of the Hamiltonian for \texorpdfstring{$R<\infty$}{}}
\label{app:decomp_Ham}
Supposing $\|Du_R\|_{\ell^2_\ctGamma} \leq C$ for all $R$ and taking $\delta> 0$, we can find $\ell_1^R,\dots,\ell^R_{n_R} \in \Lambda_R$ such that
\begin{align}\label{eq:delta_bound}
    \sum_{\ell \in \Lambda_R \setminus \{\ell_1^R,\dots,\ell_{n_R}^R\}} |Du_R(\ell)|_\ctGamma^2 \leq \delta^2
\end{align}
and $n_R$ is uniformly bounded above by some $n$. For simplicity of notation, we relabel so that $|\ell^R_1| \leq |\ell^R_2| \leq \dots \leq |\ell^R_{n_R}|$, and define $\ell^R_{j} \coloneqq \ell^R_{n_R}$ for all $n_R < j \leq n$. Now, there exist $n_0$ and $R_0$ such that $(\ell^R_j)_R \subset B_{R_0}$ for all $1\leq j < n_0$ and $(\ell^R_j)_R \to \infty$ for all $n_0 \leq j \leq n$. 

We define $P^R_\mathrm{loc}(u_R)$ to be the finite rank operator given by removing all the ``bounded defect states'': for $R_1> 0$, we define
\begin{align*}
   P^{R}_\mathrm{loc}(u_R)_{\ell k}^{ab} \coloneqq
   \begin{cases}
   \left[ \widetilde{\Ham}^R(u_R) - \widetilde{\Ham}^{\mathrm{ref},R} \right]_{\ell k}^{ab}  & \text{if }
   %
   \big(\ell \in \Lambda_R \cap B_{R_0} \textrm{ or } k \in \Lambda_R \cap B_{R_0} \big) 
   \textrm{ and } |\ell - k| \leq R_1 
   \\
   0    &\text{otherwise.}
   \end{cases}
\end{align*}
%
We can see that $P^{R}_\mathrm{loc}(u_R)_{\ell k}^{ab}$ is only non-zero for $(\ell, k) \in (\Lambda \cap B_{R_\delta})^2$ for some $R_\delta$ depending on $R_0$ and $R_1$ but independent of $R$. By Lemma~\ref{ham_converge_R}, we can see directly from the definition of $P^{R}_\mathrm{loc}(u_R)$ that, if $u_R \rightharpoonup u$, then $P^{R}_\mathrm{loc}(u_R)\to P^{\infty}_\mathrm{loc}(u)$ where 
$P^{\infty}_\mathrm{loc}(u)_{\ell k}^{ab} \coloneqq  \widetilde{\Ham}(u)_{\ell k}^{ab} - [\widetilde{\Ham}^{\mathrm{ref}}]_{\ell k}^{ab}$ for $|\ell - k| \leq R_1$ and either $\ell \in \Lambda_R \cap B_{R_0}$ or $k \in \Lambda_R \cap B_{R_0}$.

On the other hand, we define $P^R_\infty(u_R)$ be the finite rank operator given by removing all the ``defect states that get sent to infinity'': for $R_1> 0$, we define
\begin{align*}
   P^{R}_\infty(u_R)_{\ell k}^{ab} \coloneqq
   \begin{cases}
   \left[ \widetilde{\Ham}^R(u_R) - \widetilde{\Ham}^{\mathrm{ref},R} \right]_{\ell k}^{ab}  & \text{if }
   \big(\ell \in \{\ell^R_j\}_{j = n_0}^n \textrm{ or } k \in \{\ell^R_j\}_{j = n_0}^n \big) \textrm{ and } r^{\#}_{\ell k}(x) \leq R_1 
   \\
   0    &\text{otherwise.}
   \end{cases}
\end{align*}

We define the perturbation 
$P_\delta^R(u_R) \coloneqq \widetilde{\Ham}^R(u_R) - \widetilde{\Ham}^{\mathrm{ref},R} - P^{R}_\mathrm{loc}(u_R) - P^{R}_\infty(u_R)$
and now show that this can be bounded appropriately in the Frobenius norm.
If $\delta > 0$ sufficiently small and $\ell \not \in \{ \ell^R_j \}_{j=1}^n$, we can apply \cref{eq:delta_bound} to conclude that $|D_{k-\ell} u_R(\ell)| \leq \mathfrak{m}|\ell - k|$. Therefore, we may apply Lemma~\ref{ham_converge_R} in the following to obtain: for fixed $1\leq a,b \leq \numorbitals$, 
\begin{align}\label{eq:Pdelta_1}\begin{split}
   \sum_{\ell, k \in \Lambda_R \setminus \{\ell^R_j\}_j}
   |P_\delta^{R}(u_R)_{\ell k}^{ab}|^2 
    &\leq \sum_{\ell, k \in \Lambda_R \setminus \{\ell^R_j\}_j}
    \left|\left[\Ham^R(u_R) - \Ham^R(x)\right]_{\ell k}^{ab}\right|^2 \\
    &\leq C \sum_{\ell, k \in \Lambda_R \setminus \{\ell^R_j\}_j}  e^{-2c\gamma_0 r^\#_{\ell k}(x)} |D_{k - \ell} u_R(\ell)|^2 \leq C\delta^2
\end{split}\end{align}
where $c\coloneqq \frac{\mathfrak{m}\sqrt{3}}{4}$ is the constant from Lemma~\ref{ham_converge_R}. In the final line, we use the same idea as in \cref{eq:Ham_converge_trick} to replace the torus distance, $r_{\ell k}^\#(x)$, with $|\ell - k|$. Finally, using the off-diagonal decay of the Hamiltonian operators, we obtain: for fixed $1\leq a,b \leq \numorbitals$,
\begin{align}\label{eq:Pdelta_2}
   \sum_{\ell \in \{\ell^R_j\}_j} 
   \sum_{
   \above
        {k \in \Lambda_R \setminus \{\ell^R_j\}_j}
        {r_{\ell k}^\#(x) > R_1}
   }
   |P_\delta^{R}(u_R)_{\ell k}^{ab}|^2 
    &\leq 2h_0\sum_{\ell \in \{\ell^R_j\}_j} 
   \sum_{
   \above
        {k \in \Lambda_R \setminus \{\ell^R_j\}_j}
        {r_{\ell k}^\#(x) > R_1}
   }
    e^{-2\gamma_0\mathfrak{m}|\ell - k|} 
    \leq C e^{-\gamma_0\mathfrak{m}R_1}.
\end{align}
Therefore, after combining by \cref{eq:Pdelta_1} and \cref{eq:Pdelta_2}, and choosing $R_1$ sufficiently large, we obtain $\|P^R_\delta(u_R)\|_\mathrm{F} \leq C\delta$.

\section{Zero Temperature Limit of \texorpdfstring{$\mathfrak{g}^{\beta}(z;\tau)$}{}}
Here we prove the convergence of the functions $\mathfrak{g}^\beta(z;\mu)$ as $\beta \to \infty$.  

First, we define the analytic continuation of $\mathfrak{g}^\beta$ as in \cite{ChenOrtnerThomas2019}. We let $z \mapsto \log z$ be the principal complex logarithm (i.e.\ with $\Im{\log z} \in (-\pi, \pi]$). For $z \in \mathbb C \setminus \left\{\mu + ir \colon r \in \mathbb R, |r| \geq \pi\beta^{-1}\right\}$ with $\beta \Im{z} \in ((2k-1)\pi, (2k+1)\pi]$ for some $k \in \mathbb Z$, we define
\begin{align*}
	\mathfrak{g}^\beta(z;\mu) \coloneqq 
	\begin{cases}
	\frac{2}{\beta}\log(1-f_\beta(z-\mu)) & \text{if } \Re{z}\geq\mu \\
	\frac{2}{\beta}\left[\log(1-f_\beta(z-\mu)) + 2k \pi i\right] &\text{if } \Re{z} \leq\mu.
	\end{cases}
\end{align*}
It is shown in \cite{ChenOrtnerThomas2019} that this mapping defines an analytic function on the domain of definition. 

\label{proof_gbeta_convergence}
\begin{proof}[Proof of Lemma~\ref{lem:g_convergence_2}]
	\textit{Step 0: preliminaries.} To simplify notation and without loss of generality, we may suppose that $\mu = 0$. For $z \in \mathbb C$, we have
	\begin{align}\label{eq:1-f}
		1 - f_\beta(z) = \frac{e^{\beta z}}{1 + e^{\beta z}} = \frac{e^{\beta z} \left( 1+e^{\beta\overline{z}} \right)}{|1+e^{\beta z}|^{2}} = \frac{e^{\beta\Re{z}}}{|1+e^{\beta z}|^2}\left( e^{i\beta \Im{z}} + e^{\beta \text{Re}(z)} \right).
	\end{align}
	Therefore, there exists some real $\alpha(z)>0$ such that
	\begin{gather*}
		\Re{1 - f_\beta(z)} = \alpha(z) \left( \cos(\beta\Im{z}) + e^{\beta\Re{z}} \right) \\ \Im{1 - f_\beta(z)} = \alpha(z) \sin(\beta \Im{z}).
	\end{gather*}
	This means that, if $\cos(\beta\Im{z}) + e^{\beta\Re{z}} \not= 0$, we have
	\begin{align}\label{eq:arg}
		\Im{\log\left[1-f_\beta(z)\right]} = \arg\left(1-f_\beta(z)\right) = \tan^{-1}\left(\frac{\sin(\beta \Im{z})}{\cos(\beta \Im{z}) + e^{\beta \Re{z}}}\right).
	\end{align}
	We also note that the following inequality holds: for all $z \in \mathbb C$,
	\begin{align}\label{eq:1pluse}
		\left(1 - e^{\beta\Re{z}}\right) ^2\leq|1 +e^{\beta z}|^2 \leq \left(1 + e^{\beta \Re{z}}\right)^2.
	\end{align}
	We shall estimate both the real and imaginary parts for $z \in \mathscr C^-$ and $z \in \mathscr C^+$ separately.
	
	\textit{Step 1: $z \in \mathscr C^+$, real part.} We use the fact that $1 - f_\beta(z) = (1 + e^{-\beta z})^{-1}$, $\beta\mathrm{Re}(z) > 0$ and by using \cref{eq:1pluse}, we have $|1 +e^{-\beta z}|^{-1} \leq (1 - e^{-\beta\mathrm{Re}(z)})^{-1}$ and so
	\begin{align*}
		\Re{\mathfrak{g}^\beta(z;\mu) - \mathfrak{g}(z;\mu)}	
			&= 2\beta^{-1}\log\left|1 - f_\beta(z)\right| 
			\leq 2\beta^{-1} \log\left((1 - e^{-\beta \Re{z}})^{-1} \right) \\
			&\leq 2\beta^{-1} \left((1 - e^{-\beta \Re{z}})^{-1} - 1 \right)
			= 2\beta^{-1}  
			(1 - e^{-\beta\mathrm{Re}(z)})^{-1}
			e^{-\beta\mathrm{Re}(z)}.
	\end{align*}
	In the final line we use the fact that $\log(x) \leq x - 1$. Since $\Re{z} \geq \tfrac{1}{2}\ctDist$, we have 
	\begin{align}\label{eq:conclusion_1}
		\big|\mathrm{Re}\big(\mathfrak{g}^\beta(z;\mu) - \mathfrak{g}(z;\mu)\big)\big| \leq 2C_{\beta\ctDist}^{(1)} \beta^{-1}e^{-\beta|\Re{z}|} \quad \text{where} \quad C_{\beta\ctDist}^{(1)} \coloneqq (1-e^{-\frac{1}{2}\beta\ctDist})^{-1}.
	\end{align}
	%
	
	\textit{Step 2: $z \in \mathscr C^+$, imaginary part.} Again, since $\Re{z}$ is positive and bounded below by $\tfrac{1}{2}\ctDist$, we have
	\begin{align*}
		\cos(\beta\Im{z}) + e^{\beta\Re{z}} 
		\geq e^{\beta\mathrm{Re}(z)} - 1 
		= (1 - e^{-\beta\mathrm{Re}(z)})e^{\beta\mathrm{Re}(z)}
		\geq \big(C_{\beta\ctDist}^{(1)}\big)^{-1} e^{\beta\Re{z}}.
	\end{align*}
	Therefore, applying \cref{eq:arg}, together with $|\tan^{-1}(x)| = \tan^{-1}|x|$ and the fact that $\tan^{-1}$ is increasing, we have
	\begin{align}\label{eq:conclusion_2}
		\left|\Im{\mathfrak{g}^\beta(z;\mu) - \mathfrak{g}(z;\mu)}\right| &\leq
		 2\beta^{-1}\tan^{-1}\left(C_{\beta\ctDist}^{(1)} e^{-\beta \Re{z}}\right) \leq 2C_{\beta\ctDist}^{(1)} \beta^{-1} e^{-\beta |\Re{z}|}.
	\end{align}
	%
	
	\textit{Step 3: $z \in \mathscr C^-$, real part.} We can rewrite the real component as follows
	\begin{align}
		\label{eq:realpart_}
		\begin{split}
			\Re{ \mathfrak{g}^\beta(z;\mu) - \mathfrak{g}(z;\mu)} 
			&= 2\beta^{-1}\Big(\log\big|e^{\beta\Re{z}}\big| 
							- \log\big|1+e^{\beta z}\big|\Big) - 2\Re{z} \\
			&= - 2 \beta^{-1} \log\big|1+e^{\beta z}\big|.
		\end{split}
	\end{align}
	After applying \cref{eq:1pluse}, using $1 - \frac{1}{x} \leq \log(x) \leq x - 1$ and noting that $\Re{z} < 0$ and is bounded away from zero, we have 
	\begin{align}\label{eq:logabs}
	\begin{split}
		\log|1+e^{\beta z}| &\leq \log(1+e^{\beta \Re{z}}) \leq e^{-\beta |\Re{z}|} \quad \text{and} \\
		\log|1+e^{\beta z}| &\geq \log(1-e^{-\beta|\Re{z}|})
			\geq -\frac{e^{-\beta|\Re{z}|}}{1-e^{-\beta|\Re{z}|}}
			\geq - C_{\beta\ctDist}^{(1)}e^{-\beta|\mathrm{Re}(z)|}. 		
	\end{split}
	\end{align}
	Therefore, combining \cref{eq:realpart_} with \cref{eq:logabs}, we can conclude that
	\begin{align}\label{eq:conclusion_3}
	\left|\Re{ \mathfrak{g}^\beta(z;\mu) - \mathfrak{g}(z;\mu)}\right| \leq 2 C_{\beta\ctDist}^{(1)} \beta^{-1} e^{-\beta|\Re{z}|}.
	\end{align}
	
	\textit{Step 4: $z \in \mathscr C^-$, imaginary part.} Suppose that $\omega \in \mathbb C$ with $\beta \Im{\omega} \in (-\pi,\pi]$ and $\beta (z - \omega) = 2k \pi i$ for some $k \in \mathbb Z$. Now, since $e^{\beta z} = e^{\beta\omega}$, we have
	\begin{align*}
		\mathfrak{g}^\beta(z;\mu) - \mathfrak{g}(z;\mu) 
		&= 2\beta^{-1} \big( \log(1-f_\beta(z - \mu)) + 2k \pi i - \beta z \big) \\
		&= 2\beta^{-1} \big( \log(1-f_\beta(\omega - \mu)) - \beta \omega  \big) \\
		&= \mathfrak{g}^\beta(\omega;\mu) - \mathfrak{g}(\omega;\mu).
	\end{align*}
	Thus, it is sufficient to consider $z$ with $\beta \Im{z} \in (-\pi,\pi]$ in the following.
	
	\textit{Step 4: Case 1. }Let us fix $\delta > 0$ such that $e^{\beta\Re{z}} < \delta < 1$ and consider the case that 
	\begin{align}
	\label{eq:etaassumption}
	\left| \cos(\beta \Im{z}) \right| \leq \delta.
	\end{align}
 	Without loss of generality, we may suppose that $\beta \Im{z}>0$ (the other case can be treated in the exact same way). Since, $\max\{1 - \frac{2}{\pi}x, \frac{2}{\pi} x - 1 \}\leq |\cos(x)|$ on $[0,\pi]$, we can apply \cref{eq:etaassumption}, to obtain
 	\begin{equation}
 		\frac{\pi}{2}(1 - \delta) \leq \beta \Im{z} \leq \frac{\pi}{2}(1 + \delta).
 	\end{equation}
 	After noting that $\frac{\pi}{2} = \tan^{-1}(x) + \frac{1}{\tan^{-1}(x)}$ and $|\tan^{-1}(x)| \leq |x|$, we apply \cref{eq:arg} to obtain
	\begin{align}\label{eq:imaginary_1}\begin{split}
		\left| \arg(1-f_\beta(z)) -\tfrac{\pi}{2}\right| 
		&= \left| \tan^{-1}\left(\frac{\sin(\beta\Im{z})}{\cos(\beta\Im{z}) + e^{\beta\Re{z}}}\right) - \frac{\pi}{2} \right| \\
		&= \left| \tan^{-1}\left(\frac{\cos(\beta\Im{z}) + e^{\beta\Re{z}}}{\sin(\beta\Im{z})}\right)\right| \\
		&\leq \left|\frac{\cos(\beta\Im{z}) + e^{\beta\Re{z}}}{\sin(\beta\Im{z})}\right| 
		\leq \frac{\delta + e^{-\beta|\Re{z}|}}{\sin(\tfrac{\pi}{2}(1 + \delta))} 
		\leq \frac{2\delta}{\sin(\tfrac{\pi}{2}(1+\delta))}
	\end{split}\end{align}
	In the exact same way, if $|\cos(\beta\Im{z}) + e^{-\beta|\Re{z}|}| \leq \delta$, then we obtain
	\begin{align}\label{eq:imaginary_2}\begin{split}
			|\beta\Im{z} - \tfrac{\pi}{2}| &\leq \tfrac{\pi}{2}( \delta + e^{-\beta|\Re{z}|}) \leq \pi \delta \quad \text{and} \\
		\left| \arg(1-f_\beta(z)) -\tfrac{\pi}{2}\right| &\leq \frac{\delta}{\sin(\tfrac{\pi}{2}(1 + \delta + e^{-\beta|\Re{z}|}))} \leq  \frac{\delta}{\sin(\tfrac{\pi}{2}(1 + 2\delta))}
		\end{split}\end{align}

	\textit{Step 4: Case 2. }On the other hand, suppose that 
	\[
	\left| \cos(\beta \Im{z}) \right| \geq \delta \quad \text{and} \quad \left| \cos(\beta \Im{z}) + e^{-\beta|\Re{z}|}\right| \geq \delta. 
	\]
	In particular, $\beta\Im{z} \not \in\{\pm\tfrac{\pi}{2}\}$ and $\cos(\beta\Im{z}) + e^{-\beta|\Re{z}|} \not =0$. Therefore, after noting that 
	$\frac{\mathrm{d}}{\mathrm{d}x}\tan^{-1}(x) = \frac{1}{1 + x^2} \leq 1$,
	we can conclude that
	\begin{align}
	\label{eq:argminusimbeta}
	\begin{split}
	&|\arg(1-f_\beta(z)) - \beta\Im{z}| \\
	&\quad=
	\left|\tan^{-1}\left(\frac{\sin(\beta\Im{z})}{\cos(\beta\Im{z}) + e^{-\beta|\Re{z}|}}\right) - \tan^{-1}(\tan\left({\beta \Im{z}}\right))\right| \\ 
	&\quad\leq \left|  \frac{\sin(\beta\Im{z})}{\cos(\beta\Im{z}) + e^{-\beta|\Re{z}|}} - \tan(\beta\Im{z}) \right| \\
	&\quad= \left|  \frac{e^{-\beta|\Re{z}|} \tan(\beta\Im{z})}{\cos(\beta\Im{z}) + e^{-\beta|\Re{z}|}} \right| \leq \delta^{-2} e^{-\beta |\Re{z}|}.
	\end{split}
	\end{align}
	Thus, after choosing $\delta \coloneqq e^{-\frac{1}{3}\beta|\Re{z}|}$, we can conclude that 
	\begin{align}\label{eq:conclusion_4}
	\begin{split}
		&\left|\Im{\mathfrak{g}^\beta(z; \mu) - \mathfrak{g}(z;\mu)}\right| \leq 2 C_{\beta\ctDist}^{(2)} \beta^{-1} e^{-\frac{1}{3} \beta |\Re{z}| }	\quad \text{where} \\
		&\qquad\qquad C_{\beta\ctDist}^{(2)} \coloneqq \pi + 2 \sin\big(\tfrac{\pi}{2}(1 + 2e^{-\frac{1}{6}\ctDist\beta})\big)^{-1}  
	\end{split}
	\end{align}
	Since $C^{(1)}_{\beta\ctDist}$ and $C^{(2)}_{\beta\ctDist}$ are decreasing in $\beta$, we may combine \cref{eq:conclusion_1}, \cref{eq:conclusion_2}, \cref{eq:conclusion_3} and \cref{eq:conclusion_4} to conclude: for all $\beta_0>0$
	\begin{equation}
		|\mathfrak{g}^\beta(z;\mu) - \mathfrak{g}(z;\mu)| \leq C_{\beta_0\ctDist} \beta^{-1} e^{-\frac{1}{3}\beta|\Re{z}|} \quad \text{where} \quad C_{\beta_0\ctDist} \coloneqq 2\max\left\{ C^{(1)}_{\beta_0\ctDist}, C^{(2)}_{\beta_0\ctDist} \right\}
	\end{equation}
	for all $z \in \mathscr C^- \cup \mathscr C^+$ and $\beta \geq \beta_0$.
\end{proof} 

\section{Thermodynamic Limit of the Site Energies and Derivatives}
\label{app:thermodynamic_limit_site_energies}

\begin{proof}[Proof of Lemma~\ref{eq:thermodynamic_site_energies}]
This proof follows the ideas of \cite[Proof of Theorem~10]{ChenOrtner16}.
	
We first note that for $\ell, k \in \Lambda_R$, we have
\begin{align*}
    &[\Ham^R(u) - \Ham(u)]^{ab}_{\ell k}
    = \sum_{\above{\alpha \in \mathbb Z^d}{\alpha \not = 0}} h^{ab}_{\ell k}(\bm{r}_{\ell k}(u) + \mathsf{M}_R\alpha)\\
    &\qquad\leq \ctTBprefactor \sum_{{\alpha \not= 0}} e^{-\ctTBexponent|\bm{r}_{\ell k}(u) + \mathsf{M}_R\alpha|}
    \leq C \exp\left(-\frac{1}{2}\gamma_0 \min_{\alpha \not = 0}\left|\bm{r}_{\ell k}(u) + \mathsf{M}_R\alpha\right| \right). 
\end{align*}
%

Similarly, for $1 \leq j \leq \ctHamregularity$ and $\bm{m} = (m_1,\dots,m_j) \in \Lambda^j$, we have
\begin{align*}
    &\left|\frac{\partial^j [\Ham^R(u)-\Ham(u)]^{ab}_{\ell k} }{\partial[u(m_1)]_{i_1}\dots\partial[u(m_j)]_{i_j}}\right| \\
    &\qquad \leq C e^{-\frac{1}{2}\gamma_0\sum_{l=1}^j
    \left(\min_{\alpha \not=0}|\bm{r}_{\ell m_l}(u) + \mathsf{M}_R\alpha| 
    + \min_{\alpha \not=0}|\bm{r}_{k m_l}(u) + \mathsf{M}_R\alpha|\right)}.
\end{align*}

We define the extension of $\Ham^R(u)$ to $\Lambda \times \Lambda$ by
\begin{align*}
	\left[\widetilde{\Ham}^R(u)\right]_{\ell k} \coloneqq
	\begin{cases}
	\left[{\Ham}^R(u)\right]_{\ell k} &\text{if } \ell, k \in \Lambda_R\\ 
	0 &\text{otherwise}
	\end{cases}
\end{align*}
To simplify notation, we write $\widetilde{\mathscr R}_z^R \coloneqq (\widetilde{\Ham}^R(u) - z)^{-1}$. Noting that $|[\widetilde{\Ham}^R(u)]^{ab}_{\ell k}| \leq h_0 e^{-\ctTBexponent r_{\ell k}^\#(u)}$, we have $|[\widetilde{\mathscr R}_z^R]^{ab}_{\ell k}| \leq C e^{-\ctCT r_{\ell k}^\#(u)}$ as in Remark~\ref{rem:CT_R}.

We may fix a simple closed contour $\mathscr C$ as in \cref{eq:distance_contour_0T} and write:
\begin{equation}\label{eq:RtoinftyResolvents}
\begin{split}
    &\mathcal G^\beta_{\ell}(Du(\ell)) - \mathcal G^{\beta,R}_{\ell}(Du(\ell)) 
    = -\frac{1}{2\pi i}\sum_a\oint_\mathscr{C} \mathfrak{g}^\beta(z;\mu) 
    \left[\mathscr R_z(u) - \widetilde{\mathscr R}_z^R(u)\right]_{\ell\ell}^{aa} \dxx{z} \\
	&\qquad\leq C \sum_{bc}\sum_{\ell_1, \ell_2 \in \Lambda}
	\Big[\mathscr R_z(u)\Big]_{\ell\ell_1}^{ab}
	\left[\widetilde{\Ham}^R(u)- {\Ham}(u)\right]^{bc}_{\ell_1\ell_2}
	\left[\widetilde{\mathscr R}_z^R(u)\right]_{\ell_2\ell}^{ca} \\
	&\qquad\leq C \sum_{\ell_1, \ell_2 \in \Lambda_R} e^{-\ctCT \left(r_{\ell \ell_1}(u) + r_{\ell\ell_2}^\#(u)\right)}
	e^{- \frac{1}{2}\gamma_0 \min_{\alpha \not = 0} |\bm{r}_{\ell_1\ell_2}(u) + \mathsf{M}_R\alpha|} \\
	    &\qquad\qquad + C\sum_{\above{\ell_1, \ell_2 \in \Lambda}{\ell_1 \not\in\Lambda_R \,\mathrm{ or }\, \ell_2 \not\in\Lambda_R}} e^{-\ctCT r_{\ell\ell_1}(u)} e^{-\ctTBexponent r_{\ell_1\ell_2}(u)} e^{-\ctCT r_{\ell\ell_2}^\#(u)}
	\\
	&\qquad\leq C e^{-\frac{1}{2}\mathfrak{m}\eta \mathrm{dist}(\ell, \mathbb{R}^d \setminus \Omega_R)} 
\end{split}
\end{equation}
where $\eta \coloneqq \frac{1}{2}\mathfrak{m}\min\{\ctCT,\frac{1}{2}\ctTBexponent\}$. In the last line we used the fact that, for $\ell_1,\ell_2 \in \Lambda_R$, we have 
$r_{\ell \ell_1}(u) + r_{\ell\ell_2}^\#(u) + \min_{\alpha \not = 0} |\bm{r}_{\ell_1\ell_2}(u) + \mathsf{M}_R\alpha| \geq \frac{1}{2}\mathfrak{m}\mathrm{dist}(\ell,\mathbb{R}^d\setminus \Omega_R)$ and, for $\ell_1 \in \Lambda \setminus \Lambda_R$ or $\ell_2 \in \Lambda \setminus \Lambda_R$, we have $r_{\ell \ell_1}(u) + r_{\ell_1\ell_2}(u) \geq \mathfrak{m}\mathrm{dist}(\ell,\mathbb{R}^d\setminus \Omega_R)$.

The case where $j\geq 1$ can be treated similarly. For $j=1$, we have
\begin{align} \label{eq:Rsiteenergies_1}
	\frac{\partial \mathcal G_{\ell}^{\beta,R}(Du(\ell))}{\partial [u(m)]_{i}} 
	    -  \frac{\partial \mathcal G^{\beta}_{\ell}(Du(\ell))}{\partial [u(m)]_{i}} &=
    \frac{1}{2\pi i}\sum_a\oint_\mathscr{C} \mathfrak{g}^\beta(z;\mu) \frac{\partial}{\partial u(m)}\left[\mathscr R_z(u) - \widetilde{\mathscr R}_z^R(u)\right]_{\ell\ell}^{aa} \dxx{z}.
\end{align} 
Now, after dropping the $(u)$ in the notation for the resolvent operators, we have
\begin{align*}
	&\frac{\partial}{\partial u(m)}\left[\mathscr R_z(u) - \widetilde{\mathscr R}_z^R(u)\right]_{\ell\ell}^{aa}\\
	&= \Bigg[ \mathscr R_z{\Ham}_{,m}\mathscr R_z \left[ \widetilde{\Ham}^R - {\Ham}\right] \widetilde{\mathscr R}_z^R - \mathscr R_z \left[ \widetilde{\Ham}^R_{,m} - {\Ham}_{,m}\right] \widetilde{\mathscr R}_z^R + \mathscr R_z \left[ \widetilde{\Ham}^R - {\Ham}\right] \widetilde{\mathscr R}_z^R \widetilde{\Ham}^R_{,m} \widetilde{\mathscr R}_z^R  \Bigg]_{\ell\ell}^{aa}. 
\end{align*}
We show that each of these terms can be bounded as required: 
\begin{align}
	&\left[ \mathscr R_z{\Ham}_{,m}\mathscr R_z \left( \widetilde{\Ham}^R - {\Ham}\right) \widetilde{\mathscr R}_z^R \right]_{\ell\ell}^{aa} \nonumber\\
	&\leq C \sum_{\above{\ell_1,\ell_2 \in \Lambda}{\ell_3,\ell_4 \in \Lambda_R}}  
	e^{-\ctCT \left( r_{\ell \ell_1}(u) + r_{\ell_2 \ell_3}(u) + r^\#_{\ell_4 \ell}(u)\right)} 
	e^{-\gamma_0\left( r_{\ell_1 m}(u) + r_{m\ell_2}(u) \right)} 
	e^{ -\tfrac{1}{2}\gamma_0 \min_{\alpha\not=0} |\bm{r}_{\ell_3\ell_4}(u) + \mathsf{M}_R\alpha|} \nonumber\\
	&\qquad+ C\sum_{\above{\ell_1,\ell_2,\ell_3, \ell_4 \in \Lambda}{\ell_3 \not\in \Lambda_R \,\text{ or }\, \ell_4 \not\in \Lambda_R}}
	e^{-\ctCT \left( r_{\ell \ell_1}(u) + r_{\ell_2 \ell_3}(u) + r^\#_{\ell_4 \ell}(u)\right)} 
	e^{-\gamma_0\left( r_{\ell_1 m}(u) + r_{m\ell_2}(u) + r_{\ell_3\ell_4}(u) \right)} \nonumber\\
	&\leq C e^{-\frac{1}{2}\mathfrak{m}\eta 
	\mathrm{dist}(\ell,\mathbb R^d \setminus \Omega_R)}
	e^{-\frac{1}{2}\eta r_{\ell m}(u)} \nonumber\\
	&\qquad \times \bigg( \sum_{\ell_3,\ell_4 \in \Lambda_R} e^{-\frac{1}{2}\eta \left( 
	r_{m\ell_3}(u) + r_{\ell_3\ell_4}^\#(u) + r_{\ell_4\ell}^\#(u)\right)} + 
	\sum_{\above{\ell_3, \ell_4 \in \Lambda}{\ell_3 \not\in \Lambda_R \,\text{ or }\, \ell_4 \not\in \Lambda_R}} e^{-\frac{1}{2}\eta \left( 
	r_{m\ell_3}(u) + r_{\ell_3\ell_4}(u) + r_{\ell\ell_4}^\#(u)\right)} \bigg) \nonumber\\
	&\leq C e^{-\frac{1}{2}\mathfrak{m}\eta \mathrm{dist}(\ell,\mathbb R^d \setminus \Omega_R)} e^{-\frac{1}{2} \eta r_{\ell m}(u)}.\label{eq:Rsiteenergies_2}
\end{align}
In the second to last line we used the fact that, for $\ell_3,\ell_4 \in \Lambda_R$, we have 
$r_{m \ell_3}(u) + r_{m\ell_4}^\#(u) + \min_{\alpha \not = 0} |\bm{r}_{\ell_3\ell_4}(u) + \mathsf{M}_R\alpha| \geq \frac{1}{2}\mathfrak{m}\mathrm{dist}(\ell,\mathbb{R}^d\setminus \Omega_R)$ and, for $\ell_3 \in \Lambda \setminus \Lambda_R$ or $\ell_4 \in \Lambda \setminus \Lambda_R$, we have $r_{\ell \ell_1}(u) + r_{\ell_1m}(u) + r_{m\ell_2}(u) + r_{\ell_2\ell_3}(u) + r_{\ell_3\ell_4} \geq \mathfrak{m}\mathrm{dist}(\ell,\mathbb{R}^d\setminus \Omega_R)$.
	%
%
%
%
%

Similarly, we have
\begin{align}
	&\left[ \mathscr R_z \left( \widetilde{\Ham}^R_{,m} - {\Ham}_{,m}\right) \widetilde{\mathscr R}_z^R \right]_{\ell\ell}^{aa} \nonumber\\
	&\qquad\leq C\sum_{\ell_1,\ell_2\in \Lambda_R}
	e^{-\ctCT\left(r_{\ell\ell_1}(u) + r_{\ell_2\ell}^\#(u)\right) }
	e^{-\frac{1}{2}\ctTBexponent \left( \min_{\alpha\not=0}|\bm{r}_{\ell_1m}(u) + \mathsf{M}_R\alpha| +
	\min_{\alpha\not=0}|\bm{r}_{\ell_2m}(u) + \mathsf{M}_R\alpha|\right)} \nonumber\\
%
	&\qquad\qquad + 
	C\sum_{\above{\ell_1,\ell_2\in \Lambda}
	{\ell_1\not\in\Lambda_R \,\textrm{or}\, \ell_2\not\in\Lambda_R}} 
	e^{-\ctCT\left(r_{\ell\ell_1}(u) + r_{\ell_2\ell}(u)\right) }
	e^{-\ctTBexponent \left( r_{\ell_1m}(u) + r_{\ell_2m}(u) \right)} \nonumber\\
    &\qquad\leq C e^{-\frac{1}{2}\mathfrak{m}\eta\mathrm{dist}(\ell,\mathbb R^d \setminus \Omega_R)}e^{-\frac{1}{2}\eta r_{\ell m}^\#(u)}. \label{eq:Rsiteenergies_3}
\end{align}
	By using \cref{eq:Rsiteenergies_1}, \cref{eq:Rsiteenergies_2} and \cref{eq:Rsiteenergies_3}, together with the boundedness of $\mathfrak{g}^\beta(\,\cdot\,;\mu)$ along the contour $\mathscr C$, we can therefore conclude. 
	
	For $j\geq 2$, similar arguments can be made but notation becomes tedious. Since no new ideas are used, we omit the proof. 
\end{proof}

\section{Proof of Theorem~\ref{thm:0T_thermodynamic_limit}: Stability }
\label{app:stab}

\begin{proof}[Proof of \cref{eq:T1T2_general_bound}]
Since $v_R^\mathrm{co}$ is an admissible displacement on $\Lambda$, we have
\begin{align}
        &\Braket{\left(\delta^2\mathcal G(T_R\overline{u}) - \delta^2 G^R(T_R\overline{u})\right)v_R^\mathrm{co},w} \nonumber\\
        &= \sum_{\ell \in \Lambda \cap B_{S(R)}} \sum_{\rho_1,\rho_2 \in \Lambda_R - \ell} 
            D_{\rho_1}v_R^\mathrm{co}(\ell)^T\big( \mathcal G_{\ell, \rho_1\rho_2}(DT_R \overline{u}(\ell)) - \mathcal G^R_{\ell, \rho_1\rho_2}(DT_R \overline{u}(\ell)) \big) D_{\rho_2}w(\ell) \nonumber\\
        &\qquad + \sum_{\ell \in \Lambda \cap B_{S(R)}} \sum_{\above{\rho_1,\rho_2 \in \Lambda - \ell}{\ell + \rho_1 \not\in \Lambda_R \textrm{ or }\ell + \rho_2 \not\in \Lambda_R}} 
            D_{\rho_1}v_R^\mathrm{co}(\ell)^T \mathcal G_{\ell, \rho_1\rho_2}(DT_R \overline{u}(\ell)) D_{\rho_2}w(\ell) \nonumber\\
        &\qquad + \sum_{\ell \in \Lambda_R \setminus B_{S(R)} } \sum_{\rho_1,\rho_2 \in \Lambda\cap B_{S(R)} - \ell } 
            D_{\rho_1}v_R^\mathrm{co}(\ell)^T\big( \mathcal G_{\ell, \rho_1\rho_2}(DT_R \overline{u}(\ell)) - \mathcal G^R_{\ell, \rho_1\rho_2}(DT_R \overline{u}(\ell)) \big) D_{\rho_2}w(\ell) \nonumber \\
        &\qquad + \sum_{\ell \in \Lambda \setminus \Lambda_R} \sum_{\rho_1,\rho_2 \in \Lambda - \ell}
            D_{\rho_1}v_R^\mathrm{co}(\ell)^T \mathcal G_{\ell, \rho_1\rho_2}(DT_R \overline{u}(\ell)) D_{\rho_2}w(\ell) \nonumber\\
        &\eqqcolon \mathrm{T}_{1} + \mathrm{T}_{2} + \mathrm{T}_{3} + \mathrm{T}_{4}.
         \label{eq:stability_core}
\end{align}
We shall consider each term separately. Firstly, we note that $\mathrm{T}_{1}$ and $\mathrm{T}_{3}$ in \cref{eq:stability_core} can be bounded using the convergence of the site energies as $R\to \infty$ (Lemma~\ref{eq:thermodynamic_site_energies}):
\begin{align*}
    |\mathrm{T}_{1}| \leq 
    C\sum_{\above{\ell \in \Lambda \cap B_{S(R)}}{\rho_1,\rho_2 \in \Lambda_R - \ell}} e^{-\eta(R + |\rho_1| + |\rho_2|)} |D_{\rho_1}v^\mathrm{co}_R(\ell)||D_{\rho_2}w(\ell)|
        \leq Ce^{-\eta R} \|Dv^\mathrm{co}_R\|_{\ell^2_\ctGamma}
        \|Dw\|_{\ell^2_\ctGamma}
\end{align*}
where $\eta \coloneqq \frac{1}{2}\mathfrak{m}\min\{\ctCT, \frac{1}{2}\ctTBexponent\}$. Similarly, we have $|\mathrm{T}_{3}| \lesssim e^{-\gamma R}\|Dv_R^\mathrm{co}\|_{\ell^2_\ctGamma}\|Dw\|_{\ell^2_\ctGamma}$.

For $\mathrm{T}_{2}$, we apply the locality of the site energies to conclude that
\begin{align*}
    |\mathrm{T}_{2}| &\leq 
   C\sum_{\ell \in \Lambda \cap B_{S(R)}} \sum_{\above{\rho_1,\rho_2 \in \Lambda - \ell}{\ell + \rho_1 \not\in \Lambda_R \textrm{ or }\ell + \rho_2 \not\in \Lambda_R}} 
        e^{-\ctCT(|\rho_1| + |\rho_2|)} |D_{\rho_1}v_R^\mathrm{co}(\ell)||D_{\rho_2}w(\ell)| \\
        &\leq Ce^{-\frac{1}{2}\ctCT (R-S(R))} \|Dv_R^\mathrm{co}\|_{\ell^2_\ctGamma}
        \|Dw\|_{\ell^2_\ctGamma}.
\end{align*}
Here, we have used the fact that $|\rho_1| \geq R- S(R)$ or $|\rho_2| \geq R-S(R)$.

Now we move on to consider $\mathrm{T}_{4}$. Using the fact that $v_R^\mathrm{co}(\ell) = 0$ for all $\ell \in \Lambda\setminus B_{S(R)}$, together with the locality of the site energies to conclude that
\begin{align*}
    \mathrm{T}_{4} &= \sum_{\ell \in \Lambda \setminus \Lambda_R} \sum_{\rho_1,\rho_2 \in \Lambda \cap B_{S(R)}-\ell} D_{\rho_1}v_R^\mathrm{co}(\ell)^T \mathcal G_{\ell,\rho_1\rho_2}(DT_R\overline{u}(\ell))D_{\rho_2}w(\ell)\\
    & \leq \sum_{\ell \in \Lambda \setminus \Lambda_R} \sum_{\rho_1,\rho_2 \in \Lambda \cap B_{S(R)}- \ell} e^{-\ctCT(R - S(R))} |D_{\rho_1} v^\mathrm{co}_R(\ell)| |D_{\rho_2} w(\ell)| \\
    &\leq  e^{-\ctCT(R - S(R) + |\rho_1| + |\rho_2|)} \|Dv_R^\mathrm{co}\|_{\ell^2_\ctGamma}
    \|Dw\|_{\ell^2_\ctGamma}.
\end{align*}
%
%
%
%
%
%

We now replace $T_R\overline{u}$ with $\overline{u}$ and show that this introduces an approximation error that is a constant multiple of $\|D\overline{u}\|_{\ell^2_{\ctGamma}(\Lambda \setminus B_{R/2})}$. For $R$ sufficiently large, we have
\begin{gather}
    \delta^2 \mathcal{G}(T_R\overline{u}) - \delta^2 \mathcal{G}(\overline{u}) 
    = \int_{0}^1 \delta^3 \mathcal{G}(tT_R\overline{u}+(1-t)\overline{u})[T_R\overline{u}-\overline{u}] \mathrm{d}t; \quad \textrm{and}\label{eq:2ndderivative}\\
    \mathrm{dist}(\mu, \sigma(\Ham(u))) \geq \frac{1}{4} \ctDist(\overline{u}) \quad \textrm{for all }u \coloneqq tT_R\overline{u} + (1-t)\overline{u} \textrm{ and }t\in [0,1].\label{eq:uniform_locality_estimates}
    %
    %
\end{gather}
The fact that the spectrum is uniformly bounded away from $\mu$ along the path between $T_R\overline{u}$ and $\overline{u}$ results from Lemma~\ref{lem:perturbation_spec} and ensures that the exponents in the locality estimates are uniform in $t$. By applying \cref{eq:2ndderivative}, we obtain
\begin{align}
    &\Braket{\delta^2 \mathcal{G}(T_R\overline{u})v_R^\mathrm{co},w} - \Braket{\delta^2 \mathcal{G}(\overline{u})v_R^\mathrm{co},w}\nonumber\\
    &\qquad\leq C
    \sum_{\ell \in \Lambda} \sum_{\rho_1,\rho_2,\rho_3 \in \Lambda - \ell}
    e^{-\ctCT(|\rho_1| + |\rho_2| + |\rho_3|)}
    |D_{\rho_1}v_R^\mathrm{co}(\ell)|
    |D_{\rho_2}w(\ell)|
    |D_{\rho_3}(T_R\overline{u} - \overline{u})(\ell)|\nonumber \\
    &\qquad\leq C \|D(\overline{u}-T_R\overline{u})\|_{\ell^2_\ctGamma} \|Dv_R\|_{\ell^2_\ctGamma} \|Dw\|_{\ell^2_\ctGamma} \nonumber\\ 
    &\qquad\leq C \|D\overline{u}\|_{\ell^2_\ctGamma(\Lambda \setminus B_{R/2})} \|Dv_R\|_{\ell^2_\ctGamma} \|Dw\|_{\ell^2_\ctGamma}.
    \label{eq:T1_conclusion2}
\end{align}
Here, we have used H\"older's inequality and the fact that
\[
    \|Du\|_{\ell^p_\ctGamma} \coloneqq \bigg(\sum_{\ell\in \Lambda}\sum_{\rho \in \Lambda - \ell} e^{-\ctGamma|\rho|}
    |D_\rho u(\ell)|^p\bigg)^{1/p} \leq C_p \|Du\|_{\ell^2_\ctGamma} \quad \textrm{for all } p \geq 2.
\]
This result is easy to prove by showing that $\|Du\|_{\ell^p_{{p\ctGamma}}} \leq \|Du\|_{\ell^2_{2\ctGamma}}$ and using the fact that the norms $\|D\cdot\|_{\ell^2_\ctGamma}$ are all equivalent for $\ctGamma > 0$.
\end{proof}

\begin{proof}[Proof of \cref{eq:stability_ref}]
Let $n\in\mathbb N$, $\Omega_R$ be the continuous domain corresponding to $\Lambda_R^\mathrm{ref}$ and define 
\begin{align*}
\Omega_{nR} &\coloneqq \bigcup_{\above{\alpha \in \mathbb Z^d}{|\alpha|_\infty < n}} (\Omega_R + \mathsf{M}_R\alpha) \quad \textrm{and} \quad
\Lambda^\mathrm{ref}_{nR} \coloneqq \Lambda^\mathrm{ref} \cap \Omega_{nR}.
\end{align*}

For $\ell \in \Lambda^\mathrm{ref}_{nR}$ we let $\Lambda^\mathrm{ref}_R(\ell) \subset \Lambda^\mathrm{ref}_{nR}$ be a translation of $\Lambda_R^\mathrm{ref}$ such that for all $k \in \Lambda^\mathrm{ref}_{nR} \setminus \Lambda_{R}^\mathrm{ref}(\ell)$, we have $|\ell - k| > R$. Since we are considering the reference domain, we can extend $v^\mathrm{ff}_R$ by periodicity to $\Lambda^\mathrm{ref}_{nR}$, and use the translational symmetry to conclude that,
\begin{align*}
    &\Braket{\delta^2 {G}_{\mathrm{ref}}^{nR} v^\mathrm{ff}_R,v^\mathrm{ff}_R} =
    \sum_{\above
    {\alpha \in \mathbb Z^d}{|\alpha|_\infty < n}}
    \sum_{\ell \in \Lambda^\mathrm{ref}_R}
    \sum_{\rho_1, \rho_2 \in \Lambda^\mathrm{ref}_{nR} - (\ell + \mathsf{M}_R\alpha)}
    D_{\rho_1} v_R^\mathrm{ff}(\ell)^T 
    [\mathcal{G}_\mathrm{ref}^{nR}]_{\ell,\rho_1\rho_2}
    D_{\rho_2} v_R^\mathrm{ff}(\ell) \\
    &\qquad= (2n+1)^d
    \Braket{\delta^2 {G}_{\mathrm{ref}}^{R} v^\mathrm{ff}_R,v^\mathrm{ff}_R} \\
    &\qquad\quad + (2n+1)^d    
    \sum_{\ell \in \Lambda^\mathrm{ref}_R}
    \sum_{\rho_1,\rho_2 \in \Lambda^\mathrm{ref}_{R}(\ell) - \ell}
        D_{\rho_1} v_R^\mathrm{ff}(\ell)^T 
        \big( 
        [\mathcal{G}_\mathrm{ref}^{nR}]_{\ell,\rho_1\rho_2} - [\mathcal{G}_\mathrm{ref}^{R}]_{\ell,\rho_1\rho_2}
        \big)
        D_{\rho_2} v_R^\mathrm{ff}(\ell)\\
    &\qquad\quad + \sum_{\above
    {\alpha \in \mathbb Z^d}{|\alpha|_\infty < n}}
    \sum_{\ell \in \Lambda^\mathrm{ref}_R }
    \sum_{\above{\rho_1,\rho_2 \in \Lambda^\mathrm{ref}_{nR} - (\ell + \mathsf{M}_R\alpha)}{\ell + \rho_1 \not \in \Lambda^\mathrm{ref}_R(\ell) \textrm{ or }\ell + \rho_2 \not \in \Lambda^\mathrm{ref}_R(\ell)}}
        D_{\rho_1} v_R^\mathrm{ff}(\ell)^T 
        [\mathcal{G}_\mathrm{ref}^{nR}]_{\ell,\rho_1\rho_2}
        D_{\rho_2} v_R^\mathrm{ff}(\ell).
\end{align*}
We note that for $\alpha \in \mathbb Z^d$ with $|\alpha|_\infty < n$ and $\ell \in \Lambda^\mathrm{ref}_R$, we have $\Lambda^\mathrm{ref}_{nR} - (\ell + \mathsf{M}_R\alpha) \subset \Lambda^\mathrm{ref}_{(2n+1)R}$. Therefore, by applying the locality and convergence results for the site energies, we have
\begin{align}\label{eq:stability_RminusnR}
\begin{split}
    &\left|\Braket{\delta^2  G_{\mathrm{ref}}^R v^\mathrm{ff}_R,v^\mathrm{ff}_R} - \frac{1}{(2n+1)^d} \Braket{\delta^2  G_{\mathrm{ref}}^{nR} v^\mathrm{ff}_R,v^\mathrm{ff}_R}\right|   \\
    %
    %
        %
    %
    &\qquad \leq C \sum_{\ell\in\Lambda^\mathrm{ref}_R} \sum_{\rho_1,\rho_2 \in \Lambda^\mathrm{ref}_R(\ell) - \ell} 
    e^{-\eta(R + |\rho_1| + |\rho_2|)} 
    |D_{\rho_1}v_R^\mathrm{ff}(\ell)|
    |D_{\rho_2}v_R^\mathrm{ff}(\ell)| \\ 
        &\qquad\qquad +\frac{Ce^{-\frac{1}{2}\ctCT R}}{(2n+1)^d} 
        \sum_{\ell \in \Lambda^\mathrm{ref}_{nR}}
        \sum_{\rho_1,\rho_2 \in \Lambda^\mathrm{ref}_{(2n+1)R} - \ell}
        e^{-\frac{1}{2}\ctCT(|\rho_1| + |\rho_2|)}
        |D_{\rho_1}v_R^\mathrm{ff}(\ell)|
        |D_{\rho_2}v_R^\mathrm{ff}(\ell)| \\
    &\qquad \leq C (e^{-\eta R} + e^{-\ctCT R}) 
    \|Dv^\mathrm{ff}_R\|_{\ell^2_\ctGamma(\Lambda_R^\mathrm{ref})}^2
\end{split}
\end{align}
where $\eta \coloneqq \frac{1}{2}\mathfrak{m}\min\{\ctCT, \frac{1}{2}\ctTBexponent\}$. Here, we have used the fact that for $m\in\mathbb N$ and $w \colon \Lambda^\mathrm{ref}_R \to \mathbb R^d$ that is extended periodically to $\Lambda^\mathrm{ref}_{mR}$, we have $\|Dw\|_{\ell_\ctGamma^2(\Lambda^\mathrm{ref}_{mR})} \leq C m^{d/2} \|Dw\|_{\ell^2_{\ctGamma}(\Lambda^\mathrm{ref}_R)}$.

Now we may choose a smooth cut off function $\phi \colon \mathbb R^d \to [0,1]$ depending on $nR$ such that 
\begin{align}\label{eq:phi_cut_off}
\phi = 0 \textrm{ on } B_r(\Omega_{nR}^c)  
\quad \textrm{and} \quad 
\phi = 1 \textrm{ on } \Omega_{nR} \setminus B_{2r}(\Omega_{nR}^c)
\end{align}
for some $r >0$ such that $3r < R$. Then, 
\begin{align}
&\Braket{\delta^2 G_{\mathrm{ref}}^{nR}v^\mathrm{ff}_R,v^\mathrm{ff}_R}\nonumber\\
&=\Braket{\delta^2  G_{\mathrm{ref}}^{nR}
\phi v^\mathrm{ff}_R,\phi v^\mathrm{ff}_R} 
    + 2\Braket{\delta^2  G_{\mathrm{ref}}^{nR}(1-\phi)v^\mathrm{ff}_R,\phi v^\mathrm{ff}_R} 
    +\Braket{\delta^2  G_{\mathrm{ref}}^{nR}(1-\phi)v^\mathrm{ff}_R,(1-\phi)v^\mathrm{ff}_R}  \nonumber\\
&\eqqcolon \mathrm{T}_{1} + \mathrm{T}_{2} + \mathrm{T}_{3}.\label{eq:stability_R_FF}
\end{align}
We consider each of these terms in turn. 

Firstly, we note that since $\phi$ is of compact support, we can approximate $\mathrm{T}_{1}$ by the corresponding infinite domain quantity. Using the fact that $\phi = 0$ on the buffer region $\Omega_{nR}\setminus B_r(\Omega_{nR}^c)$, we can conclude the approximation error is exponentially small in $r$:
\begin{align}\label{eq:T3minus}\begin{split}
    \mathrm{T}_{1} - \Braket{\delta^2 \mathcal G_\mathrm{ref}\phi v^\mathrm{ff}_R,\phi v^\mathrm{ff}_R} &=
    \Braket{\left(\delta^2  G_{\mathrm{ref}}^{nR} - \delta^2\mathcal G_\mathrm{ref}\right)\phi v^\mathrm{ff}_R,\phi v^\mathrm{ff}_R}\\ 
    &\leq C(e^{-\frac{1}{2}\eta r} + e^{-\frac{1}{2}\ctCT r})\|D(\phi v^\mathrm{ff}_R)\|_{\ell^2_\ctGamma(\Lambda^\mathrm{ref})}^2
    %
%
\end{split}\end{align}
where $\eta \coloneqq \frac{1}{2}\mathfrak{m}\min\{\ctCT, \frac{1}{2}\ctTBexponent\}$. This is a simple calculation in which the left hand side can be expanded in terms of site energies and we can use the fact that if $\ell \in \Lambda^\mathrm{ref}_{nR}$ and $\rho_1,\rho_2 \in \Lambda^\mathrm{ref}_{nR}-\ell$, for which $\phi(\ell) \not= 0$ or $\phi(\ell + \rho_l)\not=0$ for both $l = 1,2$ then $\mathrm{dist}(\ell,\Omega_{nR}^c) + |\rho_1| + |\rho_2| > r$. Moreover, if $\ell \not \in \Lambda_{nR}^\mathrm{ref}$ then we consider $\rho_1,\rho_2$ for which $\phi(\ell + \rho_l)\not=0$ for both $l = 1,2$ and thus $|\rho_1| > R$.

Using the fact that the reference configuration is stable, and using \cref{eq:phi_cut_off}, we can conclude that, for sufficiently large $R$, we have
\begin{align}\label{eq:T31}\begin{split}
    \mathrm{T}_{1} &\geq 
    \left(c_\mathrm{stab} - Ce^{-\frac{1}{2}\eta r} \right) 
    \|D(\phi v_R^\mathrm{ff})\|_{\ell^2_\ctGamma(\Lambda^\mathrm{ref})}^2\\
    &\geq
   Cn^{d} \left(c_\mathrm{stab} - Ce^{-\frac{1}{2}\eta r} \right) \|Dv_R^\mathrm{ff}\|_{\ell^2_\ctGamma(\Lambda_{R}^\mathrm{ref})}^2.
\end{split}\end{align}

We now show that, after dividing by $n^d$, $\mathrm{T}_{2}$ and $\mathrm{T}_{3}$ can be made arbitrarily small for sufficiently large $R$ and $n$. We fix $w \colon \Lambda^\mathrm{ref}_R \to \mathbb R^d$ and extend periodically to $\Lambda^\mathrm{ref}_{nR}$ and consider 
\begin{align}\label{eq:Grefn}\begin{split}
    &\Braket{\delta^2  G^{nR}_\mathrm{ref}(1 - \phi)v_R^\mathrm{ff}, w} \\
    &\qquad = \sum_{\above{\ell \in \Lambda^\mathrm{ref}_{nR}\colon}{\mathrm{dist}\left(\ell,\Omega_{nR}^c\right)\leq 3r}} 
    \sum_{\rho_1,\rho_2 \in \Lambda^\mathrm{ref}_{nR}-\ell}
    D_{\rho_1}(1-\phi)v_R^\mathrm{ff}(\ell)^T
    \left[\mathcal G^{nR}_\mathrm{ref}\right]_{\ell, \rho_1\rho_2}
    D_{\rho_2}w(\ell) \\
    %
    &\qquad\qquad + \sum_{\above{\ell \in \Lambda^\mathrm{ref}_{nR}\colon}{\mathrm{dist}\left(\ell,\Omega_{nR}^c\right)>3r}} 
    \sum_{\rho_1,\rho_2 \in \Lambda^\mathrm{ref}_{nR}-\ell} D_{\rho_1}(1-\phi)v_R^\mathrm{ff}(\ell)^T
    \left[\mathcal G^{nR}_\mathrm{ref}\right]_{\ell, \rho_1\rho_2}
    D_{\rho_2}w(\ell).  
\end{split}
\end{align}

For the first term of \cref{eq:Grefn} we may use the fact that the set of $\ell \in \Lambda^\mathrm{ref}_{nR}$ for which $\mathrm{dist}(\ell,\Omega_{nR}^c)\leq 3r$ has at most $C n^{d-1}$ elements to conclude,
\begin{align}\label{eq:Grefn_1}
\begin{split}
&\sum_{\above{\ell \in \Lambda^\mathrm{ref}_{nR}\colon}{\mathrm{dist}\left(\ell,\Omega_{nR}^c\right)\leq 3r}} 
\sum_{{\rho_1,\rho_2 \in \Lambda^\mathrm{ref}_{nR}-\ell}} D_{\rho_1}(1-\phi)v_R^\mathrm{ff}(\ell)^T
    \left[\mathcal G^{nR}_\mathrm{ref}\right]_{\ell, \rho_1\rho_2}
    D_{\rho_2}w(\ell) \\
    &\qquad \leq  C \sum_{
        \above
        {\ell \in \Lambda^\mathrm{ref}_{nR}\colon}
        {\mathrm{dist}\left(\ell,\Omega_{nR}^c\right)\leq 3r}
    } 
    \sum_{
    \above
        {\rho_1,\rho_2 \in \Lambda^\mathrm{ref}_{nR} - \ell \colon}
        {|\rho_1|, |\rho_2| \leq 2R }
    } 
    e^{-\ctCT(|\rho_1| + |\rho_2|)}
    |D_{\rho_1}(1-\phi)v_R^\mathrm{ff}(\ell)|
    |D_{\rho_2}w(\ell)| \\
        &\qquad\qquad +  C \sum_{\ell \in \Lambda^\mathrm{ref}_{nR}} 
    \sum_{
    \above
        {\rho_1,\rho_2 \in \Lambda^\mathrm{ref}_{nR} - \ell \colon}
        {|\rho_1| > 2R \textrm{ or } |\rho_2| > 2R }
    }
    e^{-\ctCT(|\rho_1| + |\rho_2|)}
    |D_{\rho_1}(1-\phi)v_R^\mathrm{ff}(\ell)|
    |D_{\rho_2}w(\ell)|  \\
    &\qquad \leq C\left(n^{d-1} + n^{d} e^{-\ctCT R}\right) \|Dv_R^\mathrm{ff}\|_{\ell^2_\ctGamma(\Lambda^\mathrm{ref}_R)} 
    \|Dw\|_{\ell^2_\ctGamma(\Lambda^\mathrm{ref}_R)}.
\end{split}
\end{align}
%

Similarly, for the second term in \cref{eq:Grefn}, we use the fact that, for $\ell \in \Lambda^\mathrm{ref}_{nR}$ with $\mathrm{dist}(\ell, \Omega_{nR}^c) > 3r$, by \cref{eq:phi_cut_off}, we only have to sum over $\rho_1 \in \Lambda^\mathrm{ref}_{nR} -\ell$ for which $\mathrm{dist}(\ell + \rho_1, \Omega_{nR}^c) \leq 2r$. That is, $|\rho_1| \geq r$, and so we obtain an 
approximation error exponentially small in $r$ (as in the second term of \cref{eq:Grefn_1}):
\begin{align}\label{eq:Grefn_2}
\begin{split}
  &\sum_{\above{\ell \in \Lambda^\mathrm{ref}_{nR}\colon}{\mathrm{dist}\left(\ell,\Omega_{nR}^c\right)>3r}} 
    \sum_{\rho_1,\rho_2 \in \Lambda^\mathrm{ref}_{nR}-\ell} D_{\rho_1}(1-\phi)v_R^\mathrm{ff}(\ell)^T
    \left[\mathcal G^{nR}_\mathrm{ref}\right]_{\ell, \rho_1\rho_2}
    D_{\rho_2}w(\ell)  \\
    &\qquad\qquad \leq C n^{d} e^{-\frac{1}{2}\ctCT r}
    \|Dv^\mathrm{ff}_R\|_{
    \ell^2_\ctGamma(\Lambda^\mathrm{ref}_R)}
    \|Dw\|_{
    \ell^2_\ctGamma(\Lambda^\mathrm{ref}_R)}.
\end{split}
\end{align}

By combining \cref{eq:Grefn}, \cref{eq:Grefn_1} and \cref{eq:Grefn_2}, for both $w = \phi v_R^\mathrm{ff}$ and $w = (1 - \phi) v_R^\mathrm{ff}$, we have 
\begin{align}\label{eq:T32plusT33}\begin{split}
    |\mathrm{T}_{2}| + |\mathrm{T}_{3}| \leq C \left( n^{d-1} + n^{d} e^{-\frac{1}{2}\ctCT r}\right)
    \|Dv_R^\mathrm{ff}\|_{\ell_\ctGamma^2}^2.
\end{split}
\end{align}
Therefore, by \cref{eq:stability_R_FF}, \cref{eq:T31} and \cref{eq:T32plusT33}, 
\begin{align}\label{eq:stab_nd}\begin{split}
    \frac{1}{n^d}\Braket{\delta^2
    G^{nR}_\mathrm{ref}v_R^\mathrm{ff}, v_R^\mathrm{ff}} 
    &\geq C\left( \widetilde{c}_\mathrm{stab} -( e^{-\frac{1}{2}\eta r} + n^{-1} + e^{-\tfrac{1}{2}\ctCT r})\right) \|Dv_R^\mathrm{ff}\|_{\ell^2_\ctGamma}^2
\end{split}
\end{align}
where $\widetilde{c}_\mathrm{stab}$ is a positive constant multiple of $c_\mathrm{stab}$.

Applying \cref{eq:stab_nd} together with \cref{eq:stability_RminusnR} we can choose $n$, $r$ then $R$ sufficiently large to conclude.
\end{proof}

\small 
\bibliography{refs}
\bibliographystyle{siam}

\end{document}